\newtheorem{theorem}{Theorem}
\newtheorem{lemma}[theorem]{Lemma}
\newenvironment{proof}[1][Proof]{\textbf{#1.} }
     {\    \rule{0.5em}{0.5em}}
\newcommand{\AAA}{\circle*{4}}
\newcommand{\TT}{\mathcal{T}}
\newcommand{\UP}{\line(0,1){10}}
\newcommand{\BIF}{\line(-1,1){10}\line(1,1){10}}
\newcommand{\g}{\mathfrak{g}}
\newcommand{\cG}{\mathcal{G}}
\newcommand{\ad}{\mathrm{ad}}
\newcommand{\Ad}{\mathrm{Ad}}
\newcommand{\Om}{\Omega}
\newcommand{\tH}{\tilde{H}}
\newcommand{\SO}{\ensuremath{\mathrm{SO}}}
\newcommand{\so}{\ensuremath{\mathfrak{so}}}
\newcommand{\SP}{\ensuremath{\mathrm{Sp}}}
\newcommand{\SU}{\ensuremath{\mathrm{SU}}}
\newcommand{\cO}{\mathcal{O}}
\newcommand{\dd}{{\rm d}}
\begin{document}

\begin{frontmatter}



\title{The Magnus expansion and some of its applications}


\author[Poli]{S. Blanes},
\author[Castellon]{F. Casas},
\author[Valencia1]{J.A. Oteo} and
\author[Valencia1,Valencia2]{J. Ros}

\address[Poli]{Instituto de Matem\'atica Multidisciplinar,
Universidad Polit\'ecnica de Valencia, E-46022 Valencia, Spain}
\address[Castellon]{Departament de Matem\`atiques, Universitat Jaume I,
  E-12071 Castell\'on, Spain}
\address[Valencia1]{Departament de F\'{\i}sica Te\`orica,
  Universitat de Val\`encia, E-46100 Burjassot, Valencia, Spain}
\address[Valencia2]{IFIC, Centre Mixt
  Universitat de Val\`encia-CSIC, E-46100 Burjassot, Valencia, Spain}

\begin{abstract}

Approximate resolution of linear systems of differential
equations with varying coefficients is a recurrent problem
shared by a number of scientific and engineering areas, ranging from
Quantum Mechanics to Control Theory. When formulated in operator
or matrix form, the \textsl{Magnus expansion}
furnishes an elegant setting to built up approximate exponential
representations of the solution of the system.
It provides a power series expansion for the corresponding exponent and 
 is sometimes referred to as
\textsl{Time-Dependent
Exponential Perturbation Theory}. Every Magnus approximant corresponds
in Perturbation Theory to a partial re-summation of infinite terms with
the important additional property of preserving 
at any order certain symmetries
of the exact solution.

The goal of this review is threefold. First, to collect a number of
developments
scattered through half a century of scientific literature on Magnus
expansion.
They concern the methods for the generation of terms in the expansion,
estimates
of the radius of convergence of the series, generalizations and related
non-perturbative expansions. Second, to provide a bridge with its
implementation as generator of \textsl{especial purpose numerical
integration methods},
a field of intense activity during the last decade. Third, to
illustrate with
examples the kind of results one can expect from Magnus expansion
in comparison with those from both perturbative schemes and standard
numerical
integrators. We buttress this issue with a revision of the wide range
of physical applications found by Magnus expansion in the literature.

\end{abstract}


\end{frontmatter}

\tableofcontents{}




\section{Introduction}

\subsection{Motivation, overview and history}

The outstanding mathematician Wilhelm Magnus (1907-1990)
did important contributions to a wide variety of fields in
mathematics and mathematical physics \cite{abikoff94tml}. Among them
one can mention combinatorial group theory \cite{magnus76cgt} and
his collaboration in the Bateman project on higher transcendental
functions and integral transforms \cite{erdelyi53htf}. In this
report we review another of his long-lasting constructions: the
so-called Magnus expansion (hereafter referred to as ME). ME was
introduced as a tool to solve non-autonomous
linear differential equations for linear operators. It is
interesting to observe that in his seminal paper of 1954
\cite{magnus54ote}, although it is essentially mathematical in
nature, Magnus recognizes that his work was stimulated by results of
K.O. Friedrichs on the theory of linear operators in Quantum
Mechanics \cite{friedrichs53mao}. Furthermore, as the first
antecedent of his proposal he quotes a paper by R.P. Feynman in the
Physical Review \cite{feynman51aoc}. We stress these facts to show
that already in its very beginning ME was strongly related to
Physics and so has been ever since and there is no reason to doubt
that it will continue to be. This is the first motivation to offer
here a review as the present one.

Magnus proposal has the very attractive property of leading to
approximate solutions which exhibit at any order of approximation
some qualitative or physical characteristics which first principles
guarantee for the exact (but unknown) solution of the problem.
Important physical examples are the symplectic or unitary character
of the evolution operator when dealing with classical or quantum
mechanical problems, respectively. This is at variance with most
standard perturbation theories and is apparent when formulated in a
correct algebraic setting: Lie algebras and Lie groups. But this
great advantage has been some times darkened in the past by the
difficulties both in constructing explicitly higher order terms and
in assuring existence and convergence of the expansion.

In our opinion, recent years have witnessed great improvement in this
situation. Concerning general questions of existence and convergence new
results have appeared. From the point of view of applications some new
approaches in old fields have been published while completely new and
promising avenues have been open by the use of Magnus expansion in Numerical
Analysis. It seems reasonable to expect fruitful cross fertilization between
these new developments and the most conventional perturbative approach to ME
and from it further applications and new calculations.

This new scenario makes desirable for the Physics community in
different areas (and scientists and engineers in general) to have
access in as unified a way as possible to all the information
concerning ME which so far has been treated in very different
settings and has appeared scattered through very different
bibliographic sources.

As implied by the preceding paragraphs this report is mainly
addressed to a Physics audience, or closed neighbors, and
consequently we shall keep the treatment of its mathematical aspects
within reasonable limits and refer the reader to more detailed
literature where necessary. By the same token the applications
presented will be limited to examples from Physics or from the
closely related field of Physical Chemistry. We shall also emphasize
its instrumental character for numerically solving physical
problems.

In the present section as an introduction we present a brief
overview and sketchy history of more than 50 years of ME. To start
with, let us consider the initial value problem associated with
the linear ordinary differential equation
\begin{equation}  \label{eqdif}
Y^{\prime}(t)=A(t)Y(t),\qquad\qquad Y(t_0)=Y_{0},
\end{equation}
where as usual the prime denotes derivative with respect to the real
independent variable which we take as time $t$, although much of what will be said
applies also for a complex independent variable. In order of increasing
complexity we may consider the equation above in different contexts:

\begin{enumerate}
\item[(a)] $Y:\mathbb{R}\longrightarrow\mathbb{C}$, $A:\mathbb{R}\longrightarrow
\mathbb{C}$. This means that the unknown $Y$ and the given $A$ are
complex scalar valued functions of one real variable. In this case
there is no problem at all: the solution reduces to a quadrature
and an ordinary exponential
evaluation:
\begin{equation}   \label{eqescalar}
Y(t)= \exp \left( \int_{t_0}^{t}A(s)ds \right) \, Y_{0}.
\end{equation}
\item[(b)] $Y:\mathbb{R}\longrightarrow\mathbb{C}^{n}$, $A:\mathbb{R}%
\longrightarrow\mathbb{M}_{n}(\mathbb{C)}$, where
$\mathbb{M}_{n}(\mathbb{C)}$ is the set of $n\times n$ complex
matrices. Now $Y$ is a complex vector valued function and $A$ a
complex $n\times n$ matrix valued function. At variance with the
previous case, only in very special cases is the solution easy to
state: when for any pair of values of $t$, $t_{1}$ and $t_{2},$
one has $A(t_{1})A(t_{2})=A(t_{2})A(t_{1})$, which is certainly
the case if $A$ is constant. Then the solution reduces to a
quadrature (trivial or not) and a matrix exponential. With the
obvious changes in the meaning of the symbols, equation
(\ref{eqescalar}) still applies. In the general case, however, there
is not compact expression for the solution and (\ref{eqescalar})
is not any more the solution.

\item[(c)] $Y:\mathbb{R}\longrightarrow\mathbb{M}_{n}(\mathbb{C)}$,
$A:\mathbb{R}\longrightarrow\mathbb{M}_{n}(\mathbb{C)}$. Now both
$Y$ and $A$ are complex matrix valued functions. A particular
case, but still general enough to encompass the most interesting
physical and mathematical applications, corresponds to
$Y(t)\in\mathcal{G}$, $A(t)\in\mathfrak{g}$, where $\mathcal{G}$
and $\mathfrak{g}$ are respectively a matrix Lie group and its
corresponding Lie algebra. Why this is of interest is easy to
grasp: the key reason for the failure of (\ref{eqescalar}) is the
non-commutativity of matrices in general. So one can expect that
the (in general non-vanishing) commutators play an important role.
But when commutators enter the play one immediately thinks in Lie
structures. Furthermore, plainly speaking, the way from a Lie
algebra to its Lie group is covered by the exponential operation.
A fact that will be no surprise in this context. The same comments
of the previous case are valid here. In this report we shall
mostly deal with this matrix case.

\item[(d)] The most general situation one can think of corresponds to $Y(t)$
and $A(t)$ being operators in some space, e.g., Hilbert space in
Quantum Mechanics. Perhaps the most paradigmatic example of
(\ref{eqdif}) in this setting is the time-dependent Schr\"odinger
equation.
\end{enumerate}

Observe that case (b) above can be reduced to case (c). This is
easily seen if one introduces what in mathematical literature is
named the matrizant, a concept dating back at least to the
beginning of the 20th century in the work of Baker
\cite{baker02oti}. It is the $n\times n$ matrix $U(t,t_0)$ defined
through
\begin{equation}
Y(t)=U(t,t_0)Y_{0}.    \label{matrizant}%
\end{equation}
Without loss of generality we will take $t_0=0$ unless otherwise
explicitly stated, for the sake of simplicity. When no confusion
may arise, we write only one argument in $U$ and denote $U(t,0)
\equiv U(t)$, which then
satisfy the differential equation and initial condition%
\begin{equation}
U^{\prime}(t)=A(t)U(t),\qquad\qquad U(0)=I, \label{laecuacion}%
\end{equation}
where $I$ stands for the $n$-dimensional identity matrix. The
reader will have recognized $U(t)$ as what in physical terms is
known as time evolution operator.

We are now ready to state Magnus proposal: a solution to
(\ref{laecuacion})
which is a true matrix exponential%
\begin{equation}
U(t)=\exp \Omega(t),\qquad\qquad \Omega(0)=O,\label{magsol}%
\end{equation}
and a series expansion for the matrix in the exponent%
\begin{equation}
\Omega(t)=\sum_{k=1}^{\infty}\Omega_{k}(t),\label{ME}%
\end{equation}
which is what we call the \emph{Magnus expansion}. The mathematical elaborations explained in the
next section determine $\Omega_{k}(t)$. Here we just write down the three
first terms of that series:%
\begin{align}
\Omega_{1}(t)  & =\int_{0}^{t}A(t_{1})~\text{d}t_{1},\nonumber\\
\Omega_{2}(t)  & =\frac{1}{2}\int_{0}^{t}\text{d}t_{1}\int_{0}^{t_{1}}%
\text{d}t_{2}\ \left[  A(t_{1}),A(t_{2})\right] \label{ME123}\\
\Omega_{3}(t)  & =\frac{1}{6}\int_{0}^{t}\text{d}t_{1}\int_{0}^{t_{1}}%
\text{d}t_{2}\int_{0}^{t_{2}}\text{d}t_{3}\ (\left[  A(t_{1}),\left[
A(t_{2}),A(t_{3})\right]  \right]  +\left[  A(t_{3}),\left[  A(t_{2}%
),A(t_{1})\right]  \right]  )\nonumber
\end{align}
where $\left[  A,B\right]  \equiv AB-BA$ is the matrix commutator
of $A$ and $B$.

The interpretation of these equations seems clear: $\Omega_{1}(t)$
coincides exactly with the exponent in (\ref{eqescalar}). But this
equation cannot give the whole solution as has already been said.
So, if one insists in having an exponential solution the exponent
has to be corrected. The rest of the ME in (\ref{ME}) gives that
correction necessary to keep the exponential form of the solution.

The terms appearing in (\ref{ME123}) already suggest the most
appealing characteristic of ME. Remember that a matrix Lie algebra
is a linear space in which one has defined the commutator as the
second internal composition law. If, as we suppose, $A(t)$ belongs
to a Lie algebra $\mathfrak{g}$ for all $t$ so does any sum of
multiple integrals of nested commutators. Then, if all terms in ME
have a structure similar to that of the ones shown before, the whole
$\Omega(t)$ and any approximation to it obtained by truncation of ME
will also belong to the same Lie algebra. In the next section it
will be shown that this turns out to be the case. And \emph{a fortiori} its
exponential will be in the corresponding Lie group.

Why is this so important for physical applications? Just because
many of the properties of evolution operators derived from first
principles are linked to the fact that they belong to a certain
Lie group: e.g. unitary group in Quantum Mechanics, symplectic
group in Classical Mechanics. In that way use of (truncated) ME
leads to approximations which share with the exact solution of
equation (\ref{laecuacion}) important qualitative (very often,
geometric) properties. For instance, in Quantum Mechanics every approximant
preserves probability conservation.

From the present point of view we could say that the last
paragraphs summarize, in a nut shell, the main contents of the
famous paper of Magnus of 1954. With no exaggeration its
appearance can be considered a turning point in the treatment of
the initial value problem defined by (\ref{laecuacion}).

But important, as it certainly was, Magnus paper left some
problems, at least partially, open:
\begin{itemize}
 \item First, for what values of $t$ and for what operators $A$ does equation
(\ref{laecuacion})  admit a true exponential solution? This we call
the existence problem.
 \item Second, for what values of $t$ and for
what linear operators $A$ does the series in equation (\ref{ME})
converge? This we describe as the convergence problem. We want to
emphasize that, although related, these two are different
problems. To see why, think of the scalar equation $y^{\prime}=y^{2}$
with $y(0)=1$. Its solution $y(t)=(1-t)^{-1}$ exists for $t\neq1$,
but its form in power series $y(t)=\sum_{0}^{\infty}t^{n}$
converges only for $\left| t\right| <1$.
 \item Third, how to construct
higher order terms $\Omega_{k}(t)$, $k\geq3$, in the series? Moreover, is
there a closed-form expression for $\Omega_{k}(t)$?
 \item Fourth, how to calculate in an efficient way
$\exp\Omega^{[N]}$, where $\Omega^{\lbrack N]}\equiv\sum_{k=1}^{N}%
\Omega_{k}(t)$ is a truncation of the ME?
\end{itemize}

All these questions, and many others, will be dealt with in the rest
of the paper. But before entering that analysis we think interesting
to present  a view, however brief, from a historical perspective of
this half a century of developments on Magnus series. Needless to
say, we by no means try to present a detailed and exhaustive
chronological account of the many approaches followed by authors
from very different disciplines. To minimize duplication with later
sections we simply mention some representative samples in order the
reader can understand the evolution of the field.

Including some precedents, and with a, as undeniable as
unavoidable, dose of arbitrariness, we may distinguish four periods in
the history of our topic:

\begin{enumerate}
\item Before 1953. The problem which ME solves has a centennial history
dating back at least to the work of Peano, by the end of 19th
century, and Baker, at the beginning of the 20th (for references
to the original papers see e.g. \cite{ince56ode}). They combine
the theory of differential equations with an algebraic
formulation. Intimately related to these treatments from the very
beginning is the study of the so called Baker--Campbell--Hausdorff
or BCH formula for short
\cite{baker05aac,campbell98oal,hausdorff06dse} which gives $C$ in
terms of $A$, $B$ and their multiply nested commutators when
expressing $\exp(A)\exp(B)$ as $\exp(C)$. This topic has by itself
a long history up until today and will be discussed also in
section 2. As one of its early hallmarks we quote
\cite{dynkin47eot}. In Physics literature the interest in the
problem posed by equation (\ref{laecuacion}) highly revived with
the advent of Quantum Electrodynamics (QED). The works of F.J.
Dyson \cite{dyson49trt} and in particular R.P. Feynman
\cite{feynman51aoc} in the late forties and early fifties are
worth mentioning here.

\item 1953-1970. We have quoted as birth certificate of ME the paper \cite{magnus54ote} by
Magnus in 1954. This is not strictly true: there is a Research
Report \cite{magnus53asi} dated June 1953 which differs from the
published paper in the title and in a few minor details and which
should in fact be taken as a preliminary draft of it. In both
publications appears the result summarized above on ME with almost
identical words. The work of Pechukas and Light
\cite{pechukas66ote} gave for the first time a more specific
analysis of the problem of convergence than the rather vague
considerations in Magnus paper. Wei and Norman
\cite{wei63nog,wei63las}\ did the same for the existence problem.
Robinson, to the best of our knowledge, seems to have been the
first to apply ME to a physical problem \cite{robinson63mce}.
Special mention in this period deserves a paper by Wilcox
\cite{wilcox67eoa}, in which useful mathematical tools are given
and ME presented together with other algebraic treatments of
equation (\ref{laecuacion}), in particular Fer's infinite product
expansion \cite{fer58rdl}. Worth also of mention here is the first
application of ME as a numerical tool for integrating the
time-independent Schr\"odinger equation for potential scattering
by Chang and Light \cite{chang69eso}.

\item 1971-1990. During these years ME consolidated in different fronts. It
was successfully applied to a wide spectrum of fields in Physics
and Chemistry: from atomic \cite{baye73ats}\ and molecular
\cite{schek81aot}\ Physics to Nuclear Magnetic Resonance (NMR)
 \cite{ernst86pon,waugh82tob} to
Quantum Electrodynamics \cite{dahmen82ido} and elementary particle
Physics \cite{dolivo90mea}. A number of case studies also helped
to clarify its mathematical structure, see for example
\cite{klarsfeld89apo}. The construction of higher order terms was
approached from different angles. The intrinsic and growing
complexity of ME allows for different schemes. One which has shown
itself very useful in tackling other questions like the
convergence problem was the recurrent scheme by Klarsfeld and Oteo
\cite{klarsfeld89rgo}.

\item Since 1991. The last decade of the 20th century witnessed a renewed
interest in ME which still continues nowadays. It has followed different
lines. Concerning the basic problems of existence and convergence
of ME there has been definite progress
\cite{blanes98maf,casas07scf,moan99ote,moan01cot}. ME has also been adapted
for specific types of equations: Floquet theory when $A(t)$ is a
periodic function \cite{casas01fte}, stochastic differential
equations \cite{burrage99hso} or equations of the form
$Z^{\prime}=AZ-ZB$ \cite{iserles01ame}. Special mention should be
made to the new field open in this most recent period that uses
Magnus scheme to build novel algorithms \cite{iserles99ots} for
the numerical integration of differential equations within the
most wide field of geometric integration \cite{budd99gin}. After
optimization \cite{blanes00iho,blanes02hoo}, these integrators
have proved to be highly competitive.
\end{enumerate}

As a proof of the persistent impact the 1954 paper by Magnus has had
in scientific literature we present in Figures \ref{cites1} and \ref{cites2} the number
of citations per year and the cumulative number of citations, respectively, 
as December 2007 with data taken from ISI Web of
Science. The original paper appears about 750 times of which,
roughly, 50, 320 and 380 correspond respectively to each of the last
three periods we have considered. The enduring interest in that
seminal paper is clear from the figures.

\begin{figure}[th]
\begin{center}
\epsfxsize=4.6in {\epsfbox{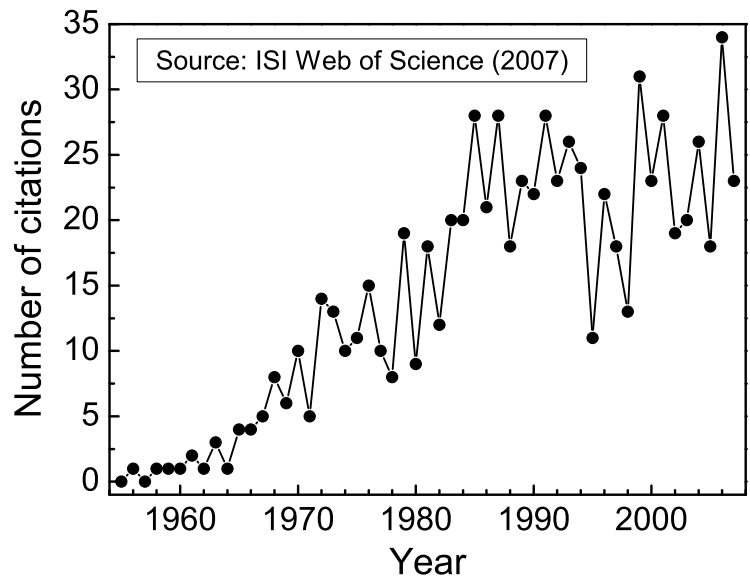}}
\end{center}
 \caption{Persistency of Magnus original paper: number of citations per year.}
 \label{cites1}

\end{figure}

\begin{figure}[th]
\begin{center}
\epsfxsize=4.6in {\epsfbox{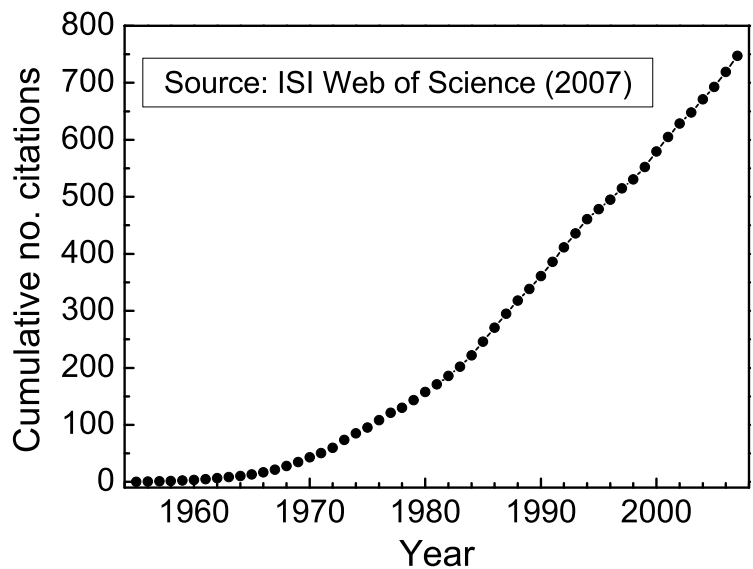}}
\end{center}
 \caption{Persistency of Magnus original paper: cumulative number of citations.}
 \label{cites2}

\end{figure}

The presentation of this report is organized as follows. In the remaining of 
this section we include some mathematical tools and notations that will
be used time and again in our treatment. In section 2 we introduce formally
the Magnus expansion, study its main features and analyze thoroughly 
the convergence issue. Next, in section 3 several generalizations of the
Magnus expansion are reviewed, with special emphasis in its application
to general nonlinear differential equations. In order to illustrate the main
properties of ME, in section 4 we consider simple examples for which the
computations required are relatively straightforward. Section 5 is devoted
to an aspect that has been most recently studied in this setting: the design of new
algorithms for the numerical integration of differential equations based on the
Magnus expansion. There, after a brief characterization of numerical integrators,
we present several methods that are particularly efficient, as shown by the
examples considered. Given the relevance of the new numerical schemes, 
we briefly review in section 6 some of its applications in different
contexts, ranging from boundary-value problems to stochastic differential equations. In
section 7, on the other hand, applications of the ME to significant physical problems
are considered. Finally, the paper ends with some concluding remarks.

\subsection{Mathematical preliminaries and notations} \label{notations}

Here we collect for the reader's convenience some mathematical
expressions, terminology and notations which appear most frequently
in the text. Needless to say that we have made no attempt to being
completely rigorous. We just try to facilitate the casual reading of
isolated sections.

As already mentioned, the natural mathematical habitat for
most of the objects we will deal with in this report is a Lie
group or its associated Lie algebra. Although most of the results
discussed in these pages are valid in a more general setting we will
essentially consider only matrix Lie groups and algebras.

By a Lie group $\mathcal{G}$ we understand a set which combines an
algebraic structure with a topological one. At the algebraic level
every two elements of $\mathcal{G}$ can be combined by an internal
composition law to produce a third element also in $\mathcal{G}$.
The law is required to be associative, to have an identity element
and every element must have an inverse. The ordinary  product and
inverse of invertible matrix play that role in the cases we are
more interested in. The topological exigence forces the
composition law and the association of an inverse to be
sufficiently smooth functions.

A Lie algebra $\mathfrak{g}$ is a vector space whose elements can be
combined by a second law, the Lie bracket, which we represent by
$[A,B]=C$, with $A,B,C$ elements of $\mathfrak{g}$, in such a way
that the law is bilinear, skew-symmetric and satisfies the well
known Jacobi identity,
\begin{equation}
    [A,[B,C]]+[B,[C,A]]+[C,[A,B]]=0.  \label{jacobi}
\end{equation}
When dealing with matrices we take as Lie bracket the familiar
commutator:
\begin{equation}\label{comutador}
    [A,B] = A B - B A, \qquad A\in \mathfrak{g}, \quad B\in
    \mathfrak{g},
\end{equation}
where $AB$ stands for the usual matrix product.
If we consider a finite-di\-men\-sio\-nal Lie algebra with dimension $d$
and denote by $A_i, i=1, \ldots, d$, the vectors of one of its
basis then the fundamental brackets one has to know are
\begin{equation}
    [A_i,A_j]=c_{ij}^{k}A_k,  \label{constestruct}
\end{equation}
where sum over repeated indexes is understood. The coefficients
$c_{ij}^{k}$ are the so-called structure constants of the algebra.

Associated with any $A\in \mathfrak{g}$ we can define a linear
operator $\ad_{A}: \mathfrak{g}\rightarrow \mathfrak{g}$ which
acts according to
\begin{equation}\label{definad}
  \ad_{A}B=[A,B], \qquad    \ad_{A}^j B = [A, \ad_{A}^{j-1} B], \qquad  
  \ad_{A}^0 B=B, \qquad j \in \mathbb{N},  B\in \mathfrak{g}.
\end{equation}
Also of interest is the exponential of this $\ad_A$ operator,
\begin{equation}\label{definAd}
    \Ad_A=\exp(\ad_A),
\end{equation}
whose action on $\mathfrak{g}$ is given by
\begin{equation}\label{actionAd}
  \Ad_A(B) = \exp(A) \, B \, \exp(-A) = \sum_{k=0}^{\infty} \frac{1}{k!} \ad_A^k B,
   \qquad B\in \mathfrak{g}.
\end{equation}

The type of matrices we will handle more frequently are orthogonal,
unitary and symplectic. Here are their characterization and the
notation we shall use for their group and algebra.

The special orthogonal group, $\SO(n)$, is the set of all $n\times
n$ real matrices with unit determinant satisfying $A^TA=AA^T=I$, where $A^T$ is the transpose
of $A$ and $I$ denotes the identity matrix. The corresponding
algebra $\so(n)$ consists of the skew-symmetric 
matrices.

A $n \times n$ complex matrix $A$ is called unitary if 
$A^{\dag}A=AA^{\dag}=I$, where $A^{\dag}$ is
the conjugate transpose or Hermitian adjoint of $A$.
The special unitary group, $\SU(n)$, is the set of all $n\times n$
unitary matrices with unit determinant. The
corresponding algebra $\mathfrak{su}(n)$ consist of the
skew-Hermitian traceless matrices. Special relevance in some
quantum mechanical problems we discuss will have the case $n=2$.
In this case a convenient basis for $\mathfrak{su}(2)$ is made up
by the Pauli matrices
\begin{equation}\label{Paulis}
    \sigma_1=\left(
               \begin{array}{ccr}
                 0 & & 1 \\
                 1 & & 0
               \end{array}
             \right), \quad
     \sigma_2=\left(
                  \begin{array}{ccr}
                    0 & & -i \\
                    i & & 0
                  \end{array}
                \right), \quad
     \sigma_3=\left(
                \begin{array}{ccr}
                  1 & & 0 \\
                  0 & & -1
                \end{array}
              \right).
\end{equation}
They satisfy the identity
\begin{equation}\label{producpaulis}
   \sigma_{j}\sigma_{k}=\delta_ {jk}+i\epsilon_{jkl}\sigma_l,
\end{equation}
and correspondingly
\begin{equation} \label{conmusigmas}
  [\sigma_{j},\sigma_{k}]=2i\epsilon_{jkl}\sigma_l,
\end{equation}
which directly give the structure constants for $\SU(2)$. The
following identities will prove useful for $\boldsymbol{a}$ and $\boldsymbol{b}$ in
$\mathbb{R}^{3}$:
\begin{equation}  \label{pauliescalar}
    (\boldsymbol{a}\cdot\boldsymbol{\sigma})(\boldsymbol{b}\cdot\boldsymbol{\sigma}) =  
    \boldsymbol{a}\cdot\boldsymbol{b} \  I+
    i(\boldsymbol{a}\times\boldsymbol{b})\cdot\boldsymbol{\sigma},  \qquad
     [\boldsymbol{a}\cdot\boldsymbol{\sigma},\boldsymbol{b}\cdot\boldsymbol{\sigma}]  =   
       2 i (\boldsymbol{a}\times\boldsymbol{b}) \cdot \boldsymbol{\sigma},
\end{equation}
where we have denoted $\boldsymbol{\sigma} = (\sigma_1, \sigma_2, \sigma_3)$. 
Any $U \in \SU(2)$ can be written as
\begin{equation}\label{exppaulis}
  U =\exp(i\boldsymbol{a}\cdot\boldsymbol{\sigma})=
  \cos ( a ) \,I + i \frac{\sin  (a) }{a} \boldsymbol{a}\cdot\boldsymbol{\sigma},
\end{equation}
where $a = \|\boldsymbol{a}\| = \sqrt{a_1^2 + a_2^2 + a_3^2}$.
A more elaborate expression which we shall make use of in later
sections is (with $a=1$)
\begin{equation}\label{cambiopict}
    \exp(i\boldsymbol{a}\cdot\boldsymbol{\sigma} t) \, (\boldsymbol{b}\cdot\boldsymbol{\sigma}) \, 
    \exp(-i\boldsymbol{a}\cdot\boldsymbol{\sigma} t)=%
    \boldsymbol{b}\cdot\boldsymbol{\sigma}+\sin 2t \, (\boldsymbol{b}\times\boldsymbol{a})\cdot\boldsymbol{\sigma}%
    +\sin^2 t \, (\boldsymbol{a}\times(\boldsymbol{b}\times\boldsymbol{a}))\cdot\boldsymbol{\sigma}
\end{equation}
In Hamiltonian problems the symplectic group $\SP(n)$ plays a
fundamental role. It is the group of $2n\times 2n$ real matrices
satisfying
\begin{equation}\label{simplectic}
    A^{T}JA=J,   \qquad \mbox{ with } \quad \qquad J=\left(
                      \begin{array}{rcc}
                        O_n & & I_n \\
                        -I_n & & O_n \\
                      \end{array}
                    \right)
\end{equation}
and $I_n$ denotes the $n$-dimensional identity matrix.
Its corresponding Lie algebra $\mathfrak{sp}(n)$ consists of
matrices verifying $B^{T}J + JB=O_{2n}$. In fact, these can be
considered particular instances of the so-called $J$-orthogonal
group, defined as \cite{postnikov94lga}
\begin{equation}   \label{j-ortho}
   \mathrm{O}_J(n) = \{ A \in \mathrm{GL}(n) \, : \, A^T J A = J \},
\end{equation}
where $\mathrm{GL}(n)$ is the group of all $n \times n$
nonsingular real matrices and $J$ is some constant matrix in
$\mathrm{GL}(n)$. Thus, one recovers the orthogonal group when
$J=I$, the symplectic group $\mathrm{Sp}(n)$ when $J$ is the basic
symplectic matrix given in (\ref{simplectic}), and the Lorentz
group $\mathrm{SO}(3,1)$ when $J = \mathrm{diag}(1,-1,-1,-1)$.
The corresponding Lie algebra is the set
\begin{equation}  \label{j-algebra}
\mathrm{o}_J(n) = \{ B \in \mathfrak{gl}_n (\mathbb{R}) \, : \,
B^T J + J B = O \},
\end{equation}
where $\mathfrak{gl}_n (\mathbb{R})$ is the Lie algebra of all $n
\times n$ real matrices. If $B \in \mathrm{o}_J(n)$, then its
Cayley transform
\begin{equation}  \label{cayley1}
A = (I - \alpha B)^{-1} (I + \alpha B)
\end{equation}
is $J$-orthogonal.

Another important matrix Lie group not included in the previous
characterization is the special linear group $\mathrm{SL}(n)$,
formed by all $n \times n$ real matrices with unit determinant. The
corresponding Lie algebra $\mathfrak{sl}(n)$ comprises all
traceless matrices. For real $2\times 2$ matrices in $\mathfrak{sl}(2)$ one has
\begin{equation}\label{expMat2}
 \exp  \left(  \begin{array}{cc}
   a &  \ b \\ c & -a
            \end{array}  \right) =
             \left(  \begin{array}{cc}
  \cosh(\eta) + \frac{a}{\eta} \sinh(\eta) &  \  \frac{b}{\eta}
               \sinh(\eta)    \\
 \frac{c}{\eta} \sinh(\eta) & \cosh(\eta) - \frac{a}{\eta} \sinh(\eta)
            \end{array}  \right)
\end{equation}
with $\eta=\sqrt{a^2+bc}$.

When dealing with convergence problems it is necessary to use some
type of norm for a matrix. By such we mean a non-negative real
number $\|A\|$ associated with each matrix $A \in \mathbb{C}^{n \times n}$ and satisfying
\begin{itemize} 
  \item[a)] $\|A\| \ge  0$ for all $A$ and $\|A\| = 0$  iff  $A=O_n$.  
  \item[b)] $\| \alpha A\| = |\alpha| \, \| A\|$, for all scalars $\alpha$.
  \item[c)] $\| A+B\|  \le   \| A \| + \| B\|$.
\end{itemize}
Quite often one adds the sub-multiplicative property
\begin{equation}\label{subaditiva}
    \| A B \| \le  \| A \| \, \| B \|,
\end{equation}
but not all matrix norms satisfy this condition \cite{golub96mc}.

There exist different families of matrix norms. Among the more
popular ones we have the $p$-norm $\| A\|_p$ and the Frobenius norm
$\|A\|_F$. For a matrix $A$ with elements $a_{ij}$, $i,j=1 \ldots
n$, they are defined as
\begin{eqnarray}
  \| A\|_p & = & \max_{\|\textbf{x}\|_p =1} \|A\textbf{x}\|_p     \label{spectral} \\
  \| A\|_F & = & \sqrt {\sum_{i=1}^{n} \sum_{j=1}^{n}
  |a_{ij}|^2}=\sqrt{\mathrm{tr}(A^{\dag} A) },   \label{Frobenius}
\end{eqnarray}
respectively, where $\|\textbf{x}\|_p =(\sum_{j=1}^{n}
  |x_j|^p)^{\frac{1}{p}}$ and $\mathrm{tr}($A$)$ is the trace of the
  matrix $A$. Although both verify (\ref{subaditiva}), the
$p$-norms have the important property that for every matrix $A$
and $\mathbf{x} \in \mathbb{R}^n$ one has $\| A \mathbf{x}\|_p \le
\|A\|_p \, \|\mathbf{x}\|_p$. The most used $p$-norms correspond to
$p=1$, $p=2$  and $p = \infty$. 

Of paramount importance in numerical linear algebra is the case
$p=2$. The resulting $2$-norm  of a vector is nothing but the Euclidean norm,
whereas in the matrix case it is also called the spectral norm of
$A$ and can be characterized as the square root of the largest
eigenvalue of $A^{\dag} A$. A frequently used inequality relating
Frobenius and spectral norms is the following:
\begin{equation}  \label{ineqf1}
   \|A\|_2 \le \|A\|_F  \le \sqrt{n} \, \|A\|_2.
\end{equation}
In fact, this last inequality can be made more stringent \cite{tyrtyshnikov97abi}:
\begin{equation}   \label{ineqf2}
   \|A\|_F \le \sqrt{\mathrm{rank}(A)} \, \|A\|_2.
\end{equation}   

Considering in a matrix Lie algebra $\mathfrak{g}$ a norm satisfying property
(\ref{subaditiva}), it is clear that $\|[A,B]\| \le 2 \|A\| \|B\|$, and the 
$\ad$ operator defined by (\ref{definad}) is bounded, since
\[
    \|\ad_A\| \le 2 \|A\| 
\]
for any matrix $A$.

A matrix norm is said to be unitarily invariant if $\|U A V\| =
\|A\|$ whenever $U$, $V$ are unitary matrices. Frobenius and $p$-norms are 
both unitarily invariant \cite{horn85man}.

In some of the most basic formulas for the Magnus expansion there
will appear the so-called Bernoulli numbers $B_n$, which are
defined through the generating function \cite{abramowitz65hom}
\[
   \frac{t \e^{z t}}{\e^t - 1} = \sum_{n=0}^{\infty} B_n(z) \,
   \frac{t^n}{n!}, \qquad |t| < 2 \pi
\]
as $B_n = B_n(0)$. Equivalently,
\[
    \frac{x}{\e^x - 1} = \sum_{n=0}^{\infty} \, \frac{B_n}{n!} \, x^n,
\]
whereas the formula
\[
      \frac{\e^x - 1}{x} = \sum_{n=0}^{\infty} \,  \frac{1}{(n+1)!} \, x^n
\]   
will be also useful in the sequel.   
The first few nonzero Bernoulli numbers are $B_0 = 1$, $B_1 =
-\frac{1}{2}$, $B_2 = \frac{1}{6}$, $B_4 = -\frac{1}{30}$. In
general one has $B_{2m+1}=0$ for $m\geq 1$.



\section{The Magnus expansion (ME)}\label{section2}

Magnus proposal with respect to
the linear evolution equation
\begin{equation}   \label{eq:evolution}
   Y^{\prime}(t)=A(t) Y(t)
\end{equation}
with initial condition $Y(0)=I$, was to express the solution as the exponential of a certain
function, 
\begin{equation} \label{eq:Omega}
  Y(t)=\exp{\Omega (t)}.
\end{equation}
This is in contrast to the representation
\[
    Y(t) = \mathcal{T} \left( \exp \int_0^t A(s) ds \right)
\]
in terms of the \emph{time-ordening operator} $\mathcal{T}$ introduced by Dyson \cite{dyson49trt}.
    
It turns out that $\Omega(t)$ in (\ref{eq:Omega}) 
can be obtained explicitly in a number of ways. The
crucial point is to derive a differential equation for the
operator $\Omega$ that replaces (\ref{eq:evolution}). Here we
reproduce the result first established by Magnus as Theorem III
in \cite{magnus54ote}:
\begin{theorem}  \label{thMag}
(Magnus 1954). Let $A(t)$ be a known function of $t$
(in general, in an associative ring), and
let $Y(t)$ be an unknown function satisfying (\ref{eq:evolution})
with $Y(0)=I$. Then, if certain unspecified conditions of
convergence are satisfied, $Y(t)$ can be written in the form
\[
   Y(t) = \exp{ \Omega(t)},
\]
where
\begin{equation}   \label{OmegaDE}
  \frac{d \Omega}{dt} = \sum_{n=0}^\infty \frac{B_n}{n!} \,
  {\ad}^n_\Omega A,
\end{equation}
and $B_n$ are the Bernoulli numbers. Integration of
(\ref{OmegaDE}) by iteration leads to an infinite series for
$\Omega$ the first terms of which are
\[
  \Omega(t) = \int_0^t A(t_1) dt_1 - \frac{1}{2} \int_0^t \left[
  \int_0^{t_1} A(t_2) dt_2, A(t_1) \right] dt_1 + \cdots
\]
\end{theorem}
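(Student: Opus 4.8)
The plan is to derive the differential equation (\ref{OmegaDE}) for $\Omega(t)$ by differentiating the relation $Y(t) = \exp\Omega(t)$ and comparing with $Y' = AY$. The key tool is the standard formula for the derivative of a matrix exponential: if $\Omega = \Omega(t)$ is a differentiable family, then
\[
   \frac{d}{dt} \exp\Omega(t) = \left( \int_0^1 \e^{s\,\ad_\Omega}\, \Omega'(t) \; ds \right) \exp\Omega(t)
   = \left( \frac{\e^{\ad_\Omega} - 1}{\ad_\Omega} \, \Omega'(t) \right) \exp\Omega(t),
\]
where the operator-valued function $\varphi(\ad_\Omega) = (\e^{\ad_\Omega}-1)/\ad_\Omega$ is understood through its power series $\sum_{k\ge 0} \ad_\Omega^k/(k+1)!$. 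I would first establish this formula (or cite it), most economically by introducing the two-parameter family $F(s,t) = \e^{s\Omega(t)}$, differentiating $\partial_t \partial_s F$ in the two orders and solving the resulting linear ODE in $s$ with the initial condition $F(0,t)=I$; integrating from $s=0$ to $s=1$ yields exactly the displayed expression.

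Next I would substitute this into $Y' = AY$. Since $Y = \exp\Omega$ is invertible, we may cancel it on the right, obtaining
\[
   \frac{\e^{\ad_\Omega} - 1}{\ad_\Omega}\, \Omega'(t) = A(t).
\]
The final step is to invert the operator $\varphi(\ad_\Omega) = (\e^{\ad_\Omega}-1)/\ad_\Omega$. Because the scalar generating function $x/(\e^x-1) = \sum_{n\ge 0} (B_n/n!)\, x^n$ is exactly the reciprocal of $(\e^x-1)/x$ (both power series having nonzero constant term $1$), the inverse operator is $\sum_{n\ge 0} (B_n/n!)\,\ad_\Omega^n$, which gives
\[
   \Omega'(t) = \sum_{n=0}^\infty \frac{B_n}{n!}\, \ad_\Omega^n\, A(t),
\]
precisely (\ref{OmegaDE}). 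The first few Magnus terms then follow by the iteration scheme $\Omega^{[0]} = 0$, $\Omega^{[m+1]}(t) = \int_0^t \sum_n (B_n/n!)\,\ad_{\Omega^{[m]}}^n A(s)\,ds$: at lowest order one gets $\Omega_1 = \int_0^t A(t_1)\,dt_1$, and feeding this back in, the $n=1$ term $-\tfrac12 \ad_\Omega A = -\tfrac12[\Omega,A]$ produces $-\tfrac12 \int_0^t [\int_0^{t_1} A(t_2)\,dt_2,\, A(t_1)]\,dt_1$, matching the stated expansion.

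The main obstacle is the interplay between the formal and the analytic sides of the argument: manipulating $\varphi(\ad_\Omega)$ and its inverse as if they were ordinary functions is only legitimate when the relevant series converge, and the derivative-of-exponential formula itself requires justification of term-by-term differentiation and of interchanging the two partial derivatives. This is exactly why the theorem statement carries the caveat ``if certain unspecified conditions of convergence are satisfied'' — Magnus's original argument is formal, and a rigorous treatment (convergence of the $\Omega_k$ series, the domain of $t$ on which $\ad_\Omega$ has small enough norm for $\varphi(\ad_\Omega)$ to be invertible, etc.) is precisely the convergence problem deferred to the detailed analysis later in Section~\ref{section2}. For the purposes of this theorem I would present the formal derivation cleanly and flag the analytic points rather than resolve them here.
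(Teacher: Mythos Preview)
Your proposal is correct and follows essentially the same route as the paper: the paper proves the derivative-of-exponential formula via the auxiliary function $Y(\sigma,t)=\partial_t(\e^{\sigma\Omega})\,\e^{-\sigma\Omega}$, computes $\partial_\sigma Y = \e^{\sigma\,\ad_\Omega}\Omega'$ and integrates in $\sigma$ (your two-parameter family argument), then inverts $(\e^{\ad_\Omega}-I)/\ad_\Omega$ using the Bernoulli generating function and applies Picard iteration. The paper packages the invertibility step as a separate lemma (with the eigenvalue condition on $\ad_\Omega$), but the substance is identical to what you outlined.
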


\subsection{A proof of Magnus Theorem}

The proof of this theorem is largely based on the derivative of
the matrix exponential map, which we discuss next. Given a scalar function
$\omega(t) \in \mathbb{R}$, the derivative of the exponential is
given by $d \exp(\omega(t))/dt = \omega'(t) \exp(\omega(t))$. One
could think of a similar formula for a matrix $\Omega(t)$.
However, this is not the case, since in general $[\Omega, \Omega']
\ne 0$. Instead one has the following result.
\begin{lemma}  \label{lemma1}
The derivative of a matrix exponential can be written
alternatively as
\begin{eqnarray}  \label{lem1a}
  \mathit{(a)} \quad   \frac{d}{dt} \exp(\Omega(t)) & = &
         d \exp_{\Omega(t)}(\Omega'(t)) \, \exp(\Omega(t)),  \\
  \mathit{(b)} \quad   \frac{d}{dt} \exp(\Omega(t)) & = &
        \exp(\Omega(t)) \,  d \exp_{-\Omega(t)}(\Omega'(t)),    \label{lem1c} \\         
  \mathit{(c)} \quad   \frac{d}{dt} \exp(\Omega(t)) & = & \int_0^1 \e^{x \Omega(t)}
   \, \Omega'(t) \, \e^{(1-x) \Omega(t)} dx, \label{lem1b}
\end{eqnarray}
where $d \exp_{\Omega}(C)$ is defined by its (everywhere convergent)
power series
\begin{equation}  \label{fdexp1}
  d \exp_{\Omega}(C)  =    \sum_{k=0}^{\infty} \frac{1}{(k+1)!} \,
   \mathrm{ad}_{\Omega}^k(C)
  \equiv
  \frac{\exp(\mathrm{ad}_{\Omega})-I}{\mathrm{ad}_{\Omega}}(C).
\end{equation}
\end{lemma}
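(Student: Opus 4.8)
The plan is to prove part (c) first by a direct argument using an auxiliary parameter, and then to deduce (a) and (b) from it purely algebraically, using the identity $\Ad_{\Omega}=\exp(\ad_{\Omega})$ recorded in (\ref{actionAd}).

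First I would fix $t$ and introduce the two–parameter family
\[
  \Gamma(x) := \frac{\partial}{\partial t}\exp\bigl(x\,\Omega(t)\bigr),\qquad x\in[0,1].
\]
Because the series $\exp(x\Omega(t))=\sum_{k\ge 0}x^k\Omega(t)^k/k!$ converges uniformly, together with its term-by-term $x$- and $t$-derivatives, on compact sets, $\exp(x\Omega(t))$ is jointly $C^1$ in $(x,t)$ and the mixed partials commute. Using $\partial_x\exp(x\Omega)=\Omega\exp(x\Omega)$ one gets
\[
  \partial_x\Gamma(x)=\partial_t\bigl(\Omega\exp(x\Omega)\bigr)=\Omega'\exp(x\Omega)+\Omega\,\Gamma(x),
\]
a linear inhomogeneous ODE in $x$ with $\Gamma(0)=\partial_t I=O$. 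Multiplying by the integrating factor $\exp(-x\Omega)$ turns the left side into $\partial_x\bigl(\exp(-x\Omega)\Gamma(x)\bigr)=\exp(-x\Omega)\,\Omega'\,\exp(x\Omega)$, and integrating from $0$ to $1$ gives $\Gamma(1)=\int_0^1\exp((1-x)\Omega)\,\Omega'\,\exp(x\Omega)\,dx$; the substitution $x\mapsto 1-x$ yields exactly (\ref{lem1b}).

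Next I would obtain (a) and (b) from (c). For (a), inside the integral write $\exp((1-x)\Omega)=\exp(-x\Omega)\exp(\Omega)$ and pull $\exp(\Omega)$ to the right, so the integrand becomes $\exp(x\Omega)\,\Omega'\,\exp(-x\Omega)=\Ad_{x\Omega}(\Omega')=\exp(x\,\ad_{\Omega})(\Omega')$ by (\ref{actionAd}); then
\[
  \frac{d}{dt}\exp(\Omega)=\Bigl(\int_0^1\exp(x\,\ad_{\Omega})\,dx\Bigr)(\Omega')\,\exp(\Omega)
  =\Bigl(\sum_{k=0}^{\infty}\frac{1}{(k+1)!}\ad_{\Omega}^k\Bigr)(\Omega')\,\exp(\Omega),
\]
which is $d\exp_{\Omega}(\Omega')\exp(\Omega)$. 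The term-by-term integration is legitimate because $\ad_{\Omega}$ is bounded, $\|\ad_{\Omega}\|\le 2\|\Omega\|$, so the operator series converges in norm uniformly in $x\in[0,1]$. For (b), symmetrically, write $\exp(x\Omega)=\exp(\Omega)\exp(-(1-x)\Omega)$, pull $\exp(\Omega)$ to the left, substitute $y=1-x$, and use $\int_0^1\exp(-y\,\ad_{\Omega})\,dy=\sum_{k\ge0}\frac{(-1)^k}{(k+1)!}\ad_{\Omega}^k=d\exp_{-\Omega}$.

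Finally, the closed form in (\ref{fdexp1}) is just the generating-function identity $\dfrac{\exp(z)-1}{z}=\sum_{n\ge0}\dfrac{z^n}{(n+1)!}$ from the preliminaries applied to the bounded operator $z=\ad_{\Omega}$, and the same boundedness gives the asserted everywhere-convergence of the defining series for $d\exp_{\Omega}$. I expect the only genuine technical point to be the justification of interchanging $\partial_x$ and $\partial_t$ and of differentiating and integrating the exponential series term by term; this is exactly where uniform convergence of the matrix exponential series on compact sets does the work, and everything else is bookkeeping.
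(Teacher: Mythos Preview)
Your proof is correct and uses essentially the same two-parameter auxiliary trick as the paper: both introduce $\partial_t\exp(x\Omega)$ as a function of $x$, differentiate in $x$, and integrate back from $0$ to $1$. The only organizational differences are that the paper builds the factor $\exp(-x\Omega)$ into its auxiliary function $Y(\sigma,t)=(\partial_t e^{\sigma\Omega})e^{-\sigma\Omega}$ from the outset (so that $\partial_\sigma Y$ comes out directly as $e^{\sigma\Omega}\Omega' e^{-\sigma\Omega}$ without an integrating factor), and it derives (a) first and reads off (c) from the same integral, whereas you solve the inhomogeneous ODE for $\Gamma$ to get (c) first and then deduce (a) and (b).
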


\begin{proof}
 Let $\Omega(t)$ be a matrix-valued differentiable function and
 set
 \[
   Y(\sigma,t) \equiv \frac{\partial }{\partial t} \left(
   \exp(\sigma \Omega(t)) \right) \exp(-\sigma \Omega(t))
\]
for $\sigma, t \in \mathbb{R}$. Differentiating with respect to
$\sigma$,
\begin{eqnarray*}
  \frac{\partial Y}{\partial \sigma} & = &
    \frac{\partial }{\partial t} \left( \exp(\sigma \Omega) \Omega
    \right) \exp(-\sigma \Omega) + \frac{\partial }{\partial t} \left( \exp(\sigma \Omega) \right)
    (- \Omega) \exp(-\sigma \Omega) \\
    & = & \left( \exp(\sigma \Omega) \Omega' + \frac{\partial }{\partial t}
       \left( \exp(\sigma \Omega) \right) \Omega \right)
       \exp(-\sigma \Omega)  \\
    & & - \frac{\partial }{\partial t} \left( \exp(\sigma \Omega)
    \right) \Omega \exp(-\sigma \Omega) = \exp(\sigma \Omega) \Omega' \exp(-\sigma \Omega) \\
    & = & \exp(\mathrm{ad}_{\sigma \Omega})(\Omega') =  \sum_{k=0}^{\infty} \frac{\sigma^k}{k!}
     \mathrm{ad}_{\Omega}^k (\Omega'),
\end{eqnarray*}
where the first equality in the last line follows readily from
(\ref{definAd}) and (\ref{actionAd}). On the other hand
\begin{equation}   \label{auxl1}
  \frac{d }{dt}(\exp \Omega) \exp(-\Omega) = Y(1,t) = \int_0^1
  \frac{\partial }{\partial \sigma} Y(\sigma, t) d\sigma
\end{equation}
since $Y(0,t) = 0$, and
\[
  \int_0^1 \frac{\partial }{\partial \sigma} Y(\sigma, t) d\sigma
  = \int_0^1 \sum_{k=0}^{\infty} \frac{\sigma^k}{k!}
     \mathrm{ad}_{\Omega}^k (\Omega') d\sigma = \sum_{k=0}^{\infty} \frac{1}{(k+1)!}
     \mathrm{ad}_{\Omega}^k (\Omega'),
\]
from which formula (\ref{lem1a}) follows. The convergence of the
power series (\ref{fdexp1}) is a consequence of the boundedness of
the ad operator: $\|\mathrm{ad}_{\Omega}\| \le 2 \|\Omega\|$.

Multiplying both sides of (\ref{lem1a}) by $\exp(-\Omega)$, we have
\[
  \e^{-\Omega} \frac{d \e^{\Omega}}{dt}  =  \e^{-\Omega} d \exp_{\Omega}(\Omega') \e^{\Omega} 
     = \e^{\ad_{-\Omega}} d \exp_{\Omega}(\Omega') =
      \frac{\e^{\ad_{-\Omega}} - I}{\ad_{-\Omega}} \Omega' = d \exp_{-\Omega}(\Omega')
\]
from which (\ref{lem1c}) follows readily.
Finally, equation (\ref{lem1b}) is obtained by taking
\[
  \int_0^1 \frac{\partial }{\partial \sigma} Y(\sigma, t) d\sigma
  = \int_0^1 \exp(\sigma \Omega) \Omega' \exp(-\sigma \Omega) d\sigma
\]
in (\ref{auxl1}).
\end{proof}

According to Rossmann \cite{rossmann02lgr} and Sternberg \cite{sternberg04lal},
formula (\ref{lem1a}) was first proved by F. Schur in
1890 \cite{schur90nbd}  and was
taken up later from a different point of view by Poincar\'e (1899),
whereas the integral formulation
(\ref{lem1b}) has been derived a number of times in the physics
literature \cite{wilcox67eoa}.

As a consequence of the Inverse Function Theorem, the exponential
map has a local inverse in the vicinity of a point $\Omega$ at
which $d \exp_{\Omega} =
(\exp(\mathrm{ad}_{\Omega})-I)/\mathrm{ad}_{\Omega}$ is
invertible. The following lemma establishes when this takes place.

\begin{lemma}  \label{lemma2}
(Baker 1905). If the eigenvalues of the linear operator
$\mathrm{ad}_{\Omega}$ are different from $2 m \pi i$ with $m \in
\{ \pm 1, \pm 2, \ldots \}$, then $d \exp_{\Omega}$ is invertible.
Furthermore,
\begin{equation}  \label{fdexpinv}
  d \exp_{\Omega}^{-1}(C) =  \frac{\ad_{\Omega}}{\e^{\ad_{\Omega}} - I} C = 
  \sum_{k=0}^{\infty} \frac{B_k}{k!}
  \mathrm{ad}_{\Omega}^k (C)
\end{equation}
and the convergence of the $d \exp_{\Omega}^{-1}$ expansion is
certainly assured if  $\| \Omega \| < \pi$.
\end{lemma}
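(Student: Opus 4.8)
The plan is to treat $d\exp_\Omega$ as an analytic function of the single linear operator $\ad_\Omega$ and to transfer every statement from the corresponding scalar function. Set $g(z) = (\exp(z)-1)/z$, an entire function with $g(0)=1$; formula (\ref{fdexp1}) says precisely that $d\exp_\Omega = g(\ad_\Omega)$. The zeros of $g$ are exactly the points $z = 2 m \pi i$ with $m \in \mathbb{Z}\setminus\{0\}$, since $\exp(z)=1$ forces $z \in 2\pi i \, \mathbb{Z}$ while the numerator $\exp(z)-1$ has a simple zero at $z=0$ that cancels the pole of $1/z$, so $g$ is entire.

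First I would settle invertibility. Since $\ad_\Omega$ acts on the finite-dimensional space $\mathfrak{g}$, it can be put in upper-triangular form, and then the everywhere-convergent series defining $g(\ad_\Omega)$ is upper-triangular with diagonal entries $g(\mu)$, where $\mu$ runs over the eigenvalues of $\ad_\Omega$ counted with multiplicity. Hence $\det d\exp_\Omega = \prod_\mu g(\mu)$, which is nonzero if and only if no eigenvalue $\mu$ of $\ad_\Omega$ is a zero of $g$, that is, if and only if no $\mu$ equals $2 m \pi i$ with $m \neq 0$. (This is just the spectral mapping relation $\mathrm{spec}(g(\ad_\Omega)) = g(\mathrm{spec}(\ad_\Omega))$.) This proves the first assertion.

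For the explicit inverse I would introduce $h(z) = 1/g(z) = z/(\exp(z)-1)$. Its only singularities are the zeros of $g$, namely $z = 2 m \pi i$ with $m \neq 0$, so $h$ is holomorphic on the disk $|z| < 2\pi$, where it equals its Taylor series $\sum_{k \ge 0} (B_k/k!)\, z^k$ — exactly the generating function of the Bernoulli numbers recalled in Section \ref{notations}. Under the eigenvalue hypothesis one defines $h(\ad_\Omega)$ by the holomorphic functional calculus on a contour that encircles $\mathrm{spec}(\ad_\Omega)$ and avoids the excluded points $2 m \pi i$; since $g \cdot h \equiv 1$ on their common domain, the functional calculus yields $g(\ad_\Omega)\, h(\ad_\Omega) = h(\ad_\Omega)\, g(\ad_\Omega) = \mathrm{Id}$, so $d\exp_\Omega^{-1} = h(\ad_\Omega)$. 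To obtain the series form together with its stated domain, recall $\|\ad_\Omega\| \le 2\|\Omega\|$: if $\|\Omega\| < \pi$ the spectral radius of $\ad_\Omega$ lies below $2\pi$, hence the operator series $\sum_k (B_k/k!)\,\ad_\Omega^k$ converges absolutely in operator norm and, by continuity of the functional calculus, represents $h(\ad_\Omega)$. Evaluating at $C$ gives (\ref{fdexpinv}).

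The delicate point is the justification of the operator functional calculus when $\ad_\Omega$ fails to be diagonalizable. For the invertibility half this is sidestepped by the triangularization-plus-determinant argument above. For the inverse formula one may instead prove the identity $d\exp_\Omega^{-1}(C) = \sum_k (B_k/k!)\,\ad_\Omega^k(C)$ first on the ball $\|\Omega\| < \pi$, where both sides are genuinely convergent power series in $\ad_\Omega$ and the identity reduces to the scalar relation $g(z)\,h(z) = 1$ applied formally to the commuting powers of $\ad_\Omega$, and then extend it by analytic continuation in $\Omega$ throughout the region where $d\exp_\Omega$ is invertible.
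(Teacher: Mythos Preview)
Your proof is correct and follows essentially the same approach as the paper: spectral mapping to show invertibility (the paper simply states that the eigenvalues of $d\exp_\Omega$ are $(\e^\nu-1)/\nu$ for $\nu\in\mathrm{spec}(\ad_\Omega)$, while you justify this via triangularization), the scalar identity $g(z)h(z)=1$ transferred to operators to identify the inverse, and the bound $\|\ad_\Omega\|\le 2\|\Omega\|$ combined with the radius of convergence $2\pi$ of the Bernoulli generating function to guarantee convergence of the series. Your discussion of the functional calculus and the non-diagonalizable case is more careful than the paper's terse argument, but the underlying strategy is identical.
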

\begin{proof}
The eigenvalues of $d \exp_{\Omega}$ are of the form
\[
 \mu = \sum_{k \ge 0} \frac{\nu^k}{(k+1)!} = \frac{\e^{\nu}
- 1}{\nu},
\]
where $\nu$ is an eigenvalue of $\mathrm{ad}_{\Omega}$. By
assumption, the values of $\mu$ are non-zero, so that $d
\exp_{\Omega}$ is invertible. By definition of the Bernoulli
numbers, the composition of (\ref{fdexpinv}) with (\ref{fdexp1})
gives the identity. Convergence for $\| \Omega \| < \pi$ follows
from $\|\mathrm{ad}_{\Omega}\| \le 2 \|\Omega\|$ and from the fact
that the radius of convergence of the series expansion for
$x/(\e^x - 1)$ is $2 \pi$.
\end{proof}

It remains to determine the eigenvalues of the operator
$\mathrm{ad}_{\Omega}$. In fact, it is not difficult to show that
if $\Omega$ has $n$ eigenvalues $\{ \lambda_j, \; j=1,2,\ldots,n
\}$, then $\mathrm{ad}_{\Omega}$ has $n^2$ eigenvalues $\{
\lambda_j - \lambda_k, \; j,k=1,2,\ldots,n \}$.

As a consequence of the previous discussion, Theorem \ref{thMag} can be
rephrased more precisely in the following terms.
\begin{theorem}  \label{thMrefor}
 The solution of the differential equation $Y' =
A(t) Y$ with initial condition $Y(0) = Y_0$
can be written as $Y(t) = \exp(\Omega(t)) Y_0$ with
$\Omega(t)$ defined by
\begin{equation}  \label{fmag1}
  \Omega' = d \exp_{\Omega}^{-1}(A(t)), \qquad \Omega(0) = O,
\end{equation}
where
\[
   d \exp_{\Omega}^{-1}(A) = \sum_{k=0}^{\infty} \frac{B_k}{k!}
  \mathrm{ad}_{\Omega}^k (A).
\]
\end{theorem}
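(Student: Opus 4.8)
The plan is to combine Lemma \ref{lemma1} and Lemma \ref{lemma2} in the natural way. First I would start from the ansatz $Y(t) = \exp(\Omega(t)) Y_0$, which satisfies the initial condition automatically provided $\Omega(0) = O$ (so that $\exp(\Omega(0)) = I$). Differentiating and using part $(a)$ of Lemma \ref{lemma1}, one has
\[
   Y'(t) = \frac{d}{dt}\bigl(\exp(\Omega(t))\bigr) Y_0 = d\exp_{\Omega(t)}(\Omega'(t)) \, \exp(\Omega(t)) \, Y_0 = d\exp_{\Omega(t)}(\Omega'(t)) \, Y(t).
\]
Comparing with the equation $Y' = A(t) Y$ that we want $Y$ to satisfy, and using that $Y(t)$ is invertible, this forces
\[
   d\exp_{\Omega(t)}(\Omega'(t)) = A(t).
\]

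Next I would invert this relation for $\Omega'$. By Lemma \ref{lemma2}, as long as the eigenvalues of $\ad_{\Omega(t)}$ avoid the forbidden set $\{2m\pi i : m \in \mathbb{Z}\setminus\{0\}\}$ — which, in particular, holds whenever $\|\Omega(t)\| < \pi$ — the operator $d\exp_{\Omega(t)}$ is invertible with inverse given by the Bernoulli series $d\exp_{\Omega}^{-1}(C) = \sum_{k\ge 0} \frac{B_k}{k!}\,\ad_\Omega^k(C)$. Applying $d\exp_{\Omega(t)}^{-1}$ to both sides yields exactly
\[
   \Omega'(t) = d\exp_{\Omega(t)}^{-1}(A(t)) = \sum_{k=0}^{\infty} \frac{B_k}{k!} \, \ad_{\Omega(t)}^k(A(t)), \qquad \Omega(0) = O,
\]
which is the assertion. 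Conversely, one should check that if $\Omega$ solves this ODE with $\Omega(0)=O$, then $Y := \exp(\Omega) Y_0$ does solve the original equation: this is just running the above computation backwards, since $d\exp_{\Omega}\bigl(d\exp_{\Omega}^{-1}(A)\bigr) = A$ by Lemma \ref{lemma2}.

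The main subtlety — and this is really the only obstacle — is the domain of validity: the differential equation $\Omega' = d\exp_{\Omega}^{-1}(A(t))$ only makes sense while $d\exp_{\Omega(t)}$ remains invertible, i.e. while the eigenvalue condition of Lemma \ref{lemma2} holds. Since $\Omega(0) = O$ and $\Omega$ depends continuously on $t$, there is an interval around $t=0$ on which $\|\Omega(t)\| < \pi$, so the reformulated equation is well posed there and its solution reproduces the Magnus exponent; this is why Theorem \ref{thMag} carried the caveat ``if certain unspecified conditions of convergence are satisfied.'' I would simply remark that the precise characterization of this interval is the convergence problem treated later, and that within it the equivalence of $Y' = A(t)Y$, $Y(0) = Y_0$ with the pair $Y = \exp(\Omega) Y_0$, \eqref{fmag1} is immediate from the two lemmas. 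No serious calculation is needed beyond what is already in the proofs of Lemmas \ref{lemma1} and \ref{lemma2}.
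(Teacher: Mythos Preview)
Your proof is correct and follows exactly the same route as the paper: differentiate the ansatz $Y=\exp(\Omega)Y_0$ using Lemma~\ref{lemma1}(a), match with $Y'=A(t)Y$ to get $d\exp_{\Omega}(\Omega')=A(t)$, and invert via Lemma~\ref{lemma2}. The paper's version is in fact terser than yours, omitting the converse check and the remarks on the domain of validity, but the argument is the same.
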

\begin{proof}
 Comparing the derivative of $Y(t) = \exp(\Omega(t)) Y_0$,
\[
  \frac{dY}{dt} = \frac{d }{dt} \left( \exp(\Omega(t)) \right) Y_0
  = d \exp_{\Omega}(\Omega') \, \exp(\Omega(t)) Y_0
\]
with $Y' = A(t) Y$, we obtain $A(t) = d \exp_{\Omega}(\Omega')$.
Applying the inverse operator $d \exp_{\Omega}^{-1}$ to this
relation yields the differential equation (\ref{fmag1}) for
$\Omega(t)$.
\end{proof}

Taking into account the numerical values of the first few
Bernoulli numbers, the differential equation (\ref{fmag1})
therefore becomes
\[
  \Omega' = A(t) - \frac{1}{2} [\Omega, A(t)] + \frac{1}{12}
  [\Omega,[\Omega,A(t)]] + \cdots,
\]
which is nonlinear in $\Omega$. By defining
\[
  \Omega^{[0]} = O, \qquad \Omega^{[1]} = \int_0^t A(t_1) dt_1,
\]
and applying Picard fixed point iteration, one gets
\[
  \Omega^{[n]} = \int_0^t \left( A(t_1) dt_1 - \frac{1}{2}
  [\Omega^{[n-1]}, A] + \frac{1}{12}
  [\Omega^{[n-1]},[\Omega^{[n-1]},A]] + \cdots \right)
     dt_1
\]
and $\lim_{n \rightarrow \infty} \Omega^{[n]}(t) = \Omega(t)$ in
a suitably small neighbourhood of the origin.

\subsection{Formulae for the first terms in Magnus expansion}
\label{sec2.2}
 Suppose now that $A$ is of first order in some
parameter $\varepsilon $ and try a solution in the form of a series
\begin{equation}   \label{M-series}
 \Omega(t) =\sum_{n=1}^\infty \Omega_n(t),
\end{equation}
where $\Omega_n$ is supposed to be of order $\varepsilon^n$. 
Equivalently, we replace $A \longmapsto \varepsilon A$ in (\ref{eq:evolution})
and determine the successive terms of
\begin{equation}  \label{M-series-eps}
    \Omega(t) =\sum_{n=1}^\infty  \varepsilon^n \Omega_n(t).
\end{equation}
This can be done explicitly, at least for the first terms,
by substituting the series (\ref{M-series-eps})
in (\ref{fmag1}) and equating powers of
$\varepsilon $. Obviously, the Magnus series (\ref{M-series}) is recovered by taking $\varepsilon = 1$.
Thus, using the
notation $A(t_i)\equiv A_i$, the first four orders read
\begin{enumerate}
\item $\Omega_1^{\prime}= A $, so that
\begin{equation}\label{O1}
  \Omega_1(t)=\int_{0}^t {\rm d}t_1 A _1
\end{equation}
\item $\Omega_2^{\prime}=-\frac{1}{2}[\Omega_1,A ]$. Thus
\begin{equation}\label{O2}
  \Omega_2(t)=\frac{1}{2}\int_{0}^t {\rm d}t_1 \int_{0}^{t_1} {\rm d}t_2
  [A _1,A _2]
\end{equation}
\item $\Omega_3^{\prime}
=-\frac{1}{2}[\Omega_2,A ]+\frac{1}{12}[\Omega_1,[\Omega_1,A ]]$.
After some work and using the formula
\begin{equation}\label{intxy}
  \int_0^\alpha {\rm d}x \int_0^x f(x,y){\rm d}y=\int_0^\alpha {\rm
  d}y \int_y^\alpha f(x,y){\rm d}x
\end{equation}
we obtain
\begin{equation}\label{O3}
  \Omega_3(t)=\frac{1}{6}\int_{0}^t {\rm d}t_1 \int_{0}^{t_1}{\rm d}t_2
  \int_{0}^{t_2} {\rm d}t_3  \{
  [A _1,[A _2,A _3]]+[[A _1,A _2],A _3]\}
\end{equation}
\item $\Omega_4^{\prime}=-\frac{1}{2}[\Omega_3,A ]+\frac{1}{12}[\Omega_2,[\Omega_1,A ]]
+\frac{1}{12}[\Omega_1,[\Omega_2,A ]]$, which yields
\begin{eqnarray}\label{O4}
  \Omega_4(t)&=&\frac{1}{12}\int_{0}^t {\rm d}t_1 \int_{0}^{t_1}{\rm d}t_2
  \int_{0}^{t_2} {\rm d}t_3  \int_{0}^{t_3} {\rm d}t_4 \{
  [[[A _1,A _2],A _3]A _4] \\
  &+&[A _1,[[A _2,A _3],A _4]]+[A _1,[A _2,[A _3,A _4]]]+[A _2,[A _3,[A _4,A _1]]]
  \} \nonumber
\end{eqnarray}

\end{enumerate}
The apparent symmetry in the formulae above is deceptive. High
orders require repeated use of (\ref{intxy}) and become unwieldy.
Prato and Lamberti
\cite{prato97ano} give explicitly the fifth order using an
algorithmic point of view. One can also find in the literature quite
involved explicit expressions for an arbitrary order \cite
{bialynicki69eso,mielnik70cat,saenz02cat,strichartz87tcb,suarez01laa}.
 In the next subsection we describe
a recursive procedure to generate the terms in the expansion.

\subsection{Magnus expansion generator}\label{MEG}

The above procedure can provide indeed a recursive procedure to generate
all the terms in the Magnus series (\ref{M-series}). Thus, by substituting
$\Omega(t)=\sum_{n=1}^{\infty }\Omega _{n}$ into equation
(\ref{fmag1}) and equating terms of the same order one gets in
general
\begin{eqnarray}
\Omega_{1}^{\prime} &=& A   \nonumber  \label{omegapunto} \\
\Omega_{n}^{\prime}
&=&\sum_{j=1}^{n-1}\frac{B_{j}}{j!}S_{n}^{(j)},\quad n\geq 2,
\label{omegapunt}
\end{eqnarray}
where
\begin{equation}   \label{Sk}
  S_n^{(k)}=\sum \,
[\Om _{i_1},[ \ldots [\Om _{i_k},A ] \ldots ]]  \qquad (i_1+
\cdots +i_k=n-1).
\end{equation}
Notice that in the last equation the order in $A$ has been
explicitly reckoned, whereas $k$ represents the number of $\Om
$'s. The newly defined operators $S_n^{(k)}$ can again be
calculated recursively. The recurrence relations are now given by
\begin{eqnarray}
S_{n}^{(j)} &=&\sum_{m=1}^{n-j}\left[ \Omega
_{m},S_{n-m}^{(j-1)}\right]
,\qquad\qquad  2\leq j\leq n-1  \label{eses} \\
S_{n}^{(1)} &=&\left[ \Omega _{n-1},A \right] ,\qquad
S_{n}^{(n-1)}= \ad _{\Omega _{1}}^{n-1} (A)  .  \nonumber
\end{eqnarray}
After integration we reach the final result in the form
\begin{eqnarray}
\Omega _{1} &=&\int_{0}^{t}A (\tau )d\tau  \nonumber \\
\Omega _{n}
&=&\sum_{j=1}^{n-1}\frac{B_{j}}{j!}\int_{0}^{t}S_{n}^{(j)}(\tau
)d\tau ,\qquad\qquad n\geq 2.  \label{omegn}
\end{eqnarray}
Alternatively, the expression of $S_n^{(k)}$ given by (\ref{Sk}) 
can be inserted into (\ref{omegn}), thus arriving at
\begin{equation}   \label{recur2}
  \Omega_n(t) =  \sum_{j=1}^{n-1} \frac{B_j}{j!} \,
    \sum_{
            k_1 + \cdots + k_j = n-1 \atop
            k_1 \ge 1, \ldots, k_j \ge 1}
            \, \int_0^t \,
       \ad_{\Omega_{k_1}(s)} \,  \ad_{\Omega_{k_2}(s)} \cdots
          \, \ad_{\Omega_{k_j}(s)} A(s) \, ds    \qquad n \ge 2.
\end{equation}
Notice that each term $\Omega_n(t)$ in the Magnus series is
a multiple integral of combinations of $n-1$ nested commutators containing
$n$ operators $A(t)$. If, in particular, $A(t)$ belongs to some Lie algebra $\mathfrak{g}$,
then it is clear that $\Omega(t)$ (and in fact any truncation of the Magnus series) also stays
in $\mathfrak{g}$ and therefore $\exp(\Omega) \in \mathcal{G}$, where $\mathcal{G}$
denotes the Lie group whose corresponding Lie algebra (the tangent space at the
identity of $\mathcal{G}$) is $\mathfrak{g}$.

\subsection{Magnus expansion and time-dependent perturbation
theory}
\label{sec2.4}

It is not difficult to establish a connection between Magnus
series and Dyson perturbative series \cite{dyson49trt}. The later gives
the solution of (\ref{eq:evolution}) as
\begin{equation}  \label{Dys1}
  Y(t) = I + \sum_{n=1}^{\infty} P_n(t),
\end{equation}
where $P_n$ are time-ordered products
\[
   P_n(t) = \int_0^t dt_1 \ldots \int_0^{t_{n-1}} dt_n \, A_1 A_2
   \ldots A_n,
\]
where $A_i \equiv A(t_i)$. Then
\[
\sum_{j=1}^{\infty }\Omega _{j}(t)=\log \left( I+\sum_{j=1}^{\infty
}P_{j}(t)\right).
\]
As stated by Salzman \cite{salzman87ncf},
\begin{equation}
\Omega
_{n}=P_{n}-\sum_{j=2}^{n}\frac{(-1)^{n}}{j}R_{n}^{(j)},\qquad
n\geq 2, \label{MagDay}
\end{equation}
where
\[
  R_n^{(k)}=\sum P_{i_1}P_{i_2}\ldots P_{i_k}  \qquad
(i_1+\cdots+i_k=n)
\]
obeys to the quadratic recursion formula
\begin{eqnarray}
R_{n}^{(j)} &=&\sum_{m=1}^{n-j+1}R_{m}^{(1)}R_{n-m}^{(j-1)},
\label{erres}
\\
R_{n}^{(1)} &=&P_{n},\qquad R_{n}^{(n)}=P_{1}^{n}.  \nonumber
\end{eqnarray}
Equation (\ref{erres}) represents the Magnus expansion generator
in Salzman's approach.
It may be useful to write down the first few equations provided by
this formalism:
\begin{eqnarray}\label{Salzeqs}
  \Omega_1&=& P_1 \nonumber \\
  \Omega_2&=& P_2-\frac{1}{2}P_1^2 \\
  \Omega_3&=& P_3-\frac{1}{2}(P_1P_2+P_2P_1)+\frac{1}{3}P_1^3.   \nonumber
\end{eqnarray}
A similar set of equations was developed by Burum
\cite{burum81meg}, thus providing
\begin{eqnarray}\label{Burum1}
  P_1&=&\Om  _1,\nonumber \\
P_2&=&\Om  _2+\frac{1}{2!}\Om  _1^2,  \\
P_3&=&\Om  _3+\frac{1}{2!}(\Om  _1 \Om  _2 + \Om  _2 \Om  _1) +
\frac{1}{3!} \Om_1^3  \nonumber
\end{eqnarray}
and so on. The general term reads
\begin{equation}\label{MagDay1}
\Omega_{n}=P_{n}-\sum_{j=2}^{n}\frac{1}{j}Q_{n}^{(j)}, \qquad
n\geq 2,
\end{equation}
where
\begin{equation}   \label{BurumQ}
  Q_n^{(k)} = \sum \Om _{i_1}\ldots\Om _{i_k},
\qquad\quad (i_1+ \cdots +i_k=n).
\end{equation}
As before, subscripts indicate the order with respect to the
parameter $\varepsilon$, while superscripts represent the number
of factors in each product. Thus, the summation in (\ref{BurumQ})
extends over all possible products of $k$ (in general
non-commuting) operators $\Om _i$  such that the overall order of
each term is equal to $n$. By regrouping terms, one has
\begin{eqnarray}\label{BurumQQ}
  Q_n^{(k)} &=& \Om  _1 \sum_{i_2+\cdots+i_k=n-1}
\Om _{i_2}\cdots\Om _{i_k}+ \Om _2 \sum_{i_2+\cdots+i_k=n-2}
\Om _{i_2}\cdots\Om _{i_k}\\   \nonumber
&+& \cdots +\Om
_{n-k+1}\sum_{i_2+\cdots+i_k=k-1} \Om _{i_2}\cdots\Om _{i_k},
\end{eqnarray}
where $Q_{n}^{(j)}$ may also be obtained recursively from
\begin{eqnarray}\label{qs}
Q_{n}^{(j)} &=&\sum_{m=1}^{n-j+1}Q_{m}^{(1)}Q_{n-m}^{(j-1)},\\
Q_{n}^{(1)} &=&\Omega_{n},\qquad Q_{n}^{(n)}=\Omega_{1}^{n}.
\nonumber
\end{eqnarray}
By working out this recurrence one gets the same expressions as
(\ref{Salzeqs}) for the first terms. Further aspects of the
relationship between Magnus, Dyson series and time-ordered
products can be found in \cite{lam98dot} and \cite{oteo00ftp}.

\subsection{Graph theoretical analysis of Magnus
expansion}   \label{graph}

The previous recursions allow us in principle to express any
$\Omega_k$ in the Magnus series in terms of $\Omega_1, \ldots,
\Omega_{k-1}$. In fact, this procedure has some advantages from a
computational point of view. On the other hand, as we have mentioned before,
when the
recursions are solved explicitly,  $\Omega_k$ can
be expanded as a linear combination of terms that are composed
from integrals and commutators acting iteratively on $A$. The
actual expression, however, becomes increasingly complex with $k$,
as it should be evident from the first terms
(\ref{O1})-(\ref{O4}). An alternative form of the Magnus
expansion, amenable also for recursive derivation by using
graphical tools, can be obtained by associating each term in the
expansion with a \emph{binary rooted tree}, an approach worked out
by Iserles and N{\o}rsett \cite{iserles99ots}. For completeness,
in the sequel we show the equivalence of the recurrence
(\ref{eses})-(\ref{omegn}) with this graph theoretical
approach.

In essence, the idea of Iserles and N{\o}rsett is to associate
each term in $\Omega_k$ with a rooted tree, according to the
following prescription.

Let $\TT_0$ be the set consisting of the single rooted tree
with one vertex, then $  \TT_0=\{ \begin{picture}(10,5)
              \put (5,3) \AAA
             \end{picture} \}$, establish the relationship
             between this tree and $A$ through the map
\[
    \begin{picture}(10,5)
      \put (5,3) \AAA
    \end{picture} \leadsto A(t)
\]
and define recursively
\begin{displaymath}
    \TT_m=\left\{\rule[-12pt]{0pt}{24pt}\right. \;\;
      \begin{picture}(20,30)(0,10)
    \put (10,0) \BIF
    \put (0,10) \UP
    \put (-3,23) {$\tau_1$}
    \put (17,13) {$\tau_2$}
      \end{picture}\quad :\; \tau_1\in\TT_{k_1},\; \tau_2\in\TT_{k_2},
      \; k_1+k_2=m-1\left.\rule[-12pt]{0pt}{24pt}\right\}.
\end{displaymath}
Next, given two expansion terms $H_{\tau_1}$ and $H_{\tau_2}$,
which have been associated previously with $\tau_1 \in \TT_{k_1}$
and $\tau_2 \in \TT_{k_2}$, respectively ($k_1 + k_2 = m-1$), we
associate
\begin{displaymath}
    H_\tau(t)=\left[\int_0^t H_{\tau_1}(\xi) d\xi,
    H_{\tau_2}(t)\right]  \qquad\mbox{with}\qquad \tau=
  \begin{picture}(20,30)
     \put (10,0) \BIF
     \put (0,10) \UP
     \put (-3,23) {$\tau_1$}
     \put (17,13) {$\tau_2$}
  \end{picture}.
\end{displaymath}
Thus, each $H_\tau$ for $\tau \in \TT_{m}$ involves exactly $m$
integrals and $m$ commutators.

 These composition rules establish a one-to-one relationship
between a rooted tree $\displaystyle \tau\in\TT \equiv \cup_{m\geq0}\TT_m$, and
a matrix function $H_\tau(t)$ involving $A$, multivariate
integrals and commutators.

From here it is easy to deduce that every $\tau \in \TT_m$, $m
\geq 1$, can be written in a unique way as
\begin{displaymath}
  \tau=
  \begin{picture}(50,65)
    \put (10,0) \BIF
    \put (0,10) \UP
    \put (-3,23) {$\tau_1$}
    \put (20,10) \BIF
    \put (10,20) \UP
    \put (7,33) {$\tau_2$}
    \put (30,20) {\line(-1,1){10}}
    \put (20,30) \UP
    \put (17,43) {$\tau_3$}
    \put (40,30) \BIF
    \put (50,40) \AAA
    \put (30,40) \UP
    \put (27,53) {$\tau_s$}
    \multiput (31,21.5)(2,2){4} {{\tiny .}}
  \end{picture}
\end{displaymath}
or $\tau \equiv a(\tau_1, \tau_2, \ldots, \tau_s)$. Then the
Magnus expansion can be expressed in the form
\cite{iserles00lgm,iserles99ots}
\begin{equation}   \label{m.2}
  \Omega(t)=\sum_{m=0}^\infty \sum_{\tau\in\TT_m} \alpha(\tau)
  \int_0^t H_\tau(\xi) d\xi,
\end{equation}
with the scalar $\alpha(\begin{picture}(10,5)  \put (5,3) \AAA
\end{picture})=1$ and, in general,
\[
  \alpha(\tau)=\frac{{B}_s}{s!} \prod_{l=1}^s
    \alpha(\tau_l).
\]
Let us illustrate this procedure by writing down explicitly the
first terms in the expansion in a tree formalism. In $\TT_1$ we
only have $k_1 = k_2 = 0$, so that a single tree is possible,
\[
  \tau_1 =  \begin{picture}(10,5)
      \put (5,3) \AAA
    \end{picture},   \qquad
  \tau_2 =  \begin{picture}(10,5)
      \put (5,3) \AAA
    \end{picture},   \qquad  \Rightarrow \qquad
  \tau = \begin{picture}(20,20)
    \put (10,0) \BIF
    \put (0,10) \UP
    \put (0,20) \AAA
    \put (20,10) \AAA
  \end{picture},
\]
with $\alpha(\tau) = -1/2$. In $\TT_2$ there are two
possibilities, namely $k_1=0$, $k_2=1$ and $k_1=1$, $k_2=0$, and
thus one gets
\[
   \begin{array}{lclclc}
   \tau_1 = \begin{picture}(10,5)
      \put (5,3) \AAA
    \end{picture}, & \quad & \tau_2 = \begin{picture}(20,20)
    \put (10,0) \BIF
    \put (0,10) \UP
    \put (0,20) \AAA
    \put (20,10) \AAA
  \end{picture} &  \qquad \Rightarrow \qquad &
  \tau = \begin{picture}(30,35)
      \put (10,0) \BIF \put (0,10) \UP \put (20,10) \BIF \put (10,20) \UP
      \put (0,20) \AAA
      \put (10,30) \AAA
      \put (30,20) \AAA
    \end{picture}, &  \quad \alpha(\tau) = \frac{1}{12} \\
    \tau_1 = \begin{picture}(20,20)
    \put (10,0) \BIF
    \put (0,10) \UP
    \put (0,20) \AAA
    \put (20,10) \AAA
  \end{picture}, & \quad & \tau_2 = \begin{picture}(10,5)
      \put (5,3) \AAA
    \end{picture} & \qquad \Rightarrow \qquad &
  \tau = \begin{picture}(30,45)
      \put (20,0) \BIF \put (10,10) \UP \put (10,20) \BIF \put (0,30) \UP
      \put (0,40) \AAA
      \put (20,30) \AAA
      \put (30,10) \AAA
    \end{picture} &  \quad \alpha(\tau) = \frac{1}{4}
 \end{array}
\]
and the process can be repeated for any $\TT_m$. The
correspondence between trees and expansion terms should be clear
from the previous graphs. For instance, the last tree is nothing
but the integral of $A$, commuted with $A$, integrated and
commuted with $A$. In that way, by truncating the expansion
(\ref{m.2}) at $m=2$ we have
\begin{equation}  \label{m.3}
  \Omega(t) = \;\; \begin{picture}(10,10)
    \put (5,0) \UP
    \put (5,10) \AAA
  \end{picture} \;\; - \;\; \frac{1}{2} \;\; \begin{picture}(20,35)
    \put (10,0) \UP
    \put (10,10) \BIF
    \put (0,20) \UP
    \put (0,30) \AAA
    \put (20,20) \AAA
  \end{picture} \quad + \quad \frac{1}{4} \;\; \begin{picture}(30,55)
    \put (20,0) \UP
    \put (20,10) \BIF
    \put (10,20) \UP
    \put (10,30) \BIF
    \put (0,40) \UP
    \put (0,50) \AAA
    \put (20,40) \AAA
    \put (30,20) \AAA
  \end{picture}\quad+\quad \frac{1}{12} \;\; \begin{picture}(30,45)
    \put (10,0) \UP
    \put (10,10) \BIF
    \put (0,20) \UP
    \put (20,20) \BIF
    \put (10,30) \UP
    \put (0,30) \AAA
    \put (10,40) \AAA
    \put (30,30) \AAA
  \end{picture} \quad + \cdots,
\end{equation}
i.e., the explicit expressions collected in subsection
\ref{sec2.2}.

Finally, the relationship between the tree formalism and the recurrence
(\ref{eses})-(\ref{omegn}) can be established as follows. From
(\ref{m.2}) we can write
\[
  \sum_{m=1}^\infty \sum_{\tau\in\TT_m} \alpha(\tau) H_\tau(t)
  = \sum_{s=1}^m \frac{{B}_s}{s!}
    \underset{k_1 + \cdots + k_s = m-s}{\underset{k_1, \ldots, k_s}{\sum}}
       \sum_{\tau_i \in \TT_{k_i}} \, \alpha(\tau_1) \cdots
     \alpha(\tau_s) H_{a(\tau_1, \ldots, \tau_s)}.
\]
Thus, by comparing (\ref{omegn}) and (\ref{m.2}) we have
\[
  \Omega_m(t)=\sum_{\tau \in \TT_{m-1}} \alpha(\tau)
  \int_0^t H_\tau(\xi) d\xi = \sum_{j=1}^{m-1} \frac{B_j}{j!}
  \int_0^t S_m^{(j)}(\xi) d\xi
\]
so that
\[
  S_m^{(j)} =
    \underset{k_1 + \cdots + k_j = m-1-j}{\underset{k_1, \ldots, k_j}{\sum}}
       \sum_{\tau_i \in \TT_{k_i}} \, \alpha(\tau_1) \cdots
     \alpha(\tau_j) H_{a(\tau_1, \ldots, \tau_j)}.
\]
In other words, each term $S_n^{(j)}$ in the recurrence
(\ref{eses}) carries on a complete set of binary trees. Although
both procedures are equivalent, the use of (\ref{eses}) and
(\ref{omegn}) can be particularly well suited when high orders of
the expansion are considered, for two reasons: (i) the enormous
number of trees involved for large values of $m$ and (ii) in
(\ref{m.2}) many terms are redundant, and a careful graph
theoretical analysis is needed to deduce which terms have to be
discarded \cite{iserles99ots}.

Recently, an ME-type formalism has been developed in the more
abstract setting of dendriform algebras. This generalized expansion incorporates the usual
one as a limit, but is formulated more in line with (non-commutative) Butcher
series. In this context, the use of planar rooted trees to represent 
the expansion and the so-called pre-Lie product allows one to reduce the number
of terms at each order in comparison with expression (\ref{m.2}) \cite{ebrahimi-fard08ama}.

\subsection{Time-symmetry of the expansion}
    \label{time-symmetry}

The map $\varphi^t : Y(t_0) \longrightarrow Y(t)$ corresponding to
the linear differential equation (\ref{eq:evolution}) with $Y(t_0) = Y_0$ is time
symmetric, $\varphi^{-t} \circ \varphi^t = \mathrm{Id}$, since
integrating (\ref{eq:evolution}) from $t=t_0$ to $t=t_f$ for every
$t_f \geq t_0$ and back to $t_0$ leads us to the original initial
value $Y(t_0)=Y_0$. Observe that, according with (\ref{matrizant}),
 the map $\varphi^t$ can be expressed
in terms of the fundamental matrix (or evolution operator) $U(t,t_0)$ as
$\varphi^{t_f}(Y_0) = U(t_f,t_0)
Y_0$. Then time-symmetry establishes that
\[
    U(t_0,t_f) = U^{-1}(t_f,t_0)
\]
or, in terms of the Magnus expansion,
\[
   \Omega(t_f,t_0) = - \Omega(t_0,t_f).
\]
To take advantage of this feature, let us write the solution of
(\ref{eq:evolution}) at the final time $t_f=t_0 + s$ as
\begin{equation}    \label{t-s3}
   Y(t_{1/2} + \frac{s}{2}) = \exp \left( \Omega(t_{1/2} + \frac{s}{2},
      t_{1/2} - \frac{s}{2}) \right) \, Y(t_{1/2} - \frac{s}{2}),
\end{equation}
where $t_{1/2} = (t_0 + t_f)/2$. Then
\begin{equation}    \label{t-s4}
   Y(t_{1/2} - \frac{s}{2}) = \exp \left(- \Omega(t_{1/2} + \frac{s}{2},
      t_{1/2} - \frac{s}{2}) \right) \, Y(t_{1/2} + \frac{s}{2}).
\end{equation}
On the other hand, the solution at $t_0$ can be written as
\begin{equation}    \label{t-s5}
   Y(t_{1/2} - \frac{s}{2}) = \exp \left( \Omega(t_{1/2} - \frac{s}{2},
      t_{1/2} + \frac{s}{2}) \right) \, Y(t_{1/2} + \frac{s}{2}),
\end{equation}
so that, by comparing (\ref{t-s4}) and (\ref{t-s5}),
\begin{equation}    \label{t-s6}
    \Omega(t_{1/2} - \frac{s}{2},t_{1/2} + \frac{s}{2}) = -
    \Omega(t_{1/2} + \frac{s}{2},t_{1/2} - \frac{s}{2})
\end{equation}
and thus $\Omega$ does not contain even powers of $s$. If $A(t)$
is an analytic function and a Taylor series centered around
$t_{1/2}$ is considered, then each term in $\Omega_k$ is an odd
function of $s$ and, in particular, $\Omega_{2i+1}(s) =
\mathcal{O}(s^{2i+3})$. This fact has been noticed in
\cite{iserles01rah,munthe-kaas99cia} and will be fully
exploited in section \ref{Mintegrators} when analyzing the Magnus expansion as a
numerical device for integrating differential equations.

\subsection{Convergence of the Magnus expansion}

As we pointed out in the introduction, from a mathematical point of view, there are at least two 
different  issues of paramount importance at the very basis of the Magnus expansion:
\begin{enumerate}
 \item (\emph{Existence}) For what values of $t$ and for what operators $A$ does equation
 (\ref{eq:evolution}) admit an exponential solution in the form $Y(t) = \exp(\Omega(t))$
 for a certain $\Omega(t)$? 
 \item (\emph{Convergence}) Given a certain operator $A(t)$, 
 for what values of $t$  does the Magnus
 series (\ref{M-series}) converge? In other words, when $\Omega(t)$ in (\ref{eq:Omega})
 can be obtained as the sum of the series (\ref{M-series})?
\end{enumerate}

Of course, given the relevance of the expansion, both problems have been  
extensively treated in the literature since Magnus proposed this formalism in
1954. We next review some of the most relevant 
contributions 
available regarding both aspects, with special emphasis on the convergence of the
Magnus series.

\subsubsection{On the existence of $\Omega(t)$}

In most cases one is interested in the case where $A$ belongs to a Lie algebra
$\mathfrak{g}$ under the commutator product. In this general
setting the Magnus theorem can be formulated as four statements
concerning the solution of $Y^\prime = A(t) Y$, each one more
stringent than the preceding \cite{wei63nog}. Specifically,
\begin{itemize}
 \item[(A)] The differential equation $Y^\prime = A(t) Y$
 has a solution of the form $Y(t) = \exp \Omega(t)$.
 \item[(B)] The exponent $\Omega(t)$ lies in the Lie algebra
  $\mathfrak{g}$.
 \item[(C)] The exponent $\Omega(t)$ is a continuous
 differentiable function of $A(t)$ and $t$, satisfying the
 nonlinear differential equation $\Omega^\prime = d
 \exp_{\Omega}^{-1}(A(t))$.
 \item[(D)] The operator $\Omega(t)$ can be computed by the Magnus series
(\ref{M-series}).
\end{itemize}

Let us analyze in detail now the conditions under which statements
(A)-(D) hold.

\vspace*{0.2cm}

\noindent (A) If $A(t)$ and $Y(t)$ are $n \times n$ matrices,
from well-known general theorems on differential equations it is clear that the initial
value problem defined by (\ref{eq:evolution}) and $Y(0)=I$ always
has a uniquely determined solution $Y(t)$ which is continuous and
has a continuous first derivative in any interval in which $A(t)$
is continuous \cite{coddington55tod}. Furthermore, the determinant
of $Y$ is always different from zero, since
\[
   \det Y(t) = \exp \left( \int_0^t \, \mathrm{tr}\, A(s) ds
   \right).
\]
On the other hand, it is well known that any matrix $Y$ can be
written in the form $\exp \Omega$ if and only if $\det Y \ne 0$
\cite[p. 239]{gantmacher59tto}, so that it is
always possible to write $Y(t) = \exp \Omega(t)$.

In the general context of Lie groups and Lie algebras, it is
indeed the regularity of the exponential map from the Lie algebra
$\mathfrak{g}$ to the Lie group $\mathcal{G}$ that determines the
global existence of an $\Omega(t) \in \mathfrak{g}$ \cite{dixmier57led,saito57scg}: the
exponential map of a complex Lie algebra is globally one-to-one if
and only if the algebra is nilpotent, i.e. there exists a finite
$n$ such that $\ad_{x_{1}}\ad_{x_{2}}\cdots \ad_{x_{n-1}}x_{n}=0
$, where $x_{j}$ are arbitrary elements from the Lie algebra. In
general, however, the injectivity of the exponential map is only
assured for $\xi \in \mathfrak{g}$ such that $\|\xi\| <
\rho_{\mathcal{G}}$ for a real number $\rho_{\mathcal{G}}
> 0$ and some norm in $\mathfrak{g}$ \cite{moan99ote,moan02obe}. 

\vspace*{0.2cm}

\noindent (B) Although in principle $\rho_{\mathcal{G}}$
constitutes a sharp upper bound for the mere existence of the
operator $\Omega \in \mathfrak{g}$, its practical value in the
case of differential equations is less clear. As we have noticed,
any nonsingular matrix has a logarithm, but this logarithm might
be in $\mathfrak{gl}(n,\mathbb{C})$ even when the matrix is real.
The logarithm of $Y(t)$ may be complex even for real $A(t)$
 \cite{wei63nog}. In such a situation, the
solution of (\ref{eq:evolution}) cannot be written as the
exponential of a matrix belonging to the Lie algebra over the
field of real numbers. One might argue that this is indeed
possible over the field of complex numbers, but (i) the element
$\Omega$ cannot be computed by the Magnus series (\ref{M-series}), since it
contains only real rational coefficients, and (ii) examples exist
where the logarithm of a complex matrix does not lie in the
corresponding Lie subalgebra \cite{wei63nog}.

It is therefore interesting to determine for which range of $t$ a
real $A(t)$ in (\ref{eq:evolution}) leads to a real logarithm.
This issue has been tackled by Moan in \cite{moan02obe} in the
context of a complete normed (Banach) algebra, proving that if
 \begin{equation}   \label{mo.1}
     \int_0^t \|A(s)\|_2 \, ds < \pi
 \end{equation}
then the solution of (\ref{eq:evolution}) can be written indeed as
 $Y(t) = \exp \Omega(t)$, where $\Omega(t)$ is in the Banach
 algebra.
 
\vspace*{0.2cm} 

\noindent (C) In his original paper \cite{magnus54ote},
Magnus was well aware that if
the function $\Omega(t)$ is assumed to be differentiable, it may
not exist everywhere. In fact, he related the differentiability
issue to the problem of solving $d \exp_{\Omega}(\Omega') = A(t)$
with respect to $\Omega^\prime$ and provided an implicit condition
for an arbitrary $A$. More specifically, he proved the following
result for the case of $n \times n$ matrices (Theorem V in \cite{magnus54ote}).
\begin{theorem}  \label{con-mag}
  The equation $A(t) = d \exp_{\Omega}(\Omega')$ can be solved by 
  $\Omega' = d \exp_{\Omega}^{-1} A(t)$ for an arbitrary $n \times n$ matrix
  $A$ if and only
  if none of the differences between any two of the eigenvalues of 
  $\Omega$ equals $2\pi i m$, where
 $m= \pm 1, \pm 2,\ldots$, ($m \ne 0$).
\end{theorem}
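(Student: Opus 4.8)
The plan is to reduce the statement to a linear-algebra question about the operator $d \exp_{\Omega}$ on the space of $n \times n$ matrices, since $A(t) = d \exp_{\Omega}(\Omega')$ can be solved uniquely for $\Omega'$ precisely when $d \exp_{\Omega}$ is an invertible linear map on $\mathfrak{gl}(n,\mathbb{C})$. So the whole theorem is equivalent to the assertion: $d \exp_{\Omega}$ is invertible if and only if no difference $\lambda_j - \lambda_k$ of eigenvalues of $\Omega$ equals $2\pi i m$ with $m \in \{\pm 1, \pm 2, \ldots\}$. This is exactly the content of Lemma \ref{lemma2} combined with the eigenvalue count for $\mathrm{ad}_{\Omega}$ already recorded in the excerpt, so the proof is essentially an assembly of those pieces.

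First I would recall, as established right after Lemma \ref{lemma2}, that if $\Omega$ has eigenvalues $\lambda_1, \ldots, \lambda_n$ then $\mathrm{ad}_{\Omega}$ acting on $\mathfrak{gl}(n,\mathbb{C})$ has the $n^2$ eigenvalues $\lambda_j - \lambda_k$ for $j,k = 1, \ldots, n$. (One can see this by putting $\Omega$ in upper-triangular — or Jordan — form and checking that $\mathrm{ad}_{\Omega}$ is then block-triangular on a suitable ordering of the matrix units $E_{jk}$, with diagonal entries $\lambda_j - \lambda_k$; nilpotent off-diagonal parts of the Jordan form do not change the spectrum.) Next, from the power series $d \exp_{\Omega} = (\exp(\mathrm{ad}_{\Omega}) - I)/\mathrm{ad}_{\Omega}$ of Lemma \ref{lemma1}, the eigenvalues of $d \exp_{\Omega}$ are obtained by applying the entire function $f(z) = (e^z - 1)/z$ (with $f(0) = 1$) to the eigenvalues of $\mathrm{ad}_{\Omega}$; hence they are the numbers $f(\lambda_j - \lambda_k)$.

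Then the equivalence is immediate: $d \exp_{\Omega}$ is invertible iff all its eigenvalues $f(\lambda_j - \lambda_k)$ are nonzero, and $f(z) = (e^z - 1)/z = 0$ iff $e^z = 1$ and $z \ne 0$, i.e. iff $z = 2\pi i m$ for some nonzero integer $m$. So $d \exp_{\Omega}$ fails to be invertible precisely when some difference $\lambda_j - \lambda_k$ equals $2\pi i m$ with $m \ne 0$, which is the stated condition. When the condition holds, invertibility of $d \exp_{\Omega}$ gives the unique solution $\Omega' = d \exp_{\Omega}^{-1}(A(t))$ for every $A$; and conversely, if the condition fails, $d \exp_{\Omega}$ has a nontrivial kernel, so there is an $A$ (any $A$ not in the range) for which no $\Omega'$ solves the equation, and even when a solution exists it is not unique — hence the ``only if'' direction. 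I would also remark, following Lemma \ref{lemma2}, that the series $d \exp_{\Omega}^{-1}(A) = \sum_k (B_k/k!)\,\mathrm{ad}_{\Omega}^k(A)$ is the explicit inverse, converging when $\|\Omega\| < \pi$, though mere invertibility holds under the weaker eigenvalue hypothesis.

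The only genuinely substantive point — and the one I would treat carefully rather than wave away — is the spectral mapping step: justifying that the eigenvalues of $d \exp_{\Omega}$ are exactly $\{f(\nu) : \nu \text{ an eigenvalue of } \mathrm{ad}_{\Omega}\}$, with correct account of multiplicities and of the Jordan structure of $\mathrm{ad}_{\Omega}$. Since $f$ is entire this is a standard fact (holomorphic functional calculus, or triangularization), so it is not hard, merely the place where one must be precise; everything else is bookkeeping with the ingredients already assembled in Lemmas \ref{lemma1} and \ref{lemma2}.
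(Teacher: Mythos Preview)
Your proposal is correct and takes essentially the same approach as the paper. In fact the paper does not give a formal proof of this theorem at all: it simply remarks that the result ``can be considered, in fact, as a reformulation of Lemma~\ref{lemma2}'' together with the eigenvalue count for $\mathrm{ad}_\Omega$ stated just after that lemma --- which is precisely the assembly you carried out, with the added care of spelling out the spectral-mapping step and the ``only if'' direction.
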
  
This result can be considered, in fact, as a reformulation of Lemma \ref{lemma2},
but, unfortunately, 
has not very much practical application unless 
the eigenvalues of $\Omega$ can easily be determined from those of $A(t)$. One
would like instead to have conditions based directly on $A$.

\subsubsection{Convergence of the Magnus series}
\label{convergence}

For dealing with the validity of statement (D) one has to analyze
the convergence of the series $\sum_{k=1}^\infty \Omega_k$. Magnus
also considered the question of when the series terminates at some
finite index $m$, thus giving a globally valid $\Omega = \Omega_1
+ \cdots +\Omega_m$. This will happen, for instance, if
\[
   \left[ A(t), \int_0^t A(s) ds \right] = 0
\]
identically for all values of $t$, since then $\Omega_k = 0$ for
$k >1$. A sufficient (but not necessary) condition for the
vanishing of all terms $\Omega_k$ with $k > n$ is that
\[
    [A(s_1), [A(s_2), [A(s_3), \cdots ,[A(s_n), A(s_{n+1})] \cdots
    ]]] = 0
\]
for any choice of $s_1, \ldots, s_{n+1}$. In fact, the termination
of the series cannot be established solely by consideration of the
commutativity of $A(t)$ with itself, and Magnus considered an
example illustrating this point.

In general, however, the Magnus series does not converge unless
$A$ is small in a suitable sense. Several bounds to the actual
radius of convergence in terms of $A$ have been obtained in the
literature. Most of these results can be stated as follows. 
If $\Omega_m(t)$ denotes the homogeneous element with $m-1$
commutators in the Magnus series as given by (\ref{recur2}),
then $\Omega(t) = \sum_{m=1}^{\infty} \Omega_m(t)$ is absolutely
convergent for $0 \le t < T$, with
\begin{equation}   \label{cm1.1}
    T = \max \left\{ t \ge 0 \, : \, \int_0^t \|A(s)\|_2 \, ds < r_c
      \right\}.
\end{equation}
Thus, both Pechukas and Light \cite{pechukas66ote} and
Karasev and Mosolova \cite{karasev77ipa} obtained $r_c=\log 2 =
0.693147\ldots$, whereas Chacon and Fomenko \cite{Chacon91rff} got
$r_c=0.57745\ldots$. In 1998, Blanes \textit{et al.}
\cite{blanes98maf} and Moan \cite{moan98eao} obtained independently
the improved bound
\begin{equation}   \label{cm1.2}
    r_c = \frac{1}{2}  \int_0^{2 \pi} \frac{1}{2 + \frac{x}{2} (1 - \cot \frac{x}{2})}
     dx  \equiv \xi = 1.08686870\ldots
\end{equation}
by analyzing the recurrence (\ref{eses})-(\ref{omegn}) and (\ref{recur2}),
respectively. Furthermore, Moan also obtained a bound on the 
individual terms $\Omega_m$ of the Magnus series \cite{moan02obe} which is useful,
in particular, for estimating
errors when the series is truncated. Specifically, he showed that
 \[
   \|\Omega_m(t)\| \le \frac{f_m}{2}
      \left( 2 \int_0^t \|A(s)\|_2 \, ds \right)^m \le
      \pi \left( \frac{1}{\xi} \int_0^t \|A(s)\|_2 \, 
      ds \right)^m,
 \]
 where $f_m$ are the coefficients of 
 \[
   G^{-1}(x) = \sum_{m \ge 1} f_m x^m = x + \frac{1}{4} x^2 +
   \frac{5}{72} x^3 + \frac{11}{576} x^4 + \frac{479}{86400} x^5 +
   \cdots,
 \]
 the inverse function of
\[
     G(s) = \int_0^s \frac{1}{2 + \frac{x}{2}(1- \cot \frac{x}{2})} \, dx.
\]

On the other hand, by analyzing 
some selected examples, Moan \cite{moan02obe}
concluded that, in order to get convergence \emph{for all real matrices}
$A(t)$, necessarily $r_c \le \pi$ in (\ref{cm1.1}), and more recently Moan and Niesen
\cite{moan06cot} have been able to prove that indeed $r_c=\pi$ if only
real matrices are involved.

In any case, it is important to remark that statement (D) is locally valid,
but cannot be used to compute $\Omega$ in the large. However, as
we have seen, the other statements need not depend on the validity
of (D). In particular, if (B) and (C) are globally valid, one can
still investigate many of the properties of $\Omega$ even though
one cannot compute it with the aid of (D).

\subsubsection{An improved radius of convergence}
\label{subsec273}

The previous results on the convergence of the Magnus series have
been established for $n \times n$ real matrices: if $A(t)$ is a real $n \times n$ matrix, then
(\ref{mo.1}) gives a condition for $Y(t)$ to have a real logarithm. In fact,
under the same condition, the Magnus series (\ref{M-series}) converges precisely
to this logarithm, i.e., its sum $\Omega(t)$ satisfies $\exp(\Omega(t)) = Y(t)$
\cite{moan06cot}.

One should have in mind, however, that the original expansion was conceived 
by requiring only that $A(t)$ be a linear operator depending on a real variable $t$
in an associative ring (Theorem \ref{thMag}). The idea was to define,
in terms of $A$, an operator $\Omega(t)$ such that 
the solution of the initial value problem
$Y^\prime = A(t) Y$,  $Y(0)=I$,
for a second operator $Y$ is given as $Y = \exp \Omega$.
The proposed expression for $\Omega$ is an infinite series satisfying the condition that
 ``its partial sums become Hermitian after multiplication by $i$ if $i A$ is a Hermitian operator"
\cite{magnus54ote}.  As this quotation illustrates, Magnus expansion was first derived 
 in the context of quantum mechanics, and so one typically assumes that it is
also valid when $A(t)$ is a linear operator in a Hilbert space. Therefore, it might be
desirable to have conditions for the convergence of the Magnus series in this more general
setting. In \cite{casas07scf}, by applying standard techniques of
complex analysis and some elementary properties of the unit sphere, the bound
$r_c = \pi$ has been shown to be also valid for \emph{any} bounded normal 
operator $A(t)$ in a Hilbert space of arbitrary dimension. 
Next we review the main issues involved and refer the reader to  \cite{casas07scf}
for a more detailed treatment.

Let us assume that $A(t)$ is a bounded operator in a Hilbert space $\mathcal{H}$,
with $2 \le \mathrm{dim} \ \mathcal{H} \le \infty$. Then we introduce a new parameter
$\varepsilon \in \mathbb{C}$ and denote by $Y(t;\varepsilon)$  the 
solution of the initial value problem
\begin{equation}   \label{cmeq.3.1}
   \frac{dY}{dt} = \varepsilon A(t) Y,  \qquad Y(0)=I,
\end{equation}
where now $I$ denotes the identity operator in $\mathcal{H}$.  It is known that  
$Y(t;\varepsilon)$ is an analytic function of $\varepsilon$ for a fixed value of $t$. 
Let us introduce the set $B_{\gamma} \subset \mathbb{C}$ 
characterized by the real parameter $\gamma$,
\[
  B_{\gamma} = \{ \varepsilon \in \mathbb{C} \, : \, |\varepsilon| 
         \int_0^t \|A(s)\| ds <  \gamma \}.
\]            
Here $\|.\|$ stands for the norm defined by the inner product on $\mathcal{H}$,
i.e., the $2$-norm introduced in subsection \ref{notations}.

If $t$ is fixed, the operator function 
$\varphi(\varepsilon) = \log Y(t;\varepsilon)$ is well defined
in $B_{\gamma}$ when $\gamma$ is small enough, say $\gamma < \log 2$, as an
analytic function of $\varepsilon$. 
As a matter of fact, this is a direct consequence of the results collected in section \ref{convergence}:
if, in particular, $|\varepsilon| \int_0^t \|A(s)\| ds < \log 2$, the 
Magnus series corresponding to (\ref{cmeq.3.1}) converges and its sum
$\Omega(t; \varepsilon)$ satisfies $\exp(\Omega(t; \varepsilon)) = Y(t; \varepsilon)$.
In other words, the power series $\Omega(t; \varepsilon)$ coincides with
$\varphi(\varepsilon)$ when $|\varepsilon| \int_0^t \|A(s)\| ds < \log 2$, and so
the Magnus series is the power series expansion of $\varphi(\varepsilon)$ around
$\varepsilon = 0$.
\begin{theorem}   \label{main-theorem}
       The function $\varphi(\varepsilon) = \log Y(t;\varepsilon)$ is an analytic 
       function of $\varepsilon$ in the set $B_{\pi}$, with
\[
  B_{\pi} = \{ \varepsilon \in \mathbb{C} \, : \, |\varepsilon| 
         \int_0^t \|A(s)\| ds <  \pi \}.
\]           
If $\mathcal{H}$ is infinite dimensional, the statement holds true if $Y$ is a normal operator.            
\end{theorem}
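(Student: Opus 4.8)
The plan is to reduce the infinite-dimensional normal case to a one-parameter family of scalar problems via the spectral theorem, and to handle the resulting scalar analysis by the same complex-analytic argument that one uses in finite dimensions. First I would fix $t$ and regard $\varphi(\varepsilon) = \log Y(t;\varepsilon)$ as an operator-valued function of the complex parameter $\varepsilon$; it is already known to be analytic on $B_{\gamma}$ for $\gamma$ small (say $\gamma < \log 2$), where it coincides with the sum $\Omega(t;\varepsilon)$ of the Magnus series. The task is to continue this analyticity out to $B_{\pi}$. By Lemma~\ref{lemma2}, the only obstruction to $d\exp_{\Omega}$ being invertible — equivalently, to $\log Y$ being locally well defined and differentiable along the $\varepsilon$-path — is that some eigenvalue of $Y(t;\varepsilon)$ hit a point where $\log$ branches, i.e. that $Y(t;\varepsilon)$ acquire a negative real eigenvalue (more precisely, that two eigenvalues of $\Omega$ differ by $2\pi i m$). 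So the crux is to show: if $|\varepsilon|\int_0^t\|A(s)\|\,ds < \pi$, then $-1$ (and more generally any point of $(-\infty,0]$) is not in the spectrum of $Y(t;\varepsilon)$.

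For the scalar reduction, suppose $A(t)$ is normal. The family $\{A(s) : 0 \le s \le t\}$ need not commute, so one cannot literally diagonalise simultaneously; instead I would use the standard estimate $\|Y(t;\varepsilon) - I\| \le \exp\!\big(|\varepsilon|\int_0^t\|A(s)\|\,ds\big) - 1$ obtained by iterating the integral equation $Y = I + \varepsilon\int_0^t A(s)Y(s)\,ds$. On $B_{\pi}$ this gives $\|Y(t;\varepsilon) - I\| < e^{\pi} - 1$, which unfortunately is far too weak to exclude $-1$ from the spectrum directly. The sharper route, and the one I expect the paper takes, is to combine analyticity in $\varepsilon$ with a continuity/connectedness argument on the spectrum: the map $\varepsilon \mapsto \mathrm{spec}\,Y(t;\varepsilon)$ varies continuously (upper semicontinuously) on the connected set $B_{\pi}$, it starts at $\{1\}$ when $\varepsilon = 0$, and one uses a quantitative Gronwall-type bound on how fast an individual spectral value can move. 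Concretely, for a normal operator $Y$ one has $\|Y\phi - \lambda\phi\|$ controlling distance of $\lambda$ to the spectrum, and differentiating $Y(t;\varepsilon)$ in $\varepsilon$ lets one bound the "velocity" of any approximate eigenvalue by $\int_0^t\|A(s)\|\,ds$ times the current size of $Y$; integrating this rate along a ray in $B_{\pi}$ keeps every spectral point strictly inside the region $\{z : |\arg z| < \pi\}$, hence away from $(-\infty,0]$. This is exactly the place where normality is essential: it is what converts a bound on $\|A\|$ into a genuine bound on spectral displacement, and it is what "some elementary properties of the unit sphere" in the statement refers to — testing against unit vectors $\phi$ to locate the spectrum.

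Once $(-\infty,0]$ is excluded from $\mathrm{spec}\,Y(t;\varepsilon)$ throughout $B_{\pi}$, the principal branch $\log Y(t;\varepsilon)$ is defined by the holomorphic functional calculus (a Dunford–Riesz contour integral around the spectrum), and holomorphy of the integrand in $\varepsilon$, together with the fact that the encircling contour can be chosen locally uniform in $\varepsilon$, yields analyticity of $\varphi(\varepsilon)$ on all of $B_{\pi}$. Since $\varphi$ agrees with the Magnus sum $\Omega(t;\varepsilon)$ on the smaller disc $B_{\log 2}$ and both are analytic on the connected set $B_{\pi}$, they coincide there, which is the conclusion. The main obstacle, as indicated, is the quantitative spectral-displacement estimate on $B_{\pi}$: getting the constant to be exactly $\pi$ rather than something smaller requires using normality (or self-adjointness of $iA$) in an essential way rather than the crude norm bound on $Y - I$, and it is the only step where the infinite-dimensional hypothesis cannot be dropped.
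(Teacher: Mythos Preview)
Your overall architecture is right and matches the paper: confine the spectrum of $Y(t;\varepsilon)$ to a sector avoiding $(-\infty,0]$, then define $\log Y$ by a Dunford--Riesz contour integral and read off analyticity in $\varepsilon$. The Gronwall bounds on $\|Y\|$ and $\|Y^{-1}\|$ and the role of normality in transferring information from $Y$ to $Y^{\dag}$ are also as in the paper.

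The gap is in your mechanism for the spectral sector bound. Your ``spectral velocity'' heuristic --- differentiate an approximate eigenvalue in $\varepsilon$ and bound its speed by $\int_0^t\|A(s)\|\,ds$ times $\|Y\|$ --- does not deliver the constant $\pi$. The \emph{angular} velocity of a spectral value $\lambda$ would pick up a factor $1/|\lambda|$, and since $|\lambda|$ can be as small as $e^{-\gamma}$ while $\|Y\|$ can be as large as $e^{\gamma}$, you end up integrating something of size $e^{2\gamma}\int_0^t\|A(s)\|\,ds$, far too large. In infinite dimensions there is the additional difficulty that spectral points need not be eigenvalues and cannot in general be tracked as differentiable curves.

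The paper bypasses this by working not with the spectrum but with the \emph{angle metric} on the unit sphere: for unit vectors $x,y$ set $\cos\mathrm{Ang}\{x,y\}=\mathrm{Re}\langle x,y\rangle$. Lemma~\ref{lem-moan} states directly that $\mathrm{Ang}\{Y(t;\varepsilon)x,\,x\}\le |\varepsilon|\int_0^t\|A(s)\|\,ds$ for every $x\neq 0$; this is the ``elementary property of the unit sphere'' you were looking for, and it already contains the sharp constant. Normality of $Y$ then gives the same angle bound for $Y^{\dag}$, and Lemma~\ref{lem-mityagin} converts uniform angle bounds on $T$ and $T^{\dag}$ into the spectral sector $\sigma(T)\subset\{|\arg z|\le\gamma\}$. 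So the missing ingredient is precisely this angle lemma: it replaces your attempted eigenvalue-tracking by a one-line Gronwall estimate on a quantity (the angle) that is a genuine metric and obeys the triangle inequality.
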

In other words, $\gamma = \pi$. The proof of this theorem is based 
on some elementary properties of the unit
sphere $S^1$ in a Hilbert space. Let us define the angle between any two vectors
$x \ne 0$, $y \ne 0$ in $\mathcal{H}$, $\mathrm{Ang}\{x,y\} = \alpha$, $0 \le \alpha \le \pi$, from
\[
      \cos \alpha = \frac{\mathrm{Re} \langle x, y \rangle}{\|x\| \, \|y\|},
\]    
where $\langle \cdot, \cdot \rangle$ is the inner product on $\mathcal{H}$.
This angle is a metric in $S^1$, i.e., the triangle inequality holds in $S^1$. 

The first property we need
is given by the next lemma \cite{moan02obe}.
\begin{lemma}  \label{lem-moan}
 For any $x \in \mathcal{H}$, $x \ne 0$, 
$\mathrm{Ang}\{Y(t;\varepsilon) x, x\} \le |\varepsilon| \int_0^t \|A(s)\| ds$.
\end{lemma}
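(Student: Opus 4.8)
The plan is to reduce the estimate to a differential inequality for the angle $\alpha(t) = \mathrm{Ang}\{Y(t;\varepsilon) x, x\}$ as a function of $t$, with $x$ fixed. At $t=0$ we have $Y(0;\varepsilon) = I$, so $\alpha(0) = 0$. The idea is to show that $|\alpha'(t)| \le |\varepsilon| \, \|A(t)\|$ wherever $\alpha$ is differentiable, and then integrate from $0$ to $t$. Write $z(t) = Y(t;\varepsilon) x$, so that $z'(t) = \varepsilon A(t) z(t)$ from (\ref{cmeq.3.1}), and $\cos \alpha(t) = \mathrm{Re}\langle z(t), x\rangle / (\|z(t)\| \, \|x\|)$. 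Differentiating this relation and using $\frac{d}{dt}\|z\| = \mathrm{Re}\langle z', z\rangle / \|z\|$, one obtains after simplification an expression for $\alpha'(t)$ in which the terms reorganize so that $|\alpha'(t)|$ is bounded by $\|z'(t)\| / \|z(t)\| = |\varepsilon| \, \|A(t) z(t)\| / \|z(t)\| \le |\varepsilon| \, \|A(t)\|$, using that $\|.\|$ is the operator norm induced by the inner product.

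A cleaner way to package the same computation, which I would actually prefer to carry out, is to work directly with the angle as a metric on the unit sphere $S^1$, a fact just recalled in the text. Put $u(t) = z(t)/\|z(t)\| \in S^1$. Then $\alpha(t) = \mathrm{Ang}\{u(t), x/\|x\|\}$, and by the triangle inequality for this metric, for $h$ small,
\[
  |\alpha(t+h) - \alpha(t)| \le \mathrm{Ang}\{u(t+h), u(t)\}.
\]
So it suffices to estimate the speed of the curve $u(t)$ in the angular metric, i.e. $\limsup_{h \to 0} \mathrm{Ang}\{u(t+h),u(t)\}/|h|$. Since $\mathrm{Ang}\{p,q\} \le \|p - q\|$ for unit vectors $p,q$ (because $1 - \cos\mathrm{Ang}\{p,q\} = 1 - \mathrm{Re}\langle p,q\rangle = \tfrac12\|p-q\|^2$ and $1-\cos\theta \ge \tfrac{2}{\pi^2}\theta^2$ on $[0,\pi]$, or more simply $\mathrm{Ang} \le \frac{\pi}{2}\|p-q\|$ suffices up to a constant — but in fact the sharp bound $\theta \le \frac{\pi}{2}\sin\theta$ is not even needed), this speed is controlled by $\|u'(t)\|$. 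Finally $u'(t) = z'/\|z\| - z\,(\mathrm{Re}\langle z',z\rangle)/\|z\|^3$ is the component of $z'(t)/\|z(t)\|$ orthogonal (in real inner product) to $u(t)$, hence $\|u'(t)\| \le \|z'(t)\|/\|z(t)\| = |\varepsilon|\,\|A(t)z(t)\|/\|z(t)\| \le |\varepsilon|\,\|A(t)\|$. Integrating, $\alpha(t) \le \int_0^t |\varepsilon|\,\|A(s)\|\,ds$, which is the claim.

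The main obstacle is a small technical one rather than a conceptual one: justifying the differential-inequality step when $z(t)$ passes through a position (anti)parallel to $x$, where $\alpha(t)$ may fail to be differentiable (the angular metric is not smooth at $0$ and $\pi$). This is handled by working with upper Dini derivatives throughout, or equivalently by the triangle-inequality formulation above which only ever needs $\mathrm{Ang}\{u(t+h),u(t)\}$ for nearby points where no singularity issue arises, together with a standard lemma that a continuous function whose upper right Dini derivative is bounded by $g(t)$ satisfies $\alpha(t) - \alpha(0) \le \int_0^t g(s)\,ds$. The identity $1 - \mathrm{Re}\langle p,q\rangle = \tfrac12 \|p-q\|^2$ for unit vectors, plus monotonicity of $\arccos$, gives the needed $\mathrm{Ang}\{p,q\} \le \frac{\pi}{2}\|p-q\|$ (and one does not need the optimal constant), so the whole argument goes through verbatim in any Hilbert space, finite- or infinite-dimensional.
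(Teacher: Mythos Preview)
The paper does not actually prove this lemma; it is stated with a citation to Moan \cite{moan02obe} and then used as input for Theorem~\ref{main-theorem}. So there is no ``paper's proof'' to compare against, and your approach --- tracking the normalized trajectory $u(t)=Y(t;\varepsilon)x/\|Y(t;\varepsilon)x\|$ on the unit sphere, using the triangle inequality for the angular metric to reduce to bounding the angular speed, and then showing this speed is at most $|\varepsilon|\,\|A(t)\|$ --- is a perfectly reasonable direct argument and is essentially the standard one.

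There is, however, a genuine slip in your write-up that you should fix. You claim $\mathrm{Ang}\{p,q\}\le\|p-q\|$ for unit vectors, but the inequality goes the \emph{other} way: since $\|p-q\|=2\sin(\theta/2)$ with $\theta=\mathrm{Ang}\{p,q\}$, one has $\|p-q\|\le\theta$, not $\theta\le\|p-q\|$. The weaker bound $\theta\le\tfrac{\pi}{2}\|p-q\|$ you fall back on is correct but would cost you a spurious factor $\pi/2$ in the final estimate. What you actually need --- and what is true --- is the \emph{infinitesimal} statement: $\theta/\|p-q\|\to 1$ as $\theta\to 0$, so for a differentiable curve $u(t)$ on the sphere the angular speed $\limsup_{h\to 0}\mathrm{Ang}\{u(t+h),u(t)\}/|h|$ equals $\|u'(t)\|$ exactly. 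With that correction your chain
\[
  \|u'(t)\|\le\frac{\|z'(t)\|}{\|z(t)\|}=\frac{|\varepsilon|\,\|A(t)z(t)\|}{\|z(t)\|}\le|\varepsilon|\,\|A(t)\|
\]
and the Dini-derivative integration give the lemma with the right constant. (You should also note that $z(t)\ne 0$ for all $t$ because $Y(t;\varepsilon)$ is invertible, so $u(t)$ is well defined.)
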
  
Observe that if $Y$ is a normal operator in $\mathcal{H}$, i.e., $Y Y^{\dag} = Y^{\dag} Y$
(in particular, if $Y$ is unitary), then
$\|Y^{\dag} x\| = \|Yx\|$ for all $x \in \mathcal{H}$ and therefore 
$\mathrm{Ang}\{Y^{\dag} x, x\} = \mathrm{Ang}\{Yx, x\}$.

The second required property provides useful information on the location of the 
eigenvalues of a given bounded operator in $\mathcal{H}$ \cite{mityagin90unp}.
  \begin{lemma}  \label{lem-mityagin}
  Let $T$ be a (bounded) operator on $\mathcal{H}$. 
    If $\mathrm{Ang}\{T x,x\} \le \gamma$ and $\mathrm{Ang}\{T^{\dag} x,x\} \le \gamma$
    for any $x \ne 0$, $x \in \mathcal{H}$, where $T^{\dag}$ denotes the adjoint operator of $T$,
    then the spectrum of $T$, $\sigma(T)$, is contained in the set
    \[
       \Delta_{\gamma} = \{ z = |z| \e^{i \omega} \in \mathbb{C} \, : \, |\omega| \le \gamma\}.
    \]   
  \end{lemma}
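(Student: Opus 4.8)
The plan is to prove the contrapositive-style statement that every $z \in \mathbb{C} \setminus \Delta_\gamma$ lies in the resolvent set $\rho(T)$, i.e. that $T - z I$ is invertible; this is equivalent to $\sigma(T) \subseteq \Delta_\gamma$. Since $0 \in \Delta_\gamma$ we may assume $z \ne 0$ and write $z = |z| e^{i\theta}$ with $\gamma < |\theta| \le \pi$, and by the symmetry $T \leftrightarrow T^{\dag}$, $z \leftrightarrow \bar z$ (which merely swaps the two hypotheses) it suffices to treat $\theta \in (\gamma, \pi]$. The conceptual point — and the reason \emph{both} hypotheses are required — is that I would avoid the naive numerical-range argument. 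That argument is lossy once $\gamma \ge \pi/2$, because $\mathrm{Ang}\{Tx,x\}\le\gamma$ controls $\mathrm{Re}\langle Tx,x\rangle$ only relative to $\|Tx\|$, not relative to $|\langle Tx,x\rangle|$, so one cannot conclude $\langle Tx,x\rangle \in \Delta_\gamma$ pointwise. Instead I would extract a direct lower bound on $\|(T-z)x\|$ from the metric character of the angle.

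First I would use that $\mathrm{Ang}\{\cdot,\cdot\}$ is a metric on directions, as recorded immediately before the statement. For any $x \ne 0$ one computes directly $\mathrm{Ang}\{zx,x\} = |\theta|$ from $\cos\alpha = \mathrm{Re}\langle zx,x\rangle/(|z|\,\|x\|^2) = \mathrm{Re}(z)/|z| = \cos\theta$. Combining this with $\mathrm{Ang}\{Tx,x\}\le\gamma$ and the reverse triangle inequality for the angle metric gives
\[
   \mathrm{Ang}\{Tx, zx\} \ge \mathrm{Ang}\{zx, x\} - \mathrm{Ang}\{Tx, x\} \ge |\theta| - \gamma =: \delta > 0 .
\]
Then, writing $\|u - v\|^2 = \|u\|^2 + \|v\|^2 - 2 \|u\|\,\|v\| \cos \mathrm{Ang}\{u,v\}$ with $u = Tx$, $v = zx$, and using $\cos \mathrm{Ang}\{Tx,zx\} \le \cos\delta$, I obtain
\[
  \|(T - z)x\|^2 \ge \|Tx\|^2 + |z|^2 \|x\|^2 - 2 |z|\,\|x\|\,\|Tx\| \cos\delta \ge |z|^2 \sin^2\delta \, \|x\|^2 ,
\]
the last step by minimizing the quadratic in $\|Tx\| \ge 0$ (the case $Tx = 0$ giving the bound trivially). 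Since $0 < \delta < \pi$ whenever $\gamma > 0$ or $|\theta| < \pi$, one has $\sin\delta > 0$, so $T - zI$ is bounded below by $|z|\sin(|\theta|-\gamma)$.

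Next I would run the identical computation with $T^{\dag}$ in place of $T$, now invoking the second hypothesis $\mathrm{Ang}\{T^{\dag}x, x\}\le\gamma$ and $\mathrm{Ang}\{\bar z x, x\} = |\theta| > \gamma$, to conclude that $(T - zI)^{\dag} = T^{\dag} - \bar z I$ is also bounded below. I would finish with the standard fact that a bounded operator $S$ with both $S$ and $S^{\dag}$ bounded below is invertible: boundedness below of $S$ yields injectivity and closed range, while boundedness below of $S^{\dag}$ forces $\ker S^{\dag} = (\mathrm{ran}\,S)^{\perp} = \{0\}$, so $\mathrm{ran}\,S$ is dense; a closed dense range is all of $\mathcal{H}$, whence $S$ is bijective with bounded inverse. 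Applying this to $S = T - zI$ gives $z \in \rho(T)$, and as $z$ was arbitrary off $\Delta_\gamma$ we conclude $\sigma(T) \subseteq \Delta_\gamma$.

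I expect the main obstacle to be precisely the recognition that the numerical-range inclusion $\sigma(T)\subseteq \overline{W(T)}$ does \emph{not} deliver the sharp sector once $\gamma \ge \pi/2$, so that one must instead quantify the angular separation between $Tx$ and $zx$; deriving the clean estimate $\|(T-z)x\| \ge |z|\sin(|\theta|-\gamma)\,\|x\|$ from the triangle inequality is the crux of the argument. The only loose end is the degenerate boundary case $\gamma = 0$, $|\theta| = \pi$ (where $\sin\delta = 0$), which must be disposed of separately, e.g. by noting that $\gamma = 0$ forces $Tx$ parallel to $x$ for all $x$ and hence $\sigma(T)\subseteq[0,\infty)=\Delta_0$.
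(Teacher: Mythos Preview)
The paper does not supply a proof of this lemma; it merely states the result and attributes it to Mityagin, then proceeds directly to the proof of Theorem~\ref{main-theorem}. So there is no ``paper's own proof'' against which to compare.

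Your argument is correct and self-contained. The key idea --- using the triangle inequality for the angle metric to bound $\mathrm{Ang}\{Tx, zx\}$ from below by $|\theta|-\gamma$, then converting that angular gap into the quantitative lower bound $\|(T-z)x\| \ge |z|\sin(|\theta|-\gamma)\,\|x\|$ --- is exactly what is needed, and you correctly identify that both hypotheses are required so that the same bound applies to $(T-zI)^{\dag}$, allowing the standard ``bounded below on both sides'' criterion to deliver invertibility. The symmetry remark reducing to $\theta \in (\gamma,\pi]$ is actually unnecessary, since your subsequent estimates use $|\theta|$ throughout; but it does no harm. The degenerate boundary case $\gamma=0$, $|\theta|=\pi$ is genuinely separate, and your disposal of it (the hypothesis forces every $x$ to be an eigenvector with nonnegative eigenvalue, whence $T$ is a nonnegative scalar in $\dim\mathcal{H}\ge 2$) is adequate.
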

  \begin{proof}
(of Theorem \ref{main-theorem}). Let us introduce the 
operator $T \equiv Y(t;\epsilon)$, with $\varepsilon \in B_{\gamma}$, $\gamma < \pi$. Then
by Lemma \ref{lem-moan}, $\mathrm{Ang}\{T x,x\} \le \gamma$ for all $x \ne 0$, and thus, by
Lemma \ref{lem-mityagin}, 
\begin{equation}   \label{eq.pr.1}
    \sigma(T) \subset  \Delta_{\gamma}.
\end{equation}    
If $\mathrm{dim} \ \mathcal{H} = \infty$ and we assume that $Y(t;\epsilon)$ is a normal operator,
then (\ref{eq.pr.1}) also holds.

From equation (\ref{cmeq.3.1}) in integral form,
\[
  Y(t; \varepsilon) = I + \varepsilon \int_0^t A(s) Y ds,
\]
one gets $\|Y\| \le 1 + |\varepsilon| \int_0^t \|A(s)\| \, \|Y\| ds$, and application of Gronwall's
lemma \cite{gronwall19not} leads to
\[
    \|Y(t;\varepsilon)\| \le \exp \left( |\varepsilon| \int_0^t \|A(s)\| ds \right).
\]
An analogous reasoning for the inverse operator also proves that      
\[
    \|Y^{-1}(t;\varepsilon)\| \le \exp \left( |\varepsilon| \int_0^t \|A(s)\| ds \right).
\]
In consequence,
\[
   \|T\| \le \e^{\gamma}  \qquad \mbox{ and } \qquad \|T^{-1}\| \le \e^{\gamma}.
\]

If $\lambda \ne 0 \in \sigma(T)$, then $|\lambda| \le \|T\|$ \cite{hunter01aan}
and therefore $|\lambda| \le \e^{\gamma}$. In addition, $\frac{1}{\lambda} \in
\sigma(T^{-1})$, so that $|\lambda| \ge \e^{-\gamma}$. Equivalently,
\begin{equation}  \label{eq.pr.2}
  \sigma(T) \subset \{ z \in \mathbb{C} : \e^{-\gamma} \le |z| \le \e^{\gamma} \} \equiv G_{\gamma}.   
\end{equation}
Putting together (\ref{eq.pr.1}) and (\ref{eq.pr.2}), one has
\[
   \sigma(T) \subset G_{\gamma} \cap \Delta_{\gamma} \equiv \Lambda_{\gamma}.
\]
Now choose any value $\gamma'$ such that  $\gamma < \gamma' < \pi$ 
(e.g., $\gamma' = (\gamma + \pi)/2$) and consider the closed curve 
$\Gamma = \partial \Lambda_{\gamma'}$.  Notice that the curve $\Gamma$
encloses $\sigma(T)$ in its interior, so that it is possible to define \cite{dunford58lop} the function
$\varphi(\varepsilon) = \log Y(t; \varepsilon)$ by the equation 
 \begin{equation}   \label{eq.pr.3}
      \varphi(\epsilon) = \frac{1}{2 \pi i} \int_{\Gamma} \log z \, (z I - Y(t;\epsilon))^{-1}  \, dz,
  \end{equation}
where the integration along $\Gamma$ is performed in the counterclockwise direction.
As is well known, (\ref{eq.pr.3}) defines an analytic function of $\varepsilon$ in
$B_{\gamma'}$ \cite{dunford58lop} and the result of the theorem follows.
\end{proof}

\begin{theorem}  \label{conv-mag}
  Let us consider the differential equation $Y^\prime = A(t) Y$ defined in 
  a Hilbert space $\mathcal{H}$ with $Y(0)=I$, and let $A(t)$ be a bounded operator
  in $\mathcal{H}$. Then, the Magnus series 
  $\Omega(t) = \sum_{k=1}^{\infty}  \Omega_k(t)$, with
$\Omega_k$ given by (\ref{recur2}) converges  in the interval $ t \in [0,T)$  such that
\[
   \int_0^T \|A(s)\| ds < \pi
\]
and the sum $\Omega(t)$ satisfies $\exp \Omega(t) = Y(t)$. The statement also holds when
$\mathcal{H}$ is infinite-dimensional if $Y$ is a normal operator (in particular, if $Y$ is
unitary).
\end{theorem}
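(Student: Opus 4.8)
The plan is to obtain Theorem \ref{conv-mag} as an almost immediate corollary of Theorem \ref{main-theorem}, the point being to recognize the Magnus series as the Taylor expansion of $\varphi(\varepsilon) = \log Y(t;\varepsilon)$ about $\varepsilon = 0$ and then to cash in the analyticity of $\varphi$ on the disc $B_\pi$. First I would fix $t$ and restore the bookkeeping parameter, passing to the scaled equation (\ref{cmeq.3.1}), $Y' = \varepsilon A(t) Y$, $Y(0)=I$. Since, as noted after (\ref{recur2}), each $\Omega_k(t)$ is a multiple integral of $k-1$ nested commutators built from $k$ copies of $A$, it is homogeneous of degree $k$ in $A$; hence the Magnus series for the scaled problem is $\sum_{k\ge1}\varepsilon^k\Omega_k(t)$, a formal power series in $\varepsilon$ whose coefficients $\Omega_k(t)$ are independent of $\varepsilon$, and the original series (\ref{M-series}) is recovered at $\varepsilon=1$.

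Next I would identify this power series with $\varphi$ on a small disc. By the classical convergence results collected in Subsection \ref{convergence} --- e.g.\ the bound $r_c=\log 2$ --- for $|\varepsilon|\int_0^t\|A(s)\|\,ds < \log 2$ the series $\sum_k \varepsilon^k\Omega_k(t)$ converges and its sum is $\log Y(t;\varepsilon)=\varphi(\varepsilon)$. Thus on this small disc the two power series $\sum_k\varepsilon^k\Omega_k(t)$ and $\varphi(\varepsilon)$ coincide, so by uniqueness of power-series coefficients $\Omega_k(t)$ is precisely the $k$-th Taylor coefficient of $\varphi$ at the origin.

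Now I would invoke Theorem \ref{main-theorem}: $\varphi$ is analytic on the whole disc $B_\pi$, which for fixed $t$ is the disc of radius $\pi/\int_0^t\|A(s)\|\,ds$ in the $\varepsilon$-plane, under the standing normality hypothesis on $Y$ when $\dim\mathcal{H}=\infty$. Since the Taylor series of a function analytic on a disc centred at the expansion point converges throughout that disc and represents the function there, it follows that $\sum_k\varepsilon^k\Omega_k(t)$ converges for every $\varepsilon\in B_\pi$ and equals $\varphi(\varepsilon)$. Setting $\varepsilon=1$: if $\int_0^t\|A(s)\|\,ds<\pi$ then $1\in B_\pi$, so $\sum_k\Omega_k(t)=\varphi(1)=\log Y(t)$, i.e.\ $\exp\Omega(t)=Y(t)$. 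Finally, because $\|A(s)\|\ge0$, the hypothesis $\int_0^T\|A(s)\|\,ds<\pi$ forces $\int_0^t\|A(s)\|\,ds<\pi$ for every $t\in[0,T)$, giving convergence on the whole interval.

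The hard part is essentially not in this argument at all --- it was already carried out in the proof of Theorem \ref{main-theorem} via Lemmas \ref{lem-moan} and \ref{lem-mityagin}, Gronwall's inequality, and the Riesz--Dunford functional calculus. Within the present proof the only steps needing a little care are the passage from ``the two power series agree near $0$'' to ``the Magnus series is the Taylor series of $\varphi$ and therefore inherits $\varphi$'s disc of analyticity as a disc of convergence'' --- this rests on uniqueness of Taylor coefficients together with the standard fact that the radius of convergence of the Taylor series of an analytic function is at least the radius of the largest disc on which the function is analytic --- and the routine check that $\varepsilon=1$ genuinely recovers equation (\ref{eq:evolution}) and the series (\ref{M-series}) as claimed.
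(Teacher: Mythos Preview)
Your proposal is correct and follows essentially the same route as the paper's proof: identify the Magnus series as the Taylor expansion of $\varphi(\varepsilon)=\log Y(t;\varepsilon)$ on some small disc where a classical convergence bound applies, then invoke Theorem~\ref{main-theorem} to extend convergence to all of $B_\pi$ by analyticity and uniqueness of Taylor coefficients, and finally set $\varepsilon=1$. The only cosmetic difference is that the paper uses the bound $\xi=1.0868\ldots$ of (\ref{cm1.2}) rather than $\log 2$ as the initial small radius, but any positive radius suffices for the argument.
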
    

\begin{proof}
Theorem \ref{main-theorem} shows that $\log Y(t;\varepsilon) \equiv \varphi(\varepsilon)$ is a 
 well defined and analytic function of $\varepsilon$ for 
  \[
       |\varepsilon| \int_0^t \|A(s)\| ds  < \pi.
  \]     
It has also been shown that the Magnus series 
$\Omega(t;\varepsilon) = \sum_{k=1}^{\infty} \varepsilon^k \Omega_k(t)$, with
$\Omega_k$ given by (\ref{recur2}), is absolutely convergent when
$|\varepsilon| \int_0^t \|A(s)\| ds  < \xi = 1.0868...$ and its sum satisfies
$\exp \Omega(t;\varepsilon) = Y(t;\varepsilon)$. Hence, the Magnus series is
the power series of the analytic function $\varphi(\varepsilon)$ in the disk 
$|\varepsilon| < \xi / \int_0^t \|A(s)\| ds$. But $\varphi(\varepsilon)$ is analytic
in $B_{\pi} \supset B_{\xi}$ and the power series has to be unique. In consequence,
the power series of $\varphi(\varepsilon)$ in $B_{\pi}$ has to be the same as the power
series of $\varphi(\varepsilon)$ in $B_{\xi}$, which is precisely the Magnus series. Finally,
by taking $\varepsilon = 1$ we get the desired result.
\end{proof}

\subsubsection{Further discussion}

Theorem \ref{conv-mag} provides sufficient conditions for the convergence 
of the Magnus series based on an estimate by the norm of the operator $A$. In
particular, it guarantees that the operator $\Omega(t)$  in $Y(t) = \exp \Omega(t)$
can safely be obtained with the convergent series $\sum_{k \ge 1} \Omega_k(t)$ for
$0 \le t < T$ when the terms $\Omega_k(t)$ are computed with (\ref{recur2}).  
A natural question at this
stage is what is the optimal convergence domain. In other words,
is the bound estimate $r_c = \pi$ given by Theorem \ref{conv-mag}
sharp or is there still room for improvement? In order to clarify this issue,
we next analyze two simple examples involving $2 \times 2$ matrices.

\noindent \textbf{Example 1}. Moan and Niesen \cite{moan06cot} consider the 
coefficient matrix
 \begin{equation}   \label{ej1.1}
   A(t) = \left(  \begin{array}{rr}
        2  &  \ t \\
        0  &  -1
           \end{array}   \right).
 \end{equation}
If we introduce, as before, the complex parameter $\varepsilon$ in the problem, the
corresponding exact solution $Y(t;\varepsilon)$ of (\ref{cmeq.3.1}) is given by
 \begin{equation}   \label{ej1.2}
   Y(t;\varepsilon) = \left(  \begin{array}{lc}
        \e^{2 \varepsilon t}   &  \ \ \frac{1}{9 \varepsilon} \e^{2 \varepsilon t} - \left(
           \frac{1}{9 \varepsilon} + \frac{1}{3} t \right) \e^{-\varepsilon t}   \\
        0  &   \e^{-\varepsilon t} 
           \end{array}   \right)
 \end{equation}
and therefore
   \[
       \log Y(t;\varepsilon) = \left(  \begin{array}{rc}
        2t  &  \ g(t;\varepsilon) \\
        0  & -t
           \end{array}   \right), \quad \mbox{ with } \ \ g(t;\varepsilon) = \frac{t (1 - \e^{3 \varepsilon t} + 
           3 \varepsilon t)}{3(1- \e^{3 \varepsilon t})}.
    \]           
The Magnus series can be obtained by computing the Taylor expansion of  $\log Y(t;\varepsilon)$ 
around $\varepsilon = 0$. Notice that the function $g$ has a singularity when
$\varepsilon t = \frac{2\pi}{3} i$, and thus, by taking  $\varepsilon = 1$, the Magnus series only converges up to $t=  \frac{2}{3} \pi$. On the other hand, condition  
$\int_0^T \|A(s)\| ds  < \pi$ leads to $T \approx 1.43205 < \frac{2}{3} \pi$. In consequence, the
actual convergence domain of the Magnus series is larger than the estimate provided
by Theorem \ref{conv-mag}.
\hfill{$\Box$}

\vspace*{0.3cm}

\noindent \textbf{Example 2}. Let us introduce the matrices
\begin{equation}   \label{ej2.1}
    X_{1}=  \left(
  \begin{array}{cr}
  1  & \  0 \\
  0  &  -1
  \end{array} \right) = \sigma_3, \qquad
  X_{2}=  \left(
  \begin{array}{cc}
  0  & \  1 \\
  0  & 0
  \end{array} \right)
\end{equation}
and define 
\[
  A(t)= \left\{
  \begin{array}{ll}
  \beta \, X_{2} \quad & 0\leq t \leq 1 \\
  \alpha \, X_{1} \quad & t > 1
  \end{array} \right.
\]  
with $\alpha, \beta$ complex constants. Then, the solution of
equation (\ref{eq:evolution}) at $t=2$ is 
\[
   Y(2) = \e^{\alpha X_{1}} \e^{\beta X_{2}} = \left(
  \begin{array}{lc}
  \e^{\alpha} \quad & \beta \e^{\alpha} \\
  0 & \e^{-\alpha}
  \end{array} \right),
\]
so that
\begin{equation}   \label{ex1.4}
 \log Y(2) = \log(\e^{\alpha X_{1}} \e^{\beta X_{2}})  = \alpha X_{1} +
    \frac{2\alpha \beta}{1- \e^{-2\alpha}} \, X_{2},
\end{equation}
an analytic function if $|\alpha| < \pi$ with first singularities
at $\alpha = \pm i \pi$. Therefore, the Magnus series
 cannot converge at $t=2$ if $|\alpha| \ge \pi$, independently
of $\beta \ne 0$, even when it is possible in this case to get a closed-form expression
for the general term. Specifically, a straightforward computation with 
the recurrence (\ref{eses})-(\ref{omegn}) shows that
\begin{equation}  \label{ex2.2}
  \sum_{n=1}^{\infty} \Omega_{n}(2) =
     \alpha X_{1} + \beta X_{2} + \sum_{n=2}^{\infty}
    (-1)^{n-1}\frac{2^{n-1}B_{n-1}}{(n-1)!} \ \alpha^{n-1}
     \beta \, X_{2}.
\end{equation}

If we take the spectral norm, then $\|X_1\| = \|X_2\| = 1$ and
\[
   \int_0^{t=2} \|A(s)\| ds = |\alpha| + |\beta|,
\]
so that the convergence domain provided by
Theorem \ref{conv-mag} is $|\alpha| + |\beta| < \pi$ for this example. Notice that in the
limit $|\beta| \rightarrow 0$ this domain is optimal.
\hfill{$\Box$}

\vspace*{0.3cm}

From the analysis of Examples 1 and 2 we can conclude the following. First,
the convergence domain of the Magnus series provided by Theorem \ref{conv-mag}
is the best result one can get for a generic bounded operator $A(t)$ in a Hilbert
space, in the sense that one may consider specific matrices $A(t)$, as in Example 2, where
the series diverges for any time $t$ such that  $\int_0^t \|A(s)\| ds > \pi$. Second, there
are also situations (as in Example 1) where the bound estimate $r=\pi$ 
is still rather conservative: \emph{the  Magnus series converges indeed for a 
larger time interval than that
given by Theorem \ref{conv-mag}}. This is particularly evident if one takes 
$A(t)$ as a diagonal matrix: then, 
the exact solution $Y(t;\varepsilon)$ of (\ref{cmeq.3.1}) is a diagonal matrix whose elements
are non-vanishing entire functions of $\varepsilon$, and obviously $\log Y(t;\varepsilon)$
is also an entire function of $\varepsilon$. In such circumstances, the convergence domain
 $|\varepsilon| \int_0^t \|A(s)\| ds  < \pi$ for the Magnus series does not make much sense.  
Thus a natural question arises: is it possible to obtain a more precise criterion of
convergence? In trying to answer this question, in \cite{casas07scf} an alternative
characterization of the convergence has been developed which is valid for $n \times n$
complex matrices.  More precisely, a connection has been established between the convergence
of the Magnus series and the existence of multiple eigenvalues of the 
fundamental matrix $Y(t; \varepsilon)$ for a fixed $t$, which we denote by
$Y_t(\varepsilon)$. By using the theory of analytic matrix functions,  and in particular,
of the logarithm of an 
analytic matrix function (such as is done e.g. in \cite{yakubovich75lde}), the following
result has been proved in \cite{casas07scf}: 
if the analytic matrix function $Y_t(\varepsilon)$
has an eigenvalue $\rho_0(\varepsilon_0)$ of multiplicity $l > 1$ 
for a certain $\varepsilon_0$ such that: (a) there is a curve in the $\varepsilon$-plane
joining $\varepsilon = 0$ with $\varepsilon = \varepsilon_0$, and
(b) the number of equal terms in 
$\log \rho_1(\varepsilon_0)$, $\log \rho_2(\varepsilon_0)$, $\ldots, 
\log \rho_{l}(\varepsilon_0)$ such that  $\rho_k(\varepsilon_0) = \rho_0$, $k=1,\ldots, l$ is less
than the maximum dimension of the elementary Jordan block corresponding to $\rho_0$, then
the radius of convergence of the series $\Omega_t(\varepsilon) = 
\sum_{k \ge 1} \varepsilon^k \Omega_{t,k}$ verifying
$\exp \Omega_t(\varepsilon) = Y_t(\varepsilon)$ is precisely $r = |\varepsilon_0|$. An analysis
 along the same line has been carried out in \cite{veshtort03nsi}.

When this criterion is applied to Example 1, it gives as the radius of convergence of the
Magnus series corresponding to equation (\ref{cmeq.3.1}) for a fixed $t$,
 \begin{equation}  \label{Y-Seq1}
    \Omega_t(\varepsilon) = \sum_{k=1}^{\infty} \varepsilon^k  \ \Omega_{t,k}, 
 \end{equation}
the value
\begin{equation}   \label{cex1}
        r = |\varepsilon| = \frac{2 \pi}{3t}.
\end{equation}
To get the actual convergence domain of the usual Magnus expansion 
we have to take $\varepsilon = 1$, and so, from (\ref{cex1}), we get $2\pi/(3t) = 1$,
or equivalently $t = 2 \pi/3$, i.e., the result achieved from the analysis of the exact solution.

With respect to Example 2, one gets \cite{casas07scf}
\[
    |\varepsilon| =  \frac{\pi}{|\alpha| (t-1)}.
\]
If we now fix $\varepsilon = 1$, the actual $t$-domain of convergence of the Magnus series
is
\[
    t = 1 + \frac{\pi}{|\alpha|}.
\]        
Observe that, when $t=2$, we get $|\alpha| = \pi$ and thus the previous
result  is recovered: the Magnus
series converges only for $|\alpha| < \pi$. 

It should also be mentioned that
the case
of a diagonal matrix $A(t)$ is compatible with this alternative
characterization \cite{casas07scf}.

\subsection{Magnus expansion and the BCH formula}

The Magnus expansion can also be used to get explicitly
the terms of the series $Z$ in
\[
    Z = \log( \e^{X_{1}} \, \e^{X_{2}} ),
\]
$X_{1}$ and $X_{2}$ being two non commuting indeterminate variables. As it
is well known \cite{postnikov94lga},
\begin{equation}  \label{eq.5.0}
    Z = X_{1} + X_{2} + \sum_{n=2}^{\infty} G_n(X_{1},X_{2}),
\end{equation}
where $G_n(X_{1},X_{2})$ is a homogeneous Lie polynomial in  $X_{1}$ and $X_{2}$ of
grade $n$; in other words, $G_n$ can be expressed in terms of $X_{1}$
and $X_{2}$ by addition, multiplication by rational numbers and nested
commutators. This result is often known as the
Baker--Campbell--Hausdorff (BCH) theorem and proves to be very
useful in various fields of mathematics (theory of linear
differential equations \cite{magnus54ote}, Lie group theory
\cite{gorbatsevich97fol}, numerical analysis \cite{hairer06gni})
and theoretical physics (perturbation theory, transformation
theory, Quantum Mechanics and Statistical Mechanics
\cite{kumar65oet,weiss62tbf,wilcox67eoa}). In particular, in the
theory of Lie groups, with this theorem one can explicitly write
the operation of multiplication in a Lie group in canonical
coordinates in terms of the Lie bracket operation in its 
algebra and also prove the existence of a local Lie group with a
given Lie algebra \cite{gorbatsevich97fol}.

If $X_{1}$ and $X_{2}$ are matrices and one considers the piecewise
constant matrix-valued function
\begin{equation}   \label{eq.2.2.3b}
   A(t) = \left\{  \begin{array}{ccl}
           X_{2} &  \quad &  0 \le t \le 1  \\
                   X_{1} & \quad  &  1 < t \le 2
                    \end{array}  \right.
\end{equation}
then the exact solution of (\ref{eq:evolution}) at $t=2$ is $Y(2)
= \e^{X_{1}} \, \e^{X_{2}}$. By computing $Y(2) = \e^{\Omega(2)}$ with
recursion (\ref{recur2}) one gets for the first terms
\begin{eqnarray}   \label{eq.2.2.4}
   \Omega_1(2) & = & X_{1} + X_{2}  \nonumber  \\
   \Omega_2(2) & = & \frac{1}{2} [X_{1},X_{2}]   \\
   \Omega_3(2) & = & \frac{1}{12} [X_{1},[X_{1},X_{2}]] - \frac{1}{12}
                 [X_{2},[X_{1},X_{2}]]  \nonumber  \\
   \Omega_4(2) & = & \frac{1}{24} [X_{1},[X_{2},[X_{2},X_{1}]]]  \nonumber
\end{eqnarray}
In general, each $G_n(X_{1},X_{2})$ is a linear combination of the
commutators of the form $[V_1,[V_2, \ldots,[V_{n-1},V_n] \ldots]]$
with $V_i \in \{X_{1},X_{2}\}$ for $1 \le i \le n$, the coefficients being
universal rational constants. This is perhaps one of the reasons why
the Magnus expansion is often referred to in the literature as the
continuous analogue of the BCH formula. As a matter of fact,
Magnus proposed a different method for obtaining the first terms in
the series (\ref{M-series}) based on (\ref{eq.5.0}) \cite{magnus54ote}.

Now we can apply Theorem \ref{conv-mag} and obtain the following sharp bound.
\begin{theorem}
The Baker--Campbell--Hausdorff series in the form (\ref{eq.5.0})
converges absolutely when $\|X_{1}\| + \|X_{2}\| < \pi$.
\end{theorem}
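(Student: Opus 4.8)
The plan is to recognize the Baker--Campbell--Hausdorff series as a particular instance of a Magnus series and then read off the bound from Theorem \ref{conv-mag}. Concretely, take $A(t)$ to be the (bounded, piecewise constant) matrix-valued function (\ref{eq.2.2.3b}) built from $X_1$ and $X_2$, so that the exact solution of (\ref{eq:evolution}) at $t=2$ is $Y(2) = \e^{X_1}\e^{X_2}$, as already noted above. Using the spectral norm one has $\int_0^2 \|A(s)\|\,ds = \|X_2\|\cdot 1 + \|X_1\|\cdot 1 = \|X_1\| + \|X_2\|$, so the hypothesis $\|X_1\| + \|X_2\| < \pi$ is exactly the condition $\int_0^T \|A(s)\|\,ds < \pi$ with $T = 2$ appearing in Theorem \ref{conv-mag}. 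Interpreting $X_1, X_2$ as $n\times n$ matrices places us in the finite-dimensional setting, where Theorem \ref{conv-mag} applies without any normality hypothesis on $Y$.

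Next I would identify the Magnus and BCH series term by term. Introducing the bookkeeping parameter, i.e. replacing $A(t) \mapsto \varepsilon A(t)$ (equivalently $X_i \mapsto \varepsilon X_i$), the Magnus series produced by the recursion (\ref{recur2}) for this $A$ is $\sum_{k\ge1}\varepsilon^k \Omega_k(2)$, while the BCH series is $\varepsilon(X_1+X_2) + \sum_{n\ge2}\varepsilon^n G_n(X_1,X_2)$; both are power series in $\varepsilon$ whose sums coincide with $\log(\e^{\varepsilon X_1}\e^{\varepsilon X_2})$ for $\varepsilon$ in a neighbourhood of $0$ (the former by the local validity of the Magnus representation recalled in section \ref{convergence}, the latter by the BCH theorem itself). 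Uniqueness of the Taylor coefficients of an operator-valued analytic function then forces $\Omega_1(2) = X_1 + X_2$ and $\Omega_k(2) = G_k(X_1,X_2)$ for all $k\ge2$; this is already visible in the explicit low-order terms (\ref{eq.2.2.4}).

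Finally, to upgrade convergence to \emph{absolute} convergence: by Theorem \ref{main-theorem} (hence Theorem \ref{conv-mag}), the function $\varphi(\varepsilon) = \log Y(2;\varepsilon) = \log(\e^{\varepsilon X_1}\e^{\varepsilon X_2})$ is analytic on the disk $|\varepsilon| < R$ with $R = \pi/(\|X_1\|+\|X_2\|) > 1$, and the Magnus series $\sum_k \varepsilon^k \Omega_k(2)$ is precisely its Taylor expansion at $\varepsilon = 0$. Consequently its operator-valued coefficients satisfy $\limsup_k \|\Omega_k(2)\|^{1/k} \le 1/R < 1$, so $\sum_k \|\Omega_k(2)\| < \infty$; setting $\varepsilon = 1$ and using the identification of the previous paragraph yields $\|X_1+X_2\| + \sum_{n\ge2}\|G_n(X_1,X_2)\| < \infty$, which is the claimed absolute convergence of (\ref{eq.5.0}). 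The statement is thus essentially a corollary of Theorem \ref{conv-mag}; the only points needing a little care are the order-by-order matching of the two series (handled by the $\varepsilon$-scaling together with uniqueness of power-series coefficients) and passing from convergence to absolute convergence (handled by observing that the Magnus series is a genuine power series in $\varepsilon$ with radius of convergence strictly larger than $1$), and neither presents a real obstacle once Theorem \ref{conv-mag} is in hand.
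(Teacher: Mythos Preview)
Your proposal is correct and follows essentially the same approach as the paper: the paper treats this theorem as an immediate corollary of Theorem \ref{conv-mag}, having already set up the piecewise constant $A(t)$ in (\ref{eq.2.2.3b}) and the identification of $\Omega_k(2)$ with the BCH terms in (\ref{eq.2.2.4}). You simply spell out the details the paper leaves implicit, in particular the $\varepsilon$-scaling argument for the term-by-term matching and the passage to absolute convergence via the radius of convergence being strictly greater than $1$.
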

This result can be generalized, of course, to any number of non
commuting operators $X_1, X_2, \ldots, X_q$. Specifically, the series
\[
    Z = \log( \e^{X_{1}} \, \e^{X_{2}} \, \cdots \, \e^{X_{q}} ),
\]
converges absolutely if $\|X_{1}\| + \|X_{2}\| + \cdots + \|X_q\| < \pi$.

\subsection{Preliminary linear transformations}\label{PLT}

To improve the accuracy and the bounds on the convergence domain
of the Magnus series for a given problem, it is quite common to
consider first a linear transformation on the system in such a way that
the resulting differential equation can be more easily handled in a certain sense to be specified
for each problem. To illustrate the procedure, let us consider a simple example.

\noindent \textbf{Example}. Suppose we have the $2 \times 2$ matrix
\begin{equation}  \label{ex1.7}
   A(t) = \alpha(t)  X_{1} + \beta(t) X_{2},
\end{equation}
where $X_1$ and $X_2$ are given by (\ref{ej2.1}) and 
$\alpha$ and $\beta$ are complex functions of time,
$\alpha, \beta : \mathbb{R} \longrightarrow \mathbb{C}$. Then the
exact solution of  $Y^{\prime} = A(t)  Y$, $Y(0)=I$ is 
\begin{equation}\label{ex1.8}
 Y(t) =  \left(
  \begin{array}{cc}
 \e^{\int_0^t\alpha(s)ds}  \quad &
 \displaystyle \int_0^t ds_{1}\e^{\int_{s_1}^t\alpha(s_2)ds_2} \ \beta(s_1) \
 \e^{-\int_{0}^{s_1}\alpha(s_2)ds_2} \\
  0  \quad & \e^{-\int_0^t\alpha(s)ds}
  \end{array} \right).
\end{equation}
Let us factorize the solution as $Y(t)= \tilde{Y}_0(t) \tilde{Y}_1(t)$, with $\tilde{Y}_0(t)$ the
solution of the initial value problem defined by 
\begin{equation}\label{ej1.11}
 \tilde{Y}_0^{\prime} = A_0(t) \, \tilde{Y}_0  \qquad  A_0(t) = \alpha(t) X_1 = \left(
  \begin{array}{cc}
  \alpha(t) & \ 0 \\
  0 & -\alpha(t)
  \end{array} \right)
\end{equation}
and $\tilde{Y}_0(0)=I$. Then,  the
equation satisfied by $\tilde{Y}_1$ is
\begin{equation}\label{ej1.12}
  \tilde{Y}_1^{\prime} = A_1(t) \tilde{Y}_1, \qquad \mbox{ with } \quad A_1=\tilde{Y}_0^{-1}A \, \tilde{Y}_0
  - A_0 = \left(
  \begin{array}{cc}
 \quad 0 \quad  &  \beta(t) \, \e^{2\int_0^t\alpha(s)ds}  \\
  0   & 0
  \end{array} \right),
\end{equation}
so that the first term of the Magnus expansion applied to (\ref{ej1.12})
already provides the exact solution (\ref{ex1.8}). 
\hfill{$\Box$}

This, of course, is not
the typical behaviour, but in any case if the transformation $\tilde{Y}_0$ in the factorization
$Y(t)= \tilde{Y}_0(t) \tilde{Y}_1(t)$ is chosen
appropriately,  the first few terms in the Magnus series applied to the equation satisfied by
of $\tilde{Y}_1$ give usually very accurate approximations.

Since this kind of preliminary transformation is frequently used in Quantum
Mechanics, we specialize the treatment to this particular setting
here and consider equation (\ref{laecuacion}) instead. In other words, we
write (\ref{eq:evolution}) in the more conventional form
of the time
dependent Sch\"odinger equation
\begin{equation}   \label{eq:evolU}
   \frac{dU(t)}{dt} = \tilde{H}(t) U(t),
\end{equation}
where $\tilde{H} \equiv H/(i \hbar)$, $\hbar$ is the reduced Planck constant, $H$ is the Hamiltonian
and $U$ corresponds to the evolution operator.

As in the example, suppose that $\tilde{H}$ can be split into two pieces,
$\tilde{H} = \tilde{H}_{0} + \varepsilon \tilde{H}_{1}$, with
$\tilde{H}_{0}$ a solvable Hamiltonian and $\varepsilon \ll 1$ a
small perturbation parameter. In such a situation one tries to
integrate out the $\tH _0$ piece so as to circumscribe the
approximation to the $\tH _1$ piece. In the case of 
equation (\ref{eq:evolU}) this is carried out by means of
a linear time-dependent transformation. In Quantum Mechanics this
preliminary linear transformation corresponds to a new evolution
picture, such as the interaction or the adiabatic picture.  

 Among other possibilities we may factorize the
time-evolution operator as
\begin{equation}\label{Ufac}
  U(t)=G(t)U_G(t) G^\dag (0),
\end{equation}
where $G(t)$ is a linear transformation whose purpose is to be
defined yet. In the new $G$-Picture, the corresponding time-evolution operator
$U_G$ obeys the equation
\begin{equation}\label{UGeq}
   U_G^{\prime}(t)=\tH _G(t)U_G(t),\quad \tH _G(t)= G^\dag (t)\tH (t)G(t)-G^\dag
  (t) G^{\prime}(t).
\end{equation}
The choice of $G$ depends on the nature of the problem at hand.
There is no generic formal recipe to find out the most appropriate
$G$. In the spirit of canonical transformations of Classical
Mechanics, one should built up the very $U_G$  perturbatively.
However, the aim here is different because $G$ is defined from the
beginning. Two rather common choices are:
\begin{itemize}
\item \textbf{Interaction Picture}. It is well suited when $\tH _0(t)$ is
diagonal in some basis, or else, it is constant. In that case
\begin{equation}\label{GInt}
  G(t)=\exp\left( \int_0^t \tH _0(\tau) d \tau \right)
\end{equation}
so that
\begin{equation}\label{HGInt}
  \tH _G(t)=\varepsilon \exp\left( -\int_0^t \tH _0(\tau) d \tau\right)
  \tH _1(t) \exp\left( \int_0^t \tH _0(\tau) d
  \tau\right) .
\end{equation}

\item \textbf{Adiabatic Picture}. A time scale of the
system  much smaller than that of the interaction defines an
adiabatic regime. For instance, suppose that the Hamiltonian 
operator $H(t)$ depends smoothly on time through the variable
$\tau = t/T$, where $T$ determines the time scale and
$T \rightarrow \infty$. Then the quantum mechanical evolution of the system
is described by $dU/dt = \tilde{H}(\varepsilon t) U$, with
$\varepsilon \equiv 1/T \ll 1$, or equivalently
\begin{equation}   \label{adiab1}
      \frac{dU(\tau)}{d\tau} =  \frac{1}{\varepsilon} \tilde{H}(\tau) U(\tau),
\end{equation}
with $\tau \equiv \varepsilon t$. In this case
the appropriate transformation is a $G$ that
renders $\tH (t)$ instantaneously diagonal, i.e.,
\begin{equation}\label{GAdiab}
  G^\dag (t)\tH (t)G(t)=E(t)={\rm diag}[E_1(t),E_2(t),\ldots ].
\end{equation}
The term $G^\dag  G^{\prime}$  of the new Hamiltonian in (\ref{UGeq})
is, under adiabatic conditions, very small. Its main diagonal
generates the so-called Berry, or geometric, phase \cite{berry84qpf}.
\end{itemize}

Both types of $G$ do not exclude mutually, but they may be used in
succession. As a matter of fact, corrections to the adiabatic
approximation must be followed by the former one. In turn, an
adiabatic transformation may be iterated, as proposed by Garrido
\cite{garrido64gai} and Berry \cite{berry87qpc}. 

In section 4 we shall use extensively
these preliminary linear transformations on several
standard problems of Quantum Mechanics to illustrate the practical features of the
Magnus expansion.

\subsection{Exponential product representations}

In contrast to Magnus expansion, much less attention has been paid
to solutions of (\ref{eq:evolution}) in the form of a product of
exponential operators. Both approaches are by no means equivalent,
since in general the operators $\Om_n$ do not commute with each
other. For instance, for a quantum system as in equation
(\ref{eq:evolU}), the ansatz $U=\prod \exp (\Phi_n)$ (where
$\Phi_n$ are skew-Hermitian operators to be determined) is an
alternative to the Magnus expansion, also preserving the unitarity
of the time-evolution operator. One such procedure was devised by
Fer  in 1958 in a paper devoted to the study of systems of linear 
differential equations \cite{fer58rdl}. Although the original
result obtained by Fer was cited and explicitly stated by Bellman
\cite[p. 204]{bellman60itm}, sometimes it has been misquoted as a
reference for the Magnus expansion \cite{baye73ats}. On the other
hand, Wilcox associated Fer's name with an interesting alternative
infinite product expansion which is indeed a continuous analogue
of the Zassenhaus formula \cite{wilcox67eoa} (something also
attributed to the Fer factorization \cite[p. 372]{magnus76cgt}).
This however also led to some confusion since his approach is in
the spirit of perturbation theory, whereas Fer's original one was
essentially nonperturbative. The situation was clarified in
\cite{klarsfeld89eip}, where also some applications to Quantum
Mechanics were carried out for the first time.

In this section we discuss briefly the main features of the Fer
and Wilcox expansions, and how the latter can be derived from the
successive terms of the Magnus series. This will make clear the
different character of the two expansions. We also include some
details on the factorization of the solution proposed by Wei and
Norman \cite{wei63las,wei64ogr}. Finally we provide another
interpretation of the Magnus expansion as the continuous analogue
of the BCH formula in linear control theory.

\subsubsection{Fer method}\label{Fer-section}

 An intuitive way to introduce Fer formalism is the following
 \cite{iserles00lgm}.
Given the matrix linear system $Y^{\prime} = A(t) Y$, $Y(0)=I$, we know
that
\begin{equation}   \label{Fer1a}
    Y(t)=\exp(F_1(t))
\end{equation}
is the exact solution if $A$
commutes with its time integral $F_1(t)=\int_0^t  A(s) ds$, and
$Y(t)$ evolves in
the Lie group $\mathcal{G}$ if $A$ lies in its corresponding Lie algebra
$\mathfrak{g}$. If the goal is to respect the Lie-group structure in
the general case, we need to `correct' (\ref{Fer1a}) without loosing this
important feature.

Two possible remedies arise in a quite natural way. The first is just to seek a
correction $\Delta(t)$ evolving in the Lie algebra $\mathfrak{g}$ so that
\[
    Y(t) = \exp \left( F_{1}(t) + \Delta(t) \right).
\]
This is nothing but the Magnus expansion. Alternatively, one may correct
with $Y_{1}(t)$ in the Lie group $\mathcal{G}$,
\begin{equation}   \label{Fer2}
    Y(t) = \exp (F_{1}(t)) Y_{1}(t).
\end{equation}
This is precisely the approach pursued by Fer, i.e. representing the
solution of (\ref{eq:evolution}) in the factorized form (\ref{Fer2}), where
(hopefully)  $Y_1$ will be closer to the identity matrix
than $Y$ at least for small $t$.

The question now is to find the differential equation satisfied by
$Y_1$. Substituting (\ref{Fer2}) into equation (\ref{eq:evolution}) we have
\begin{equation}\label{Fer3}
  \frac{d}{d t} Y =\left( \frac{d}{d t}
\e^{F_1} \right) Y_1 \; + \; \e^{F_1} \frac{d}{d t}
Y_1 = A \e^{F_1} Y_1,
\end{equation}
so that, taking into account the expression for the derivative of
the exponential map (Lemma \ref{lemma1}), one arrives easily at
\begin{equation}\label{Fer5}
  Y^{\prime}_1 = A_{1}(t) Y_1  \qquad   Y_{1}(0) = I,
\end{equation}
where
\begin{equation}\label{Fer6}
  A_{1}(t) = \e^{-F_1} A  \e^{F_1} - \int_0^{1}{\rm d} x \;\e^{-x
F_1 } A  \e^{x F_1}.
\end{equation}
The above procedure can be repeated to yield a sequence of
iterated matrices $A_{k}$. After $n$ steps we have the following recursive
scheme, known as the Fer expansion:
\begin{eqnarray}
Y &=&\e^{F_{1}}\e^{F_{2}}\cdots \e^{F_{n}}Y_{n}  \label{Fer7} \\
Y_{n}^{\prime} &=&A_{n}(t)Y_{n}\quad \ \ \ \ \ Y_{n}(0)=I,\qquad j=1,2,\ldots
\nonumber
\end{eqnarray}
with $F_{n}(t)$ and $A_{n}(t)$  given by
\begin{eqnarray}
F_{n+1}(t) &=&\int_{0}^{t}A_{n}(s)ds\ \ \ \ \ \ \
A_{0}(t)=A(t),\quad n=0,1,2...  \nonumber \\
A_{n+1} &=&\e^{-F_{n+1}}A_{n}\e^{F_{n+1}}-\int_{0}^{1}dx\
\e^{-xF_{n+1}}A_{n}\e^{xF_{n+1}}  \nonumber \\
&=&\int_{0}^{1}dx\int_{0}^{x}du\ \e^{-(1-u)F_{n+1}}\left[
A_{n},F_{n+1}\right] \e^{(1-u)F_{n+1}}  \label{Fer8} \\
&=&\sum_{j=1}^{\infty }\frac{(-)^{j}\ j}{(j+1)!}
\ad_{F_{n+1}}^{j}(A_{n}) ,\quad n=0,1,2... \nonumber
\end{eqnarray}
When after $n$ steps we impose $Y_{n}=I$ we are left with an approximation 
 to the exact solution $Y(t)$.

Inspection of the expression of $A_{n+1}$ in (\ref{Fer8}) reveals
an interesting feature of the Fer expansion. If we assume that a
perturbation parameter $\varepsilon$ is introduced in $A$, i.e. if
we substitute $A$ by $\varepsilon A$ in the formalism,
since $F_{n+1}$ is of the same order in $\varepsilon$ as
$A_n$, then an elementary recursion shows that the matrix $A_{n}$
starts with a term of order $\varepsilon^{2^n}$ (correspondingly
the operator $F_n$ contains terms of order $\varepsilon^{2^{n-1}}$
and higher). This should greatly enhance the rate of convergence
of the product in equation (\ref{Fer7}) to the exact solution.

It is possible to derive a bound on the
convergence domain in time of the expansion \cite{blanes98maf}. The idea
is just to look for conditions on $\ A(t)\ $ which insure $\
F_{n}\rightarrow 0$ as $n\rightarrow \infty $. As in the case of
the Magnus expansion, we take $A(t)$ to be a bounded matrix with
$\|A(t)\| \le k(t)\equiv k_{0}(t)$. Fer's algorithm, equations
(\ref{Fer7}) and
(\ref{Fer8}), provides then a recursive relation among corresponding bounds $%
\ k_{n}(t)\ $ for $\| A_{n}(t)\| $. If we denote $\ K_{n}(t)\equiv
\int_{0}^{t}k_{n}(s)ds$, we can write this
relation in the generic form $k_{n+1}=f(k_{n},K_{n})$, which after
integration gives
\begin{equation}
K_{n+1}=M(K_{n}).  \label{iteracio}
\end{equation}

The question now is: When is $\ K_{n}\ \rightarrow 0$ as
$n\rightarrow \infty $? This is certainly so if $0$ is a stable
fixed point for the iteration of the mapping $\ M\ $ and $\ K_{0}\
$ is within its basin of attraction. To see when this is the case
we have to solve the equation $\ \xi =M(\xi )\ $ to find where the
next fixed point lies. Let us do it explicitly. By taking norms
in the recursive scheme (\ref{Fer8}) we have
\[
 \|A_{n+1}\| \ \leq \int_{0}^{1}dx\int_{0}^{x}du\
\e^{2(1-u)K_{n}}\left\| \left[ A_{n},F_{n+1}\right] \right\| ,
\]
which can be written as $\| A_{n+1}\| \ \leq k_{n+1},$%
with
\[
k_{n+1}=\frac{1-\e^{2K_{n}}(1-2K_{n})}{2K_{n}}\frac{dK_{n}}{dt}
\]
and consequently $K_{n+1}$ is given by eq. (\ref{iteracio}) with
\begin{equation}
M(K_{n})=\int_{0}^{K_{n}}\frac{1-\e^{2x}(1-2x)}{2x}\ dx.
\label{rec}
\end{equation}
That is the mapping we have to iterate. It is clear that $\xi =0\
$is a stable fixed point of $M$. The next, unstable, fixed point
is $\xi =0.8604065 $. So we can conclude that we have a convergent
Fer expansion at least for values of time $\ t\ $ such that
\begin{equation}
\int_{0}^{t} \|A(s)\| ds\leq
K_{0}(t)<0.8604065.  \label{con1}
\end{equation}
Notice that the bound for the convergence domain provided by this result  is
smaller than the corresponding to the Magnus expansion (Theorem
\ref{conv-mag}).

\subsubsection{Wilcox method} \label{Wilcox-section}

    A more tractable form of infinite product expansion has been devised by
Wilcox \cite{wilcox67eoa} in analogy with the Magnus approach. The
idea, as usual, is to treat $\varepsilon$ in
\begin{equation}   \label{wil.1}
    Y^{\prime} = \varepsilon A(t) Y, \qquad Y(0)=I
\end{equation}
as an expansion parameter and to determine the successive factors
in the product
\begin{equation}\label{W1}
  Y(t) = \e^{W_1} \, \e^{W_2} \, \e^{W_3} \cdots
\end{equation}
by assuming that $W_n$ is exactly of order $\varepsilon^n$. Hence,
it is clear from the very beginning that the methods of Fer and
Wilcox give rise indeed to completely different infinite product
representations of the solution $Y(t)$.

    The explicit expressions of $W_1$, $W_2$ and $W_3$ are given in
\cite{wilcox67eoa}. It is noteworthy that the operators $W_n$ can
be expressed in terms of Magnus operators $\Omega_k$, for which
compact formulae and recursive procedures are available. To this
end we simply use the Baker--Campbell--Hausdorff formula to
extract formally from the identity
\begin{equation}\label{W3}
  \e^{W_1} \, \e^{W_2} \, \e^{W_3} \cdots =
   \e^{\Omega_1+\Omega_2+\Omega_3+\cdots} \; ,
\end{equation}
terms of the same order in $\varepsilon$. After a straightforward
calculation one finds for the first terms
\begin{eqnarray}\label{W4}
  W _1 &=& \Omega_1, \qquad W _2 = \Omega_2, \qquad W_3 = \Omega_3 -
      \frac{1}{2}[\Omega_1,\Omega_2],\\
W_4 &=& \Omega_4 - \frac{1}{2}[\Omega_1,\Omega_3]
+\frac{1}{6}[\Omega_1,[\Omega_1,\Omega_2]],\quad {\rm etc.}
\end{eqnarray}

The main interest of the Wilcox formalism stems from the fact that
it provides explicit expressions for the successive approximations
to a solution represented as an infinite product of exponential
operators. This offers a useful alternative to the Fer expansion
whenever the computation of $F_n$ from equation (\ref{Fer8}) is
too cumbersome. We note in passing that to first order the three
expansions yield the same result: $F_1=W_1=\Omega_1$.

\subsubsection{Wei--Norman factorization}
\label{w-nfacto}

Suppose now that $A$ and $Y$ in equation (\ref{eq:evolution})  
are linear operators and that $A(t)$ can be
expressed in the form
\begin{equation}   \label{wn1}
   A(t) = \sum_{i=1}^m u_i(t) X_i, \qquad m \; \mbox{ finite,}
\end{equation}
where the $u_i(t)$ are scalar functions of time, and $X_1, X_2,
\ldots, X_m$ are time-independent operators. Furthermore, suppose
that the Lie algebra $\mathfrak{g}$ generated by the $X_i$ is of
finite dimension $l$ (this is obviously true if $A$ and $Y$ are
finite dimensional matrix operators). Under these conditions, if
$X_1, X_2, \ldots, X_l$ is a basis for $\mathfrak{g}$, the Magnus
expansion allows to express the solution locally in the form $Y(t)
= \exp( \sum_{i=1}^l f_i(t) X_i)$. Wei and Norman, on the other
hand, show that there exists a neighborhood of $t=0$ in which the
solution can be written as a product \cite{wei63las,wei64ogr}
\begin{equation}  \label{wn2}
  Y(t) = \exp(g_1(t) X_1) \, \exp(g_2(t) X_2) \cdots \exp(g_l(t)
  X_l),
\end{equation}
where the $g_i(t)$ are scalar functions of time. Moreover, the
$g_i(t)$ satisfy a set of nonlinear differential equations which
depend only on the Lie algebra $\mathfrak{g}$ and the $u_i(t)$'s.
These authors also study the conditions under which the solution
converges globally, that is, for all $t$. In particular, this
happens for all solvable Lie algebras in a suitable basis and for
any real $2 \times 2$ system of equations \cite{wei64ogr}.

In the terminology of Lie algebras and Lie groups, the representation
$Y(t) = \exp(\sum_{i=1}^l f_i(t) X_i)$ corresponds to the \emph{canonical
coordinates of the first kind}, whereas equation (\ref{wn2}) defines a
system of \emph{canonical coordinates of the second kind}
\cite{belinfante72aso,gorbatsevich97fol,postnikov94lga}.

This class of factorization has been used in combination with
the Fer expansion to obtain closed-form solutions of the Cauchy
problem defined by certain classes of parabolic linear partial
differential equations \cite{casas96sol}. When the algorithm is
applied, the solution is written as a finite product of
exponentials depending on certain ordering functions for which
convergent approximations are constructed in explicit form.

Notice that the representation (\ref{wn2}) is clearly useful when the spectral
properties of the individual operators $X_i$ are readily
available. Since the $X_i$ are constant and often have simple
physical interpretation, the evaluation of the eigenvalues and
eigenvectors can be done once for all times, and this may
facilitate the computation of the exponentials. This situation arises, in
particular, in control theory \cite{belinfante72aso}. The functions $u_i(t)$
are known as the controls, and the operator $Y(t)$ acts on the states of the
system, describing how the states are transformed along time.

\subsection{The continuous BCH formula}
\label{lcbch}

When applied to the
equation $Y^\prime = A(t) Y$ with the matrix $A(t)$ given by
(\ref{wn1}), the Magnus expansion adopts a particularly simple form. Furthermore, by making
use of the structure constants of the Lie algebra, it is
relatively easy to get explicit expressions for
the canonical coordinates of the first kind $f_i(t)$. 
Let us
illustrate the procedure by considering the particular case
\[
    A(t) = u_1(t) X_1 + u_2(t) X_2.
\]
Denoting by $\alpha_i(t) = \int_0^t u_i(s) ds$, and, for a given
function $\mu$,
\[
  \left(\int_i \mu \right)(t)  \equiv  \int_0^t u_i(s) \mu(s) ds,
\]
a straightforward calculation shows that the first terms of $\Omega$ in
the Magnus expansion
can be written as
\begin{eqnarray}  \label{wn3}
  \Omega(t) & = & \beta_{1}(t) X_1 + \beta_{2}(t) X_2 +
  \beta_{12}(t) [X_1,X_2] + \beta_{112}(t) [X_1,[X_1,X_2]] \nonumber \\
  & &  + \beta_{212}(t) [X_2,[X_1,X_2]] + \cdots
\end{eqnarray}
where
\begin{eqnarray}  \label{wn4}
  \beta_i &=& \alpha_i, \qquad\quad  i=1,2,  \nonumber\\
  \beta_{12} &=&  \frac12  (\int_1 \alpha_2 - \int_2 \alpha_1), \\
   \beta_{112} &=&
\frac{1}{12}  (\int_2 \alpha_1^2 - \int_1 \alpha_1 \alpha_2 )
- \frac{1}{4} (\int_1 \int_2 \alpha_1 - \int_1 \int_1 \alpha_2), \nonumber \\
\beta_{212} &=&
\frac{1}{12}  (\int_2 \alpha_1 \alpha_2 - \int_1 \alpha_2^2 )
+ \frac{1}{4} (\int_2 \int_1 \alpha_2 - \int_2 \int_2 \alpha_1). \nonumber
\end{eqnarray}
Taking into account the structure constants of the particular
finite dimensional Lie algebra under consideration, from (\ref{wn3})
one easily gets the functions $f_i(t)$. In the general case, (\ref{wn3})
allows us to express $\Omega$ as a linear combination of
elements of a basis of the free Lie algebra generated by $X_1$ and
$X_2$. In this case, the recurrence (\ref{eses})-(\ref{omegn})
defining the Magnus expansion can be carried out only with the nested
integrals
\begin{equation}   \label{wn5}
   \alpha_{i_1 \cdots i_s}(t) \equiv  \left(\int_{i_s} \cdots \int_{i_1} 1
    \right)(t) =
   \int_0^t \int_0^{t_{s}} \cdots \int_0^{t_3} \int_0^{t_2} u_{i_s}(t_s)
   \cdots u_{i_1}(t_1) dt_1 \cdots dt_s
\end{equation}
involving the functions $u_1(t)$ and $u_2(t)$. Thus,
for instance, the coefficients in (\ref{wn4}) can be written (after
successive integration by parts) as
\begin{eqnarray*}
  \beta_i &=& \alpha_i, \qquad \quad i=1,2, \\
  \beta_{12} &=&  \frac12 (\int_1 \alpha_{2} - \int_2
  \alpha_1) \ = \  \frac12 (\alpha_{21} - \alpha_{12}), \\
   \beta_{112} &=&
\frac{1}{6}  (\int_2 \int_1 \alpha_1 + \int_1 \int_1 \alpha_2 - 2
  \int_1 \int_2 \alpha_1) \ = \
\frac16 (\alpha_{112}+\alpha_{211}-2 \alpha_{121}), \\
\beta_{212} &=&
\frac{1}{6}  (2 \int_2 \int_1 \alpha_2 - \int_2 \int_2 \alpha_1 -
  \int_1 \int_2 \alpha_2) \ = \
\frac16 (2 \alpha_{212}-\alpha_{122}- \alpha_{221}).
\end{eqnarray*}
The series (\ref{wn3}) expressed in terms of the integrals
(\ref{wn5}) is usually referred to  as
the continuous Baker--Campbell--Hausdorff formula 
\cite{kawski02tco,murua06tha}  for the linear case. We will generalize this formalism
to the nonlinear case in the next section.



\section{Generalizations of the Magnus expansion}
\label{section3}

In view of the attractive properties of the Magnus expansion as a
tool to construct approximate solutions of non-autonomous systems
of linear ordinary differential equations, it is hardly surprising
that several attempts have been made along the years either to
extend the procedure to a more general setting or to manipulate
the series to achieve further improvements. In this section we
review some of these generalizations, with special emphasis on the
treatment of nonlinear differential equations.

 First we reconsider an iterative method originally
devised by Voslamber \cite{voslamber72oea} for computing
approximations $\Omega^{(n)}(t)$
 in $Y(t)=\exp(\Omega(t))$ for the linear equation
$Y^\prime = A(t) Y$. The resulting approximation may be
interpreted as a re-summation of terms in the Magnus series and
possesses interesting features not shared by the corresponding
truncation of the conventional Magnus expansion. Then we adapt the
Magnus expansion to the physically relevant case of a periodic
matrix $A(t)$ with period $T$ which incorporates in a natural way
the structure of the solution ensured by the Floquet theorem. Next
we go one step further and generalize the Magnus expansion to the
so-called nonlinear Lie equation $Y^\prime = A(t,Y) Y$. Finally,
we show how the procedure can be applied to \emph{any} nonlinear
explicitly time-dependent differential equation. Although the
treatment is largely formal, in section~\ref{section5} we will see
that it is of paramount importance for designing new and highly
efficient numerical integration schemes for this class of
differential equations. We particularize the treatment to the
important case of Hamiltonian systems and also establish an interesting connection
with the Chen--Fliess series for nonlinear differential equations.

\subsection{Voslamber iterative method}  \label{Voslamber}

Let us consider equation (\ref{eq:evolution}) when there is a perturbation
parameter $\varepsilon$ in the (in general, complex) matrix $A$, i.e.,
equation (\ref{wil.1}). Theorem \ref{conv-mag}
guarantees that, for sufficiently small values of $t$,
$Y(t;\varepsilon) = \exp \Omega(t;\varepsilon)$, where
\begin{equation}   \label{vos.2}
    \Omega(t;\varepsilon) = \sum_{n=1}^{\infty}
          \varepsilon^n \, \Omega_{n}(t).
\end{equation}
The advantages of this representation and the approximations
obtained when the series is truncated have been sufficiently
recognized in the treatment done in previous sections.  There is, however, a property
of the exact solution not shared by any truncation of the series
(\ref{vos.2}) which could be relevant in certain physical
applications: $(1/\varepsilon) \sum_{n=1}^{m} \varepsilon^n
\Omega_{n}(t)$ with $m>1$ is unbounded for $\varepsilon
\rightarrow \infty$ even when $\Omega(t,\varepsilon)/\varepsilon$
is bounded uniformly with respect to $\varepsilon$ under rather
general assumptions on the matrix $A(t)$ \cite{voslamber72oea}. Notice that
this is  the case, in particular, for the adiabatic problem (\ref{adiab1}).

When Schur's unitary triangularization
theorem \cite{horn85man} is applied to the exact solution
$Y(t;\varepsilon)$ one has
\begin{equation}   \label{vos.3}
   T_{\varepsilon} = U^{\dag} \,Y \,U,
\end{equation}
where $T_{\varepsilon}$ is an upper triangular matrix and $U$ is unitary. In
other words, $Y$ is unitarily equivalent to an upper triangular
matrix $T_{\varepsilon}$. Differentiating (\ref{vos.3}) and using (\ref{wil.1})
one arrives at
\[
   T_{\varepsilon}' = \varepsilon U^{\dag} A U T_{\varepsilon} +
     \left[ T_{\varepsilon}, U^{\dag} U' \right].
\]
Since the second term on the right hand side is not upper
triangular, it follows at once that
\[
   T_{\varepsilon}(t) = \exp \left( \varepsilon \int_0^t (U^{\dag} A U
   )_{\vartriangle} ds \right),
\]
where the subscript $\vartriangle$ denotes the upper triangular
part (including terms on the main diagonal) of the corresponding
matrix. Taking into account (\ref{vos.3}) one gets
\begin{equation}   \label{vos.4}
   \Omega(t,\varepsilon) = \varepsilon U \left( \int_0^t (U^{\dag} A U
   )_{\vartriangle} ds \right) U^{\dag}.
\end{equation}
Considering now the Frobenius norm (which is unitarily invariant,
section \ref{notations}) of both sides of this equation, one has
\begin{eqnarray}  \label{suine2f}
  \| \Omega\|_F & = &  |\varepsilon|
   \left\| \int_0^t (U^{\dag} A U
   )_{\vartriangle} ds \right\|_F \le |\varepsilon| \int_0^t
   \| (U^{\dag} A U )_{\vartriangle} \|_F \, ds  \nonumber \\
   & & \le |\varepsilon| \int_0^t
   \| U^{\dag} A U \|_F \, ds = |\varepsilon| \int_0^t \|  A  \|_F \, ds,
\end{eqnarray}
If the the spectral norm is considered instead, from inequalities 
(\ref{ineqf1}), (\ref{ineqf2}) and (\ref{suine2f}), one concludes that
\[
   \| \Omega\|_2 \le \sqrt{\mathrm{rank} (A)} \,\,
        |\varepsilon| \int_0^t \|  A  \|_2 \, ds.
\]
In any case, what is important to stress here is that for the exact solution
$\Omega(t;\varepsilon)/\varepsilon$ is bounded uniformly with
respect to the $\varepsilon$ parameter. Voslamber proceeds by
deriving an algorithm for generating successive approximations of
$Y(t;\varepsilon) = \exp(\Omega(t;\varepsilon))$ which, contrarily to the
direct series expansion (\ref{vos.2}), preserve this property. His
point of departure is to get a series expansion for the so-called
\textit{dressed derivative of $\Omega$} \cite{oteo05iat}
\begin{equation}   \label{vos.5}
   \Gamma \equiv \e^{\Omega/2} \, \Omega^{\prime} \,
   \e^{-\Omega/2}.
\end{equation}
This is accomplished by inserting (\ref{fmag1}) in (\ref{vos.5}).
Specifically, one has
\begin{eqnarray*}
  \Gamma  & = &  \e^{ \mathrm{ad}_{\Omega/2}} \,
  \Omega^{\prime} = \e^{ \mathrm{ad}_{\Omega/2}} \, d
  \exp_{\Omega}^{-1} (\varepsilon A) = \e^{ \mathrm{ad}_{\Omega/2}} \,
   \frac{\mathrm{ad}_{\Omega}}{\e^{\mathrm{ad}_{\Omega}}-1} (\varepsilon A) \\
    & = &
   \frac{\mathrm{ad}_{\Omega/2}}{\sinh \Omega/2}  (\varepsilon A) =
   \sum_{n=0}^{\infty} \frac{B_n(1/2)}{n!} \mathrm{ad}_{\Omega}^n
   (\varepsilon A)
\end{eqnarray*}
and finally \cite{voslamber72oea,oteo05iat}
\begin{equation}   \label{vos.6}
   \Gamma = \sum_{n=0}^{\infty} \frac{2^{1-n}-1}{n!} B_n
     \, \mathrm{ad}_{\Omega}^n
   (\varepsilon A),
\end{equation}
where, as usual, $B_n$ denote Bernoulli numbers. In order to
express $\Gamma$ as a power series of $\varepsilon$ one has to
insert the Magnus series (\ref{vos.2}) into eq. (\ref{vos.6}).
Then we get
\begin{equation}   \label{vos.7}
  \Gamma(t;\varepsilon) = \sum_{n=1}^{\infty} \varepsilon^n \, 
  \Gamma_n(t),
\end{equation}
where the terms $\Gamma_n$ can be expressed as a function of
$\Omega_k$ with $k \le n-2$ through the recursive procedure
\cite{oteo05iat}
\begin{eqnarray}   \label{vos.8}
  \Gamma_1 & = & A, \qquad\quad \Gamma_2 = 0,  \\
  \Gamma_n & = & \sum_{j=2}^{n-1} c_j \sum_{
            k_1 + \cdots + k_j = n-1 \atop
            k_1 \ge 1, \ldots, k_j \ge 1}
            \,
       \ad_{\Omega_{k_1}} \,  \ad_{\Omega_{k_2}} \cdots
          \, \ad_{\Omega_{k_j}} A,  \qquad n \ge 3.  \nonumber
\end{eqnarray}
Here
\[
  c_j \equiv \frac{2^{1-j}-1}{j!} B_j,
\]
with $c_{2j+1} = 0$, $c_2=-1/24$, $c_4=7/5760$, etc. In
particular,
\begin{eqnarray*}
  \Gamma_3 & = & -\frac{1}{24} [\Omega_1,[\Omega_1,A]] \\
  \Gamma_4 & = & -\frac{1}{24} ([\Omega_1,[\Omega_2,A]] +
  [\Omega_2,[\Omega_1,A]]).
\end{eqnarray*}
Now, from the definition of $\Gamma$, eq. (\ref{vos.5}), we write
\[
  \Omega^{\prime} = \e^{-\Omega/2} \, \Gamma \, \e^{\Omega/2},
\]
which, after integration over $t$, can be used for constructing
successive approximations to $\Omega$ once the terms $\Gamma_n$
are known in terms of $\Omega_k$, $k \le n-2$. Thus, the $n$th
approximant $\Omega^{(n)}$ is defined by
\begin{equation}  \label{vos.9}
  \Omega^{(n)}(t) = \int_0^t \e^{-\frac{1}{2}
  \Omega^{(n-1)}(s)} \Gamma^{(n)}(s) \, \e^{\frac{1}{2}
  \Omega^{(n-1)}(s)} ds, \qquad n=1,2,\ldots
\end{equation}
where the $\varepsilon$ dependence has been omitted by simplicity
and $\Gamma^{(n)} = \sum_{k=1}^{n} \varepsilon^k \Gamma_k$,
$\Omega^{(0)}=O$. The first two approximants read explicitly
\begin{eqnarray}  \label{vos.10}
  \Omega^{(1)}(t,\varepsilon) & = & \varepsilon \, \Omega_1(t) =
  \varepsilon \int_0^t A(s) ds    \nonumber  \\
  \Omega^{(2)}(t,\varepsilon) & = & \varepsilon \int_0^t \e^{-\frac{1}{2}
  \Omega^{(1)}(s,\varepsilon)} A(s) \, \e^{\frac{1}{2}
  \Omega^{(1)}(s,\varepsilon)} ds.
\end{eqnarray}
In this approach the solution is approximated by $Y(t) \simeq
\exp(\Omega^{(n)})$. Observe that $\Omega^{(n)}$ contains
contributions from an infinity of orders in $\varepsilon$, whereas
the $n$th term in the Magnus series (\ref{vos.2}) is proportional
to $\varepsilon^n$. Furthermore, $\Omega^{(n)}$ contains
$\sum_{k=1}^n \varepsilon^k \Omega_k$ \emph{and} also higher
powers $\varepsilon^m$ ($m>n$). In particular, one gets easily
\[
  \Omega^{(2)}(t;\varepsilon) = \varepsilon \, \Omega_1(t) +
  \varepsilon^2 \, \Omega_2(t) + \sum_{k=3}^{\infty} \,
  \frac{(-1)^{k-1}}{2^{k-1} (k-1)!} \varepsilon^k \int_0^t
  \ad_{\Omega_1(s)}^{k-1} A(s) ds.
\]
From the structure of the expression (\ref{vos.9}) it is also
possible to find the asymptotic behaviour of
$\Omega^{(n)}/\varepsilon$ ($n \ge 3$) for $\varepsilon
\rightarrow \infty$ and prove that it remains bounded
\cite{voslamber72oea}, just as the exact solution does. This property
of the Voslamber iterative algorithm may lead to better
approximations of $Y(t)$ when the parameter $\varepsilon$ is not
very small, since in that case $\Omega^{(n)}/\varepsilon$ is
expected to remain close to $\Omega/\varepsilon$, as shown in
\cite{oteo05iat}.

\subsection{Floquet--Magnus expansion}  \label{Floquet}

We now turn our attention to a specific case of equation
(\ref{eq:evolution}) with important physical and mathematical
applications, namely when the (complex) $n \times n$ matrix-valued
function $A(t)$ is periodic with period $T$. Then further
information is available on the structure of the exact solution as
is given by the celebrated Floquet theorem, which ensures the
factorization of the solution in a periodic part and a purely
exponential factor. More specifically,
\begin{equation}  \label{fflo2}
   Y(t) = P(t) \, \exp(t F),
\end{equation}
where $F$ and $P$ are $n \times n$ matrices, $P(t)=P(t+T)$ for all
$t$ and $F$ is constant. Thus, albeit a solution of
(\ref{eq:evolution}) is not, in general, periodic the departure
from periodicity is determined by (\ref{fflo2}). This result, when
applied in quantum mechanics, is referred to as Bloch wave theory
\cite{bloch28udq,galindo90qme}. It is widely used in problems of
solid state physics where space-periodic potentials are quite
common. In Nuclear Magnetic Resonance this structure is exploited
as far as either time-dependent periodic magnetic fields or sample
spinning are involved \cite{ernst86pon}. Asymptotic stability of the
solution $Y(t)$ is dictated by the nature of the eigenvalues of
$F$, the so-called characteristic exponents of the original
periodic system \cite{hale80ode}.

An alternative manner of interpreting equation (\ref{fflo2}) is to
consider the piece $P(t)$, provided it is invertible, to perform a
transformation of the solution in such a way that the coefficient
matrix corresponding to the new representation has all its matrix
entries given by constants. Thus the piece $\exp (tF)$ in
(\ref{fflo2}) may be considered as an exact solution of the system
(\ref{eq:evolution}) previously moved to a representation where
the coefficient matrix is the constant matrix $F$
\cite{yakubovich75lde}. The
$t$-dependent change of representation is carried out by $P(t)$.
Connecting with section \ref{PLT}, $P(t)$ is the appropriate
preliminary linear transformation
for periodic systems. Of course, Floquet theorem by itself gives
no practical information about this procedure. It just states that
such a representation does exist. In fact, a serious difficulty in
the study of differential equations with periodic coefficients is
that no general method to compute either the matrix $P(t)$ or the
eigenvalues of $F$ is known.

Mainly, two ways of exploiting the above structure of $Y(t)$ are
found in the literature \cite{levante95fqm}. 
The first one consists in performing a
Fourier expansion of the formal solution leading to an infinite
system of linear differential equations with constant
coefficients. Thus, the $t$-dependent finite system is replaced
with a constant one at the price of handling infinite dimension.
Resolution of the truncated system furnishes an approximate
solution. The second approach is of perturbative nature. It deals
directly with the form (\ref{fflo2}) by expanding
\begin{equation}  \label{fflo3}
      P(t)=\sum_{n=1}^{\infty } P_{n}(t),  \qquad
      F=\sum_{n=1}^{\infty}F_{n}.
\end{equation}
Every term $F_{n}$ in (\ref{fflo3}) is fixed so as to ensure
$P_{n}(t)=P_{n}(t+T)$, which in turn guarantees the Floquet
structure (\ref{fflo2}) at any order of approximation.

Although the Magnus expansion, such as it has been formulated in
this work, does not provide explicitly  the structure of the
solution ensured by Floquet theorem, it can be adapted without
special difficulty to cope also with this situation. 
The starting point is to introduce the Floquet form (\ref{fflo2})
into the differential equation $Y^{\prime}=A(t)Y$. In that way the
evolution equation for $P$ is obtained:
\begin{equation}   \label{ffloPeq}
   P^{\prime}(t)= A(t) P(t) - P(t) F,  \qquad  P(0)=I.
\end{equation}
The constant matrix $F$ is also unknown and we will determine it
so as to ensure $P(t+T)=P(t).$ Now we replace the usual
perturbative scheme in equation (\ref{fflo3}) with the exponential
ansatz
\begin{equation}   \label{fflo4}
    P(t)=\exp (\Lambda (t)),  \qquad   \Lambda (0)=O.
\end{equation}
Obviously, $\Lambda (t+T)=\Lambda (t)$ so as to preserve
periodicity. Now equation (\ref{ffloPeq}) conveys
\begin{equation}   \label{ffloDeq}
\frac{\text{d}}{\text{d}t}\exp (\Lambda )=A  \exp (\Lambda )-\exp
(\Lambda )F,
\end{equation}
from which, as with the conventional Magnus expansion, it follows
readily that
\begin{equation}  \label{ffloFMeq}
   \Lambda^{\prime}= \sum_{k=0}^{\infty} \frac{B_{k}}{k!}
    \text{ad}_{\Lambda }^{k} \, (A+(-1)^{k+1}F).
\end{equation}
This equation is now, in the Floquet context, the analogue of
Magnus equation (\ref{fmag1}). Notice that if we put $F=O$ then
(\ref{fmag1}) is recovered. The next move is to consider the
series expansions for $\Lambda $ and $F$
\begin{equation}   \label{ffloseries}
    \Lambda(t) = \sum_{k=1}^{\infty } \Lambda _{k}(t), \qquad
    F=\sum_{k=1}^{\infty } F_{k},
\end{equation}
with $\Lambda _{k}(0)=O,$ for all $k$. Equating terms of the same
order in (\ref{ffloFMeq}) one gets the successive contributions to
the series (\ref{ffloseries}). Therefore, the explicit ansatz we
are propounding reads
\begin{equation}   \label{F_ansatz}
Y(t)=\exp \left( \sum_{k=1}^{\infty }\Lambda _{k}(t)\right) \,
\exp \left( t\sum_{k=1}^{\infty }F_{k}\right) .
\end{equation}
This can be properly referred as the \emph{Floquet--Magnus expansion}.

Substituting the expansions of equation (\ref{ffloseries}) into
(\ref{ffloFMeq}) and equating terms of the same order one can
write
\begin{equation}
\Lambda_{n}^{\prime}=\sum\limits_{j=0}^{n-1}\dfrac{B_{j}}{j!}\left(
W_{n}^{(j)}(t)+(-1)^{j+1}T_{n}^{(j)}(t)\right) \qquad \ \ \ \ \ \ \
\ \ (n\geq 1).
\end{equation}
The terms $W_{n}^{(j)}(t)$ may be obtained by a similar recurrence
to that given in equation (\ref{eses})
\begin{equation}  \label{ffloW}
\begin{tabular}{l}
$W_{n}^{(j)}=\sum\limits_{m=1}^{n-j}\left[ \Lambda _{m},W_{n-m}^{(j-1)}%
\right] \qquad \ \ \ \ \ \ \ \ \ \ (1\leq j\leq n-1),$ \\
\\
$W_{1}^{(0)}=A,\qquad \ \ \ \ \ W_{n}^{(0)}=O\qquad \ \ \ \ \ \ \ \ \ (n>1),$%
\end{tabular}
\end{equation}
whereas the terms $T_{n}^{(j)}(t)$ obey to the recurrence relation
\begin{equation}  \label{ffloT}
\begin{array}{l}
T_{n}^{(j)}=\sum\limits_{m=1}^{n-j}\left[ \Lambda
_{m},T_{n-m}^{(j-1)}\right]
\qquad \ \ \ \ \ \ \ \ \ \ \ \ (1\leq j\leq n-1), \\
\\
T_{n}^{(0)}=F_{n}\qquad \ \ \ \ \ \ \ \ \ (n>0).
\end{array}
\end{equation}
Every $F_{n}$ is fixed by the condition $\Lambda _{n}(t+T)=
\Lambda_{n}(t)$. An outstanding feature is that $F_{n}$ can be
determined independently of $\Lambda _{n}(t)$ as the solution
$Y(t)=P(t)\exp (tF)$ shrinks to $Y(T)=\exp (TF)$. Consequently,
the conventional Magnus expansion $Y(t)=\exp (\Omega (t))$
computed at $t=T$ must furnish
\begin{equation}  \label{fflostrobo}
  F_{n}=\frac{\Omega _{n}(T)}{T},\qquad \text{for all } \; n.
\end{equation}
The first contributions to the Floquet-Magnus expansion read
explicitly
\begin{eqnarray}   \label{firste}
\Lambda _{1}(t) & = & \int_{0}^{t} A(x)~\text{d}x-tF_{1},
\nonumber  \\
F_{1} & = & \frac{1}{T} \int_{0}^{T}A(x)~\text{d}x,  \nonumber \\
\Lambda _{2}(t) & = & \frac{1}{2} \int_{0}^{t}\left[ A(x)+F_{1},\Lambda _{1}(x)%
\right] ~\text{d}x-tF_{2}, \\  \nonumber
F_{2} & = & \frac{1}{2T} \int_{0}^{T}\left[ A(x)+F_{1},\Lambda _{1}(x)\right] ~%
\text{d}x. \nonumber
\end{eqnarray}
Moreover, from the recurrence relations (\ref{ffloW}) and
(\ref{ffloT}) it is possible to obtain a sufficient condition such
that convergence of the series $\sum \Lambda _{n}$ is guaranteed
in the whole interval $t \in [0,T]$ \cite{casas01fte}. In fact,
one can show that absolute convergence of the Floquet--Magnus
series is ensured at least if
\begin{equation}
   \int_{0}^{T} \| A(t)\| ~\text{d}t  < \xi _{F} \equiv
0.20925.
\end{equation}
Notice that convergence of the series $\sum F_{n}$ is already
guaranteed by (\ref{fflostrobo}) and the discussion concerning the
conventional Magnus expansion in subsections \ref{convergence} and
\ref{subsec273}. The
bound $\xi _{F}$ in the periodic Floquet case turns out to be
smaller than the corresponding bound $r_c = \pi$ in the conventional
Magnus expansion. At first sight this could be understood as an
impoverishment of the result. However it has to be recalled that,
due precisely to Floquet theorem, once the condition is fulfilled
in one period convergence is assured for any value of time. On the
contrary, in the general Magnus case the bound gives always a
running condition.

\subsection{Magnus expansions for nonlinear matrix equations}
\label{NL-Magnus}

It is possible to extend the procedure leading to the
Magnus expansion for the linear equation (\ref{eq:evolution}) and
obtain approximate solutions for the nonlinear matrix equation
\begin{equation}  \label{nlm1}
   Y^\prime  =   A(t, Y) Y,  \quad\qquad Y(0) = Y_0 \in \cG,
\end{equation}
where $\cG$ is a matrix Lie group, $A: \mathbb{R}_{+} \times \cG
\longrightarrow \g$ and $\g$ denotes the corresponding Lie algebra.
 Equation (\ref{nlm1}) appears in relevant physical fields such as rigid
body mechanics, in the calculation of Lyapunov exponents ($\cG
\equiv \SO(n)$) and other problems arising in 
Hamiltonian dynamics ($\cG \equiv \SP(n)$). In fact, it can be shown
that every differential equation evolving on a matrix Lie group
$\cG$ can be written in the form (\ref{nlm1}) \cite{iserles00lgm}. Moreover, the
analysis of generic differential equations defined in homogeneous
spaces can be reduced to the Lie-group equation (\ref{nlm1})
\cite{munthe-kaas97nio}.

In \cite{casas06eme} a general procedure for devising Magnus
expansions for the nonlinear equation (\ref{nlm1}) is introduced.
It is based on applying Picard's iteration on the associated
differential equation in the Lie algebra and retaining in each
iteration the terms necessary to increase the order while
maintaining the explicit character of the expansion. The resulting
methods are thus explicit by design and are expressed in terms of
integrals.

As usual, the starting point in the formalism is to represent the
solution of (\ref{nlm1}) in the form
\begin{equation}   \label{nlm2}
   Y(t) = \exp(\Omega(t,Y_0)) Y_0.
\end{equation}
Then one obtains the differential equation satisfied by $\Omega$:
\begin{equation}    \label{nlm3}
   \Omega^\prime = d \exp_{\Omega}^{-1} \big(A(t, \e^{\Omega} Y_0)
    \big), \qquad \Omega(0) = O,
\end{equation}
where $d \exp_{\Omega}^{-1}$ is given by (\ref{fdexpinv}). Now, as
in the linear case, one can apply Picard's iteration to equation
(\ref{nlm3}), giving instead
\begin{eqnarray*}
  \Omega^{[m+1]}(t) & = & \int_0^t d
  \exp_{\Omega^{[m]}(s)}^{-1} A(s,\e^{\Omega^{[m]}(s)} Y_0)
  ds  \\
  & = & \int_0^t \sum_{k=0}^{\infty} \frac{B_k}{k!} \ad_{\Omega^{[m]}(s)}^k
  A(s,\e^{\Omega^{[m]}(s)} Y_0) ds, \qquad m \ge 0.
\end{eqnarray*}
The next step in getting explicit approximations is to truncate
appropriately the $d \exp^{-1}$ operator in the above expansion.
Roughly speaking, when the whole series for $d \exp^{-1}$ is
considered, the power series expansion of the iterate function
$\Omega^{[k]}(t)$, $k \ge 1$, only reproduces the expansion of the
solution $\Omega(t)$ up to certain order, say $m$. In
consequence, the (infinite) power series of $\Omega^{[k]}(t)$ and
$\Omega^{[k+1]}(t)$ differ in terms $\cO(t^{m+1})$. The
idea is then to discard in $\Omega^{[k]}(t)$ all terms of order
greater than $m$. This of course requires careful analysis
of each term in the expansion. For instance, $\Omega^{[0]} = O$
implies that $(\Omega^{[1]})^\prime = A(t,Y_0)$ and therefore
\[
   \Omega^{[1]}(t) = \int_0^t A(s,Y_0) ds = \Omega(t,Y_0) + \cO(t^2).
\]
Since
\[
    A(s,\e^{\Omega^{[1]}(s)} Y_0) = A(0,Y_0) + \cO(s)
\]
it follows at once that
\[
   -\frac{1}{2} \int_0^t [ \Omega^{[1]}(s),A(s,\e^{\Omega^{[1]}(s)} Y_0) ]
     \, ds  = \cO(t^3).
\]
When this second term in $\Omega^{[2]}(t)$ is included and
$\Omega^{[3]}$ is computed, it turns out that $\Omega^{[3]}$
reproduces correctly the expression of $\Omega^{[2]}$ up to
$\cO(t^2)$. Therefore we truncate $d\exp^{-1}$ at the $k=0$ term
and take
\[
   \Omega^{[2]}(t) = \int_0^t A(s,\e^{\Omega^{[1]}(s)} Y_0) ds.
\]
With greater generality, we let
\begin{eqnarray}    \label{nlm4}
  \Omega^{[1]}(t) & = & \int_0^t A(s,Y_0) ds   \\
  \Omega^{[m]}(t) & = & \sum_{k=0}^{m-2} \frac{B_k}{k!} \int_0^t
  \ad_{\Omega^{[m-1]}(s)}^k A(s,\e^{\Omega^{[m-1]}(s)} Y_0)) ds, \qquad
  m \ge 2  \nonumber
\end{eqnarray}
and take the approximation $\Omega(t) \approx \Omega^{[m]}(t)$.
This results in an explicit approximate solution that involves a
linear combination of multiple integrals of nested commutators, so
that $\Omega^{[m]}(t) \in \g$ for all $m \ge 1$. It can also be
proved that $\Omega^{[m]}(t)$, once inserted in (\ref{nlm2}),
provides an explicit approximation $Y^{[m]}(t)$ for the solution
of (\ref{nlm1}) that is correct up to terms $\mathcal{O}(t^{m+1})$
\cite{casas06eme}. In addition, $\Omega^{[m]}(t)$ reproduces
exactly the sum of the first $m$ terms in the $\Omega$ series of
the usual Magnus expansion for the linear equation $Y'= A(t) Y$.
It makes sense, then, to regard the scheme (\ref{nlm4}) as an
explicit Magnus expansion for the nonlinear equation (\ref{nlm1}).

This procedure can be easily adapted to construct an exponential
representation of the solution for the differential system
\begin{equation}   \label{nlm5}
   Y' = [A(t,Y), Y],  \qquad Y(0) = Y_0 \in \mbox{Sym}(n).
\end{equation}
Here $\mbox{Sym}(n)$ stands for the set of $n \times n$ symmetric
real matrices and the (sufficiently smooth) function $A$ maps $
\mathbb{R}_{+} \times \mbox{Sym}(n)$ into $\so(n)$, the Lie
algebra of $n \times n$ real skew-symmetric matrices. It is well
known that the solution itself remains in $\mbox{Sym}(n)$ for all
$t \ge 0$. Furthermore, the eigenvalues of $Y(t)$ are independent
of time, i.e., $Y(t)$ has the same eigenvalues as $Y_0$. This
remarkable qualitative feature of the system (\ref{nlm5}) is the
reason why it is called an \textit{isospectral flow}. Such flows
have several interesting applications in physics and applied
mathematics, from molecular dynamics to micromagnetics to linear
algebra \cite{calvo97nso}.

Since $Y(t)$ and $Y(0)$ share the same spectrum, there exists a
matrix function $Q(t) \in \SO(n)$ such that
$Y(t) Q(t) = Q(t) Y(0)$ or, equivalently,
\begin{equation}   \label{nlm6}
    Y(t) = Q(t) Y_0 Q^T(t).
\end{equation}
Then, by inserting (\ref{nlm6}) into (\ref{nlm5}), it is clear
that the time evolution of $Q(t)$ is described by
\begin{equation}   \label{nlm7}
    Q' = A(t,QY_0Q^T) \, Q, \qquad Q(0)=I,
\end{equation}
i.e., an equation of type (\ref{nlm1}). Yet there is another
possibility: if we seek the orthogonal matrix solution of
(\ref{nlm7}) as $Q(t) = \exp(\Omega(t))$ with $\Omega$
skew-symmetric,
\begin{equation}  \label{nlm8}
    Y(t) = \e^{\Omega(t)} Y_0 \,\e^{-\Omega(t)},  \qquad t \ge 0,
    \qquad \Omega(t) \in \so(n),
\end{equation}
then the corresponding equation for $\Omega$ reads
\begin{equation}  \label{nlm9}
   \Omega' = d\exp_{\Omega}^{-1} \big( A(t,\e^{\Omega} Y_0
   \e^{-\Omega}) \big),  \qquad \Omega(0) = O.
\end{equation}
In a similar way as for equation (\ref{nlm3}), we apply Picard's
iteration to (\ref{nlm9}) and truncate the $d\exp^{-1}$ series at
$k=m-2$. Now we can also truncate consistently the operator
\[
    \mathrm{Ad}_{\Omega} Y_0 \equiv \e^{\Omega} Y_0
   \e^{-\Omega} = \e^{\ad_{\Omega}} Y_0
\]
and the outcome still lies in $\so(n)$. By doing so, we replace
the computation of one matrix exponential by several commutators.

In the end, the scheme reads
\begin{eqnarray}    \label{nlm10}
  \Omega^{[1]}(t) & = & \int_0^t A(s,Y_0) ds  \nonumber  \\
  \Theta_{m-1}(t) & = & \sum_{l=0}^{m-1} \frac{1}{l!}
  \ad_{\Omega^{[m-1]}(t)}^l Y_0   \\
  \Omega^{[m]}(t) & = & \sum_{k=0}^{m-2} \frac{B_k}{k!} \int_0^t
  \ad_{\Omega^{[m-1]}(s)}^k A(s,\Theta_{m-1}(s)) ds, \qquad
  m \ge 2  \nonumber
\end{eqnarray}
and, as before, one has $\Omega(t) = \Omega^{[m]}(t) +
\cO(t^{m+1})$. Thus
\begin{eqnarray*}
  \Theta_1(t) & = & Y_0 + [\Omega^{[1]}(t),Y_0] \\
  \Omega^{[2]}(t) & = & \int_0^t A(s,\Theta_1(s)) ds  \\
  \Theta_2(t) & = & Y_0 + [\Omega^{[2]}(t), Y_0] + \frac{1}{2}
     [\Omega^{[2]}(t),[\Omega^{[2]}(t), Y_0]] \\
  \Omega^{[3]}(t) & = & \int_0^t A(s,\Theta_2(s)) ds - \frac{1}{2}
   \int_0^t [\Omega^{[2]}(s),A(s,\Theta_2(s))] ds
\end{eqnarray*}
and so on. Observe that this procedure preserves the
isospectrality of the flow since the approximation
$\Omega^{[m]}(t)$ lies in $\so(n)$ for all $m \ge 1$ and $t \ge
0$. It is also equally possible to develop a formalism based on
rooted trees in this case, in a similar way as for the standard
Magnus expansion.

\noindent \textbf{Example}. The double bracket equation
\begin{equation}  \label{fdbe1}
  Y^\prime = [[Y,N], Y], \quad\qquad Y(0)=Y_0  \in \mbox{Sym}(n)
\end{equation}
was introduced by Brockett \cite{brockett91dst} and Chu \& Driessel
\cite{chu90tpg} to solve certain standard problems in applied
mathematics, although similar equations also appear in the
formulation of physical theories such as micromagnetics
\cite{moore94nga}. Here $N$ is a constant $n \times n$ symmetric
matrix. It clearly constitutes an example of an
isospectral flow with $A(t,Y) \equiv [Y,N]$.  When the
procedure (\ref{nlm10}) is applied to (\ref{fdbe1}), one
reproduces exactly the expansion
obtained in \cite{iserles02otd} with the convergence domain
established in \cite{casas04nim}.
\hfill{$\Box$}

\subsection{Treatment of general nonlinear equations}
\label{GNLM}

 As a matter of fact, the Magnus expansion can be formally generalized
to any nonlinear explicitly time-dependent differential equation. Given
the importance of the expansion, it has been indeed (re)derived a number
of times along the years in different settings. 
We have to mention in this respect the work of Agrachev and
Gamkrelidze \cite{agrachev81caa,agrachev79ter,gamkrelidze79ero},
and Strichartz \cite{strichartz87tcb}. In the context of
Hamiltonian dynamical systems, the expansion was first proposed in
\cite{oteo91tme} and subsequently applied in a more general
context in \cite{blanes01smf} with the aim of designing new numerical
integration algorithms.

By introducing nonstationary vector fields and flows, it turns out that
one gets a linear differential equation in terms of operators which can
be analyzed in exactly the same way as in section \ref{section2}.
Thus it is in principle possible to
build approximate solutions of the differential equation which
preserve some geometric properties of the exact solution. The
corresponding Magnus series expansion allows us to write the
formal solution and then different approximations  can be
obtained by truncating the series. Obviously, this formal
expansion presents two difficulties in order to render a useful
algorithm in practice: (i) it is not evident what the domain of
convergence is, and (ii) some device has to be designed to compute
the exponential map once the series is truncated.

Next we briefly summarize the main ideas involved in
the procedure. To begin with, let us consider the autonomous equation
\begin{equation} \label{nlp1}
 {\bf x}' = {\bf f(x}), \qquad
  {\bf x}(0)={\bf x}_0\in \mathbb{R}^n.
\end{equation}
If $\varphi^t$ denotes the exact flow of (\ref{nlp1}), i.e. ${\bf
x}(t) =\varphi^t({\bf x}_0)$,  then for each infinitely
differentiable map $g: \mathbb{R}^n \longrightarrow \mathbb{R}$,
$g(\varphi^t({\bf y}))$ admits the representation
\begin{equation}\label{nlp2}
 g(\varphi^t({\bf y})) = \Phi^t [g]({\bf y})
\end{equation}
where the operator $\Phi^t$ acts on differentiable functions
\cite{olver93aol}. To
be more specific, let us introduce the Lie derivative (or Lie
operator) associated with ${\bf f}$,
\begin{equation} \label{nlp4}
  L_{{\bf f}} = \sum_{i=1}^n f_i \frac{\partial}{\partial x_i}.
\end{equation}
It acts on differentiable functions $F: \mathbb{R}^n
\longrightarrow \mathbb{R}^m$ (see \cite[Chap. 8]{arnold89mmo}
for more details) as
\[
  L_{{\bf f}} F( \mathbf{y}) = F^\prime(\mathbf{y})
  \mathbf{f}(\mathbf{y}),
\]
where $F^\prime(\mathbf{y})$ denotes the Jacobian matrix of $F$.
It follows from the chain rule that, for the solution
$\varphi^t(\mathbf{x}_0)$ of (\ref{nlp1}),
\begin{equation}   \label{nlp2a1}
  \frac{d }{dt} F(\varphi^t(\mathbf{x}_0)) = (L_{{\bf f}}
  F)(\varphi^t(\mathbf{x}_0)),
\end{equation}
and applying the operator iteratively one gets
\[
  \frac{d^k }{dt^k} F(\varphi^t(\mathbf{x}_0)) = (L_{{\bf f}}^k
  F)(\varphi^t(\mathbf{x}_0)), \qquad k \ge 1.
\]
Therefore, the Taylor series of $F(\varphi^t(\mathbf{x}_0))$ at
$t=0$ is given by
\begin{equation}  \label{nlp2a}
  F(\varphi^t(\mathbf{x}_0)) = \sum_{k \ge 0} \frac{t^k}{k!} (L_{{\bf
  f}}^k F) (\mathbf{x}_0) = \exp (t L_{{\bf f}})[F](\mathbf{x}_0).
\end{equation}
Now, putting $F(\mathbf{y}) = \mathrm{Id}(\mathbf{y}) = \mathbf{y}$, the
identity map, this is just the Taylor series of the solution
itself
\[
  \varphi^t(\mathbf{x}_0) = \sum_{k \ge 0} \frac{t^k}{k!} (L_{{\bf
  f}}^k \mathrm{Id}) (\mathbf{x}_0) = \exp 
   (t L_{{\bf f}})[\mathrm{Id}](\mathbf{x}_0).
\]
If we substitute $F$ by $g$ in (\ref{nlp2a}) and compare with
(\ref{nlp2}), then it is clear that $\Phi^t[g](\mathbf{y}) = \exp
(t L_{{\bf f}})[g](\mathbf{y})$. The object $\exp (t L_{{\bf f}})$
is called the Lie transform associated with $\mathbf{f}$.

At this point, let us suppose that $\mathbf{f}(\mathbf{x})$ can be split
as $\mathbf{f}(\mathbf{x}) = \mathbf{f}_1(\mathbf{x}) +
\mathbf{f}_2(\mathbf{x})$, in such a way that the systems
\[
    \mathbf{x}^\prime = \mathbf{f}_1(\mathbf{x}), \qquad
    \mathbf{x}^\prime = \mathbf{f}_2(\mathbf{x})
\]
have flows $\varphi_1^t$ and $\varphi_2^t$, respectively, so that
\[
 g(\varphi_{i}^{t}({\bf y}))  =
   \exp(t L_{{\bf f}_i}) [g] ({\bf y}) \qquad  \qquad i=1,2.
\]
Then, for their composition one has
\begin{equation}  \label{nlp7}
 g(\varphi_{2}^{t}\circ\varphi_{1}^{s}({\bf y}))  =
   \exp(s L_{{\bf f}_1}) \exp(t L_{{\bf f}_2}) [g] ({\bf y}).
\end{equation}
This is precisely formula (\ref{nlp2a}) with
$\mathbf{f}=\mathbf{f}_1$, $t$ replaced with $s$ and with
$F(\mathbf{y}) = \exp(t L_{{\bf f}_2}) [g] ({\bf y})$. Notice that
the indices 1 and 2 as well as $s$ and $t$ to the left and right of
eq. (\ref{nlp7}) are permuted. In other words, \emph{the Lie transforms appear in
the reverse order to their corresponding maps} \cite{hairer06gni}.

The Lie derivative $L_{\mathbf{f}}$ satisfies some remarkable
properties. Given two functions $\psi_1, \ \psi_2$, it can be
easily verified that
\begin{eqnarray*}
 & &  L_{{\bf f}}(\alpha_1 \psi_1 + \alpha_2 \psi_2) =
   \alpha_1 L_{{\bf f}} \psi_1 + \alpha_2 L_{{\bf f}} \psi_2,
   \qquad \qquad   \alpha_1,\alpha_2 \in \mathbb{R}   \\
 & & L_{{\bf f}}(\psi_1\, \psi_2)= (L_{{\bf f}} \psi_1) \, \psi_2 + \psi_1\,L_{{\bf f}} \psi_2
\end{eqnarray*}
and by induction we can prove the Leibniz rule
\[
  L_{{\bf f}}^k(\psi_1\, \psi_2)= \sum_{i=0}^k
\left( \begin{array}{c}
        k \\
        i
    \end{array} \right)
    \left(L_{{\bf f}}^i \psi_1 \right) \,
    \left(L_{{\bf f}}^{k-i} \psi_2 \right),
\]
with $ L_{{\bf f}}^i \psi=L_{{\bf f}}\, \left(L_{{\bf f}}^{i-1}
\psi \right) $ and $ L_{{\bf f}}^0 \psi=\psi$, justifying the name
of Lie derivative.
 In addition, given two vector
fields ${\bf f}$ and ${\bf g}$, then
\begin{eqnarray*}
 & & \alpha_1 L_{{\bf f}} +  \alpha_2 L_{{\bf g}} =
 L_{\alpha_1 {\bf f} + \alpha_2{\bf g}} ,
 \nonumber  \\
 & & [L_{{\bf f}},L_{{\bf g}}]=
  L_{{\bf f}}\,L_{{\bf g}} - L_{{\bf g}}\,L_{{\bf f}}
  = L_{{\bf h}} ,
\end{eqnarray*}
where $ {\bf h} $ is another vector field corresponding to the Lie
bracket of the vector fields $\mathbf{f}$ and $\mathbf{g}$, denoted by
$ {\bf h}=({\bf f},{\bf g})$, whose
components are
\begin{equation}  \label{commut-nl}
 h_i = ({\bf f},{\bf g})_i = L_{{\bf f}}g_i -  L_{{\bf g}}f_i =
   \sum_{j=1}^n \left(
   f_j \frac{\partial g_i}{\partial x_j} -
   g_j \frac{\partial f_i}{\partial x_j}  \right).
\end{equation}
Moreover, from (\ref{nlp2}) and (\ref{nlp2a1}) (replacing $F$
with $g$) we can write
\[
 \frac{d }{dt} \Phi^t[g](\mathbf{x}_0) = \frac{d }{dt}
 g(\varphi^t(\mathbf{x}_0)) = (L_{\mathbf{f}}
 g)(\varphi^t(\mathbf{x}_0))= \Phi^t
 L_{\mathbf{f}}[g](\mathbf{x}_0).
\]
Particularizing to the function $g(\mathbf{x}) = \mathrm{Id}_j(\mathbf{x})
= x_j$, we get
\[
 \frac{d}{dt}\Phi^t[\mathrm{Id}_j]({\bf y}) = \Phi^t
 L_{{\bf f(y})}[\mathrm{Id}_j]({\bf y}),
 \qquad \quad j=1,\ldots,n, \quad {\bf y}={\bf x}_0
\]
or, in short,
\begin{equation} \label{nlp5}
 \frac{d}{dt}\Phi^t = \Phi^t L_{{\bf f(y})},
 \qquad \quad {\bf y}={\bf x}_0,
\end{equation}
i.e., a linear differential equation for the operator $\Phi^t$.
Notice that, as expected, equation (\ref{nlp5}) admits as formal
solution
\begin{equation}  \label{nlp6}
 \Phi^t  = \exp(tL_{{\bf f(y)}}),
   \qquad  \qquad   {\bf y}={\bf x}_0.
\end{equation}
  We can follow the same steps for the non-autonomous equation
\begin{equation} \label{non-lin}
 {\bf x}' = {\bf f}(t,{\bf x}) ,
\end{equation}
where now the operational equation to be solved is
\begin{equation} \label{nlp8}
 \frac{d}{dt}\Phi^t = \Phi^t L_{\mathbf{f}(t,\mathbf{y})},
 \qquad \quad  {\bf y}={\bf x}_0.
\end{equation}
To simplify notation, from now on we consider ${\bf x}_0$ as a set
of coordinates such that $\mathbf{f}(t,\mathbf{x}_0)$ is a
differentiable function of ${\bf x}_0$. Since
$L_{{\bf f}}$ is a linear operator we can then use directly the Magnus
series expansion to obtain the formal solution  of  (\ref{nlp8}) as
$ \Phi^t= \exp(L_{\mathbf{w}(t,\mathbf{x}_0)})$, with ${\bf w}=\sum_i{\bf w}_i$.
The first two terms are now
\begin{eqnarray} \label{nlp9}
 {\bf w}_1(t,{\bf x}_0) & = & \int_{0}^{t} {\bf f}(s,{\bf x}_0) ds
 \nonumber \\
 {\bf w}_2(t,{\bf x}_0) & = & - \frac{1}{2} \int_{0}^{t} ds_1
   \int_{0}^{s_1} ds_2  ({\bf f}(s_1,{\bf x}_0), {\bf f}(s_2,{\bf x}_0)).
\end{eqnarray}
Observe that the sign of ${\bf w}_2$ is changed when compared with
$\Omega_{2}$ in (\ref{O2}) and the integrals affect only the
explicit time-dependent part of the vector field. In general, due to the
structure of equations (\ref{eq:evolution}) and (\ref{nlp8}), the expression
for $\mathbf{w}_n(t,{\bf x}_0)$ can be obtained from the corresponding
$\Omega_n(t)$ in the linear case by applying the following rules:
\begin{enumerate}
  \item replace $A(t_i)$ by ${\bf f}(t_i,{\bf x}_0)$;
  \item replace the commutator $[\cdot,\cdot]$ by the Lie bracket (\ref{commut-nl});
  \item change the sign in $\mathbf{w}_n(t,{\bf x}_0)$ for even $n$.
\end{enumerate}

  Once ${\bf w}^{[n]}=\sum_{i=1}^n{\bf w}_i(t,{\bf x}_0)$ is
computed, there still remains to evaluate the action of the Lie
transform $\exp(L_{\mathbf{w}(t,\mathbf{x}_0)})$ on the initial conditions
${\bf x}_0$. At time $t=T$, this can be seen as the 1-flow
solution of the autonomous differential equation
\begin{equation}\label{}
 {\bf y}' = {\bf w}^{[n]}(T,{\bf y}) , \qquad \quad
  {\bf y}(0) = {\bf x}_0,
\end{equation}
since ${\bf y}(1)=\exp(L_{\mathbf{w}(T,\mathbf{x}_0)}) \mathbf{x}_0 = \mathbf{x}(T)$.

Although this is arguably the most direct way to construct a Magnus
expansion for \emph{arbitrary} time dependent nonlinear differential equations, it is
by no means the only one. In particular, Agrachev and
Gamkrelidze \cite{agrachev79ter,gamkrelidze79ero} obtain a similar expansion
by transforming (\ref{nlp8}) into the integral equation
\begin{equation}   \label{nlp10}
     \Phi^t = \mathrm{Id} + \int_0^t \Phi^s \vec{X}_s ds
\end{equation}
which is subsequently solved by successive approximations. Here, for clarity,
we have denoted $\vec{X}_s \equiv L_{\mathbf{f}(s,\mathbf{x}_0)}$.  Then one gets
the formal series
\begin{eqnarray}  \label{nlp11}
  \Phi^t & = & \mathrm{Id} + \int_0^t dt_1 \vec{X}_{t_1} + \int_0^t dt_1 \int_0^{t_1}
     dt_2 \vec{X}_{t_2} \vec{X}_{t_1}  + \cdots  \\
    & = &   \mathrm{Id} + \sum_{m=1}^{\infty} \int_0^t dt_1 \int_0^{t_1}
     dt_2 \cdots \int_0^{t_{m-1}} dt_m \; \vec{X}_{t_m} \cdots \vec{X}_{t_1}. \nonumber
\end{eqnarray}
An object with this shape is called a \emph{formal chronological series} \cite{agrachev79ter},
and the set of all formal chronological series can be endowed with a real associative
algebra structure. It is then possible to show that there exists an absolutely continuous
formal chronological series
\begin{equation}   \label{nlp12}
    V_t(\vec{X}_t) = \sum_{m=1}^{\infty} \int_0^t dt_1 \int_0^{t_1}
     dt_2 \cdots \int_0^{t_{m-1}} dt_m \; G_m(\vec{X}_{t_1}, \ldots, \vec{X}_{t_m})
\end{equation}
such that
\[
      \Phi^t = \exp(V_t(\vec{X}_t)).
\]
Here $G_m(\vec{X}_{t_1}, \ldots, \vec{X}_{t_m})$ are Lie polynomials homogeneous of
the first grade in each variable, which can be algorithmically constructed.
 In particular,
\begin{eqnarray*}
   G_1(\vec{X}_{t_1}) & = & \vec{X}_{t_1}  \\
   G_2(\vec{X}_{t_1}, \vec{X}_{t_2} ) & = & \frac{1}{2} [\vec{X}_{t_2}, \vec{X}_{t_1} ]  \\
   G_3(\vec{X}_{t_1}, \vec{X}_{t_2}, \vec{X}_{t_3} ) & = & \frac{1}{6} (
       [ \vec{X}_{t_3},  [\vec{X}_{t_2}, \vec{X}_{t_1} ] ] +  [ [\vec{X}_{t_3},  \vec{X}_{t_2}], \vec{X}_{t_1} ] )
\end{eqnarray*}
The series (\ref{nlp12}) in general diverges, even if the Lie operator $\vec{X}_t$ is analytic
\cite{agrachev81caa}.
Nevertheless, in certain cases convergence holds. For instance, if $\vec{X}_t$
belongs to a Banach Lie algebra $\mathcal{B}$ for all $t \in \mathbb{R}$, where one has a norm satisfying
$\| [X,Y]\| \le \|X\| \|Y\|$ for all $X,Y \in \mathcal{B}$ and $\int_0^t \| \vec{X}_s \| ds
\equiv \int_0^t \| L_{\mathbf{f}(s,\mathbf{x}_0)} \| ds
\le 0.44$, then $V_t(\vec{X}_t) $ converges absolutely in $\mathcal{B}$  \cite{agrachev79ter}. As a matter
of fact, an argument analogous to that used in 
\cite{blanes98maf} and \cite{moan98eao}
may allow us to
improve this bound and get convergence for
\[
    \int_0^t \| \vec{X}_s \| ds \le \frac{1}{2} \int_0^{2 \pi} \frac{1}{2 + \frac{x}{2} (1 - \cot \frac{x}{2})}
     dx  = 1.08686870\ldots
\]

\subsubsection{Treatment of Hamiltonian systems}

We have seen how the algebraic setting we have developed for linear systems of
differential equations may be extended formally to nonlinear systems. We will review next how it
can be adapted to the important class of Hamiltonian systems. In this context, the
role of a Lie bracket of vector fields (\ref{commut-nl}) is played by the classical Poisson
bracket \cite{thirring92cds}. 

The Lie algebraic presentation of Hamiltonian systems in
Classical Mechanics has
been approached in different ways and the Magnus expansion invoked in this context by diverse
authors \cite{marcus70hot,spirig79aao,tani68ctg}. More explicit use of the Magnus
expansion is
done in \cite{oteo91tme} where the evolution operator for a classical system
is constructed and its differential equation analyzed.

To particularize to this situation the preceding general treatment, 
let us consider a system with $l$ degrees of freedom and phase
space variables $ {\bf x=
(q,p)}=(q_1,\ldots,q_l,p_1,\ldots,p_l)$, 
where $(q_{i}, p_{i}),$ $i=1,\ldots ,l$ are the usual pairs of canonical
conjugate coordinate and momentum, respectively. By defining the Poisson bracket of two scalar
functions $F(\mathbf{q},\mathbf{p})$ and $G(\mathbf{q},\mathbf{p})$ of phase space variables in the
conventional way \cite{thirring92cds}
\begin{equation*}
 \{ F,G \} \equiv \sum_{i=1}^{l}\left( \frac{\partial F}{\partial
q_{i}}\frac{\partial G}{\partial p_{i}}-\ \frac{\partial F}{\partial p_{i}}%
\frac{\partial G}{\partial q_{i}}\right),
\end{equation*}%
we have
\begin{equation*}
\{ F, G \}=\sum_{i,j=1}^{2l}\frac{\partial F}{\partial x_{i}}%
J_{ij}\frac{\partial G}{\partial x_{j}},
\end{equation*}%
and in particular
\begin{equation*}
   \{ x_{i}, x_{j} \}=J_{ij}.
\end{equation*}%
Here $J$ is the basic symplectic matrix appearing in equation (\ref{simplectic}) (with $n=l$).
With these definitions the set of (sufficiently smooth) functions on
phase space acquires the structure of a Lie algebra and we can
associate with any such function $F(\mathbf{x})$ a Lie operator
\begin{equation}    \label{lie-op}
L_{F}=\sum_{i,j=1}^{2l}\frac{\partial F}{\partial x_{i}}J_{ij}\frac{%
\partial }{\partial x_{j}}
\end{equation}%
which acts on the same set of functions as $L_F G = \{F, G\}$. It is then a simple
exercise to show that the set of all Lie operators is also a Lie
algebra under the usual commutator $\left[ L_{F},L_{G}\right]
=L_{F}L_{G}-L_{G}L_{F}$ and furthermore
\begin{equation*}
\left[ L_{F},L_{G}\right] =L_{\{F,G\}}.
\end{equation*}

 Given the Hamiltonian function $H({\bf q,p},t):
\mathbb{R}^{2l}\times \mathbb{R} \rightarrow \mathbb{R}$, where $
{\bf q, \, p} \in \mathbb{R}^l$, the equations of motion are
\begin{equation} \label{eq.HamNL}
 \mathbf{q}' =  \boldsymbol{\nabla}_{\mathbf{p}}  H,
     \qquad \qquad
 \mathbf{p}' =  -\boldsymbol{\nabla}_{\mathbf{q}}  H,
\end{equation}
or, equivalently, in terms of $\mathbf{x}$,
\[
     \mathbf{x}' =  J \, \boldsymbol{\nabla}_{\mathbf{x}}  H.
\]
It is then elementary to show that the Lie operator $L_{-H}$
 is nothing but the Lie derivative $L_{\bf f}$ 
 (\ref{nlp4}) associated with the function 
 \[
     \mathbf{f} = J \, \boldsymbol{\nabla}_{\mathbf{x}}  H.
\]
Therefore, the operational equation (\ref{nlp8}) becomes
\begin{equation} \label{hami2}
 \frac{d}{dt}\Phi^t_H = \Phi^t_H L_{-H(\mathbf{y},t)},
 \qquad \quad  {\bf y}={\bf x}_0
\end{equation}
and the previous treatment also holds in this setting. As a result, the
Magnus expansion reads
\begin{eqnarray}
 \Phi_H^t= \exp({L_W}),
\end{eqnarray}
 where $W=\sum_{i=1}^{\infty}W_i$ and the first two terms are
\begin{eqnarray}
W_1({\bf x}_0) & = &  - \int_{0}^{t} H({\bf x}_0,s) ds  \\
W_2({\bf x}_0) & = & - \frac{1}{2} \int_{0}^{t} ds_1
   \int_{0}^{s_1} ds_2  \{H({\bf x}_0,s_1), H({\bf x}_0,s_2)\}.
  \nonumber
\end{eqnarray}

\subsection{Magnus expansion and the Chen--Fliess series}
\label{MECF}

Suppose that $\mathbf{f}$ in equation (\ref{non-lin}) has the form
$\mathbf{f}(t,\mathbf{x}) = \sum_{i=1}^m u_i(t)
\mathbf{f}_i(\mathbf{x})$, i.e., we are dealing with the nonlinear
differential equation
\begin{equation}   \label{cf1}
   \mathbf{x}^\prime(t) = \sum_{i=1}^m u_i(t) \;
\mathbf{f}_i(\mathbf{x}(t)), \qquad \mathbf{x}(0) = \mathbf{p},
\end{equation}
where $u_i(t)$ are integrable functions of time. Systems of the
form (\ref{cf1}) appear for instance in nonlinear control theory. In that
context the functions $u_i$ are the controls and $\mathbf{f}_i$
are related to the nonvarying geometry of the system. Observe that this problem
constitutes the natural (nonlinear) generalization of the case studied in
section \ref{lcbch}.

One of the most basic procedures for obtaining $\mathbf{x}(T)$ for a given $T$
 is by applying simple Picard iteration. For an analytic \emph{output
function} $g: \mathbb{R}^n \longrightarrow \mathbb{R}$, from
(\ref{nlp2a1}) it is clear that
\begin{equation}   \label{cf2}
  \frac{d }{dt} g(\mathbf{x}(t)) = (L_{(\sum u_i
  \mathbf{f}_i)}g)(\mathbf{x}(t)) = \sum_{i=1}^m u_i(t)
  (E_ig)(\mathbf{x}(t)), \qquad g(\mathbf{x}(0)) = g(\mathbf{p}),
\end{equation}
where, for simplicity, we have denoted by $E_i$ the Lie derivative
$L_{\mathbf{f}_i}$. This can be particularized to the case $g = x_i$, the
$i$th component function.

By rewriting (\ref{cf2}) as an equivalent
integral equation and iterating we get
\begin{eqnarray}   \label{cf3}
  g(\mathbf{x}(t)) & = & g(\mathbf{p}) + \int_0^t \sum_{i_1 = 1}^m
  u_{i_1}(t_1) (E_{i_1} g)(\mathbf{x}(t_1)) dt_1 \nonumber  \\
   & = &  g(\mathbf{p}) + \int_0^t \sum_{i_1 = 1}^m u_{i_1}(t_1)
   \bigg( (E_{i_1} g)(\mathbf{p}) +  \nonumber \\
 & & \hspace*{1cm} \left. \int_0^{t_1} \sum_{i_2=1}^m u_{i_2}(t_2)
   (E_{i_2} E_{i_1} g)(\mathbf{x}(t_2)) dt_2 \right) dt_1  \\
   & = &  g(\mathbf{p}) + \int_0^t \sum_{i_1 = 1}^m u_{i_1}(t_1)
   \left( (E_{i_1} g)(\mathbf{p}) + \int_0^{t_1} \sum_{i_2=1}^m u_{i_2}(t_2)
    \Big( (E_{i_2} E_{i_1} g)(\mathbf{p})  \right. \nonumber  \\
     & & \hspace*{1cm} + \left.  \int_0^{t_2} \sum_{i_3=1}^m
    u_{i_3}(t_3) (E_{i_3} E_{i_2} E_{i_1} g)(\mathbf{x}(t_3)) dt_3
    \Big) dt_2 \right) dt_1 \nonumber
\end{eqnarray}
and so on. Notice that in this expression the time
dependence of the solution is separated from the nonvarying
geometry of the system, which is contained in the vector fields
$E_i$ and need to be computed only once at the beginning of the
calculation. Next we reverse the names of the integration
variables and indices used (e.g., rename $i_1$ to become $i_3$ and
vice versa), so that
\begin{eqnarray}   \label{cf4}
  g(\mathbf{x}(t)) & = & g(\mathbf{p}) + \sum_{i_1 = 1}^m 
    \left( \int_0^t u_{i_1}(t_1)dt_1 \right)
    (E_{i_1} g)(\mathbf{p})  \nonumber \\
   & & +  \sum_{i_2=1}^m \sum_{i_1=1}^m \left( \int_0^{t} \int_0^{t_2}
       u_{i_2}(t_2) u_{i_1}(t_1) dt_1 dt_2 \right)
   (E_{i_1} E_{i_2} g)(\mathbf{p})  \\
   & & + \sum_{i_3 = 1}^m \sum_{i_2 = 1}^m \sum_{i_1 = 1}^m  \left(
      \int_0^t \int_0^{t_3} \int_0^{t_2} u_{i_3}(t_3) u_{i_2}(t_2)
      u_{i_1}(t_1) dt_1 dt_2 dt_3 \right) \nonumber \\
   & & \hspace*{1cm} (E_{i_1} E_{i_2} E_{i_3} g)(\mathbf{p}) + \cdots \nonumber
\end{eqnarray}
Observe that the indices in the Lie derivatives and in the
integrals are in the opposite order. This procedure can be further
iterated, thus yielding the formal infinite series
\begin{eqnarray}   \label{cf5}
  g(\mathbf{x}(t)) & = & g(\mathbf{x}(0)) + \sum_{s \ge 1} \sum_{i_1 \cdots i_s}
   \int_0^t
  \int_0^{t_{s}} \cdots \int_0^{t_3} \int_0^{t_2} u_{i_s}(t_s)
  \cdots u_{i_1}(t_1) dt_1 \cdots dt_s \nonumber \\
   & & \hspace*{1cm} E_{i_1} \cdots E_{i_s}
  g(\mathbf{x}(0)),
\end{eqnarray}
where each $i_j \in L = \{1,\ldots,m \}$. An expression of the form
(\ref{cf5}) is referred to as the Chen--Fliess series, and it can be
proved that, under certain circumstances, it actually converges
uniformly to the solution of (\ref{cf2}) \cite{kawski02tco}. This series
originates in K.T. Chen's work \cite{chen57iop} on geometric invariants
and iterated integrals of paths in $\mathbb{R}^n$. Later, Fliess 
\cite{fliess81fcn} applied the theory to the analysis of control systems.

One of the great advantages of the Chen--Fliess series is that it
can be manipulated with purely algebraic and combinatorial tools,
instead of working directly with nested integrals. To emphasize
this aspect, observe that each term in the series can be
identified by a sequence of indices or \emph{word} $w = i_1 i_2
\cdots i_s$ in the \emph{alphabet} $L$ through the following
two maps:
\begin{eqnarray*}
 \mathcal{M}_1 : \,  w = i_1 i_2 \cdots i_s & \longmapsto & \big( g \mapsto (E_w
  g)(\mathbf{p}) = (E_{i_1}  E_{i_2} \cdots E_{i_s} g)(\mathbf{p})
  \big),  \\
 \mathcal{M}_2 : \,   w = i_1 i_2 \cdots i_s & \longmapsto & \left( u \mapsto \int_0^t
  u_{i_s}(t_s) \int_0^{t_{s}} \cdots \int_0^{t_2} u_{i_1}(t_1) dt_1
  \cdots dt_{s-1} dt_s \right)
\end{eqnarray*}
In fact, the nested integral appearing in the map $\mathcal{M}_2$ can be
expressed in a simple way, as we did for the linear case in (\ref{wn5}) 
\begin{equation}  \label{cf5b}
  \alpha_{i_1\cdots i_s} = \int_0^t
  u_{i_s}(t_s) \int_0^{t_{s}} \cdots \int_0^{t_2} u_{i_1}(t_1) dt_1
  \cdots dt_{s-1} dt_s
\end{equation}
With this notation, the series of linear differential operators
appearing at the right-hand side of (\ref{cf5}) can be written in
the compact form \cite{murua06tha}
\begin{equation}  \label{cf6}
  \sum_{w \in L^*} \alpha_w E_w,
\end{equation}
where $L^*$ denotes the set of words on the 
alphabet $L = \{1, 2, \ldots, m \}$, the function
$\alpha_w$ is given by (\ref{cf5b}) for each word $w \in L^*$ and
\[
   E_w = E_{i_1} \cdots E_{i_s}, \qquad \mbox{ if } \quad w = i_1 \cdots
   i_s \in L^*.
\]
It was proved by Chen that the series (\ref{cf6}) is an
exponential Lie series \cite{chen57iop}, i.e., it can be rewritten as the
exponential of a series of vector fields obtained as nested
commutators of $E_1,\ldots, E_m$. Such an expression is referred to
in nonlinear control as the formal analogue of a continuous
Baker--Campbell--Hausdorff formula and also as the logarithm of the Chen--Fliess
series \cite{kawski00ctl}. 

Notice the similarities of this procedure with the more general treatment carried out in 
subsection \ref{GNLM} for the nonlinear differential equation (\ref{nlp1}). Thus,
expression (\ref{nlp11}) constitutes the generalization of (\ref{cf5}) to an arbitrary
function $\mathbf{f}$ in (\ref{nlp1}). Conversely, the logarithm of the Chen--Fliess series
can be viewed as the corresponding nonlinear Magnus series for
the particular nonlinear system (\ref{cf1}).

From these considerations, it is clear that, in principle,
one can obtain an explicit formula for the terms of the logarithm of the Chen--Fliess series
in a basis of the Lie algebra generated by $E_1, \ldots,
E_m$, but this problem has been only recently solved for any number
of operators $m$ and arbitrary order, using labelled rooted trees
\cite{murua06tha}. Thus, for instance, when $m=2$, it holds that
\begin{eqnarray}  \label{amcf1}
  \sum_{w \in I^*} \alpha_w E_w & = & \exp( \beta_1 E_1 + \beta_2 E_2
  + \beta_{12} [E_1,E_2] + \beta_{112} [E_1,[E_1,E_2]]  \nonumber \\
  & & + \beta_{212} [E_2,[E_1,E_2]] + \cdots),
\end{eqnarray}
where, not surprisingly, the expressions of the $\beta$ coefficients are given by
(\ref{wn4}) with the corresponding change of sign in $\beta_{12}$
due to the nonlinear character of equation (\ref{cf1}). 

Another relevant consequence of the connection between Magnus series and the
Chen--Fliess series is the following: the Lie series 
defining the logarithm of the Chen--Fliess series can be obtained explicitly from
the recurrence (\ref{eses})-(\ref{omegn}), valid in principle
 for the linear case. Of course, the successive terms of the Chen--Fliess
series itself can be generated by expanding the exponential.



\section{Illustrative examples}
\label{section4}

After having reviewed in the preceding two sections the main theoretical aspects of
the Magnus expansion and other exponential methods,
in this section we gather some examples of their application.
All of them are standard problems of Quantum Mechanics
where the exact solution for the evolution operator $U(t)$ is well
known. Due to their simplicity, higher order computations are
possible within a reasonable amount of effort. The comparison
between approximate and exact analytic results may help the reader
to grasp the advantages as well as the technical difficulties of the
methods we have analyzed.

The examples considered here are treated in
\cite{casas01fte,klarsfeld89eip,oteo05iat,pechukas66ote}, although
some results are unpublished material, in particular those
involving highest order computations. In subsection \ref{Ex:ME} we
present results concerning the most straightforward way of dealing
with ME, namely computations in the Interaction Picture. In
subsection \ref{Ex:AME} an application of ME in the adiabatic basis
is developed. Subsection \ref{Ex:FW} is devoted to illustrate the
exponential infinite-product expansions of Fer and Wilcox. An
example on the application of the iterative version of ME by Voslamber
is given in subsection \ref{Ex:IME}. Eventually, subsection \ref{Ex:FME}
contains an application of the Floquet--Magnus formalism.

\subsection{ME in the Interaction Picture}\label{Ex:ME}

We illustrate the application of ME in the Interaction Picture
(see subsection \ref{PLT}) by means of two simple time-dependent physical
systems frequently encountered in the literature, for which exact
solutions are available:  the time-dependent forced harmonic
oscillator, and a particle of spin $\frac{1}{2}$ in a constant
magnetic field. In the first case we fix $\hbar =1$ for convenience.

As we will see, ME in the Interaction Picture is appropriate
whenever the characteristic time scale of the perturbation is
shorter than the proper time scale of the system.

To illustrate and evaluate the quality of the various approximations
for the time-evolution operator, we compute the transition
probabilities among non-perturbed eigenstates induced by the small
perturbation.

\subsubsection{Linearly forced harmonic oscillator}\label{LHO}

The Hamiltonian function describing a linearly driven
harmonic oscillator reads ($\hbar =1$)
\begin{equation}\label{eq:OA}
H =H _0+V(t), \quad \mbox{ with }
\qquad H _0=\frac{1}{2}\omega _{0}(p^2+q^2),\qquad V(t)=\sqrt 2 f(t)q
\end{equation}
and $f(t)$ is real. Here $q$ and $p$ stand for the position and
momentum operators satisfying $[q,p]=i$ and $\omega
_{0}$ gives the energy level spacing in absence of the perturbation
$V(t)$.
We introduce the usual operators $a_\pm
\equiv \frac{1}{\sqrt 2 }(q\mp i p)$, so that $[a_-,a_+]=1$. With
this notation we have
\begin{equation}\label{H0V}
  H _0= \omega_0 \left(a_+ a_- +  \frac{1}{2} \right),
     \qquad V=f(t)(a_+ + a_-).
\end{equation}
The eigenstates of $H _0$ are denoted by $| n\rangle$, so that $H
_0 | n\rangle = \omega_0(n+\frac{1}{2})| n\rangle $, where $n$ stands for
the quantum number. With this notation $n=0$ corresponds to the
ground state.

For simplicity in the computations we choose $\omega_0 = 1$.
In accordance with the prescriptions in section \ref{PLT},  the
Hamiltonian in the Interaction Picture is given by (\ref{HGInt}) and
reads
\begin{equation}\label{LHOIP}
  H _I(t)=\e^{iH _0t} \, V(t) \, \e^{-iH _0t}=f(t)(\e^{it}a_+ + \e^{-it}a_-) .
\end{equation}
Accordingly, the evolution operator is factorized as
\begin{equation}\label{eq:HO_HI}
U(t,0)=\exp(-iH _0 t) \, U_I(t,0),
\end{equation}
where the new evolution operator $U_I$ is obtained from
$U_I' = \tH _I(t) U_I \equiv -i H_I(t) U_I$.

The infinite Magnus series terminates in the present example. It
happens because the second order Magnus approximant, which involves
the computation of
\begin{eqnarray}\label{2HO}
  [\tH_I (t_1),\tH_I (t_2)]&=&f(t_1)f(t_2) \left( \e^{i(t_1-t_2)} [a_+,a_-] +
     \e^{-i(t_1-t_2)}[a_-,a_+]  \right) \nonumber \\
  &=& 2if(t_1)f(t_2)\sin(t_2-t_1)
\end{eqnarray}
reduces to a scalar function. Thus Magnus series in the Interaction
Picture furnishes the exact evolution operator irrespective of
$f(t)$:
\begin{eqnarray}\label{UHO}
  U_I(t,0)&=&\exp\left( \int_0^t {\rm d}t_1 \tH_I (t_1)
  -\frac{1}{2}\int_0^t {\rm d}t_1 \int_0^{t_1} {\rm d}t_2
  [\tH_I (t_1),\tH_I (t_2)]\right) \nonumber \\
  &=& \exp \big( -i(\alpha a_++\alpha^*a_-)-i\beta \big)  \\
  &=& \exp( -i\alpha a_+) \, \exp(-i\alpha^*a_-) \, \exp (
  -i\beta-|\alpha|^2/2),\nonumber
\end{eqnarray}
where we have defined
\begin{eqnarray}\label{ab}
  \alpha &\equiv & \int_0^t {\rm d}t_1 f(t_1) \e^{it_1},\label{eq:al} \\
  \beta &\equiv & \int_0^t {\rm d}t_1 \int_0^{t_1} {\rm d}t_2
  f(t_1)f(t_2) \sin(t_2-t_1).\label{eq:be}
\end{eqnarray}
Equations (\ref{UHO}) and (\ref{eq:HO_HI}) yield the exact
time-evolution operator for the linearly forced harmonic oscillator
Hamiltonian (\ref{eq:OA}) \cite{klarsfeld93dho}.

To compute transition probabilities between free harmonic
oscillator states of quantum numbers $n$ and $m$,
\begin{equation}\label{PHO}
  P_{n\to m}=| \langle m|U_I |n \rangle |^2,
\end{equation}
the last form in (\ref{UHO}) results most convenient. Specifically,
assuming that the oscillator was initially in its ground state $| 0
\rangle $, we get in particular the familiar Poisson distribution
for the transition probabilities
\begin{equation}\label{PoissonHO}
P_{0\to n}= \frac{1}{n!} \, |\alpha|^{2n} \, \exp(-|\alpha|^2).
\end{equation}

\subsubsection{Two-level quantum systems}\label{2L}

The generic Hamiltonian for a two-level quantum  system can be
written down in the form
\begin{equation}\label{H2L}
  H (t)= \left( \begin{array}{ccc}
  E_1(t) & \  & \  C(t) \\ C^*(t) & & E_2(t) \end{array} \right)
\end{equation}
where $E_1(t),E_2(t)$ are real functions and $C(t)$ is, in general,
a complex function of $t$. We define the solvable piece of the
Hamiltonian as the diagonal matrix
\begin{equation}\label{H2L0}
  H _0(t) = \left( \begin{array}{cc}
  E_1(t) & \  \  0 \\ 0 & \  E_2(t) \end{array} \right)
\end{equation}
and all the time-dependent interaction described by the function
$C(t)$ is considered as a perturbation. In the Interaction Picture the
new Hamiltonian reads (see (\ref{HGInt}))
\begin{equation}\label{HI2L}
  H _I(t)= \left( \begin{array}{ccc}
  0 & \  \  &  C(t) \displaystyle \exp \left( i\int_0^t {\rm d}t' \omega(t') \right) \\
  C^*(t)  \displaystyle \exp \left( -i\int_0^t {\rm d}t' \omega(t') \right)  & & 0 \end{array} \right)
\end{equation}
with $\omega=(E_1-E_2)/\hbar$. Suppose now that $H_0$ is time-independent.
Then $U(t)=\exp( \tH _0t ) \, U_I(t)$. Without loss of generality, the
$H_0$ may be rendered traceless, so that $E_1=-E_2\equiv E$. Thus
$\pm E$ denote the eigenenergies associated to the eigenvectors $|+
\rangle \equiv (1, 0)^T$, $|- \rangle \equiv (0, 1)^T$ of $H_0$,
the unperturbed system. In terms of
Pauli matrices the Hamiltonian in this case may be expressed as
\begin{equation}\label{H2Lsig}
  H(t) =\frac{1}{2}\hbar \omega \sigma_3 +f(t) \sigma_1 +g(t)\sigma_2,
\end{equation}
where $f=\mathrm{Re} (C)$ and $g=-\mathrm{Im} (C)$.

Since $H _0$ is diagonal, the transition probability between
eigenstates $|+ \rangle$, $|- \rangle$ of $H_0$ is simply
\begin{equation}\label{Ppm}
  P(t) = |\langle +|U_I(t)|-\rangle |^2.
\end{equation}
As the evaluation of (\ref{Ppm}) requires the computation and manipulation of
exponential matrices involving Pauli matrices, formulas (\ref{exppaulis}) and
(\ref{cambiopict}) in section \ref{notations} come in hand here.

Next, we study two particular cases of interaction for which the
exact solution of the time evolution operator admits an analytic
expression.

{\bf 1- Rectangular step}. Suppose that in (\ref{H2Lsig}) $g=0$,
namely,
\begin{equation}\label{eq:Hround}
  H(t)=\frac{1}{2}\hbar \omega \sigma_3 +f(t) \sigma_1
\end{equation}
with $f=0$ for $t<0$  and $f=V_0$ for $t \ge 0$. Alternatively, if we
restrict ourselves to compute an observable such as the transition
probability, this example is equivalent to a rectangular mound
(or rectangular barrier) of width $T=t$ above.
The exact solution for this problem reads
\begin{equation}\label{eq:analytic}
  U(t,0)=\exp\left(- i\left( \frac{\omega}{2}\sigma_3 +
  \frac{V_0}{\hbar} \sigma_1 \right) t \right),
\end{equation}
which yields the exact transition probability
\begin{equation}\label{Pex_rect}
  P_{ex}=\frac{4\gamma^2}{4\gamma^2+\xi^2}\sin^2\sqrt{\gamma^2+\xi^2/4}
\end{equation}
between eigenstates
$|+ \rangle$, $|- \rangle$ of $H_0$. Here we have denoted
$\gamma \equiv V_0 t/\hbar $ and $\xi \equiv \omega t$.

The Interaction Picture is defined here by the explicit integration
of the diagonal piece in the Hamiltonian, so that
  $U=\exp(-i\xi \sigma_3 /2) U_I$,
where $U_I$ stands for the time evolution operator in the
Interaction Picture and obeys
\begin{equation}\label{Eq:Dirac_eq}
  U_I'=\tH _I(t) \, U_I , \qquad U_I (0)=I
\end{equation}
with
\begin{equation} \label{eq:H2}
H_I(t) =f(t) (\sigma_1 \cos \xi - \sigma_2 \sin \xi) .
\end{equation}

A computation with the usual time-dependent perturbation theory gives for the
first orders (formula (\ref{Dys1}))
\begin{eqnarray}\label{PT1}
  P_{pt}^{(1)}&=&P_{pt}^{(2)}=\frac{4\gamma^2}{\xi^2}\sin^2(\xi/2)
  \\
  P_{pt}^{(3)}&=&P_{pt}^{(4)}=\frac{\gamma^2}{\xi^2}\left[ 2\sin \frac{\xi}{2}-
  \frac{\gamma^2}{3\xi^2}\left(  9\sin \frac{\xi}{2}+\sin \frac{3\xi}{2}-
  6\xi \cos \frac{\xi}{2} +4\sin^3 \frac{\xi}{2}\right) \right] ^2.
  \nonumber
\end{eqnarray}
Notice that $P_{pt}^{(i)}>1$ may happen in the equations above
because the unitary character of the operator $U(t)$ is not preserved by the usual time-dependent
perturbation formalism.

In this example it is not difficult to compute the  first four terms in the
Magnus series corresponding to $U_I(t) = \exp \Omega(t)$ in
(\ref{Eq:Dirac_eq}). To facilitate the notation, we define $s = \sin \xi$
and $c = \cos \xi$. The Magnus approximants in the Interaction
Picture may be written down in terms of Pauli matrices and read
explicitly
\begin{eqnarray}\label{M4}
  \Omega_1&=&-i\frac{\gamma}{\xi}[\sigma_1 s+\sigma_2 (1-c)] \nonumber \\
  \Omega_2&=&-i\left( \frac{\gamma}{\xi}\right) ^2\sigma_3 (s-\xi) \nonumber  \\
  \Omega_3&=&-i\left( \frac{\gamma}{\xi} \right) ^3 \frac{1}{3}\{ \sigma_1 [3\xi
  (1+c)-(5+c)s]
  +\sigma_2 [(3\xi - s)s-4(1-c)] \} \nonumber  \\
  \Omega_4&=&-i\left( \frac{\gamma}{\xi}\right) ^4 \frac{1}{3}\sigma_3 [(4 c +5)\xi
  -(c+8)s].
\end{eqnarray}
The first two formulae for the approximate transition probabilities
are, respectively
\begin{eqnarray}\label{P4}
  P_{M}^{(1)} &=& \sin^2\left( \frac{2\gamma}{\xi}\sin(\xi /2)\right)
  \nonumber  \\
  P_{M}^{(2)} &=& \frac{4\gamma^2}{\xi^2} \frac{\sin^2\lambda}{\lambda^2} \sin^2(\xi/2), \quad
  \lambda= [4\sin^2(\xi /2)+
  \frac{\gamma^2}{\xi^2}(\sin \xi -\xi )^2]^{1/2} .
\end{eqnarray}
We omit explicit expressions for $P_{M}^{(3)}$ and $P_{M}^{(4)}$
since they are quite involved. However, we include their outputs in
Figures {\ref{plot:Rect1}, {\ref{plot:Rect2} and \ref{plot:Rect3},
where we plot the first to fourth order approximate transition
probabilities with ME in the Interaction Picture  and compare them
to the exact case and also with perturbation theory outputs. In
Figure {\ref{plot:Rect1} and {\ref{plot:Rect2} we set $\gamma=1.5$
and $\gamma=2$ respectively, whereas in Figure {\ref{plot:Rect3} we
fix $\xi=1$.

\begin{figure}[tb]
\begin{center}
\includegraphics[width=14cm]{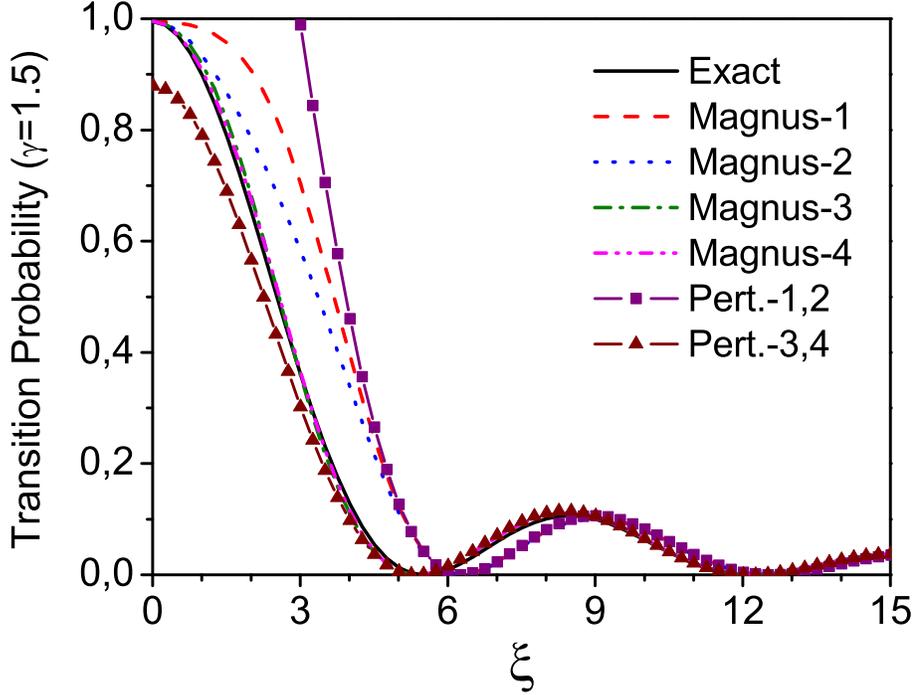}
 \caption{Rectangular step:
 Transition probabilities as a function of $\xi$, with $\gamma=1.5$.
 The solid line corresponds to the exact result (\ref{Pex_rect}). Broken lines
 stand for approximations obtained via ME and lines with symbols correspond
 to perturbation theory, according to the legend. Computations up to fourth order,
 in the Interaction Picture. } \label{plot:Rect1}
 \end{center}
\end{figure}

\begin{figure}[htb]
\begin{center}
\includegraphics[width=14cm]{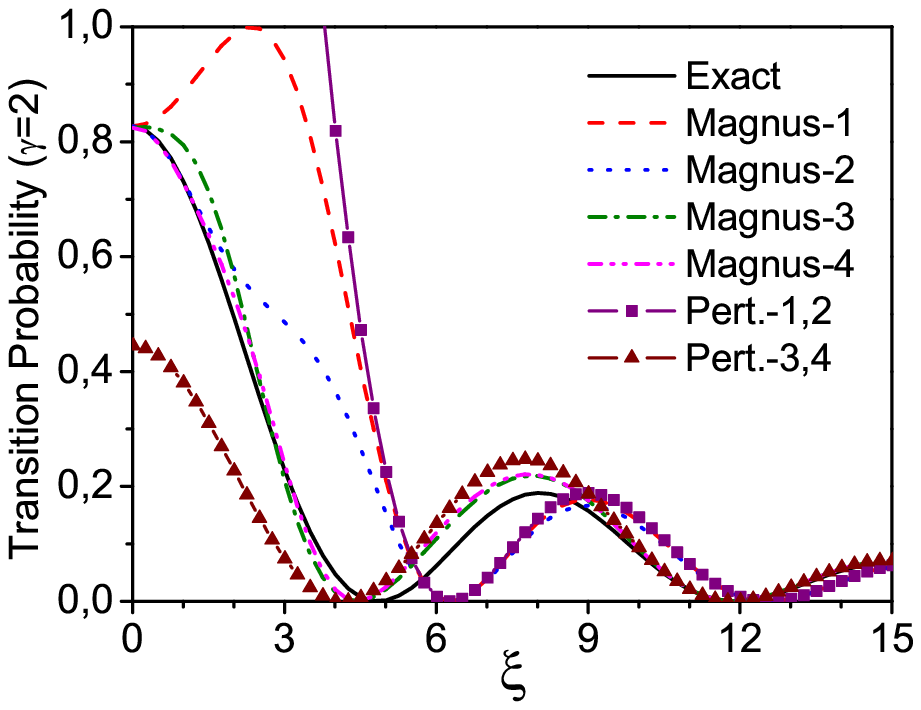}
 \caption{Rectangular step:
 Transition probabilities as a function of $\xi$, with $\gamma=2$.
 Lines are coded as in Figure \ref{plot:Rect1}.
 Computations up to fourth order,
 in the Interaction Picture.} \label{plot:Rect2}
 \end{center}
\end{figure}

We observe that for the Magnus expansion in the Interaction Picture, the smaller
the value of the parameter $\xi$
the better works the approximate solution. As a matter of fact, in the
sudden limit, $\xi \ll 1$, ME furnishes the exact result; unlike
perturbation theory. As far as the intensity of the perturbation
$\gamma$ increases, the quality of the approximations spoils. This
effect is much more dramatic for the standard perturbation theory.

\begin{figure}[htb]
\begin{center}
\includegraphics[width=14cm]{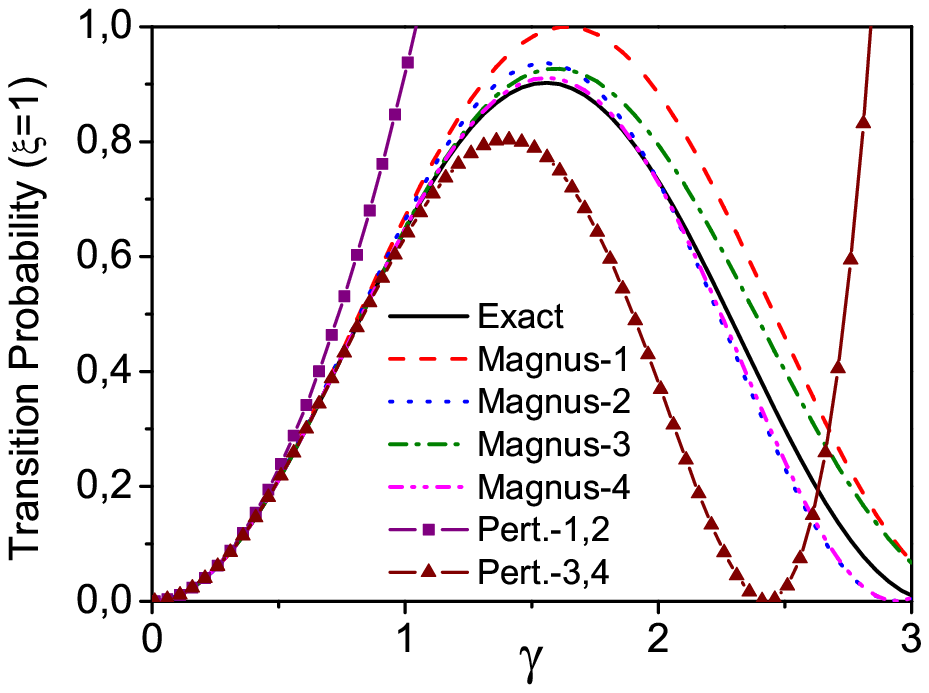}
 \caption{Rectangular step:
 Transition probabilities as a function of $\gamma$, with $\xi=1$.
 Lines are coded as in Figure \ref{plot:Rect1}.
 Computations up to fourth order,
 in the Interaction Picture.} \label{plot:Rect3}
 \end{center}
\end{figure}

On the other hand, it is clear from (\ref{eq:H2}) that
\[
   \int_{-\infty}^t  \|H_I(t_1)\|_2 \, dt_1 = \int_0^t |f(t_1)| \, dt_1 = V_0 \, t,
\]
whence   $\int_{-\infty}^t  \|\tH_I(t_1)\|_2 \, dt_1 = \gamma$, and thus
Theorem \ref{conv-mag} guarantees that
the Magnus expansion in the Interaction Picture is convergent if $\gamma < \pi$. Notice
that this is always the case for the parameters considered in Figures
\ref{plot:Rect1}-\ref{plot:Rect3}. The estimate $\gamma < \pi$ for the convergence
domain in the Interaction Picture should be compared with the corresponding one
in the Schr\"odinger picture:
\[
   \int_{-\infty}^t  \|\tH(t_1)\|_2 \, dt_1 = \sqrt{\gamma^2 + \frac{\xi^2}{4}} < \pi.
\]
Notice then that, as pointed out in subsection \ref{PLT},
a change of picture allows us to improve the convergence of the
Magnus expansion.

\vspace*{0.3cm}

{\bf 2- Hyperbolic secant step: Rosen--Zener model}. In the Rosen--Zener
Hamiltonian \cite{rosen32dsg} the interaction $C(t)$  in (\ref{H2L})
is given by the real function $V(t)= V_0 \, {\rm sech}( t/T)$, where $T$
determines the time-scale. We will use the notation $\gamma=\pi
V_0T/\hbar$ and $\xi =\omega T=2ET/\hbar$.

The corresponding Hamiltonian in terms of Pauli matrices is
\begin{equation} \label{eq:RZ_H}
H(t)=E\sigma_3+V(t)\sigma_1 \equiv \boldsymbol{a}(t) \cdot \boldsymbol{\sigma},
     \qquad V(t)=V_0/\cosh(t/T),
\end{equation}
with $\boldsymbol{a} \equiv (V(t),0,E)$. In the Interaction
Picture one has
\begin{equation}\label{eq:HIs}
H_I(s)=V(s)(\sigma_1 \cos(\xi s) - \sigma_2 \sin(\xi s))
\end{equation}
in terms of the dimensionless time-variable $s=t/T$.
Notice that $\xi$ measures the ratio between the
\emph{interaction} time $T$ and the \emph{internal} time of the
system $\hbar /2E$. From (\ref{eq:HIs}), and after straightforward
calculation, the first and second ME operators are readily obtained.

The exact result for the transition probability (provided the time
interval extends from $-\infty$ to $+\infty$), as well as
perturbation theory and Magnus expansion up to second order read
\cite{pechukas66ote}
\begin{eqnarray}\label{P_Sech}
  P_{ex}&=& \frac{\sin^2\gamma}{\cosh^2(\pi\xi/2)} \nonumber \\
  P_{pt}^{(1)}&=&P_{pt}^{(2)}=\frac{\gamma^2}{\cosh^2(\pi\xi /2)} \nonumber \\
  P_{M}^{(1)}&=&\sin^2[\gamma/\cosh(\pi\xi /2)] \nonumber \\
  P_{M}^{(2)}&=&\frac{\sin^2\lambda}{\lambda^2}\frac{\gamma^2}{\cosh^2(\pi\xi
  /2)} \\
  \lambda&=&\gamma\left[ \frac{1}{\cosh^2(\pi\xi/2)}+\frac{\gamma^2g^2(\xi)}{\pi^4}\right] ^{1/2},
  \quad g(\xi)=8\xi \sum_{k=0}^\infty \frac{2k+1}{[(2k+1)^2+\xi^2]^2}.
  \nonumber
\end{eqnarray}

In Figures {\ref{plot:Sech2} and \ref{plot:Sech3} we plot some
results from the formulae in (\ref{P_Sech}). In  Figure
{\ref{plot:Sech2} we take $\gamma=1.5$ and in Figure
{\ref{plot:Sech3} we set  $\xi=0.3$.

\begin{figure}[htb]
\begin{center}
\includegraphics[width=14cm]{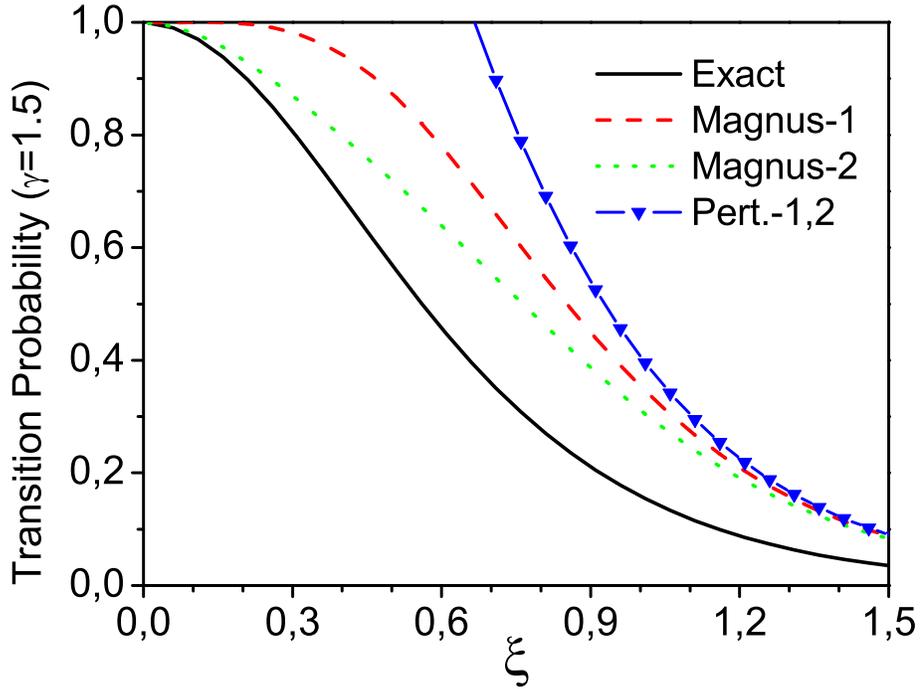}
\caption{Rosen--Zener model: Transition probabilities (\ref{P_Sech})
as a function of $\xi$, with $\gamma=1.5$. The solid line stand for
the exact result. Broken lines stand for approximations obtained via
ME and triangles correspond to perturbation theory, according to the
legend. Computations up to second order, in the Interaction Picture.
} \label{plot:Sech2}
 \end{center}
\end{figure}

\begin{figure}[tb]
\begin{center}
\includegraphics[width=14cm]{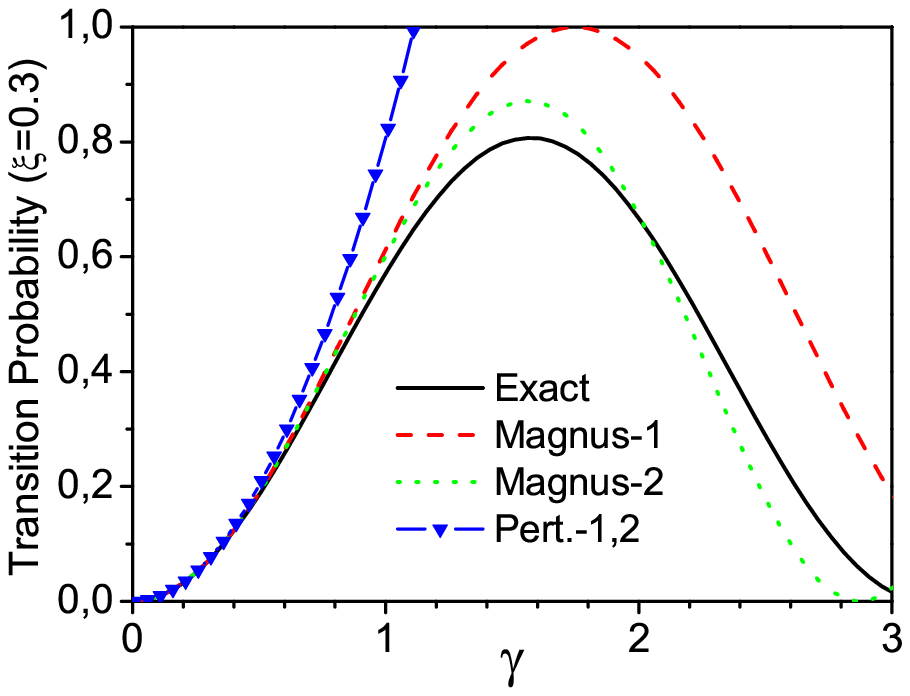}
 \caption{Rosen--Zener model:
 Transition probabilities as a function of $\gamma$, with $\xi=0.3$.
 Lines are coded as in Figure \ref{plot:Sech2}.
 Computations up to second order, in the Interaction Picture.}
\label{plot:Sech3}
 \end{center}
\end{figure}

Similarly to the case of the rectangular step, we observe in Figure
\ref{plot:Sech2} that \emph{the Magnus expansion works better in the sudden regime} defined
by $\xi \ll 1$, namely, when the internal time of the system
$\hbar/2E$ is much larger than the time scale $T$ of the
perturbation. Also, Figure \ref{plot:Sech3} illustrates how the
approximations spoil as far as the intensity $\gamma$ increases.
Notice in both figures the unitarity violation of the approximation built with the usual
perturbation theory.

In this case a simple calculation shows that
\[
   \int_{-\infty}^{\infty}   \| \tH_I(t)\|_2 \, dt  = (1/\hbar) \, \int_{\infty}^{\infty} |V(t)| \, dt =
      V_0 \pi T /\hbar = \gamma,
\]
and thus the Magnus series converges at least for $\gamma < \pi$.

\subsection{ME in the Adiabatic Picture}\label{Ex:AME}

Here we will illustrate the effect of using the Adiabatic Picture
introduced in Section \ref{PLT}. The use of this type of preliminary
transformation is convenient whenever the time scale of the
interaction is much larger than the proper time of the unperturbed
system.

Since an adiabatic regime conveys a smooth profile for the
perturbation, namely, existence of derivatives, the case of the
rectangular step cannot be properly used for the sake of
illustration.

\subsubsection{Linearly forced harmonic oscillator}\label{LHO_ad}
For the linearly driven harmonic oscillator the procedure yields the
exact solution, as in the preceding subsection \ref{2L}, albeit the
method is a bit more involved technically \cite{klarsfeld93dho}.

\subsubsection{Rosen--Zener model}\label{RZ_ad}
Following \cite{klarsfeld92cmi} we will deal with the Rosen--Zener
model (see Section \ref{2L} and \cite{pechukas66ote}) since it
allows a clear illustration of the adiabatic regime.

The preliminary linear transformation $G(s)$ defined in
(\ref{GAdiab}) for the Hamiltonian (\ref{eq:RZ_H}) is given by
$G(s)=\boldsymbol{\hat{b}}\cdot \boldsymbol{\sigma}$, where the unit
vector $\boldsymbol{\hat{b}} $ points in the direction
$\boldsymbol{b}=\boldsymbol{\hat{a}}+\boldsymbol{\hat{k}}$
($\boldsymbol{\hat{k}}=$unit vector along the $z$-axis). Remind that
$s=t/T$ is the dimensionless time-variable. The evolution operator
gets then factorized as
\begin{equation}
U_G(s)=G^\dag (s)U(s)G(s_0),
\end{equation}
which according to (\ref{UGeq}), satisfies the equation
\begin{equation}
\frac{dU_G}{ds} =\tilde H_G(s)U_G.
\end{equation}
Here $\tilde H_G\equiv -i H_G/\hbar$ is given by
\begin{equation}\label{HGs}
\tilde H_G(s)=\frac{T}{i\hbar} a \sigma_3- i \frac{\theta'}{ 2} \sigma_2 ,
\end{equation}
with  $a^2=E^2_0+V^2_0/\cosh^2 s$ and $\cot \theta =(E_0/V_0)\cosh
s$.

Next, in analogy to (\ref{eq:HIs}), we introduce the Adiabatic
Interaction Picture which allows us to integrate the diagonal piece
of $\tH_G(s)$. The time-evolution operator gets eventually factorized
as
\begin{equation}
U_G(s)=\exp \left( (-iT/\hbar)\int^{\infty}_0 \textrm{d}s'\,
a(s')\sigma_3 \right) U_G^{(I)}(s) \exp \left(
(-iT/\hbar)\int_{-\infty}^0 \textrm{d}s'\,  a(s')\sigma_3 \right) ,
\end{equation}
where $U_G^{(I)}(s)$ obeys the equation
\begin{equation}
\frac{dU_G^{(I)}}{ds}=\tilde H_G^{(I)}(s) \, U_G^{(I)},
\end{equation}
with
\begin{equation}\label{HGIs}
\tilde H_G^{(I)}(s)= -i(\theta' /2)[\sigma_1 \sin A(s)+\sigma_2
\cos A(s)],
\end{equation}
and
\begin{equation}\label{AHGIs}
A(s)=\frac{2T}{\hbar} \int_0^s \textrm{d}s'\, a(s') =
\frac{\xi}{2}\ln \frac{1+\rho}{1-\rho}+\frac{2\gamma}{\pi}\arctan
\frac{2\gamma}{\pi\xi}\rho .
\end{equation}
We have introduced the definition
\begin{equation}
\rho=\{ 1-[1+(\pi \xi/2\gamma)]\sin^2 \theta\} ^{1/2},
\end{equation}
in terms of the dimensionless strength parameter $\gamma=\pi
V_0T/\hbar$ and $\theta$. Using the ME to first order in the
adiabatic basis (which coincides with the fixed one at $s=\pm
\infty$) one finds the spin-flip approximate (first order)
transition probability
\begin{equation}\label{P_ad_M1}
^{ad}P_{M}^{(1)}=\sin^2 \left[ \int_0^{\theta_0} \textrm{d} \theta
\sin A(s(\theta)) \right].
\end{equation}
In Figure \ref{plot:RZ_Ad} we compare the numerical results given by
the new approximation (\ref{P_ad_M1}) with the exact formula
$P_{ex}$ in (\ref{P_Sech}). For the sake of illustration we plot
also the results in the usual Interaction Picture to first order in
ME (see $P_{M}^{(1)}$ in (\ref{P_Sech})). It is noteworthy the gain
achieved when using the adiabatic ME in the intermediate regime,
namely, moderate values of $\xi$, although only the first
order is considered. Here the adiabatic regime corresponds to large
values of $\xi = 2 E T/\hbar$ ($\varepsilon = 1/T \ll 1$).

It should be also noticed that for the Hamiltonian
(\ref{HGIs}) one has
\[
   \int_{s_0}^{s}   \| \tH_G^{(I)}(s_1)\|_2 \, ds_1  = \frac{1}{2} |\theta(s) - \theta(s_0)|
    < \frac{1}{2} 2 \pi = \pi
\]
and thus the convergence condition given by Theorem \ref{conv-mag} is
always satisfied. In other words, for this example
\emph{the Magnus expansion is always convergent in the Adiabatic Interaction
Picture}.

More involved illustrative examples of ME in the Adiabatic Picture
may be found in \cite{klarsfeld92mai,klarsfeld93dho}.

\begin{figure}[htb]
\begin{center}
\includegraphics[width=14cm]{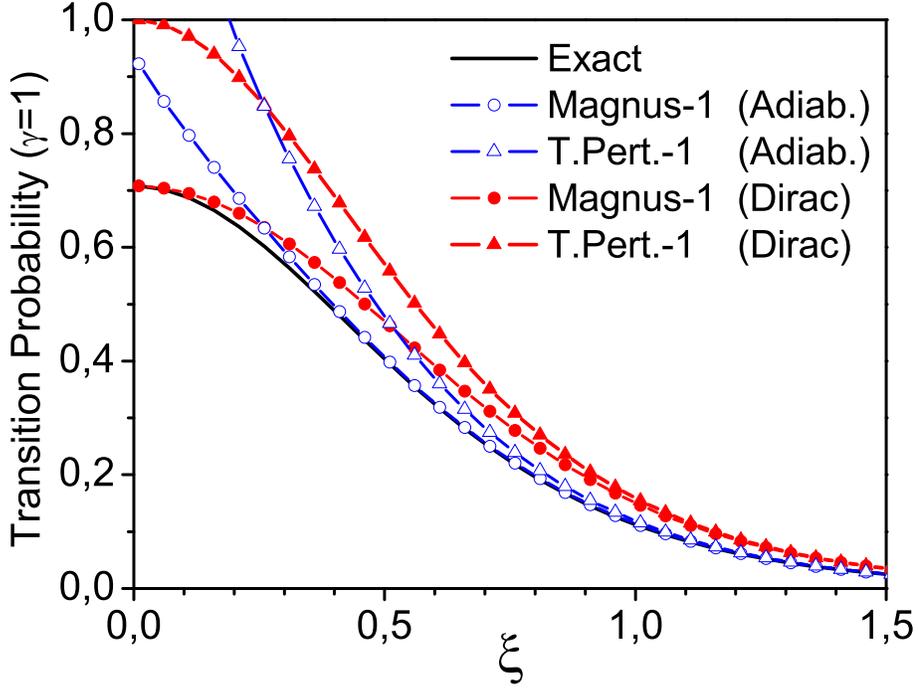}
 \caption{Rosen--Zener model: Transition probability as a function of $\xi$,
 with $\gamma =1$. The solid line stands for the exact result in (\ref{P_Sech}).
 The remaining lines stand for first order computations with ME (circles) and
 perturbation theory (triangles). Open symbols are for the Adiabatic Picture
 and solid symbols for the Interaction Picture.} \label{plot:RZ_Ad}
 \end{center}
\end{figure}

\subsection{Fer and Wilcox infinite-product expansions}\label{Ex:FW}

Next we illustrate the use of Fer and Wilcox infinite-product expansions
by using the same time-dependent systems  as before.

\subsubsection{Linearly forced harmonic oscillator}

Since the commutator $[a_-, a_+]$ is a $c$-number, Fer iterated
Hamiltonians $H^{(n)}$ with $n>1$ eventually vanish so that $F_n=0$
for $n>2$. The Wilcox operators $W_n$ with $n>2$ in eq.(\ref{W1})
vanish for the same reason. Thus, in this particular case, the
second-order approximation in either method leads to the exact
solution of the Schr\"odinger equation just as ME did. To sum up,
the final result reads
\begin{equation}\label{FWex3}
  U_I= \e^{\Om _1+\Om _2}= \e^{W_1} \e^{W_2}= \e^{F_1} \e^{F_2}=
\e^{-i\beta} \e^{-i(\alpha a_+  + \alpha^* a_-)}\,,
\end{equation}
where $\alpha(t)$ and $\beta(t)$ are given in (\ref{eq:al}) and
(\ref{eq:be}), respectively.

\subsubsection{Two-level quantum system: Rectangular step}

For the Hamiltonian (\ref{eq:Hround}), the first-order Fer and Wilcox operators
in the Interaction Picture verify
\begin{equation}\label{FWex7}
  F_1 = W_1 =  \Omega_1 = \int_0^t {\rm d}t_1 \tH _I(t_1),
\end{equation}
where $\tH_I(t)$ is given by (\ref{eq:H2}). The explicit expression is collected
in (\ref{M4}) (first equation). Analogously, the second equation there
also corresponds to the second-order Wilcox operator $W_2 = \Omega_2$.

To proceed further with Fer's method we must calculate the modified
Hamiltonian $\tH ^{(1)}$ in (\ref{Fer2}). After straightforward algebra
one eventually obtains
\begin{equation}\label{FWex12}
  \tH ^{(1)}=\frac{1}{2\theta}\left( \frac{\sin^2\theta}{\theta} - \sin2\theta +
\frac{1}{\theta}\left( \frac{\sin2\theta}{2\theta} -
\cos2\theta\right) F_1\right) \, [F_1,\tH _I]\; ,
\end{equation}
where $\theta=(2\gamma /\xi)\sin(\xi /2)$ (notice that $\tH ^{(1)}$
and therefore $F_2$  depend  on $\sigma_1$ and $\sigma_2$, while
$W_2$ is proportional to $\sigma_3$). Since it does not seem
possible to derive an analytical expression for $F_2$, the
corresponding matrix elements have been computed by replacing
the integral by a conveniently chosen quadrature.

The transition probability $P(t)$ from an initial state with spin up
to a state with spin down (or viceversa) is given by (\ref{Ppm}).
This expression has been computed on assuming: $U_I\simeq
\e^{F_1}=\e^{W_1}$, $U_I\simeq \e^{F_1} \e^{F_2}$, $U_I\simeq
\e^{W_1} \e^{W_2}$, $U_I\simeq \e^{F_1} \e^{F_2} \e^{F_3}$ and $U_I\simeq
\e^{W_1} \e^{W_2} \e^{W_3}$, and the results have been compared with the
exact analytical solution (\ref{Pex_rect}).

In Figures \ref{plot:FW1} and \ref{plot:FW2} we show the transition probability $P$ as a
function of $\xi$  for two different values of $\gamma$ ($\gamma = 1.5$ and
$\gamma = 2$, respectively), while in
Figure \ref{plot:FW3} we have plotted $P$ versus $\gamma$ for $\xi$
fixed. Notice that the second order in the Wilcox expansion does not
contribute to the transition probability (this is similar to what
happens in perturbation theory). On the other hand, Fer's
second-order approximation is already in remarkable agreement with
the exact result whereas the third order cannot even be
distinguished from the exact result in Figure \ref{plot:FW2} at the cost of a much larger
computational effort.
Wilcox approximants preserve unitarity but do not give acceptable
approximations.
\begin{figure}[htb]
\begin{center}
\includegraphics[width=14cm]{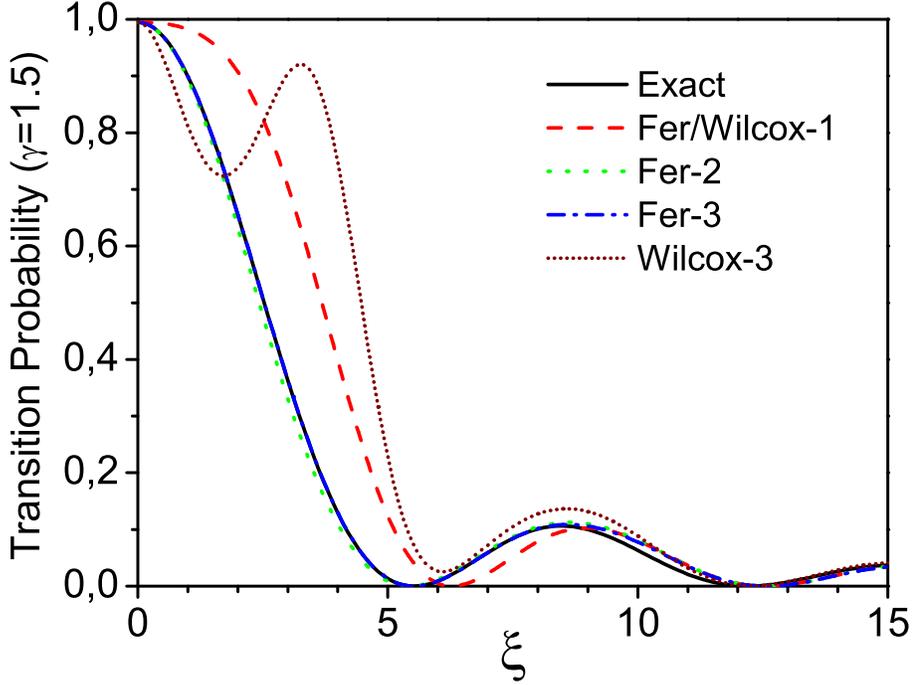}
 \caption{Rectangular step: Transition probability as a function of $\xi$, with $\gamma=1.5$.
 The solid line corresponds to the exact result (\ref{Pex_rect}). Broken lines stand for Fer and
 Wilcox approximations up to third order.
 Computations up to third order, in the Interaction Picture.} \label{plot:FW1}
 \end{center}
\end{figure}

\begin{figure}[tb]
\begin{center}
\includegraphics[width=14cm]{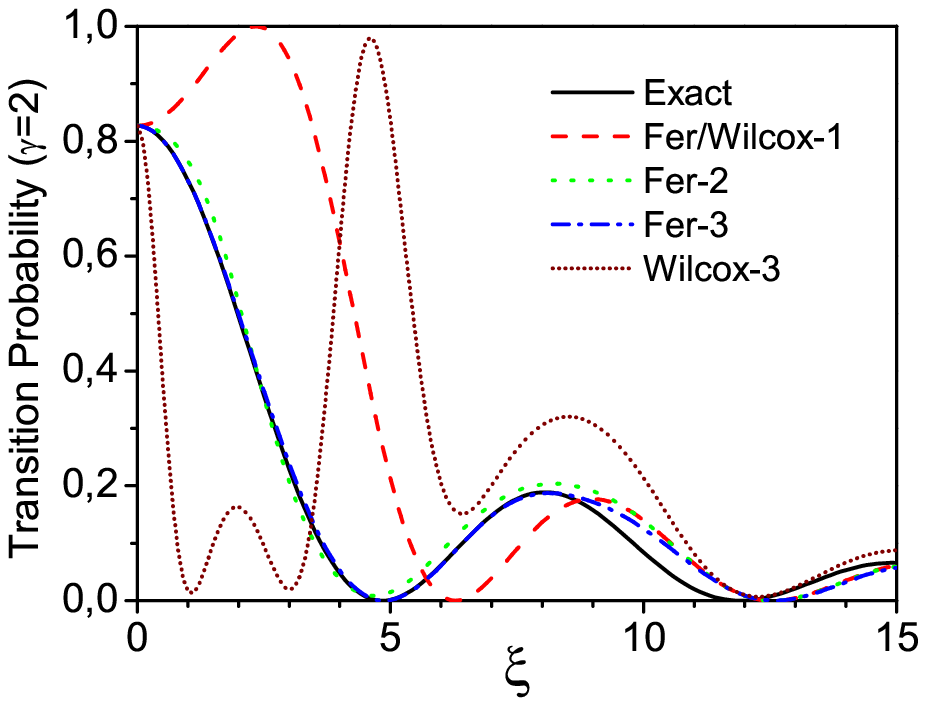}
 \caption{Rectangular step: Transition probability as a function of $\xi$, with $\gamma=2$.
 Lines are coded as in Figure \ref{plot:FW1}.
 Computations up to third order, in the Interaction Picture.} \label{plot:FW2}
 \end{center}
\end{figure}

\begin{figure}[tb]
 \begin{center}
\includegraphics[width=14cm]{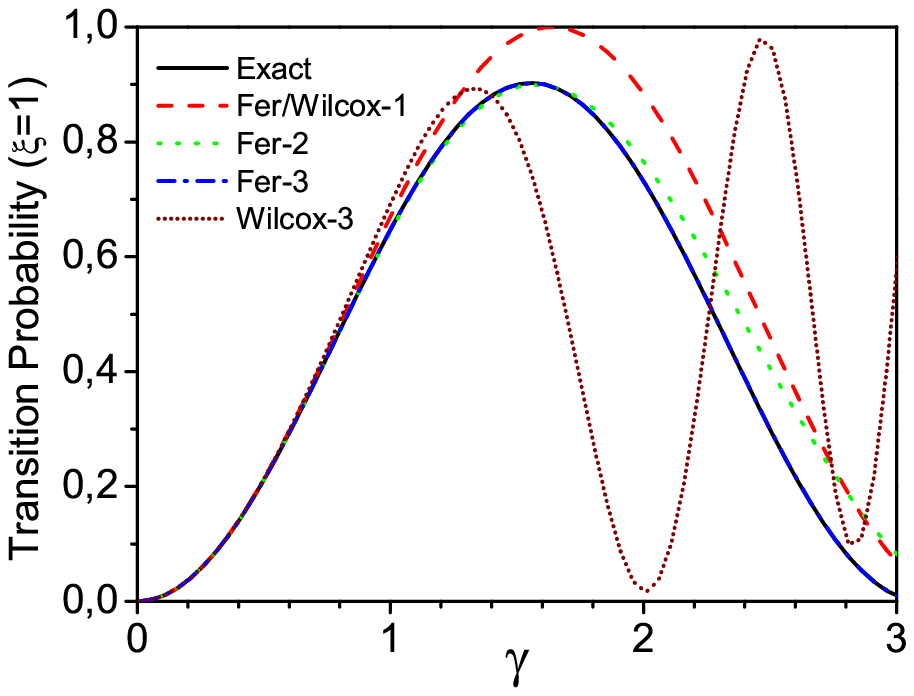}
 \caption{Rectangular step: Transition probability as a function of $\gamma$, with $\xi=1$.
  Lines are coded as in Figure \ref{plot:FW1}.
 Computations up to third order, in the Interaction Picture.} \label{plot:FW3}
 \end{center}
\end{figure}

\subsection{Voslamber iterative method}\label{Ex:IME}

Just to keep the same structure as in preceding subsections, we
mention that the Voslamber iterative method of subsection \ref{Voslamber}
also yields the exact solution for the
linearly driven harmonic oscillator after computing the second
iteration.

Next, for the two-level system with a rectangular step
described by the Hamiltonian (\ref{eq:Hround}) we compute the
second iterate $\Om^{(2)}$ and compare
with second order ME approximation for $U_I$.  As a  test, we
shall obtain again the transition probability $P(t)$
 given by (\ref{Pex_rect}). The
expression (\ref{Ppm}) will be calculated here on assuming: $U_I\simeq
\exp{\Om^{(1)}}   =\exp \Om_1$, $U_I\simeq \exp({\Om_1+\Om_2})$ and
$U_I\simeq \exp{\Om^{(2)} }$.

The  second order Magnus approximation to the transition probability
is given by (\ref{P4}), whereas the second iterate is obtained from
(\ref{vos.10}),
\begin{eqnarray}\label{eq:dressed2+}
  \Om^{(2)}(t)&=& -i\frac{\gamma}{\omega} \int_0^{\omega t} \{
  [\sin^2(\Delta)+\cos^2(\Delta)\cos{\xi}]\,\sigma_1 \nonumber \\
  &&+\cos^2(\Delta)\sin{\xi}\,\sigma_2
  - \sin(2\Delta)\sin(\xi /2)\,\sigma_3 \}\dd \xi,
\end{eqnarray}
where  $\Delta\equiv \frac{\gamma}{\omega}|\sin(\xi/2)|$, $\gamma
=V_0 t/\hbar$, $\xi =\omega t$. Since it does not seem possible to
derive an analytical expression for $\Om^{(2)} $, the corresponding
matrix elements have been computed by approximating the integral in
(\ref{eq:dressed2+}) with a sufficiently accurate quadrature.

In Figure \ref{plot:G}, the various approximated transition
probabilities  as well as the exact result (\ref{Pex_rect}) have
been plotted as a function of $\xi$ for a fixed value  $\gamma =1.5$.
We observe that the approximation from the second iterate keeps the
trend of the exact solution in a better way that the second order
Magnus approximation does.

In Figure \ref{plot:xi} we have plotted the corresponding transition
probabilities versus $\gamma$ for fixed value $\xi =1$. Although
locally the second order Magnus approximation may be more accurate,
it seems that the trend of the exact solution is mimicked for a
longer interval of $\gamma$.

\begin{figure}[htb]
\begin{center}
\includegraphics[width=14cm]{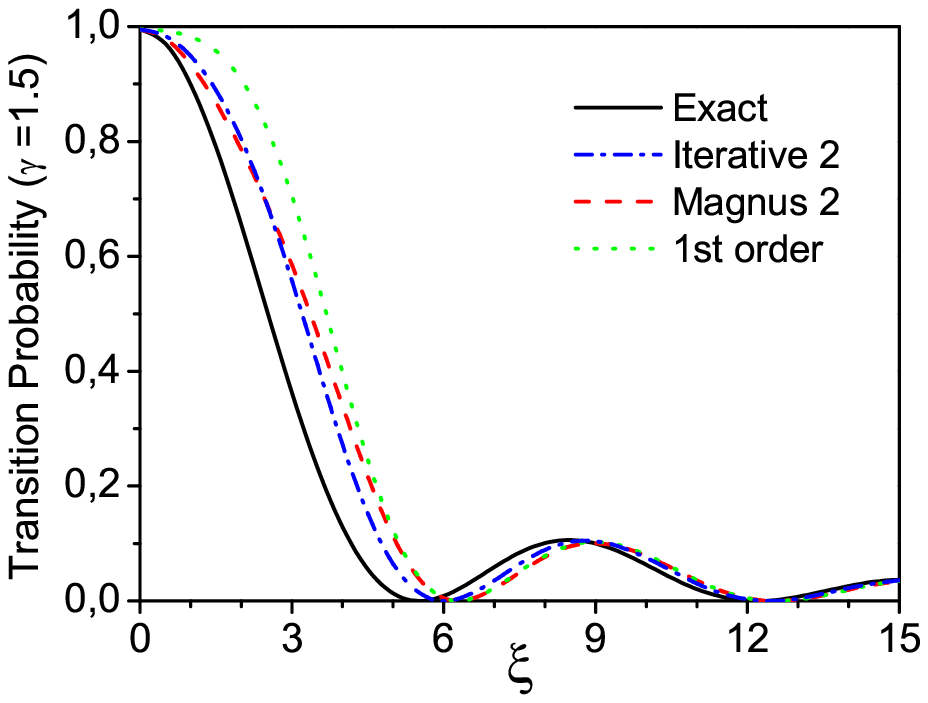}
 \caption{Rectangular step:
 Transition probability in the two level system as a function of $\xi$, for
 $\gamma =1.3$: Exact result (\ref{P_Sech}) (solid line), second iterate of
 Voslamber method
 (dashed line), second order  ME (dotted line) and
 first Voslamber iterate (or order in ME) (dash-dotted line).
 Computations are done in the Interaction Picture.} \label{plot:G}
 \end{center}
\end{figure}

\begin{figure}[tb]
\begin{center}
\includegraphics[width=14cm]{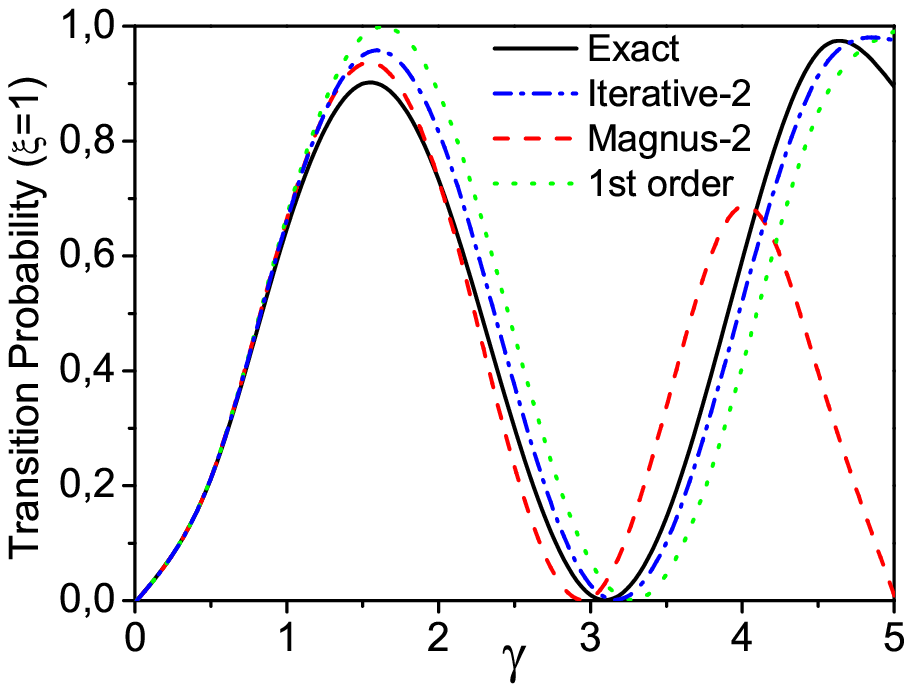}
 \caption{Rectangular step:
 Transition probability in the two level system as a function of $\gamma $, for
 $\xi =1$. Lines are coded as in Figure \ref{plot:G}.
 Computations are done in the Interaction Picture.} \label{plot:xi}
 \end{center}
\end{figure}

As it has already been pointed out above, the Magnus expansion works
the better the more sudden the perturbation. Thus, the re-summation
involved in the iterative method improves a bit that issue. Further
results on the present example may be found in \cite{oteo05iat}.

\subsection{Linear periodic systems: Floquet--Magnus formalism}
\label{Ex:FME}

Next we deal with a periodically driven harmonic oscillator, where
the infinite expansions obtained from Floquet--Magnus recurrences
(see subsection \ref{Floquet}), are utterly summed. Comparison with the exact
solution illustrates the feasibility of the method.

The particular system we consider is described by the Hamiltonian (\ref{eq:OA})
with $f(t)=\frac{\beta}{\sqrt{2}}
\cos \omega t$ and $\omega_0 < \omega$. Once the recurrences of the Floquet--Magnus expansion (see
section \ref{Floquet}) are explicitly computed for several orders, their
general term may be guessed by inspection. For the Floquet operator
we get
\begin{equation}
F=-i\left[ \frac{\omega _{0}}{2}\left( p^{2}+q^{2}\right) -\beta \frac{%
\omega _{0}}{\omega }\sum_{k=0}^{\infty }\left( \frac{\omega _{0}}{\omega }%
\right) ^{k}q+\beta ^{2}\frac{\omega _{0}}{4\omega
^{2}}\sum_{k=0}^{\infty }\left( 2k+1\right) \left( \frac{\omega
_{0}}{\omega }\right) ^{k}\right] ,
\end{equation}
and the associated transformation results from
\begin{eqnarray}  \label{eq:FLambda}
\Lambda (t) & = & i \beta ^{2}\frac{\omega _{0}}{\omega ^{3}}\left[ \sin
\left(
\omega t\right) \sum\limits_{k=0}^{\infty }\left( k+1\right) \left( \frac{%
\omega _{0}}{\omega }\right) ^{k}-\frac{\omega t}{2}\sum\limits_{k=0}^{%
\infty }\left( \frac{\omega _{0}}{\omega }\right) ^{k}\right] -  \nonumber \\
&  & i\left[ \sin \left( \omega t\right) q+\omega _{0}\left( \cos
\left( \omega
t\right) -1\right) p\right] \frac{\beta }{\omega }\sum\limits_{k=0}^{%
\infty }\left( \frac{\omega _{0}}{\omega }\right) ^{k}.
\end{eqnarray}
The resulting series may be summed in closed form thus yielding the
Floquet operator
\begin{equation}
F=-i\frac{\omega _{0}}{2}\left[ \left( q-\frac{\beta }{\omega
_{0}(\rho ^{2}-1)}\right) ^{2}+p^{2}\right] -i\frac{\beta
^{2}}{4\omega _{0}\left( \rho ^{2}-1\right) },
\end{equation}
with $\rho \equiv \omega /\omega _{0}$. Its eigenvalues are the
so-called \textit{Floquet eigenenergies} \cite{lefebvre97thf}
\begin{equation}
E_{n}=\omega _{0}\left( n+\frac{1}{2}\right) +\frac{\beta
^{2}}{4\omega _{0}\left( \rho ^{2}-1\right) }.
\end{equation}
The corresponding $\Lambda $ transformation after summation of the
series in (\ref{eq:FLambda}) is
\begin{equation}
\Lambda (t)=i\frac{\beta /\omega _{0}}{1-\rho ^{2}}\left[ \left(
\rho \sin
\omega t\right) q+(\cos \omega t-1)p+\left( \frac{2\rho ^{2}}{1-\rho ^{2}}%
+\cos \omega t\right) \frac{\beta \sin \omega t}{4\omega }\right] .
\end{equation}
Notice that, as they should, both operators are skew-Hermitian and
reproduce the exact solution of the problem.



\section{Numerical integration methods based on the Magnus expansion}
\label{section5}

\subsection{Introduction}

 The Magnus expansion as formulated in section 2 has found extensive use in
mathematical physics, quantum chemistry, control theory, etc,
essentially as a perturbative tool in the treatment of the linear
equation
\begin{equation}    \label{NI.1}
    Y^\prime = A(t) Y, \qquad Y(t_0) = Y_0.
\end{equation}
When the recurrence (\ref{eses})-(\ref{omegn}) is applied, one is
able to get explicitly the successive terms $\Omega_k$ in the
series defining $\Omega$ as linear combinations of multivariate
integrals containing commutators acting iteratively on the
coefficient matrix $A$, as in (\ref{O1})-(\ref{O4}). As a result,
with this scheme analytical approximations to the exact solution
are constructed explicitly. These approximate solutions are fairly
accurate inside the convergence domain, especially when high order
terms in the Magnus series are taken into account, as illustrated
by the examples considered in section \ref{section4}.

There are several drawbacks, however, involved in the procedure
developed so far, especially when one tries to find accurate approximations to the
solution for very long times. The first one is implicitly contained in the
analysis done in section \ref{section2}: the size of the
convergence domain of the Magnus series may be relatively small. The logarithm of the 
exact solution $Y(t)$ may have complex singularities and this implies that no series
expansion can converge beyond the first singularity. This
disadvantage may be up to some point avoided  by using different
pictures, e.g. the transformations shown in section \ref{PLT}, in
order to increase the convergence domain of the Magnus expansion, a fact
also illustrated by several examples in section 4.
Unfortunately, these preliminary transformations sometimes 
either do not guarantee convergence or the rate of convergence of
the series is very slow. In that case, accurate results can only
be obtained provided a large number of terms in the series are taken
into account.

The second drawback is the increasingly complex structure of the
terms $\Omega_k$ in the Magnus series: each $\Omega_k$ is a
$k$-multivariate integral involving a linear combination of
$(k-1)$-nested commutators of $A$ evaluated at different times
$t_i$, $i=1, \ldots, k$.  Although in some cases these expressions
can be computed explicitly (for instance, when the elements of $A$
and its commutators are polynomial or trigonometric functions), in
general a special procedure has to be designed to approximate
multivariate integrals and reduce the number of commutators
involved.

When the entries of the coefficient matrix $A(t)$ are complicated
functions of time or they are only known for certain values of
$t$, numerical approximation schemes are unavoidable. In many
cases it is thus desirable to obtain just numerical approximations to
the exact solution at many different times. This section is
devoted precisely to the Magnus series expansion as a tool to build
numerical integrators for equation (\ref{NI.1}).

Before
embarking ourselves in exposing the technical details contained in
this construction, let us first introduce several concepts which are commonplace
in the context of the numerical integration of differential equations.

 Given the general
(nonlinear) ordinary differential equation (ODE)
\begin{equation}\label{nde.1}
  {\bf x}' = {\bf f}(t,{\bf x}), \qquad \qquad {\bf x}(t_0)={\bf
  x}_0 \in \mathbb{C}^d,
\end{equation}
standard numerical integrators, such as Runge--Kutta and multistep methods,
proceed as follows. First the whole time interval $[t_0,t_f]$,  is split into $N$ subintervals,
$[t_{n-1},t_n], \ n=1,\ldots,N$, with $t_N=t_f$, and subsequently the value of
${\bf x}(t_n)$ is approximated with a time-stepping advance procedure of the form
\begin{equation}   \label{nde.2}
   \mathbf{x}_{n+1} = \Phi(h_n,\mathbf{x}_n,\ldots,h_0,\mathbf{x}_0)
\end{equation}
starting from $\mathbf{x}_0$. Here the map $\Phi$ depends on the
specific numerical method and $h_n=t_{n+1}-t_n$ are the time
steps. For simplicity in the presentation we consider a constant
time step $h$, so that $t_n=t_0+n \, h$. In this way one gets ${\bf
x}_{n+1}$ as an approximation to ${\bf x}(t_{n+1})$. In other
words, the exact evolution of the system (\ref{nde.1}) is replaced
by the discrete or numerical flow (\ref{nde.2}). The simplest of
all numerical methods for (\ref{nde.1}) is the \emph{explicit
Euler scheme}
\begin{equation}   \label{nde.3}
     \mathbf{x}_{n+1} = \mathbf{x}_n + h \, {\bf f}(t_n,{\bf x}_n).
\end{equation}
It computes approximations $\mathbf{x}_n$ to the values
$\mathbf{x}(t_n)$ of the solution using one explicit evaluation of
$\mathbf{f}$ at the already computed value $\mathbf{x}_{n-1}$.
 In general, the
numerical method (\ref{nde.2}) is said to be of order $p$ if,
assuming ${\bf x}_n= {\bf x}(t_n)$, then  ${\bf x}_{n+1}={\bf
x}(t_{n+1})+\mathcal{O}(h^{p+1})$. Thus, in particular, Euler
method is of order one.

Of course, more elaborate and efficient general purpose
algorithms, using several $\mathbf{f}$ evaluations per step, have
been proposed along the years for the numerical treatment of
equation (\ref{nde.1}). In fact, any standard software package and program
library contains dozens of routines aimed to provide numerical
approximations with several degrees of accuracy, including
(explicit and implicit) Runge--Kutta methods, linear multistep
methods, extrapolation schemes, etc, with fixed or adaptive step
size. They are designed in such a way that the user has to provide
only the initial condition and the function $\mathbf{f}$ to obtain
approximations at any given time.

This being the case, one could ask the following question:
if general purpose integrators are widely available for the
integration of the linear equation (\ref{NI.1}) (which is a particular
case of (\ref{nde.1})), what is the point in designing new and somehow
sophisticated algorithms for this specific problem?

It turns out that in the same way as for classical
time-dependent perturbation
theory, the qualitative properties of the exact
solution are not preserved by the numerical approximations
obtained by standard integrators.
 This motivates the study of the Magnus  expansion
with the ultimate goal of constructing numerical integration methods. We will
show that highly accurate schemes can be indeed designed, which
in addition
preserve qualitative properties of the system. The procedure
can also incorporate the
tools developed for the analytical treatment, such as preliminary
linear transformations, to end up with improved numerical
algorithms.

We first summarize the main features of the well know class of
Runge--Kutta methods as representative of integrators of the form
(\ref{nde.2}). They are introduced for the general nonlinear ODE
(\ref{nde.1}) and subsequently adapted to the linear case
(\ref{NI.1}).

\subsection{Runge--Kutta methods}
  \label{secRK}

The  Runge--Kutta (RK) class of methods are possibly the most
frequently used algorithms for numerically solving ODEs.
Among them, perhaps the most successful
during more than half a century has been the 4th-order method, which applied
to equation (\ref{nde.1}) provides the following numerical
approximation for the integration step $t_n \mapsto t_{n+1} = t_n + h$:
\begin{eqnarray}  \label{RK4}
 {\bf Y}_{1} & = & {\bf x}_{n}   \nonumber  \\
 {\bf Y}_{2} & = &
    {\bf x}_{n} + \frac{h}{2} {\bf f}(t_{n},{\bf Y}_{1})
      \nonumber  \\
 {\bf Y}_{3} & = &
    {\bf x}_{n} + \frac{h}{2} {\bf f}(t_{n}+\frac{h}{2},{\bf Y}_{2}) \\
 {\bf Y}_{4} & = &
    {\bf x}_{n} + h {\bf f}(t_{n}+\frac{h}{2},{\bf Y}_{3})  \nonumber  \\
 {\bf x}_{n+1} & = & {\bf x}_{n} +
 \frac{h}{6} \left( {\bf f}(t_{n},{\bf Y}_{1}) +
         2{\bf f}(t_{n}+\frac{h}{2},{\bf Y}_{2}) +
         2{\bf f}(t_{n}+\frac{h}{2},{\bf Y}_{3}) +
         {\bf f}(t_{n}+h,{\bf Y}_{4}) \right).      \nonumber
\end{eqnarray}

Notice that the function ${\bf f}$ can always be computed
explicitly because each ${\bf Y}_{i}$ depends only on the ${\bf
Y}_{j}, \ j<i$, previously evaluated.
 To measure the computational cost
of the method, it is usual to consider that the evaluation of the
function ${\bf f}(t,{\bf x})$ is the most consuming part. In this
sense, scheme (\ref{RK4})  requires four evaluations, which is
precisely the number of \emph{stages} (or inner steps) in the algorithm.

The general class of $s$-stage Runge--Kutta methods
are characterized by the real numbers $a_{ij}$,
$b_i$  ($i,j=1, \ldots, s$) and $c_i = \sum_{j=1}^s a_{ij}$, as
\begin{eqnarray}
 {\bf Y}_{i} & = & {\bf x}_{n} +
 h \sum_{j=1}^{s}a_{ij} \, {\bf f}(t_{n}+c_{j}h,{\bf Y}_{j}) ,
  \qquad i=1,\ldots,s  \nonumber  \\
 {\bf x}_{n+1} & = & {\bf x}_{n} +
 h \sum_{i=1}^{s}b_{i} \, {\bf f}(t_{n}+c_{i}h,{\bf Y}_{i}) ,
  \label{RK-general}
\end{eqnarray}
where ${\bf Y}_i$, $ i=1,\ldots,s$ are the intermediate stages.
For simplicity, the associated coefficients are usually displayed
with the so-called \emph{Butcher tableau}
\cite{butcher87tna,hairer93sod} as follows:
\begin{equation}    \label{RK-tablero}
\begin{array}{c|ccccc}
 c_1 &   a_{11} &  & \ldots &  & a_{1s} \\
 \vdots & \vdots &  &           &   &  \vdots \\
 c_s &   a_{s1} &  & \ldots &  & a_{ss} \\
 \hline
 &           b_1    &    &  \ldots &  &  b_s  \end{array}
\end{equation}
If $a_{ij}=0$, $j\geq i$, then the intermediate stages ${\bf Y}_i$ can be evaluated
recursively and the method is explicit. In that case the zero $a_{ij}$
coefficients (in the upper triangular part of the tableau) are omitted
for clarity. With
this notation, `the' 4th-order Runge--Kutta method (\ref{RK4}) can be expressed as
\begin{equation}  \label{RK4-tablero}
\begin{array}{c|cccc}
 0 &   & & & \\
  \frac{1}{2} &  \frac{1}{2} & & & \\
  \frac{1}{2} &   0 & \frac{1}{2} & &  \\
 1 &   0 & 0 & 1 &    \\
\hline  & \ \frac{1}{6} \ & \ \frac{2}{6} \ & \ \frac{2}{6} \ &
 \ \frac{1}{6}
\end{array}
\end{equation}
 Otherwise, the scheme is implicit and requires to
numerically solve a system of $s \, d$ nonlinear equations
of the form
\begin{equation}\label{implicit}
  {\bf y} = {\bf X}_{n} + h \, {\bf G}(h,{\bf y}),
\end{equation}
 where ${\bf y}=({\bf Y}_1,\ldots,{\bf Y}_s)^T,{\bf X}_n=({\bf x}_n,\ldots,{\bf x}_n)^T
 \in \mathbb{R}^{sd}$, and ${\bf G}$ is a function which depends on the method. A standard
 procedure to get $\mathbf{x}_{n+1}$ from
(\ref{implicit}) is applying simple iteration:
\begin{equation}\label{implicit_recurs}
  {\bf y}^{[j]} = {\bf X}_{n} +
  h \, {\bf G}(h,{\bf y}^{[j-1]} ), \qquad
  j=1,2,\ldots
\end{equation}
When $h$ is sufficiently small, the iteration starts with ${\bf
y}^{[0]}={\bf X}_{n}$ and stops once $ \| {\bf y}^{[j]}- {\bf
y}^{[j-1]}\| $ is smaller than a prefixed tolerance. Of course,
more sophisticated techniques can be used \cite{hairer93sod}.

After these general considerations, let us turn our attention now
to the linear equation (\ref{NI.1}).
When dealing with numerical methods applied to this
 equation, it is important to keep in
mind that the relevant small parameter here is no longer the norm of
the matrix $A(t)$ as in the analytical treatment, but the time step
$h$. For this reason, the concept of order of accuracy of a method is
different in this context. Now ${\bf f}(t,{\bf x})=
A(t){\bf x}$ and the general class of $s$-stage Runge--Kutta
methods adopt the more compact form
\begin{eqnarray}
 {\bf Y}_{i} & = & {\bf x}_{n} +
 h \sum_{j=1}^{s}a_{ij}A_j{\bf Y}_{j} ,
  \qquad i=1,\ldots,s  \nonumber  \\
 {\bf x}_{n+1} & = & {\bf x}_{n} +
 h \sum_{i=1}^{s}b_{i}A_i{\bf Y}_{i},
  \label{RK-lineal}
\end{eqnarray}
with $A_i=A(t_n+c_ih)$. In terms of matrices, this is equivalent to
\begin{eqnarray}\label{maRKs}
   \left\{
  \begin{array}{c}
   {\bf Y}_1 \\
   \vdots \\
   {\bf Y}_s
  \end{array} \right\} & = &
   \left\{
  \begin{array}{c}
   {\bf x}_n \\
   \vdots \\
   {\bf x}_n
  \end{array} \right\} +
  h \widetilde{A}
   \left\{
  \begin{array}{c}
   {\bf Y}_1 \\
   \vdots \\
   {\bf Y}_s
  \end{array} \right\}, \qquad \mbox{with} \qquad
    \widetilde{A} = \left(
  \begin{array}{ccc}
   a_{11}A_1 & \cdots & a_{1s}A_s \\
   \vdots &  &  \vdots  \\
   a_{s1}A_1 & \cdots & a_{ss}A_s
  \end{array} \right)               \nonumber  \\
  {\bf x}_{n+1} & = & {\bf x}_n +
  h \left(
  \begin{array}{ccc}
   b_{1}A_1 \  & \cdots &\ b_{s}A_s
  \end{array} \right)
   \left\{
  \begin{array}{c}
   {\bf Y}_1 \\
   \vdots \\
   {\bf Y}_s
  \end{array} \right\},
\end{eqnarray}
so that the application of the method for the integration step
$t_n \mapsto t_{n+1} = t_n + h$ can also be written as
\begin{equation}\label{}
  {\bf x}_{n+1} =  \left( I_d +
  h \left(
  \begin{array}{ccc}
   b_{1}A_1 \  & \cdots \  & b_{s}A_s
  \end{array} \right)
   \left( I_{sd} - h \widetilde{A}   \right)^{-1} \mathbb{I}_{sd\times d}
    \right){\bf x}_n,
\end{equation}
where $\mathbb{I}_{sd\times d}=(I_d,I_d,\ldots,I_d)^T$ and $I_d$
is the $d\times d$ identity matrix. For instance, taking $s=2$ and
using Matlab this can be easily implemented as follows
\[
\begin{array} {l}
 A = [ a11*A1 \ \ \  a12*A2 \ ; \ a21*A1 \ \ \  a22*A2 ];  \\
 x = (Id + h*[b1*A1 \ \ \ b2*A2]*((I2d - h*A)\backslash [Id \ \ \ Id]'))*x;
\end{array}
\]
where $b1,b2,a11,\ldots,a22$ are the coefficients of the method,
$A1,A2$ correspond to $A_1,A_2$ and $Id$, $I2d$ are the identity
matrices of dimension $d$ and $2d$, respectively.

  There exists an extensive literature about RK methods built for
many different purposes \cite{butcher87tna,hairer93sod}. It is
therefore reasonable to look for the most appropriate scheme to be
used for each problem. In practice, explicit RK methods are
preferred, because its implementation is usually simpler. They
typically require more stages than implicit methods and, in
general, a higher number of evaluations of the matrix $A(t)$,
although this is not always the case. For instance, the 4-stage
fourth order method (\ref{RK4}) only requires two new evaluations
of $A(t)$ per step ($A(t_n + h/2)$ and $A(t_n + h)$), which is
precisely the minimum number of evaluations needed by any fourth
order method.  In our
numerical examples we will also use the 7-stage sixth order method
with coefficients \cite[p. 203-205]{butcher87tna}
\begin{equation}\label{RK6-tablero}
 \begin{array}{c|ccccccc}
0 &  &  &  &  &  &  &  \\
\frac{5-\sqrt{5}}{10} & \frac{5-\sqrt{5}}{10} &  &  &  &  &  &  \\
\frac{5+\sqrt{5}}{10} & \frac{-\sqrt{5}}{10} &
\frac{5+2\sqrt{5}}{10} &  &
&  &  &  \\
\frac{5-\sqrt{5}}{10} & \frac{-15+7\sqrt{5}}{20} &
\frac{-1+\sqrt{5}}{4}
& \frac{15-7\sqrt{5}}{10} &  &  &  &  \\
\frac{5+\sqrt{5}}{10} & \frac{5-\sqrt{5}}{60} & 0 & \frac{1}{6} & \frac{%
15+7\sqrt{5}}{60} &  &  &  \\
\frac{5-\sqrt{5}}{10} & \frac{5+\sqrt{5}}{60} & 0 &
\frac{9-5\sqrt{5}}{12}
& \frac{1}{6} & \frac{-5+3\sqrt{5}}{10} &  &  \\
1 & \frac{1}{6} & 0 & \frac{-55+25\sqrt{5}}{12} &
\frac{-25-7\sqrt{5}}{12}
& 5-2\sqrt{5} & \frac{5+\sqrt{5}}{2} &   \\
\hline  & \frac{1}{12} & 0 & 0 & 0 & \frac{5}{12} & \frac{5}{12} &
\frac{1}{12}
\end{array}
\end{equation}
Observe that this method only requires three new evaluations of
the matrix $A(t)$ per step. This is, in fact, the minimum number
for a sixth-order method. Other implicit RK schemes widely used in the
literature involving the minimum number of stages at each order are
based on Gauss--Legendre collocation points \cite{hairer93sod}.
For instance, the corresponding
methods of order four and six (with two and three stages, respectively) have
the following coefficients:
\begin{equation}\label{RKGL46-tablero}
\begin{array}{c|cc}
 \frac{3-\sqrt{3}}{6} &   \frac14 & \frac{3-2\sqrt{3}}{12} \\
 \frac{3+\sqrt{3}}{6} & \frac{3+2\sqrt{3}}{12} & \frac14  \\
\hline  & \frac12  &  \frac12
\end{array}
 \qquad \qquad
\begin{array}{c|ccccc}
 \frac{5-\sqrt{15}}{10} & \frac{5}{36}  &   \;  &  \frac29 - \frac{\sqrt{15}}{15} &  \; &
  \frac{5}{36} - \frac{\sqrt{15}}{30}   \\
 \frac12 & \frac{5}{36} + \frac{\sqrt{15}}{24}  & \; &  \frac29 &  \; &
  \frac{5}{36} + \frac{\sqrt{15}}{24}  \\
 \frac{5+\sqrt{15}}{10} & \frac{5}{36} + \frac{\sqrt{15}}{30}  &  \; &
     \frac29 + \frac{\sqrt{15}}{15} &  \; &  \frac{5}{36} \\
\hline  & \frac{5}{18}   & &  \frac{4}{9}  &  &  \frac{5}{18}
\end{array}
\end{equation}

\subsection{Preservation of qualitative properties}
   \label{Qualitproperties}

Notice that the numerical solution provided by the class of
Runge--Kutta schemes, and in general by an integrator of the form
(\ref{nde.2}), is constructed as a sum of vectors in
$\mathbb{R}^d$. Let us point out some (undesirable) consequences of this
fact. Suppose, for instance, that $\mathbf{x}$ is a vector known
to evolve on a sphere. One does not expect that $\mathbf{x}_n$
built as a sum of two vectors as in (\ref{nde.2}) preserves this feature of the exact
solution, whereas approximations of the form $\mathbf{x}_{n+1} =
Q_n \mathbf{x}_n$, with $Q_n$ an orthogonal matrix, clearly lie
on the sphere.

In section \ref{NL-Magnus} we have introduced isospectral flows
(\ref{nlm5}), which include as a particular case the system
\begin{equation}   \label{nde.4}
   Y^\prime = [A(t),Y],   \qquad Y(t_0) = Y_0,
\end{equation}
with $A$ a skew-symmetric matrix.
As we have shown there, if $Y_0$ is a symmetric matrix, the exact solution can be factorized
as $Y(t) = Q(t) Y_0 Q^T(t)$, with $Q(t)$ an orthogonal matrix satisfying the equation
\begin{equation}   \label{nde.5}
    Q^\prime = A(t) Q, \qquad Q(0) = I
\end{equation}
and, in addition, $Y(t)$ and $Y(0)$ have the same eigenvalues.
This is also true when $Y$ is Hermitian, $A$ is skew-Hermitian and
$Q$ is unitary, in which case (\ref{nde.4}) and (\ref{nde.5})
can be interpreted as particular examples of
the Heisenberg and Schr\"odinger equations in
Quantum Mechanics, respectively.

When a numerical scheme of the form (\ref{nde.2}) is applied to
(\ref{nde.5}), in general, the approximations $Q_n$ will no longer
be unitary matrices and therefore $Y_n$ and $Y_0$ will not be unitarily
similar. As a result, the isospectral character of the system
(\ref{nde.4}) is lost in the numerical description. Observe that
explicit Runge--Kutta methods employ the ansatz that locally the
solution of the differential equation behaves like a polynomial in
$t$, so that one cannot expect that the approximate solution be a
unitary matrix. In this sense, explicit Runge--Kutta methods
present the same drawbacks in the numerical analysis of differential
equations as the standard time-dependent
perturbation theory when looking for analytical approximations.

Implicit Runge--Kutta methods, on the other hand, can be
considered as rational approximations,
and in some
cases the outcome they provide is a unitary matrix. For this class
of methods, however, matrices of
relatively large sizes have to be inverted, making the algorithms
computationally expensive. Furthermore, the evolution of many
systems (including highly oscillatory problems)
can not be efficiently approximated neither by
polynomial nor rational approximations.

With all these considerations in mind, a pair of questions arise in a quite natural way:
\begin{itemize}
 \item[(Q1)] Is it possible
to design numerical integration methods for equation (\ref{nde.1})
such that the corresponding
numerical approximations still preserve the main qualitative features
of the exact solution?
 \item[(Q2)] Since the Magnus expansion constitutes an extremely useful
 procedure for
 obtaining analytical (as opposed to numerical) approximate solutions to
 the linear evolution equation (\ref{NI.1}), is it feasible to construct
 efficient numerical integration schemes from the general formulation
 exposed in section \ref{section2}?
\end{itemize}

It turns out that both questions can be answered in the affirmative.
As a matter of fact, it has been trying to address (Q1) how the
field of \emph{geometric numerical integration} has been developed
during the last years. Here the aim is to reproduce the
qualitative features of the solution of the differential equation
which is being discretised, in particular its geometric
properties. The motivation for developing such
structure-preserving algorithms arises independently in areas of
research as diverse as celestial mechanics, molecular dynamics,
control theory, particle accelerators physics, and numerical
analysis
\cite{hairer06gni,iserles00lgm,leimkuhler04shd,mclachlan02sm,mclachlan06gif}.

Although diverse, the systems appearing in these areas have
one important common feature. They all preserve some
underlying geometric structure which influences
the qualitative nature of the phenomena they produce.
In the field of geometric numerical integration these properties are built into
the numerical method, which gives the method an improved
qualitative behaviour, but also allows for a significantly more
accurate long-time integration than with general-purpose methods.

In addition to the construction of new numerical algorithms,
an important aspect of geometric integration is the explanation
of the relationship between preservation of the geometric
properties of a numerical method and the observed favourable
error propagation in long-time integration.

Geometric numerical integration has been an active and
interdisciplinary research area during the last decade, and
nowadays is the subject of intensive development. Perhaps the most
familiar examples of \emph{geometric integrators} are symplectic
integration algorithms in classical Hamiltonian dynamics,
symmetric integrators for reversible systems and methods
preserving first integrals and numerical methods on manifolds
\cite{hairer06gni}.

A particular class of geometric numerical integrators are the
so-called \emph{Lie-group methods}. If the matrix $A(t)$ in
(\ref{NI.1}) belongs
to a Lie algebra $\mathfrak{g}$, the aim of Lie-group methods is
to construct numerical solutions staying in the corresponding Lie
group $\mathcal{G}$ \cite{iserles00lgm}.

With respect to question (Q2) above, it will be the subject of the
next subsection, where the main issues involved in the
construction of numerical integrators based on the Magnus
expansion are analyzed. The methods thus obtained preserve the
qualitative properties of the system and in addition are highly
competitive with other more conventional numerical schemes with
respect to accuracy and computational effort. They constitute, therefore, clear
examples of Lie-group methods.

\subsection{Magnus integrators for linear systems}
  \label{Mintegrators}

Since the Magnus series only converges locally, as we have pointed out before,
when the length of
the time-integration interval exceeds the bound provided by
Theorem \ref{conv-mag}, and in the spirit of any numerical
integration method, the usual procedure consists in dividing the
time interval $[t_0,t_f]$ into $N$ steps such that the Magnus series
converges in each subinterval $[t_{n-1},t_n]$, $n=1,\ldots,N$,
with $t_N = t_f$. In this way the solution of (\ref{NI.1}) at the
final time $t_f$ is represented by
\begin{equation}  \label{Mint.1}
  Y(t_N) =  \prod_{n=1}^N \exp(\Omega(t_n,t_{n-1})) \, Y_0,
\end{equation}
and the series $\Omega(t_n,t_{n-1})$ has to be appropriately truncated.

Early steps in this approach were taken in \cite{chan68imc} and
\cite{chang69eso}, where second and fourth order numerical schemes were
used for calculations of collisional inelasticity and potential
scattering, respectively. Those authors were well aware that the
resulting integration method ``would become practical only when
the advantage of being able to use bigger step sizes outweighs the
disadvantage in having to evaluate the integrals involved in the
Magnus series and then doing the exponentiation" \cite{chan68imc}.
Later on, by following a similar approach, Devries
\cite{devries85aho} designed a numerical procedure for determining
a fourth order approximation to the propagator employed in the
integration of the single channel Schr\"odinger equation, but it
was in the pioneering work \cite{iserles99ots} where Iserles and
N{\o}rsett carried out the first systematic study of the Magnus
expansion with the aim of constructing numerical integration
algorithms for linear problems. To design the new integrators, the
explicit time dependency of each term $\Omega_k$ had to be
analyzed, in particular its order of approximation in time to the
exact solution.

Generally speaking, the process of rendering the Magnus expansion
a practical numerical integration algorithm involves three steps.
First, the $\Omega$ series is truncated at an appropriate order.
Second, the multivariate integrals in the truncated series
$\Omega^{[p]} = \sum_{i=1}^p \Omega_i$
are replaced by conveniently chosen approximations.
Third, the exponential of the matrix $\Omega^{[p]}$ has to be
computed. We now briefly consider the first two issues, whereas
the general problem of evaluating the matrix exponential will be treated in section
\ref{exponen}.

We have shown in subsection \ref{time-symmetry} that the Magnus
expansion is time symmetric. As a consequence of eq. (\ref{t-s6}),
if $A(t)$ is analytic and one evaluates  its Taylor series centered
around the midpoint of a particular subinterval $[t_n, t_n+h]$,
then each term in $\Omega_k$ is an odd function of
$h$, and thus $\Omega_{2s+1} = \mathcal{O}(h^{2s+3})$ for $s \geq
1$. Equivalently, $\Omega^{[2s-2]} =
 \Omega + \mathcal{O}(h^{2s+1})$ and $\Omega^{[2s-1]} = \Omega +
\mathcal{O}(h^{2s+1})$. In other words, for achieving an
integration method of order  $2s$ ($s>1$)  only terms up to
$\Omega_{2s-2}$ in the $\Omega$ series are required
\cite{blanes00iho,iserles99ots}. For this reason, in general, only
even order methods are considered.

Once the series expansion is truncated up to an appropriate order,
the multidimensional integrals involved have to be computed or at
least conveniently approximated. Although at first glance this seems to
be a quite difficult enterprise, it turns out that their very structure allows
one to approximate all the multivariate integrals appearing in $\Omega$
just by evaluating $A(t)$ at the nodes of  a
univariate quadrature \cite{iserles99ots}.

To illustrate how this task can be accomplished, let us expand the
matrix $A(t)$ around the midpoint, $t_{1/2} \equiv t_n + h/2$, of the subinterval
$[t_n,t_{n+1}]$,
\begin{equation}   \label{3.2.1}
    A(t) = \sum_{j=0}^{\infty}  a_j \left(t - t_{1/2} \right)^j, \qquad \mbox{ where } \quad
       a_j = \frac{1}{j!} \, \frac{d^j A(t)}{d t^j} \big|_{t=t_{1/2}},
\end{equation}
and insert
the series (\ref{3.2.1}) into the recurrence defining the Magnus expansion
(\ref{eses})-(\ref{omegn}). In this
way one gets explicitly the expression of $\Omega_k$ up to order $h^6$ as
\begin{eqnarray}  \label{3.2.2}
  \Omega_1 & = & h a_0 + h^3 \frac{1}{12} a_2 + h^5 \frac{1}{80} a_4 +
  \mathcal{O}(h^{7})  \nonumber
  \\
  \Omega_2 & = & h^3 \frac{-1}{12} [a_0,a_1] + h^5 \Big( \frac{-1}{80}
     [a_0,a_3] + \frac{1}{240} [a_1,a_2] \Big) + \mathcal{O}(h^{7})
    \nonumber  \\
  \Omega_3 & = & h^5 \Big( \frac{1}{360} [a_0,a_0,a_2] - \frac{1}{240}
       [a_1,a_0,a_1] \Big) +  \mathcal{O}(h^{7})
      \\
  \Omega_4 & = & h^5 \frac{1}{720} [a_0,a_0,a_0,a_1] + \mathcal{O}(h^{7}),
  \nonumber
\end{eqnarray}
whereas $\Omega_5  =  \mathcal{O}(h^{7})$, $ \Omega_6  =  \mathcal{O}(h^{7})$
and $\Omega_7  =  \mathcal{O}(h^{9})$.
Here we write for clarity $[a_{i_1},a_{i_2}, \ldots, a_{i_{l-1}},a_{i_l}]
\equiv [a_{i_1},[a_{i_2},[\ldots,[a_{i_{l-1}},a_{i_l}]\ldots]]]$.
Notice that, as anticipated, only odd powers of $h$ appear in
$\Omega_k$ and, in particular, $\Omega_{2i+1} =
\mathcal{O}(h^{2i+3})$ for $i>1$.

Let us denote $\alpha_i \equiv h^i a_{i-1}$. Then
$[\alpha_{i_1},\alpha_{i_2}, \ldots,
\alpha_{i_{l-1}},\alpha_{i_l}]$ is an element of order
$h^{i_1 + \cdots + i_l}$.  In fact, the matrices
$\{\alpha_i\}_{i \ge 1}$ can be considered as the generators (with
grade $i$) of a graded free Lie algebra
$\mathcal{L}(\alpha_1,\alpha_2,\ldots)$ \cite{munthe-kaas99cia}.
It turns out that it is possible to build methods of order $p
\equiv 2s$ by considering only terms involving
$\alpha_1,\ldots,\alpha_{s}$ in $\Omega$. Then, these terms can
be approximated  by appropriate linear combinations of the
matrix $A(t)$ evaluated at different points. In particular,
up to order four  we have to approximate
\begin{equation}  \label{eq.2.3a}
  \Omega  =  \alpha_1 - \frac{1}{12} [\alpha_1,\alpha_2]
       +\mathcal{O}(h^5),
\end{equation}
whereas up to order six the relevant expression is
\begin{eqnarray}
  \Omega & = & \alpha_1 +  \frac{1}{12} \alpha_3 - \frac{1}{12} [\alpha_1,\alpha_2]
   + \frac{1}{240} [\alpha_2,\alpha_3] + \frac{1}{360} [\alpha_1,\alpha_1,\alpha_3]
    \label{eq.2.3b}    \\
   &  & - \frac{1}{240} [\alpha_2,\alpha_1,\alpha_2] + \frac{1}{720} [\alpha_1,\alpha_1,\alpha_1,\alpha_2]
   +\mathcal{O}(h^7). \nonumber
\end{eqnarray}
In order to present methods which
can be easily adapted for different quadrature rules we introduce
the averaged (or generalized momentum) matrices
\begin{equation}  \label{eq.2.4}
 A^{(i)}(h) \equiv  \frac{1}{h^{i}} \int_{t_n}^{t_n+h} \, \left(t -
   t_{1/2} \right)^i A(t) dt =
    \frac{1}{h^i} \int_{-h/2}^{h/2} t^i  A(t+t_{1/2}) dt
\end{equation}
for $i=0, \ldots, s-1$. If their exact evaluation is not possible
or is computationally expensive, a numerical quadrature may be
used instead. Suppose that $b_i$, $c_i$, $( i=1,\ldots,k)$, are
the weights and nodes of a particular quadrature rule, say $X$ (we
will use $X=G$ for Gauss--Legendre quadratures and $X=NC$ for
Newton--Cotes quadratures) of order $p$, respectively
\cite{abramowitz65hom},
\[
  A^{(0)} = \int_{t_n}^{t_n+h} A(t) dt = h \sum_{i=1}^k b_i A_i +
    \mathcal{O}(h^{p+1}),
\]
with $A_i \equiv A(t_n+c_ih)$. Then it is possible to
approximate all the integrals $A^{(i)}$ (up to the required order) just by
using only the evaluations $A_i$ at the nodes $c_i$ of the
quadrature rule required to compute $A^{(0)}$. Specifically,
\begin{equation}   \label{ru1}
  A^{(i)}= h \sum_{j=1}^k b_j\left(c_j-\frac12\right)^{i} A_j.
    \qquad\quad  i=0,\ldots,s-1,
\end{equation}
or equivalently, $A^{(i)}  =  h  \sum_{j=1}^k
\left(Q^{(s,k)}_{X}\right)_{ij} A_{j}$
with
$\left(Q^{(s,k)}_{X}\right)_{ij}=b_j\left(c_j-\frac12\right)^{i}$.

In particular, if fourth and sixth order Gauss--Legendre quadrature
rules are considered, then for
$s=k=2$ we have \cite{abramowitz65hom}
\[
  b_1=b_2=\frac{1}{2}, \quad c_1= \frac{1}{2} - \frac{\sqrt{3}}{6}, \quad c_2 =
\frac{1}{2} + \frac{\sqrt{3}}{6},
\]
to order four, whereas for $s=k=3$,
\[
 b_1=b_3= \frac{5}{18}, \quad b_2=\frac{4}{9}, \quad c_1 = \frac{1}{2} - \frac{\sqrt{15}}{10},
  \quad c_2 = \frac{1}{2},  \quad c_3 = \frac{1}{2} + \frac{\sqrt{15}}{10},
\]
to order six, so that
\begin{equation}\label{Gauss-Legendre}
  Q_G^{(2,2)}=   \left( \begin{array}{cc}
   \frac12 & \  \  \frac12 \\
    -\frac{\sqrt{3}}{12} & \frac{\sqrt{3}}{12}
  \end{array} \right), \qquad
  Q_G^{(3,3)}=   \left( \begin{array}{ccc}
   \frac{5}{18} & \  \  \frac49 &  \  \   \frac{5}{18}\\
    -\frac{\sqrt{15}}{36} & 0 & \frac{\sqrt{15}}{36} \\
    \frac{1}{24} & 0 & \frac{1}{24}
  \end{array} \right). \qquad
\end{equation}
Furthermore, substituting (\ref{3.2.1}) into (\ref{eq.2.4}) we
find (neglecting higher order terms)
\begin{equation}   \label{eq.2.4b}
   A^{(i)} = \sum_{j=1}^s \left(T^{(s)}\right)_{ij}  \alpha_j
   \; \equiv  \;
   \sum_{j=1}^s  \frac{1-(-1)^{i+j}}{(i+j)2^{i+j}}  \alpha_j,
   \qquad 0 \le i \le  s-1.
\end{equation}
If this relation is inverted (to order four, $s=2$, and six,
$s=3$) one has
\begin{equation}\label{CambioBase}
  R^{(2)}\equiv (T^{(2)})^{-1} = \left( \begin{array}{cc}
   1 & \ \ 0 \\
   0 & \ \ 12
  \end{array} \right), \qquad
   R^{(3)}=  \left( \begin{array}{ccc}
   \frac{9}{4} & \ 0 &  \ -15 \\
   0 & \ 12 & \ 0 \\
   -15 & \ 0 & \ 180
  \end{array} \right)
\end{equation}
respectively, so that the corresponding expression of $\alpha_i$
in terms of $A^{(i)}$ or
$A_j$ is given by
\begin{equation}  \label{CBaseMu}
   \alpha_{i} =  \sum_{j=1}^s  \left( R^{(s)} \right)_{ij} A^{(j-1)}
     =   h \sum_{j=1}^k \left(  R^{(s)}Q_{X}^{(s,k)}  \right)_{ij}
                   A_{j}.
\end{equation}
Thus, by virtue of (\ref{CBaseMu}) we can write $\Omega(h)$ in
terms of the univariate integrals (\ref{eq.2.4}) or in terms of
any desired quadrature rule. In this way, one gets the final
numerical approximations to $\Omega$. Fourth and sixth-order
methods can be obtained by substituting in (\ref{eq.2.3a}) and
(\ref{eq.2.3b}), respectively. The algorithm then provides an
approximation for $Y(t_{n+1})$ starting from $Y_n \approx Y(t_n)$,
with $t_{n+1} = t_n + h$.

The observant reader surely has noticed that, up to order $h^6$,
there are more terms involved in (\ref{3.2.2}) than those
considered in (\ref{eq.2.3a}) or (\ref{eq.2.3b}): specifically,
$\frac{1}{12} \alpha_3$ in (\ref{eq.2.3a}) and $\frac{1}{80}
\alpha_5$ and $-\frac{1}{80} [\alpha_1, \alpha_4]$ in
(\ref{eq.2.3b}). The reason is that $\Omega^{[6]}$ can be
approximated by $A^{(0)},A^{(1)},A^{(2)}$ up to order $h^6$ and
then, these omitted terms are automatically reproduced when either
$A^{(0)},A^{(1)},A^{(2)}$ are evaluated analytically or are
approximated by any symmetric quadrature rule of order six or higher.

  Another important issue involved in any approximation based on the
Magnus expansion is the number
of commutators appearing in $\Omega$. As it is already evident from
(\ref{3.2.2}), this number rapidly increases with the order, and so it might
constitute a major factor in the overall computational cost of the resulting
numerical methods. It is possible,
however, to design an optimization procedure aimed to reduce this
number to a minimum \cite{blanes02hoo}. For
instance, a straightforward counting of the number of commutators
 in (\ref{eq.2.3b})  suggests
that it seems necessary to compute seven commutators up to order
six in $h$, whereas the general analysis carried out in
\cite{blanes02hoo} shows that this can be done with only three
commutators. More specifically, the scheme
\begin{eqnarray}  \label{aprox6}
   C_1 & = & [\alpha_1,\alpha_2] , \nonumber\\
   C_2 & = & -\frac{1}{60} [\alpha_1,2\alpha_3+C_1] \\
   \Omega^{[6]} & \equiv & \alpha_1 + \frac{1}{12}\alpha_3
   + \frac{1}{240}[-20\alpha_1 - \alpha_3 + C_1 , \alpha_2 + C_2], \nonumber
\end{eqnarray}
verifies that $\Omega^{[6]}=\Omega+\mathcal{O}(h^7)$. Three is in fact the minimum
number of commutators required to get a sixth-order approximation
to $\Omega$.

This technique to reduce the number of commutators is indeed valid
for any element in a graded free Lie algebra. It has been used, in
particular, to obtain approximations to the Baker--Campbell--Hausdorff
formula up to a given order with the
 minimum number of commutators \cite{blanes03ool}.

As an illustration, next we provide
the relevant expressions for integration schemes of order 4 and 6, which
readily follow from the previous analysis.

\noindent \textbf{Order 4}. Choosing the Gauss--Legendre quadrature rule,
one has to evaluate

\begin{equation}\label{AiG4}
     A_1 =  A(t_n + (\frac{1}{2} - \frac{\sqrt{3}}{6}) h), \qquad
     A_2 =  A(t_n + (\frac{1}{2} + \frac{\sqrt{3}}{6}) h)
\end{equation}
and thus, taking $Q_G^{(2,2)}$ in (\ref{Gauss-Legendre}),
$R^{(2)}$ in (\ref{CambioBase}) and substituting in
(\ref{CBaseMu}) we find
\begin{equation}\label{alfasG4}
\alpha_1=\frac{h}{2}(A_1+A_2), \qquad
\alpha_2=\frac{h\sqrt{3}}{12}(A_2-A_1).
\end{equation}
Then, by replacing in (\ref{eq.2.3a}) we obtain
\begin{equation}  \label{or4GL}
    \begin{array}{ccc}
       &     \Omega^{[4]}(h) =  \frac{h}{2} (A_1 + A_2) - h^2 
          \frac{\sqrt{3}}{12} [A_1, A_2] &   \\
       &   Y_{n+1} =  \exp( \Omega^{[4]}(h) ) Y_n.
    \end{array}   
\end{equation}
Alternatively, evaluating $A$ at equispaced points, with $k=3$ and
$c_1=0,c_2=1/2,c_3=1; \ b_1=b_3=1/6, b_2=2/3$ (i.e., using the
Simpson rule to approximate  $\int_{t_n}^{t_n+h}A(s)ds$),
\[
   A_1 = A(t_n), \qquad A_2 = A(t_n + \frac{h}{2}), \qquad
    A_3 = A(t_n + h)
\]
we have instead  $\alpha_1=\frac{h}{6}(A_1+4A_2+A_3)$, 
$ \; \alpha_2=h(A_3-A_1)$, and then
\begin{eqnarray}  \label{or4NC1}
      \Omega^{[4]}(h) & = & \frac{h}{6} (A_1 + 4A_2 + A_3) -
          \frac{h^2}{72} [A_1 + 4A_2 + A_3, A_3-A_1].
\end{eqnarray}
It should be noticed that other possibilities not directly obtainable from the previous
analysis are equally valid. For instance, one could consider
\begin{eqnarray}  \label{or4NC2}
      \Omega^{[4]}(h) & = & \frac{h}{6} (A_1 + 4A_2 + A_3) -
          \frac{h^2}{12} [A_1, A_3].
\end{eqnarray}
Although apparently more $A$ evaluations are necessary in (\ref{or4NC1}) and
(\ref{or4NC2}), this
is not the case actually, since $A_3$ can be reused at the next integration
step.

\noindent \textbf{Order 6}.
In terms of Gauss--Legendre collocation points one has
\[
 A_1 = A \big( t_n + (\frac{1}{2} - \frac{\sqrt{15}}{10}) h \big), \quad
 A_2 = A \big( t_n + \frac{1}{2} h \big), \quad
 A_3 = A \big( t_n + (\frac{1}{2} + \frac{\sqrt{15}}{10}) h \big)
\]
and similarly we obtain
\begin{equation}   \label{4.1.4}
  \alpha_1 = h A_2,  \quad \alpha_2 = \frac{\sqrt{15}h}{3} (A_3 - A_1), \quad
  \alpha_3 = \frac{10h}{3} (A_3 - 2 A_2 + A_1),
\end{equation}
which are then inserted in (\ref{aprox6}) to get the approximation
$Y_{n+1} = \exp( \Omega^{[6]} ) Y_n$.

 If the matrix
$A(t)$ is only known at equispaced points, we can use the
Newton--Cotes (NC) quadrature values with $s=3$ and $k=5$,
$b_1=b_5= 7/90, b_2=b_4=32/90,b_3=12/90$ and $c_j=(j-1)/4, \
j=1,\ldots,5$. Then, using the corresponding matrix
$Q_{NC}^{(3,5)}$ from (\ref{ru1}) we get
\begin{eqnarray}  \label{4.1.5}
   \alpha_1 & = & \frac{1}{60} \big( -7 (A_1 + A_5) + 28 (A_2 + A_4) +
        18 A_3 \big)  \nonumber  \\
   \alpha_2 & = & \frac{1}{15} \big( 7 (A_5 - A_1) + 16 (A_4 - A_2) \big) \\
   \alpha_3 & = & \frac{1}{3} \big( 7 (A_1 + A_5) - 4 (A_2 + A_4) -
        6 A_3 \big).  \nonumber
\end{eqnarray}
Both schemes involve the minimum number of commutators (three)
and require three or four evaluations of the matrix $A(t)$ per
integration step (observe that $A_5$ can be reused in the next step in
the Newton--Cotes implementation because $c_1=0$ and $c_5=1$).

Higher orders can be treated in a similar way. For instance, an
8th-order Magnus method can be obtained with only six commutators
\cite{blanes02hoo}. Also variable step size techniques can be
easily implemented \cite{blanes00iho,iserles99oti}.

\subsubsection{From Magnus to Fer and Cayley methods}

For
arbitrary matrix Lie groups it is feasible to design numerical
methods based also on the Fer and Wilcox expansions, whereas for
the $J$-orthogonal group (eq. (\ref{j-ortho})) the Cayley transform
also maps the Lie algebra onto the Lie group
\cite{postnikov94lga} and thus it allows us to build a new class of Lie-group
methods. Here we briefly show how these integration
methods can be easily constructed from the previous schemes based
on Magnus. In other words, if the solution of (\ref{NI.1}) in a
neighborhood of $t_0$ is written as
\begin{eqnarray}
Y(t_0 + h) & = & \e^{\Omega(h)} \,Y_0  \hspace*{5.3cm} \mbox{Magnus}  \label{eq.2} \\
& = & \e^{F_1(h)} \e^{F_2(h)}\cdots Y_0 \qquad \qquad \qquad
 \quad \qquad \mbox{Fer}
\label{eq.3} \\
& = & \e^{S_1(h)} \e^{S_2(h)} \cdots \e^{S_2(h)} \e^{S_1(h)} Y_0   \hspace*{1.75cm}
\mbox{Symmetric  Fer}  \label{eq.4} \\
 & = & \left( I - \frac{1}{2} C(h) \right)^{-1} \left( I + \frac{1}{2} C(h)
\right) Y_0  \hspace*{0.9cm}  \mbox{Cayley}  \label{eq.6}
\end{eqnarray}
one may express the functions $F_{i}(h)$, $S_{i}(h)$ and $C(h)$ in terms of
the successive approximations to $\Omega $ and, by using the same
techniques as in the previous section, obtain the new methods. As the
schemes based on the Wilcox factorization are quite similar as the
Fer methods, they will not be considered here.

\subsubsection{Fer based methods}

To obtain integration methods based on the Fer factorization
(\ref{eq.3}) one applies the Baker--Campbell--Hausdorff (BCH)
formula after equating to the Magnus expansion (\ref{eq.2}). More
specifically, in the domain of convergence of expansions
(\ref{eq.2}) and (\ref{eq.3}) we can write
\[
\e^{\Omega(h)} = \e^{F_1(h)} \, \e^{F_2(h)} \ \cdots,
\]
where $F_1 = \Omega_1$ is the first term in the Magnus series,
$F_2 = O(h^3)$ and $F_3 = O(h^7)$. Then, a $p$-th order
algorithm with $3\leq p\leq 6$, based on the Fer expansion
requires to compute $F_2^{[p]}$ such that
\begin{equation}  \label{eq.4.1.4}
  Y(t_n + h) = \e^{F_1(h)} \, \e^{F_2^{[p]}(h)} \, Y(t_n)
  + \mathcal{O}(h^{p+1}).
\end{equation}
Taking into account that
\[
  \e^{\Omega^{[p]}(h)} = \e^{F_1(h)} \, \e^{F_2^{[p]}(h)}
  + \mathcal{O}(h^{p+1}),
\]
we have ($F_1=\Omega_1$)
\[
  \e^{F_2^{[p]}(h)}= \e^{-\Omega_1(h)} \, \e^{\Omega^{[p]}(h)}
  + \mathcal{O}(h^{p+1}).
\]
Then, by using the BCH formula and simple algebra to remove higher
order terms we obtain to order four
\begin{equation}  \label{eq.4.1.2}
F_2^{[4]} =  - \frac{1}{12} \big( [\alpha_1,\alpha_2] -
\frac{1}{2} [\alpha_1,\alpha_1,\alpha_2] \big),
\end{equation}
 so that two
commutators are needed in this case. A sixth-order method can be
similarly obtained with four commutators \cite{blanes02hoo}. These
methods are slightly more expensive than their Magnus counterpart
and they do not preserve the time-symmetry of the exact solution.
This can be fixed by the self-adjoint version of the Fer
factorization in the form (\ref{eq.4}) proposed in
\cite{zanna01tfe} and presented in a more efficient way in
\cite{blanes02hoo}. The schemes based on (\ref{eq.4}) up to order
six are given by
\begin{equation}\label{sym-Fer-n}
  Y(t_n + h) = \e^{S_1(h)} \e^{S_2^{[p]}(h)} \e^{S_1(h)} Y_n
\end{equation}
with $S_1 = \Omega_1/2$. A fourth-order method is given by
\begin{equation}  \label{eq.4.2.3}
  S_2^{[4]}(h) = -\frac{1}{12} [\alpha_1,\alpha_2]
\end{equation}
\begin{equation}  \label{eq.4.2.4}
\end{equation}
and a sixth-order one by
\begin{eqnarray}  \label{eq.4.2.5}
  s_1 & = & [\alpha_1, \alpha_2] \nonumber  \\
  r_1 & = & \frac{1}{120} \, [\alpha_1, -4 \alpha_3 + 3 s_1]  \nonumber  \\
 S_2^{[6]}(h) & = & \frac{1}{240} \,
       \big[ -20 \alpha_1 - \alpha_3 + s_1 \ , \ \alpha_2 + r_1
       \big].
\end{eqnarray}
To complete the formulation of the scheme, the $\alpha_i$ have to
be expressed in terms of the matrices $A_i$ evaluated at the
quadrature points (e.g., equations (\ref{4.1.4}) or
(\ref{4.1.5})).

\subsubsection{Cayley-transform methods}

We have seen in subsection \ref{notations} that for the
$J$-orthogonal group $\mathrm{O}_J(n)$ the Cayley transform (\ref
{cayley1}) provides a useful alternative to the exponential
mapping relating the Lie algebra to the Lie group. This fact is
particularly important for numerical methods where the evaluation
of the exponential matrix is the most computation-intensive part
of the algorithm.

If the solution of eq. (\ref{NI.1}) is written as
\begin{equation}  \label{cayl2}
Y(t) = \left( I - \frac{1}{2} C(t) \right)^{-1} \left( I +
\frac{1}{2} C(t) \right) Y_0
\end{equation}
then $C(t) \in \mathrm{o}_J(n)$ satisfies the equation
\cite{iserles01oct}
\begin{equation}  \label{cayl3}
  C^\prime= A - \frac{1}{2} [C,A] - \frac{1}{4} CAC, \qquad t
\geq t_0, \qquad C(t_0) = 0.
\end{equation}
Time-symmetric methods of order 4 and 6 have been obtained based
on the Cayley transform (\ref{cayl2}) by expanding the solution of
(\ref{cayl3}) in a recursive manner and constructing quadrature
formulae for the multivariate integrals that appear in the
procedure \cite{diele98tct,iserles01oct,marthinsen01qmb}. It turns out that
efficient Cayley based methods can be built directly from Magnus
based integrators \cite{blanes02hoo}. In particular, we get:

 \noindent \textbf{Order 4}:
\begin{equation}  \label{eq.4.3.2}
C^{[4]} = \Omega^{[4]} \big( I - \frac{1}{12} (\Omega^{[4]})^2
\big) = \alpha_1 - \frac{1}{12} [\alpha_1,\alpha_2] - \frac{1}{12}
\alpha_1^3 + O(h^5),
\end{equation}
where $C^{[4]} =  C(h) + O(h^5)$.

 \vspace*{0.1cm}
 \noindent \textbf{Order 6}:
\begin{equation}  \label{eq.51b}
C^{[6]} = \Omega^{[6]} \left( I - \frac{1}{12}
  ( \Omega^{[6]})^2 \Big( I - \frac{1}{10}
     (\Omega^{[6]})^2 \Big) \right) = C(h) + O(h^7).
\end{equation}
Three matrix-matrix products are required in addition to the three
commutators involved in the computation of $\Omega^{[6]}$, for a
total of nine matrix-matrix products per step.

\subsection{Numerical illustration: the Rosen--Zener model revisited}
\label{niarz}

Next we apply the previous numerical schemes to the integration of
the differential equation governing the evolution of a particular quantum
two-level system. Our purpose here is to illustrate the main issues
involved and compare the different approximations obtained with both
the analytical treatment done in section \ref{section4} and the exact
result. Specifically, we consider
the Rosen--Zener model in the
Interaction Picture already analyzed in subsection  \ref{2L}. In this case
the equation to be solved is $U_I' = \tH_I(t) U_I$, or
equivalently, equation (\ref{NI.1}) with $Y(t) = U_I(t)$
and coefficient matrix ($\hbar = 1$)
\begin{equation}   \label{marz}
   A(t) = \tH_I(t) =  -i V(s) \big(\sigma_1 \cos(\xi s) - \sigma_2 \sin(\xi s)
  \big) \equiv  -i \, {\bf b}(s) \cdot \boldsymbol{\sigma}.
\end{equation}
Here $V(s)=V_0/\cosh(s)$,  $\xi =\omega T$ and $s = t/T$.
Given the initial condition $|+\rangle\equiv(1,0)^T$ at
$t=-\infty$, our purpose is to get an approximate value for the transition probability 
to the state $|-\rangle\equiv(0,1)^T$ at $t=+\infty$. Its exact expression is
collected in the first line of eq. (\ref{P_Sech}), which we reproduce here for
reader's convenience:
\begin{equation}\label{neptex}
  P_{ex} = |(U_I)_{12}(+\infty,-\infty)|^2=
  \frac{\sin^2\gamma}{\cosh^2(\pi\xi/2)},
\end{equation}
with $\gamma=\pi V_0T$.

To obtain in practice a numerical approximation to $P_{ex}$ we
have to integrate the equation in a sufficiently large time interval. We take the
initial condition at $s_0=-25$ and the numerical integration is
carried out until $s_f=25$. Then, we determine $(U_I)_{12}(s_f,s_0)$.

As a first numerical test we take a fixed (and relatively large)
time step $h$ such that the whole numerical integration in the
time interval $s\in[s_0,s_f]$ is carried out with 50 evaluations of
the vector ${\bf b}(s)$ for all methods. In this way their computational cost is
similar.

To illustrate the
qualitative behaviour of Magnus integrators in comparison with
standard Runge--Kutta schemes, the following methods are
considered:
\begin{itemize}
\item {\bf Explicit first-order Euler} (E1):
   $Y_{n+1}=Y_n+hA(t_n)Y_n$ with $t_{n+1}=t_n+h$ and $h=1$
   (solid lines with squares in the figures).
\item {\bf Explicit fourth-order Runge--Kutta} (RK4), i.e., scheme
  (\ref{RK4})  with $h=2$, since only two
  evaluations of ${\bf b}(s)$ per step are required in the linear case (solid lines with triangles).
\item {\bf Second-order Magnus} (M2): we consider the midpoint rule (one evaluation per
  step) to approximate $\Omega_1$ taking $h=1$ (dashed lines), i.e.,
 \begin{equation}\label{mpoint}
  Y_{n+1}=\exp\big( -ih \, {\bf b}_n\cdot \boldsymbol{\sigma} \big) \,
  Y_n   = \left(
  \cos ( hb_n ) \,I - i \frac{\sin  (hb_n) }{h b_n} \, \boldsymbol{b}_n\cdot\boldsymbol{\sigma}
   \right) Y_n.
 \end{equation}
 with ${\bf b}_n\equiv{\bf b}(t_n+ h/2)$ and $b_n=\|{\bf b}_n\|$.
 The trapezoidal rule is equally valid by considering
 ${\bf b}_n\equiv({\bf b}(t_n)+{\bf b}(t_n+ h))/2$.
\item {\bf Fourth-order Magnus} (M4). Using the fourth-order
  Gauss--Legendre rule to approximate the integrals and taking
  $h=2$  one has the scheme (\ref{or4GL}) which for this problem
  reads
\begin{eqnarray}\label{expma4}
  {\bf b}_1 & = & {\bf b}(t_n+c_1h) , \quad
  {\bf b}_2={\bf b}(t_n+c_2h) , \nonumber \\
  {\bf d} & = & \frac{h}{2}\big( {\bf b}_1 + {\bf b}_2 \big)
     -h^2\frac{\sqrt{3}}{6} i ({\bf b}_1 \times {\bf b}_2)   \\
  Y_{n+1} & = & \exp\big( -ih \, {\bf d}\cdot \boldsymbol{\sigma} \big) \,
  Y_n. \nonumber
\end{eqnarray}
with $c_1=\frac12-\frac{\sqrt{3}}{6}, \
c_2=\frac12+\frac{\sqrt{3}}{6}$ (dotted lines).
\end{itemize}

 We choose $\xi=0.3$ and $\xi=1$, and each numerical integration is carried
out for different values of $\gamma$ in the range
$\gamma\in[0,2\pi]$. We plot the corresponding approximations to
the transition probability in a similar way as in Figure
\ref{plot:Sech3} for the analytical treatment. Thus we
obtain the plots of Figure~\ref{plot:Num_xi}.
 As expected, the performance of the methods deteriorates
as $\gamma$ increases. Notice also that the qualitative behaviour of
the different numerical schemes is quite similar as that exhibited by
the analytical approximations. Euler and Runge--Kutta methods
do not preserve unitarity and may lead to
transition probabilities greater than 1 (just like the standard perturbation theory).
On the other hand,
for sufficiently small values of
$\gamma$ (inside the convergence domain of the Magnus
series) the fourth-order Magnus method improves the result achieved by
the second-order, whereas for large values of
$\gamma$ a higher order method does not necessarily lead to a better
approximation.

In Figure~\ref{plot:Num_gamma} we repeat the experiment now taking
$\gamma=1$ and $\gamma=2$, and for different values of $\xi$. In
the first case, only the Euler method differs considerably from
the exact solution and in the second case this happens for both RK
methods.

\begin{figure}[htb]
\begin{center}
\includegraphics[height=7cm,width=14cm]{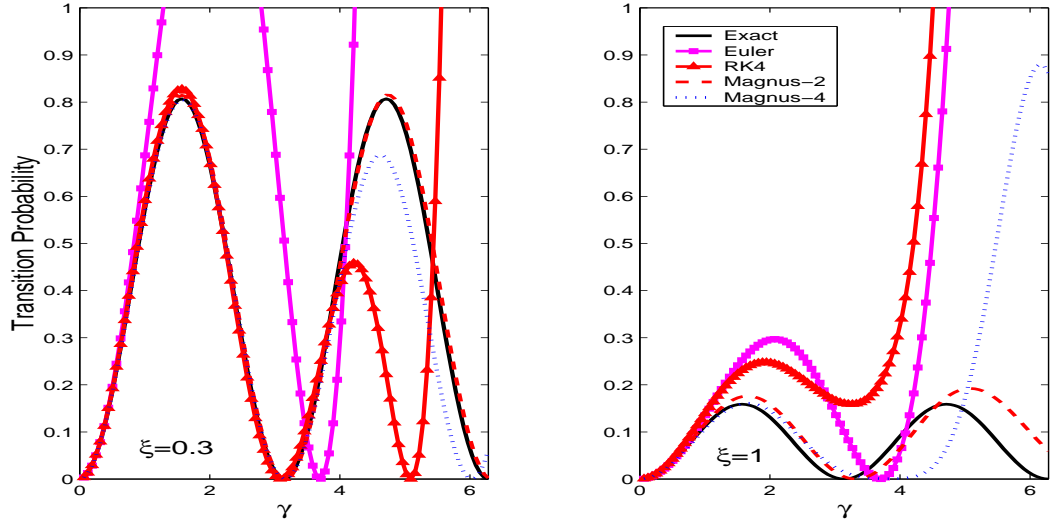}
\end{center}
\caption{{Rosen--Zener model:
 Transition probabilities as a function of $\gamma$, with $\xi=0.3$ and $\xi=1$. The curves
 are coded as follows. Solid line represents the exact result;
 E1: solid lines with squares; RK4: solid lines with triangles; M2:
 dashed lines; M4: dashed-broken lines (indistinguishable from exact result in left panel).}}
 \label{plot:Num_xi}
\end{figure}

\begin{figure}[htb]
\begin{center}
\includegraphics[height=7cm,width=14cm]{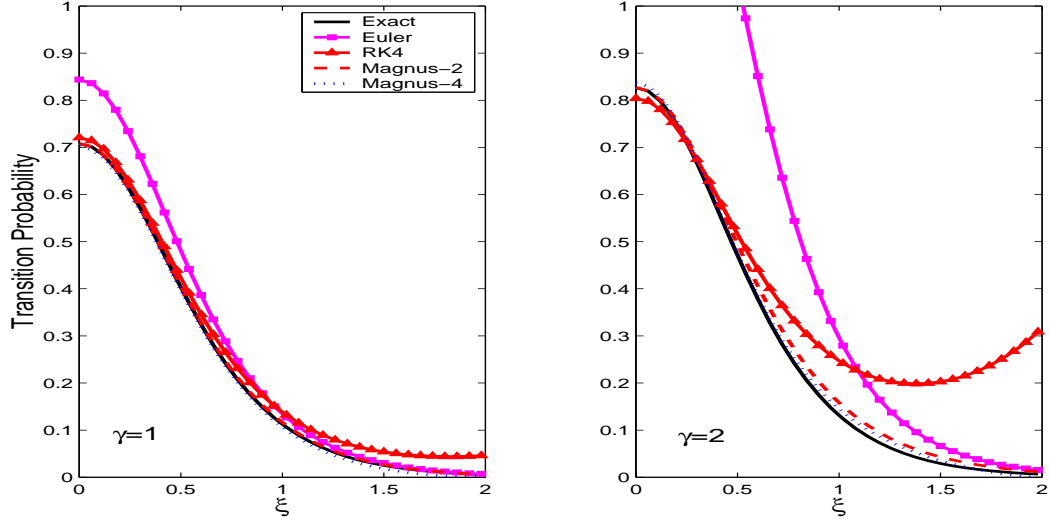}
\end{center}
\caption{{Rosen--Zener model:
 Transition probabilities as a function of $\xi$, with $\gamma=2$ and $\gamma=5$.
 Lines are coded as in Figure \ref{plot:Num_xi}.}}
 \label{plot:Num_gamma}
\end{figure}

To increase the accuracy, one can always take smaller time steps, but then
the number of evaluations of $A(t)$ increases, and this may be  computationally
very expensive for some problems due to the complexity of the time dependence and/or
the size of the matrix. In those cases, it is important to have reliable
numerical methods providing the desired accuracy as fast as
possible or, alternatively, leading to the best possible accuracy
at a given computational cost.

A good perspective of the overall performance of a given numerical integrator is
provided by the so-called efficiency diagram.
This efficiency plot is obtained by carrying out the numerical
integration with different time steps,
corresponding to different number of evaluations of $A(t)$.
 For each run one compares the corresponding approximation
with the exact solution and plot the error as a function of the
total number of matrix evaluations. The results are better
illustrated in a double logarithmic scale. In that case, the slope of the
curves should correspond, in the limit of very small time steps,
to (minus) the order of accuracy of the method.

To illustrate this issue, in Figure \ref{plot:Num_Efic} we collect
the efficiency plots of the previous schemes when  $\xi=0.3$ with
$\gamma=10$ (left) and $\gamma=100$ (right). We have also included
the results obtained by several higher order integrators, namely
the sixth-order RK method (RK6) whose coefficients are collected
in (\ref{RK-tablero}) and the sixth-order Magnus integrator (M6)
given by (\ref{aprox6}) and (\ref{4.1.4}).  We clearly observe
that the Euler method is, by far, the worst choice if accurate
results are desired. Notice the (double) logarithmic scale of the
plots: thus, for instance, when $\gamma=10$ the range goes approximately from
300 to 3000 evaluations of $A(t)$. Magnus integrators, in
addition to providing results in $\mathrm{SU}(2)$ by construction
(as the exact solution), show a better efficiency than
Runge--Kutta schemes for these examples, and this efficiency
increases with the value of $\gamma$ considered. The implicit
Runge-Kutta-Gauss-Legendre methods (\ref{RKGL46-tablero}) show
slightly better results than the explicit RK methods, but still
considerably worst than Magnus integrators.

\begin{figure}[tb]
\begin{center}
\includegraphics[height=7cm,width=14cm]{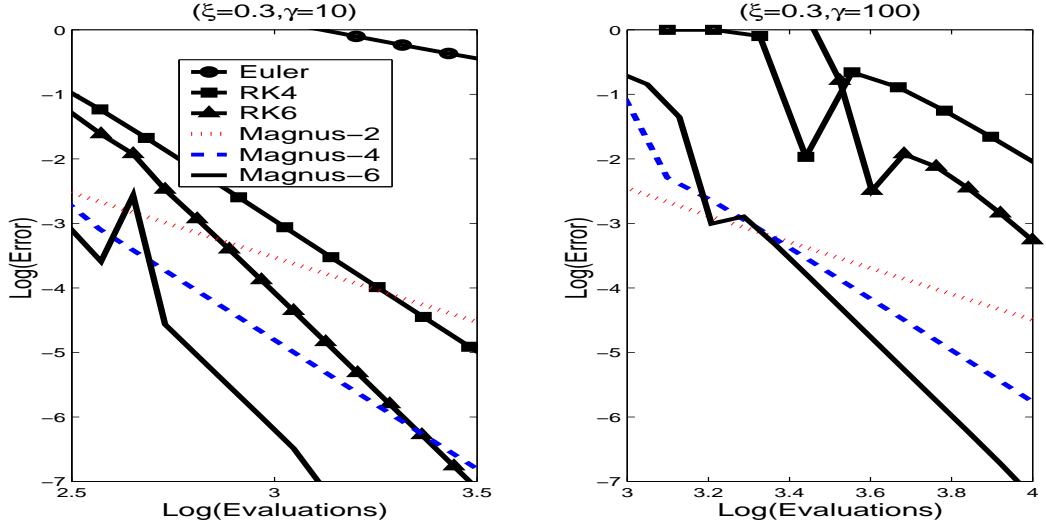}
\end{center}
\caption{{Rosen--Zener model:
 Error in the transition probabilities versus the number of evaluations
 of the Hamiltonian $H_I(s)$ for $\xi=0.3$ and $\gamma=10$ (left panel)
 and $\gamma=100$ (right panel).}}
 \label{plot:Num_Efic}
\end{figure}

If we compare Magnus integrators of different orders of accuracy,
we observe that the most efficient scheme is the second order
method M2 when relatively low accuracies are desired. For higher
accuracy, however, it is necessary to carry out a thorough
analysis of the computational cost of the methods for a given
problem before asserting the convenience of M4 or M6 with respect
to higher order schemes. For a fixed time step $h$, the
computational cost of a certain family of methods (such as those
based on Magnus) usually increases with the order. However, if one
fixes the number of $A(t)$ evaluations, this is not necessarily
the case  (sometimes higher order methods requires more commutators
but less exponentials).


Let us now compare the performance of the Magnus methods with
respect to other Lie-group solvers, namely Fer and Cayley methods.
We repeat the same experiments as in Fig.~\ref{plot:Num_Efic} but,
for clarity, only the results for the 6th-order methods are shown.
We consider the symmetric-Fer method given by (\ref{sym-Fer-n})
and (\ref{eq.4.2.5}) and the Cayley method (\ref{cayl2}) with
(\ref{eq.51b}) using in both cases the Gauss--Legendre quadrature.
The results obtained are collected in
 Figure \ref{plot:Num_Ef_MaCaFe}. We clearly
observe that the relative performance of the Cayley method
deteriorates by increasing the value of $\gamma$ similarly to the
RK6. In spite of preserving the qualitative properties, this
example shows that for some problems, polynomial or rational
approximations do not perform efficiently. Here, in particular, the
Magnus scheme is slightly more efficient that the symmetric Fer
method, although for other problems their performance is quite similar.

\begin{figure}[tb]
\begin{center}
\includegraphics[height=7cm,width=14cm]{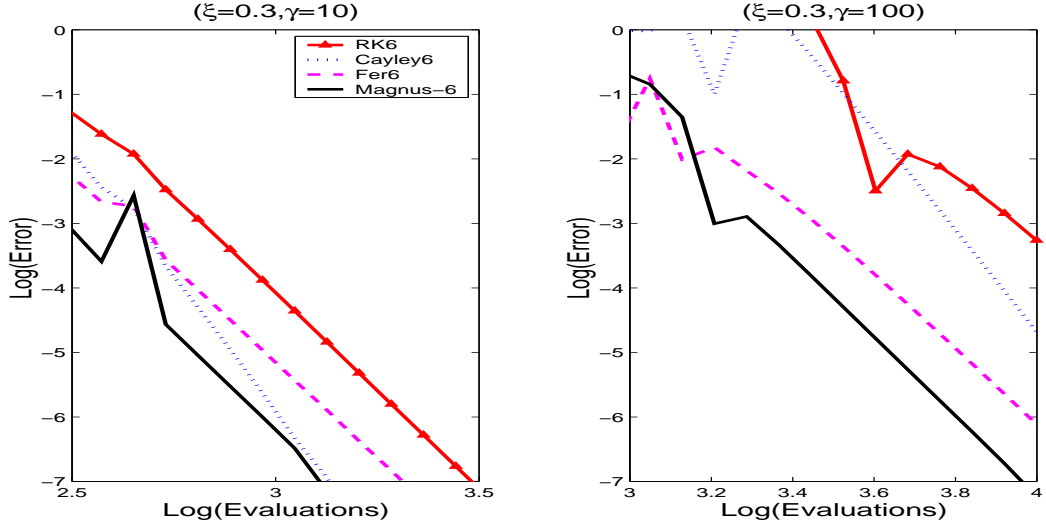}
\end{center}
\caption{{Rosen--Zener model:
 Same as Figure~\ref{plot:Num_Efic} where we compare the performance of the
 6th-order Magnus, symmetric-Fer, Cayley and RK6.}}
 \label{plot:Num_Ef_MaCaFe}
\end{figure}

\subsection{Computing the exponential of a matrix}
\label{exponen}

We have seen that the numerical schemes based on the Magnus expansion
provide excellent results when applied to equation (\ref{NI.1}) with
coefficient matrix (\ref{marz}). In fact, they are even more efficient than
several Runge--Kutta algorithms. Of course, for this particular example
the number of $A(t)$ evaluations is a good indication of the computational
cost required by the numerical schemes, since the evaluation of
$\exp(\Omega)$ can be done analytically by means of formula (\ref{exppaulis}).
In general, however, the matrix exponential has to be also approximated
numerically and thus the performance of the numerical integration algorithms based on
 the Magnus expansion strongly depends on this fact.
It may happen that the evaluation of $\exp(C)$, where $C$ is a (real or complex) $N \times N$ matrix,
 represents the major factor in the overall computational cost
required by this class of algorithms and is probably one of their most
problematic aspects.

As a matter of fact, the approximation of the
matrix exponential is among the oldest and most extensively researched
topics in numerical mathematics. Although many efficient algorithms
have been developed, the problem is still far from been solved in general.
It seems then reasonable to briefly summarize here some of the
most widely used procedures in this context.

Let us begin with two obvious but important remarks. (i) First, one
has to distinguish whether it is required to evaluate the full matrix
$\exp(C)$ or only the product $\exp(C)v$ for some given vector $v$. In the later
case, special algorithms can be designed requiring a much reduced computational
effort. This is specially true when $C$ is large and sparse (as often happens
with matrices arising from the spatial discretization of partial differential
equations).
(ii) Second, for the numerical integration methods based on ME
one has to compute
$\exp(C(h))$, where $C(h)=\mathcal{O}(h^p)$, $h$ is a (not too large) step size
and $p\geq 1$. In other words, the matrices to be exponentiated have
typically a small norm (usually restricted by the convergence bounds of the
expansion).

In any case,
prior to the computation of $\exp(C)$, it is significant to have
as much information about the exponent $C$ as possible.  Thus, for
instance, if the
matrix $C$ resides in a Lie algebra, then $\exp(C)$
belongs to the corresponding Lie group and one has to decide
whether this qualitative property has to be exactly preserved or
constructing a sufficiently accurate approximation (e.g., at a higher order
than the order of the integrator itself) is enough. Also when $C$ can be
split into different parts, one may consider a factorization of the form
$\exp(C) \approx \exp(C_1) \exp(C_2) \cdots \exp(C_m)$ if each individual
exponential is easy to evaluate exactly.

An important reference in this context is
\cite{moler78ndw} and its updated version \cite{moler03ndw}, where up to
nineteen (or twenty in \cite{moler03ndw})
different numerical algorithms for computing the exponential of a matrix are
carefully reviewed. An extensive software package for computing the matrix
exponential is Expokit, developed by R. Sidje, with 
Fortran and Matlab versions available \cite{sidje07exs,sidje98esp}. In addition to computing
the matrix-valued function $\exp(C)$ for small, dense matrices $C$, Expokit has
functions for computing the vector-valued function $\exp(C)v$ for both small, dense
matrices and large, sparse matrices.

\subsubsection{Scaling and squaring with Pad\'e approximation}

\label{sec.4.4}

Among one of the least dubious ways of computing $\exp(C)$ is by
scaling and squaring in combination with a diagonal Pad\'e
approximation \cite{moler03ndw}. The procedure is based on a fundamental
property of the exponential function, namely
\[
    \e^C = (\e^{C/j})^j
\]
for any integer $j$. The idea then is to choose $j$ to be a power of two for which
$\exp(C/j)$ can be reliably and efficiently computed,
and then to form the matrix $(\exp(C/j))^j$ by
repeated squaring. If the integer $j$ is chosen as the smallest power
of two for which $\|C\|/j < 1$, then $\exp(C/j)$ can be satisfactorily
computed by diagonal Pad\'e approximants of order, say, $m$. This is roughly the method
used by the built-in function \texttt{expm} in Matlab.

For the integrators based on the Magnus expansion, as
$C=\mathcal{O}(h^p)$ with $ p\geq 1$, one usually gets good approximations
with relatively small values of $j$ and $m$.

As is well known, diagonal Pad\'{e} approximants map the Lie algebra $%
\mathrm{o}_{J}(n)$ to the $J$-orthogonal Lie group $\mathrm{O}_{J}(n)$ and thus
constitute also a valid alternative to the evaluation of the
exponential matrix in
Magnus-based methods for this particular Lie group. More specifically, if $%
B\in \mathrm{o}_{J}(n)$, then $\psi _{2m}(tB)\in
\mathrm{O}_{J}(n)$ for sufficiently small $t\in \mathbb{R}$, with
\begin{equation}
\psi _{2m}(\lambda )=\frac{P_{m}(\lambda )}{P_{m}(-\lambda )},
\label{eq.24}
\end{equation}
provided the polynomials $P_{m}$ are generated according to the
recurrence
\begin{eqnarray*}
P_{0}(\lambda ) & = & 1, \ \qquad  P_{1}(\lambda )=2+\lambda \\
P_{m}(\lambda ) &=&2(2m-1)P_{m-1}(\lambda )+\lambda
^{2}P_{m-2}(\lambda ).
\end{eqnarray*}
Moreover,
$\psi _{2m}(\lambda )= \exp(\lambda )+\mathcal{O}(\lambda ^{2m+1})$ and $\psi
_{2}$ corresponds to the Cayley transform (\ref{eq.6}), whereas
for $m=2,3$ we have
\begin{eqnarray*}
\psi _{4}(\lambda ) &=&\left( 1+\frac{1}{2}\lambda
+\frac{1}{12}\lambda ^{2}\right) \Big/\left( 1-\frac{1}{2}\lambda
+\frac{1}{12}\lambda ^{2}\right)
\\
\psi _{6}(\lambda ) &=&\left( 1+\frac{1}{2}\lambda +\frac{1}{10}\lambda ^{2}+%
\frac{1}{120}\lambda ^{3}\right) \Big/\left( 1-\frac{1}{2}\lambda +\frac{1}{%
10}\lambda ^{2}-\frac{1}{120}\lambda ^{3}\right) .
\end{eqnarray*}
Thus, we can combine the optimized approximations to $\Omega $
obtained in subsection \ref{Mintegrators} for Magnus based methods
with diagonal Pad\'{e} approximants up to the corresponding order
to obtain time-symmetric integration schemes preserving the
algebraic structure of the problem without computing the matrix
exponential. For instance, the  ``Magnus--Pad\'e" methods thus
obtained involve, in addition to one matrix inversion, 3 and 8
matrix-matrix products for order 4 and 6, respectively.

Observe that since $\Omega^{[2n]} = \mathcal{O}(h)$ then
\[
   \psi_{2m}(\Omega^{[2n]}) = \exp(\Omega^{[2n]}) + \mathcal{O}(h^{2k+1}),
\]
where $k = \min \{m,n \}$. With $m=n$ we have a method of order
$2n$. However, for some problems this rational approximation to
the exponential may be not very accurate depending on the
eigenvalues of $\Omega^{[2n]}$. In this case one may take $m>n$,
thus giving a better approximation to the exponential and a more
accurate result by increasing slightly the computational cost of
the method. Of course, when the norm of the matrix $\Omega^{[2n]}$ is not so
small, this technique can be combined with scaling and squaring
\cite{dragt95com}.

Instead of using Pad\'e approximants for  the exponential
of the scaled matrix $B \equiv C/2^k$, Najfeld and Havel \cite{najfeld95dot} propose
a rational approximation for the matrix function
\begin{equation}  \label{nreex}
    H(B) = B \coth(B) = B \, \frac{\e^{2B}+I}{\e^{2B}-I},
\end{equation}
from which the exponential can be obtained as
\[
   \e^{2B} = \frac{H(B)+B}{H(B)-B}
\]
and then iteratively square the result $k$ times to recover the
exponential of the original matrix $C$. From the continued fraction
expansion of $H(B)$, it is possible to compute the first rational
approximations as
\[
    H_2(B) =  \frac{I + \frac{2}{5} B^2}{I + \frac{1}{15} B^2}, \qquad
    H_4(B) =  \frac{I + \frac{4}{9} B^2 + \frac{1}{63} B^4}
        {I + \frac{1}{9} B^2 + \frac{1}{945} B^4}
\]
and so on. Observe that the representation (\ref{nreex}) can be
regarded as a generalized Cayley transform of $B$ and thus it also provides
approximations in the group $\mathrm{O}_J(n)$. In  \cite{najfeld95dot} the
authors report a saving of about 30\% in the number of matrix multiplications
with respect to diagonal Pad\'e approximants when an optimal $k$ and a
rational approximation for $H(B)$ is used.

\subsubsection{Chebyshev approximation}

Another valid alternative
is to use polynomial approximations to the exponential of $C$ as a
whole. Suppose, in particular, that $C$ is a matrix
of the form $C = -i \tau H$, with $H$ Hermitian and $\tau>0$, as is the case
in Quantum Mechanics. In the Chebyshev approach, the evolution operator
$\exp(-i \tau H)$ is expanded in a truncated series of Chebyshev
polynomials, in analogy with the approximation of a scalar function
\cite{talezer84aaa}. As is
well known, given a function $F(x)$ in the interval $[-1,1]$, the Chebyshev
polynomial approximations are optimal, in the sense that the maximum
error in the approximation is minimal compared to almost all possible
polynomial approximations \cite{suli03ait}. To apply this procedure, one
has to previously bound the extreme eigenvalues $E_{\mathrm{min}}$
and $E_{\mathrm{max}}$ of $H$. Then a truncated Chebyshev
expansion of $\exp(-i x)$ on the interval $[\tau E_{\mathrm{min}}, \tau
E_{\mathrm{max}}]$ is considered:
\[
  \exp(-i x) \approx \sum_{n=0}^m c_n P_n(x),
\]
where
\[
    P_n(x) = T_n \left( \frac{2x - \tau E_{\mathrm{max}} - \tau
    E_{\mathrm{min}}}{\tau E_{\mathrm{max}} - \tau E_{\mathrm{min}}}
    \right)
\]
with appropriately chosen coefficients $c_n$. Here
$T_n(x)$ are the Chebyshev polynomials on the interval
$[-1,1]$ \cite{abramowitz65hom}, which can be determined via
the recurrence relation
\[
   T_{n+1}(x) = 2 x T_n(x) - T_{n-1}(x); \quad T_1(x) = x; \quad T_0(x) = 1.
\]
Finally, one uses the approximation
\begin{equation}   \label{cheb1}
  \exp(-i \tau H) \approx \sum_{n=0}^m c_n P_n(\tau H).
\end{equation}
This technique is frequently used in numerical quantum dynamics
to compute $\exp(-i \tau H) \psi_0$ over very long times. This can be done
with $m$ matrix-vector products if the approximation (\ref{cheb1}) is
considered with a sufficiently large truncation index $m$. In fact, the
degree $m$ necessary for achieving a specific accuracy depends linearly
on the step size $\tau$ and the spectral radius of $H$ \cite{nettesheim00mqc},
and thus an increase of the step size reduces the computational work
per unit step.
In a practical implementation, $m$ can be chosen such that the accuracy
is dominated by the round-off error \cite{leforestier91aco}. This approach
has two main drawbacks: (i) it is not unitary, and therefore the norm is not conserved
(although the deviation from unitarity is really small due to its extreme
accuracy),
and (ii) intermediate results are not obtained, since typically $\tau$ is very large.

\subsubsection{Krylov space methods}

As we have already pointed out, very often what is really required, rather than the
exponential of the matrix $C$ itself, is the computation of
$\exp(C)$ applied to a vector. In this situation,
evaluating $\e^C$ is somehow analogous to computing $C^{-1}$ to get the
solution of the linear system of equations
 $C x = b$ for many different $b$'s: other procedures are clearly far more
desirable. The computation of $\e^C v$ can be efficiently done with Krylov subspace methods,
in which approximations to the solution are obtained from the Krylov spaces spanned
by the vectors $\{v, Cv, C^2 v, \ldots, C^j v\}$ for some $j$ that is typically
small compared to the dimension of $C$ \cite{druskin95ksa,park86uqt}.
The Lanczos method for solving iteratively symmetric
eigenvalue problems is of this form \cite{watkins02fom}. If, as before, we let  $C = -i \tau H$,
the symmetric Lanczos process
generates recursively an orthonormal basis $V_{m} = [ v_{1} \cdots
v_{m}]$ of the $m$th Krylov subspace $K_{m}(H, u) = \mathrm{span}
(u, H u, \ldots, H^{m-1} u)$ such that
\[
   H V_{m} = V_{m} L_{m} + [0 \cdots 0 \, \beta_{m} \, v_{m+1}],
\]
where the symmetric tridiagonal $m \times m$ matrix $L_{m} = V^T_{m}
H V_{m}$ is the orthogonal projection of $H$ onto $K_{m}(H, u)$. Finally,
\[
   \exp(-i \tau H) u \approx V_{m}  \exp(-i \tau L_{m}) V_{m}^T u
\]
and the matrix exponential $\exp(-i \tau L_{m})$ can be computed by diagonalizing
$L_{m}$, $L_{m} = Q_{m} D_{m} Q_{m}^T$, as
\[
    \exp(-i \tau L_{m}) = Q_{m} \exp(-i \tau D_{m}) Q_{m}^T,
\]
with $D_{m}$ a diagonal matrix.
This iterative process is stopped when
\[
  \beta_{m} \left\| \left( \exp(-i \tau L_{m}) \right)_{m,m} \right\|
    < tol
\]
for a fixed tolerance. Very good approximations are often obtained with
relatively small values of $m$, and computable error bounds exist for the
approximation. This class of schemes require generally $\mathcal{O}(N^2)$
floating point operations in the computation of $\e^C v$.
More details are contained in the references
\cite{hochbruck99eif,hochbruck98eif,hochbruck97oks}.

\subsubsection{Splitting methods}
\label{splittingme}

  Frequently, one has to exponentiate a matrix
which can be split into several parts which are either solvable or easy to deal
with. Let us assume for simplicity that $C=A+B$, where the computation $\e^C$ is very
expensive, but
$\e^A$ and $\e^B$ are cheap and easy to evaluate. In such circumstances, it makes
sense to
approximate $\e^{\varepsilon C}$ with $\varepsilon$ a small
parameter, by the following scheme:
\begin{equation}   \label{split-standard}
   \psi^{[p]}_{h}  \equiv
    \e^{\varepsilon b_m B} \e^{\varepsilon a_m A} \ \cdots \
    \e^{\varepsilon b_1 B} \e^{\varepsilon a_1 A} \
   =  \e^{\varepsilon (A+B)} + \mathcal{O}(\varepsilon^{p+1})
\end{equation}
with appropriate parameters $a_i,b_i$. This can be seen as the
approximations to the solution at $t=\varepsilon$ of the equation
$Y^{\prime} = (A+B)Y$ by a composition of the exact solutions of
the equations $Y^{\prime}= AY$ and $Y^{\prime}= BY$ at times
$t=a_i \varepsilon$ and $t=b_i\varepsilon$, respectively. Two
instances of this kind of approximations are given by the well
known Lie--Trotter formula
\begin{equation}  \label{Lie-Trotter}
   \psi^{[1]}_{h}  =  \e^{ \varepsilon A} \e^{\varepsilon B}
\end{equation}
and the second order symmetric composition
\begin{equation}\label{leapfrog}
 \psi^{[2]}_{h} =  \e^{\varepsilon A/2}  \e^{ \varepsilon B}
  \e^{\varepsilon A/2},
\end{equation}
referred to as Strang splitting, St\"ormer, Verlet and leap-frog,
depending on the particular area where it is used.

Splitting methods have been considered in different contexts: in
designing symplectic integrators, for constructing volume-preserving
algorithms, in the numerical integration of partial differential equations, etc.
An extensive survey of the theory and practice of splitting methods can be found in
\cite{blanes02psp,hairer06gni,leimkuhler04shd,mclachlan02sm,mclachlan06gif}
and references therein.

Splitting methods are particularly useful in geometric
numerical integration. Suppose that
the matrix $C = A + B$ resides in a Lie algebra $\mathfrak{g}$. Then, obviously,
$\exp(C)$ belongs to the corresponding Lie group $\mathcal{G}$ and one is
naturally interested in getting approximations also in $\mathcal{G}$. In this respect,
notice that if $A,B \in
\mathfrak{g}$, then the scheme (\ref{split-standard}) also provides an approximation
in $\mathcal{G}$. It is worth noticing that other methods for the
approximation of the matrix exponential, e.g., Pad\'e approximants and Krylov
subspace techniques, are not guaranteed to map elements from $\mathfrak{g}$ to
$\mathcal{G}$. Although diagonal Pad\'e approximants map
the Lie algebra $\mathrm{o}_J$ to the underlying group, it is possible to show
that the only analytic function that maps $\mathfrak{sl}(n)$ into the special
linear group $\mathrm{SL}(n)$ approximating the exponential function up to a given order is the
exponential itself \cite{feng95vpa}. In consequence, diagonal Pad\'e
approximants only provide results in $\mathrm{SL}(n)$ if the computation
is accurate to machine precision.

In \cite{celledoni00ate}, Celledoni and Iserles devised a splitting
technique for obtaining an approximation to $\exp(C)$ in the Lie group $\mathcal{G}$ based
on a decomposition of $C \in \mathfrak{g}$ into low-rank matrices $C_i \in \mathfrak{g}$.
Basically, given a $n \times n$ matrix $C \in \mathfrak{g}$, they proposed to split
it in the form
\[
    C = \sum_{i=1}^k C_i,
\]
such that
\begin{enumerate}
   \item $C_i \in \mathfrak g$, for $i=1,\ldots,k$.
   \item Each $\exp(C_i)$ is easy to evaluate exactly.
   \item Products of such exponentials are computationally cheap.
\end{enumerate}
For instance, for the Lie algebra $\mathfrak{so}(n)$, the choice
\[
   C_i = \frac{1}{2} c_i e_i^T - \frac{1}{2} e_i b_i^T, \qquad i=1,\ldots,n,
\]
where $c_1,\ldots,c_n$ are the columns of $C$ and $e_i$ is the $i$-th vector of the
canonical basis of $\mathbb{R}^n$, satisfies the above requirements (with $k=n$), whereas in the
case of $\mathfrak{g} = \mathfrak{sl}(n)$ other (more involved) alternatives are possible
\cite{celledoni00ate}.

Proceeding as in (\ref{Lie-Trotter}), with
\[
   \psi^{[1]} = \exp(\varepsilon C_1) \exp(\varepsilon C_2) \cdots \exp(\varepsilon C_k)
\]
we get an order one approximation in $\varepsilon$ to $\exp(\varepsilon C)$,
whereas the symmetric composition
\begin{equation}  \label{strang-espe}
  \psi^{[2]} = \e^{\frac{1}{2} \varepsilon C_k} \e^{\frac{1}{2} \varepsilon C_{k-1}}
     \cdots \e^{\frac{1}{2} \varepsilon C_2}
       \e^{ \varepsilon C_1}   \e^{\frac{1}{2} \varepsilon C_2} \cdots
       \e^{\frac{1}{2} \varepsilon C_{k-1}} \e^{\frac{1}{2} \varepsilon C_k}
\end{equation}
provides an approximation of order two in $\varepsilon$, and this can be subsequently
combined with different techniques for increasing the order.

With respect to the computational cost, the results reported in
\cite{celledoni00ate} show that, up to order four in
$\varepsilon$, this class of splitting schemes are competitive
with the Matlab built-in function \texttt{expm} when machine
accuracy is not required in the final approximation. Running
\texttt{expm} on randomly generated matrices, it is possible to
verify that computing $\exp(C)$ to machine accuracy requires about
(20-30)$n^3$ floating point operations, depending on the
eigenvalues of $C$, whereas the 4th-order method constructed from
(\ref{strang-espe}) involves (12-15)$n^3$ operations when $C \in
\mathfrak{so}(100)$ \cite{celledoni00ate}. In the case of the
approximation of $\exp(C) v$ and $v \in \mathbb{R}^n$, the cost of
low-rank splitting methods drop down to $K n^2$, where $K$ is a
constant, and thus they are comparable to those achieved by
polynomial approximants \cite{hochbruck97oks}.

Splitting methods of the above type are by no means the only way to express
$\exp(C)$ as a product of exponentials of elements in $\mathfrak{g}$. For instance,
the Wei--Norman approach (\ref{wn2}) can also be implemented in this setting. Suppose
that $\dim \mathfrak{g} = s$ and let $\{X_1,X_2,\ldots, X_s\}$ be a basis of
$\mathfrak{g}$. In that case, as we have seen (subsection \ref{w-nfacto}),
it is possible to represent $\exp(t C)$
for $C \in \mathfrak{g}$ and sufficiently small $|t|$ in canonical coordinates of the
second kind,
\[
   \e^{t C} = \e^{g_1(t) X_1} \, \e^{g_2(t) X_2} \cdots \e^{g_s(t) X_s},
\]
where the scalar functions $g_k$ are analytic at $t=0$. Although the $g_k$s are
implicitly defined, they can be approximated by Taylor series.
The cost of the procedure
can be greatly reduced by choosing adequately the basis and exploiting the Lie-algebraic
structure \cite{celledoni01mft}.

Yet another procedure to get approximations of $\exp(C)$ in a
Lie-algebraic setting which has received considerable attention
during the last years is based on generalized polar decompositions
(GPD), an approach introduced in \cite{munthe-kaas01gpd} and
further elaborated in \cite{iserles05eco,zanna01gpd}. In
particular, in \cite{iserles05eco}, by bringing together GPD with
techniques from numerical linear algebra, new algorithms are
presented with complexity $\mathcal{O}(n^3)$, both when the
exponential is applied to a vector and to a matrix. This is
certainly not competitive with Krylov subspace methods in the
first case, but represents at least a 50\% improvement on the
execution time, depending on the Lie algebra considered, in the
latter. Another difference with respect to Krylov methods is that
the algorithms based on generalized polar decompositions
approximate the exponential to a given order of accuracy and thus
they are well suited to exponential approximations within
numerical integrators for ODEs, since the error is subsumed in
that of the integration method.
For a complete description of the procedure and its practical
implementation we refer the reader to \cite{iserles05eco}.

\subsection{Additional numerical examples}
\label{numexM}

The purpose of subsection \ref{niarz} was to illustrate the main
features of the numerical schemes based on the Magnus expansion in
comparison with other standard integrators (such as Runge--Kutta
schemes) and other Lie-group methods (e.g., Fer and Cayley) on
a solvable system. For larger systems the efficiency analysis is
more challenging since the (exact or approximate)
computation of exponential matrices
play an important role on the performance of the methods. It
makes sense then to analyze from this point of view more realistic problems where one necessarily
has to approximate the exponential in a consistent way.

As an illustration of this situation
we consider next two skew-symmetric
matrices $A(t)$ and $Y(0) = I$, so that the solution $Y(t)$ of
$Y^{\prime} = A(t) Y$ is
orthogonal for all $t$. In particular, the upper triangular
elements of the matrices $A(t)$ are as follows:
\begin{eqnarray}
\mbox{(a)} \;\;\;  A_{ij} & = & \sin \left( t(j^2 - i^2) \right) \qquad\qquad 1 \leq
i < j \leq
N  \label{ej3.1} \\
\mbox{(b)} \;\;\;  A_{ij} & = & \log \left( 1 + t \, \frac{j-i}{j+i} \right)
\label{ej3.2}
\end{eqnarray}
with $N=10$. In both cases $Y(t)$ oscillates with time, mainly due
to the time-dependence of $A(t)$ (in (\ref{ej3.1})) or the norm of
the eigenvalues (in (\ref{ej3.2})).

The integration is carried out in the interval $t \in [0,10]$ and
the approximate solutions are compared with the exact one at the final time $%
t_{\mathrm{f}}=10 $ (obtained with very high accuracy by using a
sufficiently small time step). The corresponding error at
$t_{\mathrm{f}}$ is computed for different values of the time step
$h$. The Lie-group solvers are implemented with Gauss--Legendre
quadratures and constant step size.

 First, we plot the accuracy of the different 4-th and 6-th order methods
as a function of the number of $A(t)$ evaluations. In contrast to the previous
examples, now there is not a closed formula for the matrix
exponentials appearing in the Magnus based integrators, so that
some alternative procedure must be applied. In particular, the computation
of $\e^C$ to machine accuracy is done by scaling--(diagonal
Pad\'e)--squaring, so that the result is correct up to round-off.
Figure \ref{nexo6.2} shows the results obtained for the problems
(\ref{ej3.1}) and (\ref{ej3.2}) with fourth- and
sixth-order numerical schemes based on Magnus and Cayley, and also
explicit Runge--Kutta methods. In
the first problem, Magnus and Cayley show a very similar performance, which happens
to be only slightly better than that of RK methods.

The situation changes drastically, however, for the second problem. Here the
behaviour of
Cayley and RK methods is essentially similar, whereas schemes
based on Magnus are clearly more efficient. The reason seems to be that
Cayley and RK
methods give poor approximations to the exponential, which, on the other hand,
has to be accurately approximated, since
the eigenvalues of
$A(t)$ may take large values.

With respect to symmetric Fer
methods, their efficiency is quite similar to that of Magnus if the matrix
exponentials are evaluated accurately up to machine precision.
 This is so for the matrix (\ref{ej3.1}) even if Pad\'e
approximants of relatively low order are used to replace the
exponentials.


On the other hand, the efficiency of ``Magnus-Pad\'e" methods (we
denote by MP$nm$ a Magnus method of order $n$ where the
exponential is approximated by a diagonal Pad\'e of order $m$, and
MP$n$ if $n=m$) is highly deteriorated for the problem
(\ref{ej3.2}), although it is always better than the corresponding
to Cayley schemes.

\begin{figure}[tb]
\begin{center}
\includegraphics[height=7cm,width=14cm]{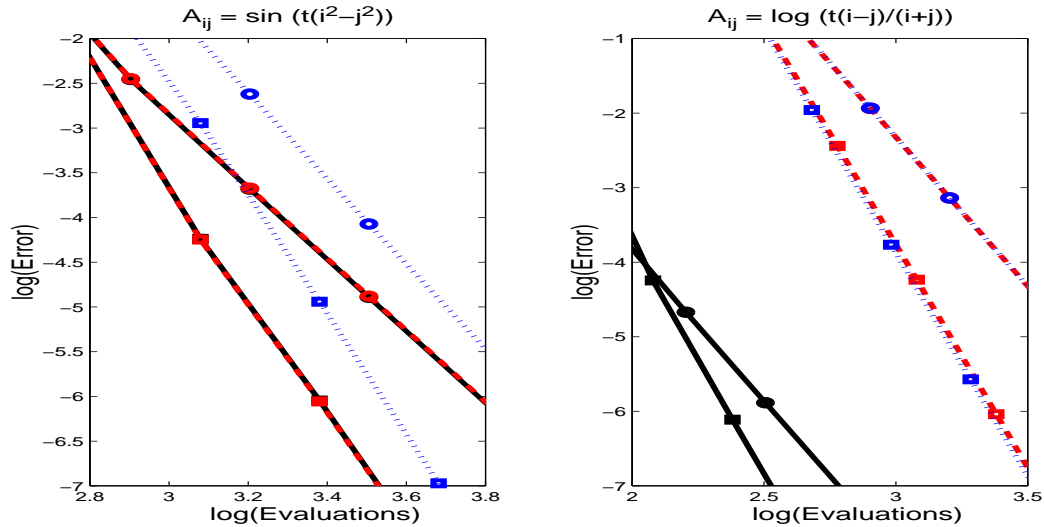}
\end{center}
\caption{{\protect\small Efficiency diagram corresponding to the
optimized 4-th (circles) and 6-th (squares) order Lie-group
solvers based on Magnus (solid lines) and Cayley (broken lines),
and the standard Runge--Kutta methods (dashed lines).}}
\label{nexo6.2}
\end{figure}

To better illustrate all these comments, in Figure \ref{nexo6.4} we
display the error in the solution corresponding to (\ref{ej3.1})
and (\ref{ej3.2}) as a function of time in the interval $t \in
[0,100]$ for $h=1/20$ as is obtained by the previous methods. We should stress that
all schemes require the same number of $A$
evaluations.

In the right
picture the exponentials appearing in the Magnus method are
computed using a Pad\'e approximant of order six (MP6), of
order eight (MP68) and to machine accuracy (M6). Observe the great
importance of evaluating the exponential as accurately as possible
for the matrix (\ref{ej3.2}): by increasing slightly the
computational cost per step in the computation of the matrix
exponential it is possible to improve dramatically the accuracy of
the methods. On the contrary, for problem (\ref{ej3.1}) the
meaningful fact seems to be that the integration scheme provides a
solution in the corresponding Lie group.

\begin{figure}[tb]
\begin{center}
\includegraphics[height=7cm,width=14cm]{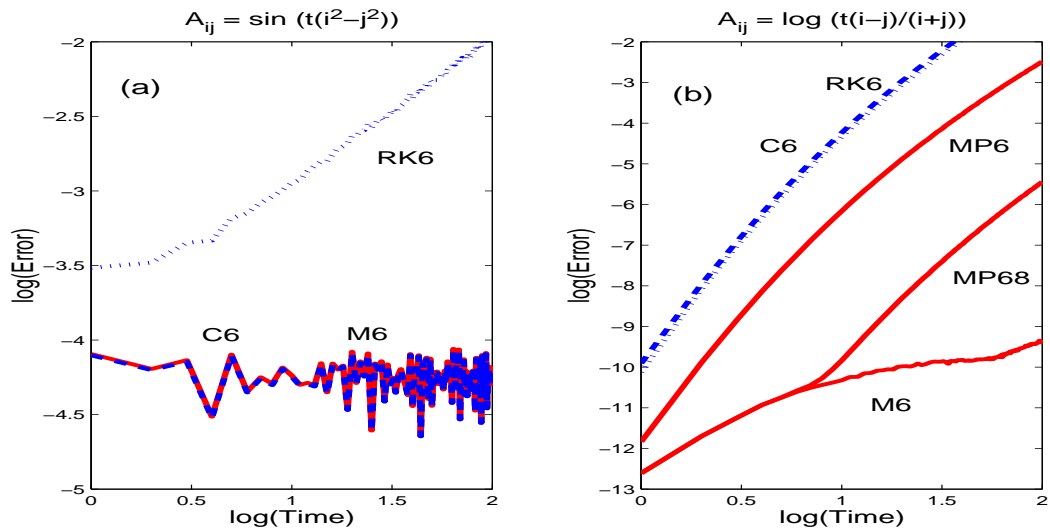}
\end{center}
\caption{{\protect\small Error as a function of time (in
logarithmic scale) obtained with different 6-th order integrators
for $h=1/20$: (a) problem (\ref{ej3.1}); (b) problem
(\ref{ej3.2}).}} \label{nexo6.4}
\end{figure}

\subsection{Modified Magnus integrators}

\subsubsection{Variations on the basic algorithm}

The examples collected in subsections \ref{niarz} and  \ref{numexM} show that the
numerical methods constructed from the Magnus expansion can be
computationally very efficient indeed. It is fair to say, however, that this
efficiency can be seriously affected when dealing with certain
types of problems.

Suppose, for instance, that one has to numerically integrate a
problem defined in $\mathrm{SU}(n)$. Although Magnus integrators are
unconditionally stable in this setting (since they preserve
unitarity up to roundoff independently of the step size $h$), in
practice only small values of $h$ are used for achieving accurate
results. Otherwise the convergence of the series is not assured.
Of course, the use of small time steps may render the algorithm
exceedingly costly.

In other applications the problem depends on several
parameters, so that the integration has to be carried out for
different values of the parameters. In that case the overall
integration procedure can be computationally very expensive.

In view of all these difficulties, it is hardly
surprising that several modifications of the standard algorithm of
Magnus integrators had been developed to try to minimize these undesirable
effects and get especially adapted integrators for particular
problems.

One basic tool used time and again in this context is performing a
preliminary linear transformation, similarly to those introduced
in section \ref{PLT}. These transformations can be carried out
either for the whole integration interval or at each step in the
process. Given an appropriately chosen transformation, $\tilde{Y}_0(t)$,
one factorizes $Y(t)=\tilde{Y}_0(t) \, \tilde{Y}_1(t)$, where the unknown $\tilde{Y}_1(t)$ satisfies the
equation
\begin{equation}\label{Pert}
  \tilde{Y}_1' = B(t) \tilde{Y}_1
\end{equation}
and $B(t)$ depends on $A(t)$ and $\tilde{Y}_0(t)$. This transformation
makes sense, of course, if $\|B(t)\|<\|A(t)\|$ and thus typically
$\tilde{Y}_0$ is chosen in such a way that the norm of $B$ verifies the
above inequality. As a consequence, Magnus integrators can be applied on
(\ref{Pert}) with larger time steps providing also more accurate
results.

  Alternatively, for problems  where
in addition to the time step $h$ there is another parameter ($E$, say),
one may analyze the Magnus expansion in terms of $h$ and $E$. This
allows us to identify which terms at each order in the series
expansion give the main contribution to the error, and design
methods which include these terms in their formulation. The
resulting schemes should provide then more accurate results at a
moderate computational cost without altering the convergence
domain.
As a general rule, it is always desirable to have in
advance as much information about the equation and the properties
of its solution as possible, and then to try to incorporate all this
information into the
algorithm.

Let us review some useful possibilities in this context. From (\ref{3.2.1}) one has
\begin{equation}\label{Taylor_mid2}
  A(t) = a_0 +  \tau a_1 +  \tau^2 a_2 + \cdots,
\end{equation}
where $\tau=t-t_{1/2}$.  The first term is exactly solvable
($a_0=A(t_{1/2})$) and, for many problems, it just provides the
main contribution to the evolution of the system. In that case it
makes sense to take
\[
 \tilde{Y}_0(t) = \e^{(t-t_n) a_0} = \e^{(t-t_n) A(t_{1/2})}
 \]
and subsequently integrate eq. (\ref{Pert}) with
\[
  B(t) = \e^{-(t-t_n) A(t_{1/2})}
  \left( A(t) -  A(t_{1/2}) \right)
    \e^{(t-t_n) A(t_{1/2})}.
\]
This approach has been considered in
\cite{iserles02otg,iserles02tga}, and shows an extraordinary
improvement when the system is highly oscillatory and the main
oscillatory part is integrated with $\tilde{Y}_0$. In those cases, the
norm of $B(t)$ is considerably smaller than $\|A(t)\|$, but $B(t)$ is still highly
oscillatory, so that especially adapted quadrature rules have to
be used in conjunction with the Magnus expansion
\cite{iserles04otn,iserles04oqm}.

  In some other problems the contributions from the derivatives
can also be significant, so that a more appropriate transformation
is defined by
\begin{equation}\label{magnus_2}
  \tilde{Y}_0(t) = \exp \left( \int_{t_n}^{t} A(\tau) d\tau \right).
\end{equation}
The resulting methods can be considered then as a combination of
the Fer or Wilcox expansions and the Magnus expansion. This
approach has been pursued in \cite{degani06rrc}.

On the other hand, it is known that several physically relevant
systems evolve adiabatically or almost-adiabatically. In that case
it seems appropriate to consider the adiabatic picture which
instantaneously diagonalizes $A(t)$ (subsection \ref{PLT}). This
analysis is carried out in
\cite{jahnke04lts,jahnke03nif,lorenz05aif}. In \cite{jahnke03nif}
the adiabatic picture is used   perturbatively, whereas in
\cite{jahnke04lts} it is shown that Magnus in the new picture
leads to  significant improvements.


  Alternatively, one can analyze the structure of the leading error
terms in order to identify the main contribution to the error at
each $\Omega_i$ in the Magnus series expansion. In most cases they
correspond to terms involving only $\alpha_1$ and its nested
commutators with $\alpha_2$
Thus, in particular, the standard fourth-order method given by (\ref{eq.2.3a}) can be
easily improved by including the dominant error term at higher
orders, i.e.,
\[
   \Omega^{[4]} = \alpha_1 - \frac{1}{12} [\alpha_1,\alpha_2]
    + \frac{1}{720}[\alpha_1,\alpha_1,\alpha_1,\alpha_2]
    - \frac{1}{30240}[\alpha_1,\alpha_1,\alpha_1,\alpha_1,\alpha_1,\alpha_2]+\ldots.
\]
We recall that using the fourth-order Gauss--Legendre quadrature
rule we can take $\alpha_1=\frac{h}{2}(A_1+A_2), \
\alpha_2=\frac{h\sqrt{3}}{12}(A_2-A_1)$ with $A_1,A_2$ given in
(\ref{AiG4}). The new method requires additional commutators but
the accuracy can be improved a good deal. This procedure is
analyzed in \cite{moan98eao}, where it is shown how
to sum up all terms of the form
$[\alpha_1,\alpha_1,\ldots,\alpha_1,\alpha_2]$. An error analysis
in the limit of very large values of $\|\alpha_1\|$ is done in
\cite{aparicio05neo,malham06ete}.

\subsubsection{Commutator-free Magnus integrators}
\label{cfmi}

All numerical methods based on the Magnus expansion appearing in
the preceding sections require the evaluation of a matrix
exponential which contain several nested commutators. As we have
repeatedly pointed out, computing the exponential is frequently
the most consuming part of the algorithm. There are problems where
the matrix $A(t)$ has a sufficiently simple structure which allows
one to approximate efficiently the exponential $\exp(A(t_i))$, or
the exponential of a linear combination of the matrix $A(t)$
evaluated at different points. In some sense, this is equivalent
to have efficient methods to compute or to approximate $\tilde{Y}_0$ in
(\ref{magnus_2}). It may happen, however, that the computation of
the matrix exponential is a much more involved task due to the
presence of commutators in the Magnus expansion. For this reason,
it makes sense to look for approximations to the Magnus expansion
which do not involve commutators whilst still preserving the same
qualitative properties. In other words, one may be interested in
compositions of the form
\begin{equation}\label{CF_expsInt}
 \Psi^{[n]}_m  \equiv  
 \exp\left(\int_{t_n}^{t_n+h} p_m(s)A(s) ds \right) \ \cdots \
 \exp\left(\int_{t_n}^{t_n+h} p_1(s)A(s) ds \right)
\end{equation}
where $p_i(s)$ are scalar functions chosen in such a way that
$\Psi^{[n]}_m=\e^{\Omega(t_n+h)}+ \mathcal{O}(h^{n+1})$. Alternatively,
instead of
the functions $p_i(s)$, it is possible to find
coefficients $\varrho_{i,j}, \ i=1,\ldots,m, j=1,\ldots,k$ such
that
\begin{equation}\label{CF_exps}
 \Psi^{[n]}_m  \equiv  
 \e^{\tilde{A}_m} \cdots
 \e^{\tilde{A}_1}, \qquad
 \mbox{with} \qquad
 \tilde{A}_i=h\sum_{j=1}^k \varrho_{i,j} A_j
\end{equation}
is an approximation of the same order. This procedure requires
first to compute  $A_j=A(t_n+c_jh), \ j=1,\ldots,k$ for some
quadrature nodes, $c_j$, of order $n$ or higher and, obviously,
the coefficients $\varrho_{i,j}$ will depend on this choice.
%
The process simplifies if one works in the associated graded free
Lie algebra generated by  $\{\alpha_i\}$, as in the sequel of eq.
(\ref{3.2.2}). Thus, achieving fourth-order integrators reduces
just to solve the equations for the new coefficients $a_{i,1}, a_{i,2}$
in
\begin{eqnarray}
 \Psi^{[4]}_m & \equiv  & 
 \exp\left( a_{m,1} \, \alpha_1+a_{m,2} \, \alpha_2 \right) \cdots
 \exp\left( a_{1,1} \, \alpha_1+a_{1,1} \, \alpha_2 \right) \label{CF4b}
\end{eqnarray}
with the requirement that $\Psi^{[4]}_m= \exp\left(  \Omega^{[4]}
\right)+O(h^5)$, where $ \Omega^{[4]}$ is given by
(\ref{eq.2.3a}). Here the dependence of  $a_{i,1}, a_{i,2}$ on the
coefficients $\varrho_{i,j}$ is determined through the existing
relation between the $\alpha_i$ and the $A_j$ given in
(\ref{CBaseMu}). The order conditions for the coefficients $a_{i,1}, a_{i,2}$ 
can be easily obtained from the Baker--Campbell--Hausdorff
formula. As we have already mentioned, time-symmetry is an
important property to be preserved by the integrators, whereas, at
the same time, also simplifies the analysis. Scheme (\ref{CF4b})
is time-symmetric if
\begin{equation}\label{symm-coef}
  a_{m+1-i,1}=a_{i,1}, \qquad
  a_{m+1-i,2}=-a_{i,2},, \qquad i=1,2,\ldots,m
\end{equation}
in which case the order conditions at even order terms are
automatically satisfied.
 As an illustration, the simple compositions
\begin{eqnarray}
  \Psi^{[4]}_2 & \equiv &
  \exp\left(\frac12 \alpha_1+\frac16 \alpha_2\right) \
 \exp\left(\frac12 \alpha_1-\frac16 \alpha_2\right)
 \label{2-CF} \\
 \Psi^{[4]}_3 & \equiv & \exp\left(\frac{1}{12} \alpha_2\right) \
\exp\left(\alpha_1\right) \ \exp\left(-\frac{1}{12}
\alpha_2\right)
   \label{3-CF}
\end{eqnarray}
are in fact fourth-order (commutator-free) methods requiring two
and three exponentials, respectively \cite{blanes00smf}. In
particular, scheme (\ref{2-CF}), when $\alpha_1, \alpha_2$ are
approximated using the fourth-order Gauss--Legendre quadrature as
shown in (\ref{AiG4}) and (\ref{alfasG4}) leads to the scheme
\begin{equation}
  \Psi^{[4]}_2  \equiv
  \exp\big(h(\varrho_{2,1} A_1+\varrho_{2,2} A_2)\big) \
  \exp\big(h(\varrho_{1,1} A_1+\varrho_{1,2} A_2)\big)   \label{2-CF-GL}
\end{equation}
with $\varrho_{1,1}=\varrho_{2,2}=\frac12+\frac{\sqrt{3}}{72},
\varrho_{1,2}=\varrho_{2,1}=\frac12-\frac{\sqrt{3}}{72}$. Methods
closely related to the scheme (\ref{3-CF}) are presented in
\cite{baye03fof,blanes00smf,lu00foc}, where they are applied to
the Schr\"odinger equation with a time-dependent potential. A
method quite similar to (\ref{2-CF}) is analyzed in
\cite{thalhammer06afo} through its application to parabolic
initial boundary value problems. A detailed study of fourth and
sixth order commutator-free methods is presented in
\cite{blanes06fas}.

On the other hand, very often the differential equation
(\ref{NI.1}) can be split into two parts, so that one has instead
\begin{equation}\label{sep-non-autonomo}
  Y' = \big( A(t) + B(t) \big) Y,
\end{equation}
where each part can be trivially or very efficiently solved. For
instance, the Schr\"odinger equation with a time-dependent
potential and, possibly, a time-dependent kinetic energy belongs
to this class. In principle, the following families of geometric
integrators are specially tailored for this problem:
\begin{description}
  \item[I-] The commutator-free Magnus integrators
  (\ref{CF_exps}), which in this case read
\begin{equation}\label{CF_expsAB}
 \Psi^{[n]}_m  \equiv  
 \e^{\tilde{A}_m+\tilde{B}_m} \cdots
 \e^{\tilde{A}_1+\tilde{B}_1}, \qquad
 \mbox{with} \qquad
 \tilde{A}_i=h\sum_{j=1}^k \varrho_{i,j} A_j, \ \
 \tilde{B}_i=h\sum_{j=1}^k \varrho_{i,j} B_j.
\end{equation}
 Assuming that $\e^{\tilde{A}_i}$ and  $\e^{\tilde{B}_i}$ are
easily computed, then each exponential can be approximated by a
conveniently chosen splitting method (\ref{split-standard})
\cite{mclachlan02sm}
\begin{equation}\label{}
\e^{\tilde{A}_i+\tilde{B}_i}\simeq
 \e^{b_s\tilde{B}_i}\e^{a_s\tilde{A}_i} \cdots
 \e^{b_1\tilde{B}_i}\e^{a_1\tilde{A}_i}.
\end{equation}
  \item[II-] If one takes the time variable in $A(t),B(t)$ as two new
  coordinates, one may use any splitting method as follows  \cite{sanzserna96cni}:
\begin{equation}\label{Split_AB(t)}
 \Psi^{[n]}_{l,h}  \equiv  
 \e^{b_lhB(w_l)}\e^{a_l hA(v_l)} \cdots
 \e^{b_1hB(w_1)}\e^{a_1 hA(v_1)},
\end{equation}
with
\[
 v_i=\sum_{j=1}^{i-1} b_j, \quad
 w_i=\sum_{j=1}^{i} a_j.
\]
and $b_0=0, \ A(v_i)\equiv A(t_n+v_ih), \ B(w_i)\equiv
B(t_n+w_ih)$.
\end{description}

Both approaches have pros and cons. By applying procedure I we may
get methods of order $2n$ with only $n$ evaluations of $A(t)$,
$B(t)$ using e.g. Gauss--Legendre quadratures, but if $m$ in
(\ref{CF_expsAB}) is large, the number of matrix exponentials to
be computed leads to exceedingly costly methods.  The approach II,
on the other hand, has the advantage of a smaller number of
stages, but also presents two drawbacks: (i) many evaluations of
$A(t),B(t)$ are required in general; (ii) for matrices $A$ and $B$
with a particular structure there are specially designed splitting
methods which are far more efficient, but these schemes are not
easily adapted to this situation.


Next we show how to combine splitting methods with
techniques leading to commutator free Magnus schemes to design efficient
numerical algorithms possessing the advantages of approaches I and II, and
at the same time generalizing the splitting idea
(\ref{split-standard}) to this setting
\cite{blanes06smf,blanes07smf}.

The starting point is similar as in previous schemes, i.e. we consider a
composition of the form
\begin{equation} \label{split-Magnus-n}
  \psi_{l,h}^{[n]} = 
  \e^{\tilde{B}_l} \  \e^{\tilde{A}_l} \cdots
  \e^{\tilde{B}_1} \ \e^{\tilde{A}_1} ,
\end{equation}
where the matrices $\tilde{A}_i$ and $\tilde{B}_i$ are
\begin{equation}\label{split-coefs}
  \tilde{A}_i = h \sum_{j=1}^k \rho_{ij}  A_j,
  \qquad
  \tilde{B}_i = h \sum_{j=1}^k \sigma_{ij} B_j,
\end{equation}
with appropriately chosen real parameters $\rho_{ij},\sigma_{ij}$
depending on the coefficients of the chosen quadrature rule.
Notice that $\e^{\tilde{A}_i}$ can be seen as the solution of the
initial value problem $Y^{\prime} = \hat{A}_i Y$, $Y(t_n)=I$ at
$t_{n+1}$, where $\tilde{A}_i = h \hat{A}_i$. Of course, the same
considerations apply to $\e^{\tilde{B}_i}$.

In many cases it is convenient to write the coefficients
$\rho_{ij},\sigma_{ij}$ explicitly in terms of the coefficients
$c_i$. Following \cite{blanes06smf} they can be written as
\begin{equation}\label{eq.2.11b}
    \rho_{ij}  =  \sum_{l=1}^s a_{i,l} \left( R^{(s)} Q_X^{(s,k)}\right)_{lj}, \qquad
    \sigma_{ij} = \sum_{l=1}^s b_{i,l} \left( R^{(s)} Q_X^{(s,k)}\right)_{ij}.
\end{equation}
where the coefficients for the matrices $R^{(s)}, \ s=2,3$ are
given in (\ref{CambioBase}) and for $Q_X^{(s,k)}$ (whose elements
depend on the coefficients $b_i,c_i$ for the quadrature rule) as
shown in (\ref{ru1}).


In this way, the coefficients $a_{ij}$ and $b_{ij}$ are
independent of the quadrature choice and can be obtained by
solving some order conditions (see  \cite{blanes06smf} for more
details).

This procedure allows us to analyse separately particular cases
for the matrices $A,B$ in order to build efficient methods. For
instance, in \cite{blanes06smf} the following particular cases are
considered: (i) when the matrices $A(t),B(t)$ have a general
structure; (ii) when they satisfy the additional constraint
$[B(t_i),[B(t_j),[B(t_k),A(t_l)]]]=0$ as it happens, for instance,
if $A$ corresponds to the kinetic energy and $B$ to the potential
energy (both in classical or quantum mechanics).

As an illustration, we consider the following 4th-order 6-stage
BAB composition
\begin{equation} \label{split-Magnus-4}
  \psi_{6,h}^{[4]} = 
  \e^{\tilde{B}_7} \  \e^{\tilde{A}_6} \  \e^{\tilde{B}_6} \cdots
  \e^{\tilde{A}_1} \  \e^{\tilde{B}_1}.
\end{equation}
In Table \ref{table2} we collect the coefficients $a_{ij},b_{ij}$
to be used in (\ref{eq.2.11b}) to obtain the coefficients
$\rho_{ij},\sigma_{ij}$ to be used in the scheme
(\ref{split-Magnus-4}) for two methods, denoted by
$\mathrm{GS}_6$-4 in the general case (whose coefficients
$a_{i1},b_{i1}$ correspond to $S_6$ in \cite{blanes02psp}) and
$\mathrm{MN}_6$-4 when $[B(t_i),[B(t_j),[B(t_k),A(t_l)]]]=0$ (the
coefficients $a_{i1},b_{i1}$ correspond to $\mathrm{SRKN}_6^b$ in
\cite{blanes02psp}).

Finally, one has to write the scheme in terms of the matrices
$A_i,B_i$. For instance, the composition (\ref{split-Magnus-4})
with the 4th-order Gauss--Legendre quadrature (i.e. taking
$Q^{(2,2)}$ in (\ref{Gauss-Legendre}) and $R^{(2)}$ in
(\ref{CambioBase}) to obtain the coefficients
$\rho_{ij},\sigma_{ij}$ in (\ref{eq.2.11b})) gives
\begin{eqnarray}
  \tilde{A}_i & = & \left( \frac{1}{2} a_{i1} - \sqrt{3} a_{i2}
     \right) h A_1 + \left( \frac{1}{2} a_{i1} + \sqrt{3} a_{i2}
     \right) h A_2 \nonumber \\
  \tilde{B}_i & = & \left( \frac{1}{2} b_{i1} - \sqrt{3} b_{i2}
     \right) h B_1 + \left( \frac{1}{2} b_{i1} + \sqrt{3} b_{i2}
     \right) h B_2.  \label{eq.2.16}
\end{eqnarray}

\begin{table}[tb]
\caption{Splitting methods of order 4 for separable non-autonomous
systems. $\mathrm{GS}_6$-4 is intended for general separable
problems, whereas $\mathrm{MN}_6$-4 can be applied when
$[B(t_i),[B(t_j),[B(t_k),A(t_l)]]]=0$. All the coefficients are given in terms of
$b_{11},a_{11},b_{21},a_{21},b_{31}$ for each method. }
\label{table2} { \footnotesize
\begin{tabular}{l}
\begin{tabular}{l|l}
 $\mathrm{GS}_6$-4  &  $\mathrm{MN}_6$-4   \\
\hline
   $b_{11}= 0.0792036964311957 $ &  $ b_{11}= 0.0829844064174052 $ \\
   $a_{11}= 0.209515106613362$   &  $ a_{11}=  0.245298957184271$  \\
   $b_{21}= 0.353172906049774  $ &  $ b_{21}= 0.396309801498368  $  \\
   $a_{21}=-0.143851773179818$   &  $ a_{21}=  0.604872665711080$   \\
   $b_{31}=-0.0420650803577195 $ &  $ b_{31}=-0.0390563049223486 $
\end{tabular}
\\
\begin{tabular}{lll}
   \hline
   $a_{31}= 1/2-(a_{11}+a_{21})$ &  $ b_{41}=1-2(b_{11}+b_{21}+b_{31})$ \\
   $a_{41}=a_{31}$   &  $b_{51}=b_{31}$ \\
   $a_{51}=a_{21}$   &  $b_{61}=b_{21}$ \\
   $a_{61}=a_{11}$   &  $b_{71}=b_{11}$
\end{tabular}
\\
\begin{tabular}{lll}
\hline
  $a_{12} = (2 a_{11}+2a_{21}+a_{31}-2b_{11}-2b_{21})/c$  &
     $b_{12} = (2 a_{11} + 2 a_{21} - 2 b_{11} - b_{21})/d$  \\
 $a_{22} = 0$  & $b_{22}=(-2a_{11}+b_{11})/d$ \\
  $a_{32} = -a_{11}/c$  & $b_{32} = b_{42} = 0$   \\
   $a_{42}=-a_{32}$   &  $b_{52}=-b_{32}$ \\
   $a_{52}=-a_{22}$   &  $b_{62}=-b_{22}$ \\
   $a_{62}=-a_{12}$   &  $b_{72}=-b_{12}$ \\
\end{tabular}
\\
\begin{tabular}{ll}
\hline
 $c = 12(a_{11}+2a_{21}+a_{31}-2b_{11}+2 a_{11} b_{11} - 2 b_{21} +
2 a_{11} b_{21})$   & \\
 $d = 12(2a_{21}-b_{11}+2 a_{11} b_{11} - 2 a_{21} b_{11} - b_{21}
+ 2 a_{11} b_{21})$  & \\
 \hline
\end{tabular}
\end{tabular}
}
\end{table}


\subsection{Magnus integrators for nonlinear differential
 equations}
\label{mifnde}

 The success of Magnus methods applied to the numerical integration of
linear systems has motivated several attempts to adapt the schemes
for solving time dependent nonlinear differential equations. For
completeness we present some recently proposed generalizations of
Magnus integrators. We consider two different problems: (i) a
nonlinear matrix equation defined in a Lie group, and (ii) a general
nonlinear equation to which the techniques of section \ref{GNLM}
can be applied.

\subsubsection{Nonlinear matrix equations in Lie groups}

As we have already mentioned, the strategy adopted by most
Lie-group methods for solving the nonlinear matrix differential equation
(\ref{nlm1}),
\[
   Y^{\prime} = A(t, Y) Y,  \quad\qquad Y(0) = Y_0 \in \mathcal{G}
\]
defined in a Lie group $\mathcal{G}$, whilst preserving its Lie
group structure, is to lift $Y(t)$ from $\mathcal{G}$ to the
underlying Lie algebra $\mathfrak{g}$ (usually with the
exponential map), then formulate and numerically solve there an
associated differential equation and finally map the solution back
to $\mathcal{G}$. In this way the discretization procedure works
in a linear space rather than in the Lie group. In particular, the
idea of the so-called Runge--Kutta--Munthe-Kaas class of schemes
is to approximate the solution of the
associated differential equation in the Lie algebra $\mathfrak{g}$
by means of a classical Runge--Kutta method
\cite{iserles00lgm,munthe-kaas98rkm,munthe-kaas99hor}.

To generalize Magnus integrators when $A=A(t,Y)$, an
important difference with respect to the linear case is that now
multivariate integrals depend also on the value of the (unknown)
variable $Y$ at quadrature points. This leads to implicit methods
and nonlinear algebraic equations in every step of the integration
\cite{zanna99car}, which in general cannot compete in efficiency
with other classes of geometric integrators such as splitting and
composition methods.

An obvious alternative is just to replace the integrals appearing
in the nonlinear Magnus expansion developed in section
\ref{NL-Magnus} by affordable quadratures, depending on the
particular problem. If, for instance, we use Euler's method to
approximate the first term in (\ref{nlm4}), $\Omega^{[1]}(h) = h
A(0,Y_0) + \cO(h^2)$ and $\Omega^{[2]}$ is discretized with the
midpoint rule, we get the second order scheme
\begin{eqnarray}  \label{nlmnum1}
  v_2 & \equiv & h
     A \left( \frac{h}{2}, \e^{\frac{h}{2} A(0,Y_0)} Y_0
           \right) = \Omega^{[2]}(h) +
     \cO(h^3) \nonumber \\
     Y_1 & = & \e^{v_2} Y_0.
\end{eqnarray}
The same procedure can be carried out at higher orders,
discretizing consistently the integrals appearing in
$\Omega^{[m]}(h)$ for $m>2$ \cite{casas06eme}.


\subsubsection{The general nonlinear problem}

In principle,  it is possible to adapt all methods built for
linear problems to the general nonlinear non-autonomous equation
(\ref{non-lin})
\[
 {\bf x}' = {\bf f}(t,{\bf x}),
\]
or equivalently, the operator differential equation
(\ref{nlp8}),
\[
 \frac{d}{dt}\Phi^t = \Phi^t L_{{\bf f}(t,{\bf y})},
 \qquad \quad  {\bf y}={\bf x}_0.
\]

 As we pointed out in section \ref{GNLM}, there are two problematic aspects
when
designing practical numerical schemes based on Magnus expansion in the nonlinear
case. The first one is how to
compute or approximate the truncated Magnus expansion (or its action on
the initial conditions). The second one is how to evaluate the required Lie
transforms. For example, to compute the Lie transform
$\exp(tL_{\bf f(y)})$ acting on ${\bf y}$ is equivalent to solve
the autonomous differential equation ${\bf x}' = {\bf f}({\bf x})
$ at $t=h$ with  ${\bf x}(0)={\bf y}$, or ${\bf x}(t)=\exp(tL_{\bf
f(x_0)}){\bf x}_0$ where ${\bf x}_0={\bf y}$ can be considered as
a set of coordinates.

Very often, the presence of Lie brackets in the exponent leads to fundamental difficulties, since
the resulting vector fields usually have very
complicate structures. Sometimes, however, this problem can be circumvented by
using the same techniques leading to
commutator-free Magnus integrators in the linear case. In any case, one should
bear in mind that the action of the exponentials in the methods designed for the linear
case has to be replaced by their corresponding maps. Alternatively, if the method is
formulated in terms of Lie transforms, the order of the exponentials has to be reversed,
according to equation (\ref{nlp7}).

Next we illustrate how to numerically
solve the problem
\begin{equation}\label{t-separable}
  {\bf x}' = {\bf f}_1(t,{\bf x}) + {\bf f}_2(t,{\bf x})
\end{equation}
using the scheme (\ref{split-Magnus-n}) with (\ref{eq.2.16}) and
the coefficients $a_{ij},b_{ij}$ taken from MN$_{6}4$ in
Table~\ref{table2}.

Let us consider the Duffing equation
\begin{equation} \label{Duffing}
  q'' + \epsilon q' + q^3 - q = \delta \cos(\omega t)
\end{equation}
which can be obtained from the  time-dependent
Hamiltonian
\begin{equation}\label{HamDuff1}
 H(q,p,t) = T(p,t) + V(q,t) = \e^{-\epsilon t} \, \frac12 p^2 +
   e^{\epsilon t}\left( \frac14 q^4 - \frac12 q^2
     - \delta \cos(\omega t) q \right)
\end{equation}
or equivalently from 
\begin{equation}\label{HamDuff2}
  \frac{d}{dt} \left\{ \begin{array}{l}
  q \\ p
   \end{array} \right\} = \left\{ \begin{array}{c}
    T'(t,p) \\ - V'(t,q)
   \end{array} \right\}  = \left\{ \begin{array}{c}
    e^{-\epsilon t} p \\ 0
   \end{array} \right\}  +  \left\{ \begin{array}{c}
         0 \\
        e^{\epsilon t}\left( q - q^3 + \delta \cos(\omega
        t)\right)
   \end{array} \right\}.
\end{equation}
Notice that this system has already the form (\ref{t-separable}), each part being
exactly solvable. In consequence, 
the splitting method shown in
(\ref{Split_AB(t)}) can be used here. The procedure is described as 
Algorithm 1 in Table~\ref{algorithms}.

\begin{table}[h!]
\caption{Algorithms for the numerical integration of
(\ref{HamDuff1}) or (\ref{HamDuff2}): (Algorithm 1) with scheme
(\ref{Split_AB(t)}), and (Algorithm 2) with scheme
(\ref{split-Magnus-4}).} \label{algorithms} {
 \small
\begin{tabular}{c}
\begin{tabular}{l|l}
\hline
  {\bf Algorithm 1:\ Standard split} &
  {\bf Algorithm 1:\ Magnus split}
 \\
\hline
 $ %
\begin{array}{l}
  q_{0}  = q(t_n); \quad 
  p_{0}= p(t_n);   \\
  t_a=t_n; \quad 
  t_b=t_n \\
     {\bf do} \ \ i=1,m \\
     \quad p_{i} = p_{i-1} - h a_i V'(t_a,q_{i-1})  \\
     \quad t_{a} = t_{a} + h a_i \\
     \quad q_{i} = q_{i-1} + h b_i T'(t_b,p_{i}) \\
     \quad t_{b} = t_{b} + h b_i \\
    {\bf enddo}  \\
\end{array}  $
              &
 $ %
\begin{array}{l}
  q_{0} = q(t_n); \quad 
  p_{0} = p(t_n);   \\
   {\bf do} \ \ i=1,k \\
     \quad T'_{i}(p) = T'(t_n+c_ih,p);  
     \quad V'_{i}(q) = V'(t_n+c_ih,q) \\
   {\bf enddo}  \\
   {\bf do} \ \ i=1,m \\
     \quad \tilde V_i(q) = \sigma_{i1}V'_1(q)+\cdots+ \sigma_{ik}V'_k(q) \\
     \quad \tilde T_i(p) = \rho_{i1}T'_1(p)+\cdots+ \rho_{ik}T'_k(p) \\
     \quad p_{i} = p_{i-1} - h \tilde V_i(q_{i-1})  \\
     \quad q_{i} = q_{i-1} + h \tilde T_i(p_{i}) \\
   {\bf enddo}  \\
\end{array}  $
 \\
 \hline
\end{tabular}
\end{tabular}
}
\end{table}

Observe that the leap-frog composition (\ref{leapfrog})
corresponds to $m=2$ and
\begin{equation}\label{leapfrog-coefs}
 a_1=a_2=\frac12, \quad b_1=1,b_2=0.
\end{equation}
Since $b_2=0$ one stage can be saved (with a trivial modification
of the algorithm) and the scheme is considered as a one stage
method. An efficient symmetric 5-stage fourth order integrator is given by
the coefficients ($m=6$)
\begin{equation}\label{Suzuki-coefs}
 a_i=\frac{\gamma_i+\gamma_{i-1}}{2}, \quad b_i=\gamma_i.
\end{equation}
$i=0,1,\ldots,6$ with $\gamma_0=\gamma_6=0$ and
$\gamma_1=\gamma_2=\gamma_4=\gamma_5=1/(4-4^{1/3}), \
\gamma_3=1-4\gamma_1$.

Alternatively, we can use the Magnus integrator
(\ref{split-Magnus-4}). Since the kinetic energy is quadratic in
momenta, we can apply the fourth-order method MN$_{6}$-4. If we take
the fourth-order Gauss-Legendre quadrature rule for the evaluation
of the time-dependent function then we can consider
(\ref{eq.2.16}), where the coefficients $a_{ij},b_{ij}$ are given
in Table~\ref{table2}. Here, $A(t)$ plays the role of $T'(t,p)$
and $B(t)$ the role of $V'(t,q)$ (they are not interchangeable,
otherwise the performance seriously deteriorates). The computation
of one time step is shown as Algorithm 2 in
Table~\ref{algorithms}.

We take $\epsilon=1/20$, $\delta=1/4$, $\omega=1$ and initial
conditions $q(0)=1.75$, $p(0)=0$. We integrate up to $t=10\, \pi$
and measure the average error in phase space in terms of the
number of force evaluations for different time steps (in
logarithmic scale). The results are shown in Figure \ref{fig0}.
The scheme MN$_{6}$4 has 6 stages per step, but only two
time-evaluations. For this reason in the figure we have considered
as the number of evaluations per step both two and six (left and
right curves connected by an arrow). It is evident the superiority
of the new splitting Magnus integrators for this problem. If the
time-dependent functions dominate the cost of the algorithm the
superiority is even higher. Surprisingly, the method shows
better stability than the leap-frog method, which attains the
highest stability possible among the splitting methods for
autonomous problems.

\begin{figure}[tb]
\begin{center}
\includegraphics[width=12cm]{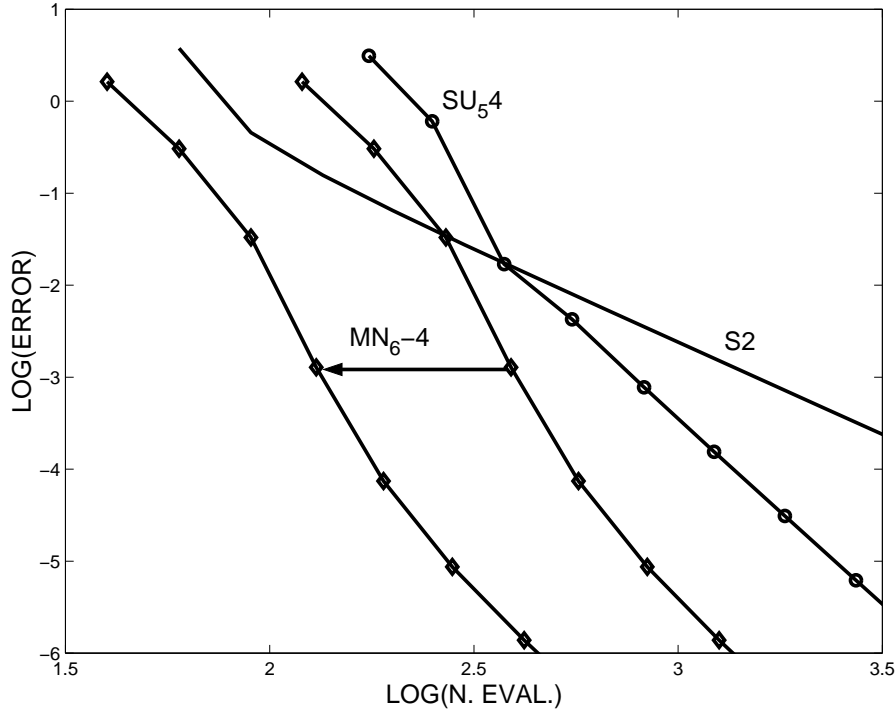}
\end{center}
\caption{{Average error versus number of force evaluations in the
numerical integration of (\ref{HamDuff2}) using second and fourth
order symplectic integrators for general separable systems (S2
corresponds to the second order leapfrog method with coefficients
(\ref{leapfrog-coefs}) and SU$_{5}4$ to the fourth order method
with coefficients (\ref{Suzuki-coefs})) and the fourth order
symplectic Runge--Kutta--Nystr\"om method MN$_{6}4$ with
initial conditions $q(0)=1.75$, $p(0)=0$ and $\epsilon=1/20$,
$\delta=1/4$, $\omega=1$.
}}
 \label{fig0}
\end{figure}




\section{Some applications of the numerical integrators based on ME}
\label{animag}

In this section we collect several examples where the numerical integration methods
based on the Magnus expansion have been applied in the recent literature. Special
attention is dedicated to the numerical integration of the Schr\"odinger
equation, since the Magnus series expansion has been extensively used in this
setting almost since its very formulation.
The time-independent Schr\"odinger equation can be considered as a particular
example of a Sturm--Liouville problem, so we also review  the applicability
of Magnus based  techniques in this context. Then we consider a particular
nonlinear system (the differential Riccati equation) which can be, in some sense,
linearized, so that at the end one
may work with finite-dimensional matrices. Finally, we summarize a recent but
noteworthy application: the design of new classes of numerical schemes
for the integration of stochastic differential equations.

\subsection{Case study: numerical treatment of the Schr\"odinger
  equation}

Before embarking ourselves in the use of numerical methods based on the Magnus
expansion in the integration of the Schr\"odinger equation, let us establish
first the theoretical framework which allows one to use numerical integrators 
in this setting for
obtaining approximate solutions in time and space.

\subsubsection{Time-dependent Schr\"odinger equation}

  To keep the treatment as simple as possible, we commence by considering the one-dimensional
time-dependent Schr\"odinger equation ($\hbar =1$)
\begin{equation} \label{Schr1}
 i \frac{\partial}{\partial t} \psi (t,x) = H \psi(t,x) \equiv
 -\frac{1}{2}\frac{\partial^2}{\partial x^2} \psi (t,x) + V(x)
 \psi (t,x)  ,
\end{equation}
with initial condition $\psi(0,x)=\psi_0(x)$. If we look for a
solution of the form $\psi(t,x)=\phi(t) \, \varphi(x)$, it is
clear that, by substituting into (\ref{Schr1}),  one gets
$\phi(t)=\e^{-i t E}$, where $E$ is a constant and $\varphi(x)$ is
the solution of the second order differential equation
\begin{equation}\label{time-indep}
  -\frac{d^2 \varphi}{dx^2} + V(x)\varphi = E \varphi.
\end{equation}
If $E>V$ the solution is oscillatory, whereas if $E<V$ the solution is
a linear combination of exponentially increasing and decreasing
functions. For bounded problems this last condition always takes place
at the boundaries. Since
\begin{equation}\label{binf}
  \int |\psi(x,t)|^2 dx =  \int |\varphi(x)|^2 dx < \infty,
\end{equation}
it is clear that the exponentially increasing solutions have to be
cancelled, and this can only occur for certain values of the
constant $E$, which are precisely the eigenvalues of the problem.

  Let us assume that the system has only a discrete spectrum and
denote by $\{E_n,\varphi_n\}_{n=0}^{\infty}$, with $E_i<E_j, \
i<j$, the complete set of eigenvalues and associated eigenvectors.
It is well known that we can take $\{\varphi_n\}_{n=0}^{\infty}$
as an orthonormal basis and, since (\ref{Schr1}) is linear, any
solution can be written as
\begin{equation} \label{ap1}
  \psi (t,x) = \sum_{n=0}^{\infty} c_n \, \e^{-i t E_n} \varphi_n(x).
\end{equation}
Using the standard notation for the inner product, one has
\begin{equation}\label{Sip2}
  \langle \varphi_n(x) | \psi (t,x) \rangle
  = \int \varphi_n^*(x)\, \psi(t,x) dx
  =  c_n \, \e^{-i t E_n}
\end{equation}
and
\[
  |\langle \varphi_n(x) | \psi (t,x) \rangle|^2 =  |c_n|^2
\]
is the probability to find the system in the eigenstate $\varphi_n$, so that
$\sum_n |c_n|^2=1$. The energy is given by
\begin{equation}\label{Sip3}
 \mathcal{E} = \langle \psi | H | \psi  \rangle
  = \int \psi^*(t,x) \, H \, \psi(t,x) dx
  =  \sum_{n=1}^{\infty}  \sum_{m=1}^{\infty} c_n^* c_m H_{n,m},
\end{equation}
where
\[
  H_{n,m} \equiv  \langle m | H | n \rangle
  =  \langle \varphi_m | H | \varphi_n \rangle.
\]
  In general, the coefficients $c_n$ decrease very fast with $n$ and,
in some cases, the system allows only a finite number of states.
In that situation, one may consider the Schr\"odinger equation as a finite
dimensional linear system where the Hamiltonian is a matrix with
elements $H_{n,m}$. This is precisely the case for the examples
examined in section 4.

  When the Hamiltonian is explicitly time-dependent, this procedure
is not longer valid. Instead one may use some alternative techniques which
we now briefly review.

 (i)  {\it Spectral Decomposition}.
Let us assume that the system is perturbed with a
time-dependent potential, i.e., equation (\ref{Schr1}) takes the form
\begin{equation} \label{Schr-t}
 i \frac{\partial}{\partial t} \psi(t,x)  = \hat{H}(t) \psi(t,x) \equiv
  (\hat{T} + \hat{V}(t)) \, \psi(t,x),
\end{equation}
where
\[
     \hat{T} \psi \equiv -\frac{1}{2}\frac{\partial^2 \psi}{\partial x^2},
      \qquad \hat{V}(t) \psi \equiv (V(x) + \tilde V(t,x) ) \, \psi.
\]
In this case we cannot use separation of variables. However, since $\{\varphi_n\}$
is a complete bases we can still write the solution as
\begin{equation} \label{ap11}
  \psi (t,x) \simeq \sum_{n=0}^{d-1} c_n(t) \, \e^{-i t E_n} \, \varphi_n(x),
\end{equation}
where $E_n$ and $\varphi_n$ are the exact
eigenvalues and eigenfunctions when
$\tilde V=0$, and the complex coefficients $c_n$ give the
probability amplitude to find the system in the state $\varphi_n$
($\sum_n |c_n(t)|^2=1$ for all $t$). Then, substituting
(\ref{ap11}) into (\ref{Schr-t}) we obtain the matrix equation
\begin{equation} \label{lin1}
 i \, \frac{d}{dt}{\bf c} (t) = {\bf H} (t) {\bf c} (t),  \qquad \qquad
 {\bf c} (0) = {\bf c}_0 ,
\end{equation}
where $ \ {\bf c}= (c_0,\ldots ,c_{d-1})^T \in \mathbb{C}^d \ $
and $ \,  {\bf H} \in \mathbb{C}^{d\times d}   \, $ is an
Hermitian matrix associated to the Hamiltonian
\[
   \, ({\bf H}(t))_{ij}= \langle \varphi_i |
\hat{H}(t)-\hat{H}_0 | \varphi_j \rangle \, \e^{i(E_i-E_j)t},   \qquad
i,j=1,\ldots,d
\]
and $\hat{H}_0 = \hat{H}(t=0)$.
 Given the initial wave function $\psi(0,x)$, the
components of ${\bf c}_0$ are determined by  $c_{0,i}=\langle \varphi_i
| \psi(0,x) \rangle$.

  Obviously, any complete basis can be used in this case, although
the norm of the matrix ${\bf H} (t)$ may depend on the
choice. In addition, the number of basis elements, i.e. the
minimum dimension $d$ necessary to obtain a sufficiently accurate
result, also depends on the chosen basis.

 (ii) {\it Space discretization}.
  This procedure intends to take profit of the structure of the Hamiltonian
$\hat{H}$ in (\ref{Schr-t}): $\hat{V} $ is diagonal in the coordinate space and $ \, \hat{T}
\, $ is diagonal in the momentum space.
  Let us assume that the system is defined in
the interval $ \, x \in [x_0, x_f] $ with periodic boundary
conditions. We can then split this interval in $ d $ parts of
length $ \Delta  x= (x_f-x_0)/d \, $ and consider $ \, c_n = \psi
(t,x_n) \, $ where $ x_n=x_0+n\Delta x, $ $ n=0,1,\ldots,d-1$.
Then a finite dimensional linear equation similar (but with a
different coefficient matrix $\mathbf{H}$) to equation (\ref{lin1}) results.
Since $ \, \hat{V} $ is diagonal in the coordinate space and $ \,
\hat{T}  \, $ is diagonal in momentum space, it is possible to use complex
Fast Fourier Transforms (FFTs) for evaluating the products ${\bf
H}{\bf c}$, where $ \, \hat{T} \psi (t,x_n) = \mathcal{F}^{-1} D_T
\mathcal{F} \psi (t,x_n) $, and $ \, D_T \, $ is a diagonal
operator.

\vspace*{0.3cm}

We thus see that whatever the procedure used (spectral decomposition or space
discretization), one ends up with a linear equation of the form
\begin{equation}    \label{nse1}
   i \frac{d\psi}{dt}(t) = H(t) \psi(t),  \qquad \psi(0) = \psi_{0}
\end{equation}
where now $\psi(t)$ represents a complex vector with $d$ components
which approximates the (continuous) wave function. The
computational Hamiltonian $H(t)$ appearing in (\ref{nse1})
is thus a space discretization
(or other finite-dimensional model) of $\hat{H}(t) = \hat{T} +
\hat{V}(t)$. Numerical difficulties come mainly from the unbounded
nature of the Hamiltonian and the highly oscillatory behaviour of
the wave function.

It is at this point when numerical algorithms based on the
Magnus expansion, such as they have been formulated in previous
sections, come into play for integrating in time the linear
system (\ref{nse1}). To put them in perspective, let us
introduce first some other numerical methods also used in this
context. Our exposition is largely based on the reference
\cite{lubich02ifq}.

\paragraph{The implicit midpoint rule}

The approximation to the solution of (\ref{nse1}) provided by
this scheme is implicitly defined by
\begin{equation}  \label{nse2}
  i \frac{\psi_{n+1} - \psi_n}{\Delta t} = H(t_{n + 1/2}) \, \frac{1}{2}
  (\psi_{n+1} + \psi_n),
\end{equation}
where $t_{n+1/2} = \frac{1}{2}(t_{n+1} + t_n)$. Here and in the sequel,
for clarity, we have denoted by $\Delta t$ the
time step size and $t_n = n \Delta t$. Alternatively,
\begin{equation}   \label{nse3}
   \psi_{n+1} = r(-i \Delta t H(t_{n + 1/2})) \, \psi_n, \quad
   \mbox{ with } \quad r(z) = \frac{1 + \frac{1}{2} z}{1 -
   \frac{1}{2} z}.
\end{equation}
Observe that, as $r$ is nothing but the Cayley transform, the
numerical propagator is unitary and consequently the Euclidean
norm of the discrete wave function is preserved along the
evolution: $\|\psi_{n+1} \| = \|\psi_n\|$. This is a crucial
qualitative feature the method shares with the exact solution,
contrarily to other standard numerical integrators, such as
explicit Runge--Kutta methods. From a purely numerical point of
view, the algorithm is stable for any step size $\Delta t$.

Another useful property of this numerical scheme is
time-symmetry: exchanging in (\ref{nse3}) $n$ by $n+1$ and $\Delta
t$ by $-\Delta t$ we get the same numerical method again.
Equivalently, $r(-z) = r(z)^{-1}$, exactly as the exponential
$\e^z$.

With respect to accuracy, it is not difficult to show that, if
$H(t)$ is bounded and sufficiently smooth, the error verifies
\begin{equation}   \label{nse4}
   \|\psi_n - \psi(t_n)\| = \mathcal{O}(\Delta t^2)
\end{equation}
uniformly for $n \Delta t$ in a time interval $[0, t_f]$. In other
words, the implicit midpoint rule is a second-order method. It
happens, however, that the constant in the term
$\mathcal{O}(\Delta t^2)$ depends on bounds of $H'$ and
$H^{\prime\prime}$ and on the norm of the third derivative of the solution
$\psi$. Since, in general, the wave function is highly oscillatory
in time, this time derivative can become large, and so the use of
very small time steps is mandatory.

\paragraph{The exponential midpoint rule}

Another possibility to get approximate solutions of (\ref{nse1})
consists in replacing $r(z)$ by $\exp(z)$ in (\ref{nse3}):
\begin{equation}   \label{nse5}
  \psi_{n+1} = \exp(-i \Delta t \, H(t_{n+1/2})) \, \psi_n.
\end{equation}
Now, instead of solving systems of linear equations as previously,
one has to compute the exponential of a large matrix times a
vector at each integration step. In this respect, the techniques
reviewed in subsection \ref{exponen} can be efficiently implemented. The
exponential midpoint rule (\ref{nse5}) also provides a unitary
propagator and it is time-symmetric. In addition, the error
satisfies the same condition (\ref{nse4}), but now the constant in
the $\mathcal{O}(\Delta t^2)$ term is independent of the time
derivatives of $\psi$ under certain assumptions on the commutator
$[H(t),H(s)]$ \cite{hochbruck03omi}. As a consequence, much larger
time steps can be taken to achieve the same accuracy as with the
implicit midpoint rule.

\paragraph{Integrators based on the Magnus expansion}

The method (\ref{nse5}) is a particular instance of a second order
Magnus method when the integral $\int_0^{\Delta t} H(s) ds$ is
replaced by the midpoint quadrature rule. In fact, we have already used it in
(\ref{mpoint}). Obviously, if higher order
approximations are considered, the accuracy can be enhanced a
great deal. This claim has to be conveniently justified, however,
since the order of the numerical methods based on Magnus has
been deduced when
$\|\Delta t H(t)\| \rightarrow 0$ and is obtained by studying the remainder
of the truncated Magnus series. In the Schr\"odinger equation, on
the other hand, one has to cope with discretizations of unbounded
operators, so in principle it is not evident how the previous results
on the order of accuracy apply in this context.
In \cite{hochbruck03omi}, Hochbruck and Lubich analyse
in detail the application of the fourth-order Magnus integrator
(\ref{or4GL}) to equation (\ref{nse1}), showing that it
works extremely well even with step sizes for which the corresponding
$\|\Delta t H(t)\|$ is large. In particular, the scheme
retains fourth order of accuracy in $\Delta t$
\emph{independently of the norm} of $H(t)$
when $H(t) = T + V(t)$, $T$ is a discretization of $-\frac{1}{2}
\frac{\partial^2}{\partial x^2}$ (with maximum eigenvalue
$E_{\mathrm{max}} \sim (\Delta
x)^{-2}$) and $V(t)$ is sufficiently smooth under the time step
restriction $\Delta t \sqrt{E_{\mathrm{max}}} \le Const.$ This is
so even when there is no guarantee that the Magnus series
converges at all.

\paragraph{Symplectic perspective}

The evolution operator corresponding to (\ref{Schr-t}) is not only unitary, but also symplectic
with canonical coordinates and momenta $\mbox{Re}(\psi)$ and $\mbox{Im}(\psi)$, respectively.
If we carry out a discretization in space, this symplectic structure is inherited by the 
corresponding equation (\ref{lin1}). It makes sense, then, to write
${\bf c}= {\bf q} + i {\bf p}$ and consider the equations satisfied by ${\bf q, p}\in\mathbb{R}^d$, namely
\begin{equation}\label{SchrClas1}
  \mathbf{q}' = \mathbf{H}(t) \mathbf{p}, \qquad \mathbf{p}' = -\mathbf{H}(t) \mathbf{q},
\end{equation}
which can be interpreted as the canonical equations corresponding to the Hamiltonian \cite{gray94chs}
\begin{equation}\label{SchrClas2}
 \mathcal{H}(t,{\bf q, p}) = {\bf p}^T {\bf H}(t) {\bf p} + {\bf q}^T {\bf H}(t)
  {\bf q}.
\end{equation}

Denoting ${\bf z=(q,p)}^T$, it is clear that
\[
  {\bf z}' = \left( {\bf A}(t) + {\bf B}(t)  \right)\, {\bf z}
\]
where
\begin{equation}\label{matMN}
  {\bf A}(t) = \left( \begin{array}{cc}
    0 & \;\; {\bf H}(t) \\
    0 & 0
  \end{array}
  \right), \qquad \quad
  {\bf B}(t) = \left( \begin{array}{cr}
    0 &  \;\; 0 \\
    -{\bf H}(t) & 0
  \end{array}
  \right).
\end{equation}
For this system it is possible, therefore, to apply the commutator-free Magnus integrators 
constructed in subsection \ref{cfmi}.  In addition, one has
\[
{\bf [B,[B,[B,A]]]=[A,[A,[A,B]]]=0}, 
\]
and this property allows us to use
especially designed and highly efficient integration methods
\cite{blanes07smf}.

\subsubsection{Time-independent Schr\"odinger equation}

Restricting ourselves to the time-independent Schr\"odinger
equation (\ref{time-indep}),  we next illustrate how Magnus
integrators can in fact be used to compute the discrete
eigenvalues defined by the problem. Although only the
Schr\"odinger equation in a finite domain is considered,
\begin{equation}   \label{tise1}
   -\frac{d^2 \varphi}{dx^2} + V(x) \varphi = \lambda \varphi, \qquad x \in (a,b)
\end{equation}
the procedure can be easily adapted to other types of eigenvalue
 problems, in which one has to find both $\lambda \equiv E$ and
$\varphi$. Here it is
 assumed that the potential is smooth, $V \in C^m(a,b)$ and, for simplicity,
 $\varphi(a)=\varphi(b)=0$.

Under these assumptions, it is well known that the eigenvalues are
real, distinct and bounded from below.
 The problem (\ref{tise1}) can be formulated in the
 special linear group $\mathrm{SL}(2)$,
\begin{equation}   \label{tise2}
  \frac{d \mathbf{y}}{dx} = \left(  \begin{array}{ccr}
                          0   &  \;\; & \  \   1  \\
                     V(x)-\lambda &  &  0
                       \end{array}  \right)  \mathbf{y},  \qquad
             x \in (a,b), \qquad  \mbox{ where } \quad
\mathbf{y} = (\varphi, d\varphi/dx)^T,
\end{equation}
so that the Magnus expansion can be applied in a natural way. As usual, rather than
approximating the fundamental solution of (\ref{tise2}) in the entire interval
$(a,b)$ by $\exp(\Omega)$, the idea is to partition the interval into $N$ small
subintervals, and then apply a conveniently discretized version of the Magnus
expansion. In this way, the convergence problem no longer restricts the size
$(b-a)$ \cite{moan98eao}.

For the sake of simplicity, let us consider the fourth-order method (\ref{or4GL}).
Writing
\[
     V_{n,1} = V(x_{n} + (\frac{1}{2}-\frac{\sqrt{3}}{6})h), \qquad
     V_{n,2} = V(x_{n} + (\frac{1}{2}+\frac{\sqrt{3}}{6})h),
\]
where $h = (b-a)/N$ and $x_n = a + h \, n$, we form
\[
    \sigma_{n}(\lambda) = \left(  \begin{array}{ccc}
        -\frac{\sqrt{3}}{12} h^2 (V_{n,1}-V_{n,2}) &   \;\;\; &   \   \  \  h  \\
        \frac{1}{2} h (V_{n,1}+V_{n,2}) - h \lambda &  &
           \frac{\sqrt{3}}{12} h^2 (V_{n,1}-V_{n,2})
            \end{array}  \right),
\]
for $n=0,1,\ldots,N-1$. Then, the fourth-order approximation to
the solution of (\ref{tise2}) at $x=b$ is
\begin{equation}   \label{tise3}
      \mathbf{y}(b) = \e^{\sigma_{N-1}(\lambda)} \cdots
             \e^{\sigma_{1}(\lambda)} \, \e^{\sigma_{0}(\lambda)}
             \mathbf{y}(a)
\end{equation}
and the values of $\lambda$ are obtained from (\ref{tise3}) by using repeatedly 
the expression of the exponential of a traceless matrix, eq. (\ref{expMat2}), and requiring
that $\varphi(a)=\varphi(b)=0$. The resulting nonlinear equation in $\lambda$ can be solved,
for instance, by Newton--Raphson iteration, which provides quadratic convergence
for starting values sufficiently near the solution \cite{moan98eao}.

Although by construction this procedure leads to a global order of
approximation $\mathcal{O}(h^p)$ if a $p$th-order Magnus method is
applied, it turns out that the error also depends on the magnitude
of the eigenvalue. Specifically, the error in a $p$th-order method
grows as $\mathcal{O}(h^{p+1} \lambda^{p/2-1})$ \cite{moan98eao},
and thus one expects poor approximations for large eigenvalues.
This difficulty can be overcome up to a point by analyzing the
dependence on $\lambda$ of each term in the Magnus series and
considering partial sums of the terms carrying the most
significant dependence on $\lambda$. For instance, it is possible
to design a sixth-order Magnus integrator for this problem with
error $\mathcal{O}(h^7 \lambda)$, which therefore behaves like a
fourth-order method when $h^2 \lambda \approx 1$, whereas the
standard sixth-order Magnus scheme, carrying an error of
$\mathcal{O}(h^7 \lambda^2)$, reduces to an order-two method
\cite{moan98eao}. In any case, getting accurate approximations
when $|\lambda| \rightarrow \infty$ is more problematic
\cite{jodar00soa}.

\subsection{Sturm--Liouville problems}

The system defined by (\ref{tise1}) with boundary conditions
$\varphi(a)=\varphi(b)=0$ is just one particular example of a
second order Sturm--Liouville problem \cite{pryce93nso,zettl05slt}. It is thus
quite natural to try to apply Magnus integrators to more general
problems within this class.

A second order Sturm--Liouville eigenvalue problem has the form
\begin{equation}  \label{st-1}
  \frac{d}{dx} \left( p(x) \frac{dy}{dx}(x) \right) + q(x) y(x) =
   \lambda \, r(x) y(x) \qquad \mbox{ on } (a,b)
\end{equation}
with separated boundary conditions which commonly have the form
\begin{equation}  \label{st-2}
   A_1 y(a) + A_2 p(a) y^\prime(a)  =  0  \qquad
   B_1 y(b) + B_2 p(b) y^\prime(b)  =  0
\end{equation}
for given constants $A_i, B_i$ and functions $p(x)$, $q(x)$ and $r(x)$.
Solving this problem means, of course,
determining the values $\lambda_n$ of $\lambda$ for which eq.
(\ref{st-1}) has a nontrivial (continuously differentiable
 square integrable) solution $y_n(x)$ satisfying
equations (\ref{st-2}) \cite{zettl05slt,bailey78aso}.

These and other higher order Sturm--Liouville problems can be
recasted as a linear matrix system of the form
\begin{equation} \label{st-3}
  Y' = (\lambda B + C(x)) \, Y
\end{equation}
by transforming to the so-called \emph{compound matrix} or modified
Riccati variables \cite{greenberg98ota,jodar00soa}. Here $B$ is a constant matrix.
When generalizing
the above treatment based on the Magnus expansion to this problem,
there is one
elementary but important remark worth to be stated explicitly:
\emph{unless the differential equation (\ref{st-3}) has the same
large $\lambda$-asymptotics as some differential equation with
$x$-independent coefficients, then it will be impossible to develop
a Magnus method which accurately approximate its solutions for large
$\lambda$} \cite{jodar00soa}.  The reason is that a Magnus method
approximates the solution by a discrete solution calculated using a
formula of the form $Y(x_{n+1}) = \exp(\sigma_n(\lambda)) Y(x_n)$; in
particular, on the first step $(x_0,x_1)$, the differential equation
is approximated by one in which the coefficient matrix is replaced
by the $x$-independent matrix $\sigma_0(\lambda)/(x_1-x_0)$.

In consequence, the attention should be restricted to systems for
which it is known that a suitable constant-coefficient system provides
the correct asymptotics. This is the case, in particular, for
equation (\ref{tise1}), and more generally for linear equations of order
$2n$ in which the $(2n-1)$st derivative is zero, such as
\[
  (-1)^n y^{(2n)} + \sum_{j=0}^{2n-2} q_j(x) y^{(j)} = \lambda y.
\]
Here the asymptotics are determined by the equation
$(-1)^n y^{(2n)} = \lambda y$ \cite{naimark68ldo}. Even then, the
methods developed in \cite{moan98eao} for equation
(\ref{tise1}) and implemented for systems with matrices of general
size in \cite{jodar00soa} require a $\lambda$-dependent step size
restriction of the form $h \le \mathcal{O}(|\lambda|^{-1/4})$
in order to be defined. Nevertheless, the analysis carried out in
\cite{jodar00soa} shows that the fourth order Magnus integrator
based on a two-point Gaussian quadrature appears to offer significant
advantages over conventional methods based on power series and library
routines.

Magnus integrators have also been successfully applied in the
somewhat related problem of computing the Evans function for
spectral problems arising in the analysis of the linear
stability of travelling wave solutions to reaction-diffusion
PDEs \cite{aparicio05neo}. In this setting, Magnus integrators
possess some appealing features in comparison, for instance, with
Runge--Kutta schemes: (1) they are unconditionally stable; (2) their
performance is superior in highly oscillatory regimes and (3) their
step size can be controlled in advance. Items (2) and (3) are due
to the fact that error bounds for Magnus methods depend only
on low order derivatives of the coefficient matrix, not (as for
Runge--Kutta schemes) on derivatives of the solution. Therefore,
performance and, correspondingly, the choice of optimal step size
remain uniform over any bounded region of parameter space
\cite{aparicio05neo}.

\subsection{The differential Riccati equation}

Let us consider now the two-point boundary value
problem in the $t$ variable defined by the linear differential equation
\begin{equation}\label{BVP_Riccati}
  {\bf y}' \equiv \left( \begin{array}{c}
    {\bf y}_1' \\ {\bf y}_2'
\end{array} \right) = \left( \begin{array}{cc}
A(t)  & \   \  B(t) \\
C(t) & D(t)
\end{array} \right) \
\left( \begin{array}{c}
    {\bf y}_1 \\ {\bf y}_2
\end{array} \right), \qquad 0<t<T
\end{equation}
with separated boundary conditions
\begin{equation}\label{sbc}
(K_{11} \; \; K_{12}) \left( \begin{array}{c}
    {\bf y}_1 \\ {\bf y}_2
\end{array} \right)_{t=0} = \; \gamma_1, \qquad
(K_{21} \;\; K_{22}) \left( \begin{array}{c}
    {\bf y}_1 \\ {\bf y}_2
\end{array} \right)_{t=T} = \, \gamma_2.
\end{equation}
Here $ A \in \mathbb{C}^{q \times q} \, $, $
B \in \mathbb{C}^{q \times p} $, $C \in \mathbb{C}^{p \times q} \, $, $ D \in
\mathbb{C}^{p \times p} \, $, whereas
${\bf y}_1, \gamma_2\in \mathbb{C}^{p}$, ${\bf y}_2,\gamma_1 \in
\mathbb{C}^{q}$ and the matrices $K_{ij}$ have appropriate
dimensions. We next introduce the time-dependent change of variables
(or picture)  ${\bf y}=Y_0(t) \, {\bf w}$,  with
\begin{equation}\label{cov1}
Y_0(t) = \left( \begin{array}{cc}
I_p  & \   \  0 \\
X(t) & I_q
\end{array} \right)
\end{equation}
and choose the matrix $ X\in \mathbb{C}^{p \times q}$ so as to ensure that in the
new variables $\mathbf{w} = Y_0^{-1}(t) \mathbf{y}$ the system assume the partly
decoupled structure \cite{dieci92nio}
\begin{equation}  \label{dececri}
  {\bf w}' \equiv \left( \begin{array}{c}
    {\bf w}_1' \\ {\bf w}_2'
\end{array} \right) = \left( \begin{array}{cc}
A+BX  & \  \  B \\
O & D-XB
\end{array} \right) \
\left( \begin{array}{c}
    {\bf w}_1 \\ {\bf w}_2
\end{array} \right),
\end{equation}
together with the corresponding boundary conditions for ${\bf w}$. It turns out that
this is possible if and only if $X(t)$ satisfies the so-called differential Riccati
equation \cite{dieci88art}
\begin{equation} \label{Riccati}
 X' =  C(t) + D(t) X - X A(t) - X B(t) X,  \qquad X(0) = X_0
\end{equation}
for some $X_0$. By requiring
\begin{equation}   \label{icrica1}
     X_0=-K_{12}^{-1}K_{11},
\end{equation}
then the boundary conditions (\ref{sbc}) also decouple as
\begin{equation}   \label{sbc12}
(O \;\; K_{12}) {\bf w}(0) = \gamma_1, \qquad
\big( K_{21}+K_{22}X(T) \;\;\; K_{22} \big) {\bf w}(T) = \gamma_2.
\end{equation}
Here we assume without loss of generality that $K_{12}$ is
invertible. In this way, the original boundary value problem can
be solved as follows \cite{dieci92nio,dieci88art}: (i) solve
equation (\ref{Riccati}) with initial condition (\ref{icrica1})
from $t=0$ to $t=T$; (ii) solve the $\mathbf{w}_2$-equation in
(\ref{dececri}) and (\ref{sbc12}), also from zero to $T$; (iii)
solve the $\mathbf{w}_1$-equation in (\ref{dececri}) from $t=T$ to
$t=0$ and recover ${\bf y}=Y_0(t) \, {\bf w}$. In other words, the
solution of the original two-point boundary value problem can be
obtained by solving a sequence of three different initial value
problems, one of which involves the nonlinear equation
(\ref{Riccati}). Obviously, steps (i) and (iii) can be solved
using numerical integrators based on the Magnus expansion. It
could be perhaps more surprising that these algorithms can indeed
be used to integrate the Riccati equation in step (ii).

Although the boundary value problem (\ref{BVP_Riccati}) is a convenient way to introduce
the differential Riccati equation (\ref{Riccati}), this equation arises in many fields
of science and engineering, such as linear quadratic
optimal control, stability theory, stochastic control,
differential games, etc. Accordingly, it has received
considerable attention in the literature, both focused to its theoretical aspects
 \cite{bittanti91tre,reid72rde}
and its numerical treatment \cite{dieci92nio,kenney85nio,schiff99ana}.

In order to apply Magnus methods to solve numerically
the Riccati equation, we first apply the transformation
\begin{equation}  \label{trans-ri1}
 X(t) \ = \ V(t) \, W^{-1}(t),
\end{equation}
with $V \in \mathbb{C}^{p \times q}$, $W \in \mathbb{C}^{q \times q}$
and $V(0)=X_0, W(0)=I_q$, in the region where $ W(t) $ is
invertible. Then eq. (\ref{Riccati}) is equivalent to the linear
system
\begin{equation}\label{RiccatiLin1}
\begin{array}{c}
\displaystyle
               Y' \, = \, S(t)  \, Y(t) \, , \qquad  Y(0) \, = \,
              \left( \begin{array}{c} I_{q} \\ X_{0}
               \end{array} \right) \,
\end{array}
\end{equation}
with
\begin{equation}\label{RiccatiLin2}
\begin{array}{c}
             \displaystyle
  Y(t) = \left( \begin{array}{c}
    W(t) \\ V(t)
\end{array} \right),  \qquad \qquad
  S(t) = \left( \begin{array}{cc}
A(t) & \  \  B(t) \\
C(t) & D(t)
\end{array} \right)
\end{array}
\end{equation}
so that the previous Magnus integrators for linear problems can be
applied here. Apparently, this system is similar to
(\ref{BVP_Riccati}), but now we are dealing with an initial value
problem and $Y$ is a matrix instead of a vector.

When dealing in general with the differential Riccati equation (\ref{Riccati}), it is
meaningful to distinguish the following three cases:
\begin{itemize}
 \item[(i)] The so-called symmetric Riccati equation, which corresponds to $q=p$,
$D(t) = -A(t)^{T}$ real, and $B(t)$, $C(t)$ real and symmetric
matrices. In this case, the solution satisfies $X^T=X$. It is straightforward to show
that this problem is equivalent to the treatment of the 
generalized time-dependent harmonic oscillator, described by the
Hamiltonian function
\[
 H = \frac12 {\bf p}^T B(t){\bf p} +
             {\bf p}^T A(t){\bf q} -
     \frac12 {\bf q}^T C(t) {\bf q}.
\]
The approximate solution attained by Magnus integrators when applied to
(\ref{RiccatiLin1})-(\ref{RiccatiLin2}) can be seen as the exact
solution corresponding to a perturbed symplectic matrix $\tilde{S}(t)\simeq
S(t)$. In other words, we are solving exactly a perturbed Hamiltonian system
so that the approximate solution, $\tilde{X}$, will shares several properties of the
exact solution, in particular $\tilde{X}^T=\tilde{X}$.
 \item[(ii)] The linear non-homogeneous problem
\begin{equation}\label{linear-non-homog}
  X' = D(t) X + C(t)
\end{equation}
corresponds to the particular case $A=0$ and $B=0$ in
(\ref{Riccati}).
 \item[(iii)] The problem
\begin{equation}\label{isospectral}
  X' = D(t) X + X A(t)
\end{equation}
is recovered from (\ref{Riccati}) by taking $C=0$ and $B=0$. It has been treated in
\cite{iserles01ame} by developing an \emph{ad hoc} Magnus-type expansion. Notice that
the case $p=q$, $D=-A$ corresponds to the linear isospectral system (\ref{nde.4}).
\end{itemize}

\subsection{Stochastic differential equations}


In recent years the use of stochastic differential equations (SDEs) has
become widespread in the simulation of random phenomena appearing
in physics, engineering, economics, etc, such as turbulent diffusion,
polymer dynamics and investment finance \cite{burrage04nmf}. Although
models based on SDEs can offer a more realistic representation of the
system than ordinary differential equations, the design of effective numerical
schemes for solving SDEs is, in comparison with ODEs, a less  developed field
of research. This fact notwithstanding, it is true that recently new classes of integration
methods have been constructed which automatically incorporate
conservation properties the SDE  possesses. Since some of the methods are based precisely
on the Magnus expansion, we briefly review here their main features, and refer
the reader to the more advanced literature on the subject
\cite{burrage04nmf,kloeden92nso,platen99ait}.

A SDE in its general form is usually written as
\begin{equation}   \label{sde1}
   dy(t) = g_0(t,y(t)) \, dt + \sum_{j=1}^d g_j(t,y(t)) \, dW_j(t), \qquad y(0) = y_0, \qquad
    y \in \mathbb{R}^m,
\end{equation}
where $g_j$, ($j \ge 0$),  are $m$-vector-valued functions. The function $g_0$ is the deterministic
continuous component  (called the \emph{drift coefficient}), the $g_j$, ($j \ge 1$), represent
the stochastic continuous components (the \emph{diffusion coefficients}) and $W_j$ are
$d$ independent Wiener processes. A Wiener process $W$ (also called Brownian motion)
is a stochastic process \cite{burrage04nmf} satisfying
\[
    W(0) = 0, \qquad E[W(t)] = 0, \qquad \mathrm{Var}[W(t)-W(s)] = t-s, \quad t > s
\]
which has independent increments on non-overlapping intervals. In other words, a Wiener process
is normally distributed with mean or expectation value $E$ equal to zero and variance $t$.

Equation (\ref{sde1}) can be written in integral form as
\begin{equation}   \label{sde2}
   y(t) = y_0 + \int_0^t g_0(s,y(s)) \, ds + \sum_{j=1}^d \int_0^t g_j(s,y(s)) \, dW_j(s).
\end{equation}
The $d$ integrals in (\ref{sde2}) cannot be considered as Riemann--Stieltjes integrals, since
the sample paths of a Wiener process are not of bounded variation. In fact, if different choices
are made for the point $\tau_i$ (in the subintervals $[t_{i-1},t_i]$ of a given partition) where
the function is evaluated, then the approximating sums for each $g_j$,
\begin{equation}   \label{sde3}
   \sum_{i=1}^N g_j(\tau_i, y(\tau_i)) (W_j(t_i) - W_j(t_{i-1})), \qquad
      \tau_i = \theta t_i + (1- \theta) t_{i-1},
\end{equation}
converge (in the mean-square sense) to different values of the integral, depending
on the value of $\theta$ \cite{burrage99hso}. Thus, for instance,
\[
   \int_a^b W(t) dW(t) = \frac{1}{2} (W^2(b) - W^2(a)) + (\theta - \frac{1}{2})(b-a).
\]
If $\theta=0$, then $\tau_i = t_{i-1}$ (the left-hand point of each subinterval) and the resulting
integral is called an It\^o integral; if $\theta=1/2$ (so that the midpoint is used instead), one has
a Stratonovich integral. These are the two main choices and, although they are related,
 the particular election depends ultimately on the nature of the process to be modeled
 \cite{burrage99hso}. It can be shown that the Stratonovich calculus
 satisfies the Riemann--Stieltjes
 rules of calculus, and thus it is the natural choice here.

 When dealing with numerical methods for solving (\ref{sde1}), there are two ways of
 measuring accuracy \cite{burrage04nmf}. The first is \emph{strong convergence}, essential when the
 aim is  to get numerical approximations to the trajectories which are close to the exact solution.
 The second is \emph{weak convergence}, when only certain moments of the solution are of
 interest. Thus, if $\hat{y}_n$ denotes the numerical approximation to $y(t_n)$ after $n$
 steps with constant step size $h = (t_n - t_0)/n$, then  the numerical solution
 $\hat{y}$ converges strongly to the exact solution $y$
 with strong global order $p$ if there exist $C>0$ (independent of $h$) and $\delta > 0$ such
 that
 \[
     E[\| \hat{y}_n - y(t_n) \|] \le C h^p, \qquad h \in (0,\delta).
 \]
 It is worth noticing that $p$ can be fractional, since the root mean-square order of the
 Wiener process is $h^{1/2}$. One of the simplest procedures for solving (\ref{sde1}) numerically
 is the so-called Euler--Maruyama method \cite{maruyama55cmp},
 \begin{equation}   \label{sde4}
    y_{n+1} = y_n + \sum_{j=0}^d J_j g_j(t_n,y_n),
\end{equation}
where
\[
    h = t_{n+1} - t_n, \quad J_0 = h, \quad J_j = W_j(t_{n+1}) - W_j(t_n), \quad j=1,\ldots,d.
\]
This scheme turns out to be of strong order $1/2$. Here the $J_j$ can be computed
as $\sqrt{h} N_j$, where the $N_j$ are $N(0,1)$ normally distributed independent
random variables \cite{burrage99hso}.

For the general non-autonomous linear Stratonovich problem defined by
\begin{equation}    \label{sde5}
   dy = G_0(t) y \, dt + \sum_{j=1}^d G_j(t) \, y \, dW_j, \qquad y(0) = y_0 \in \mathbb{R}^m
\end{equation}
the Magnus expansion for the deterministic case can be extended in a
quite straightforward way. It is well worth noticing
that in equation (\ref{sde5}), even when the functions $G_j$ are constant, there is no explicit solution
\cite{kloeden92nso}, unless all the $G_j$, $j \ge 0$, commute with one another, in which case it holds
that
\begin{equation}    \label{sde6}
    y(t) = \exp \left( G_0 \, t + \sum_{j=1}^d G_j W_j(t) \right) y_0.
\end{equation}
In many modeling situations, however, there is no reason to expect that the functions $G_j$ associated
with the Wiener processes commute. If for simplicity we only consider the autonomous case and
write
\[
    G(t) \equiv  G_0 \, dt + \sum_{j=1}^d G_j \, dW_j(t),
\]
then (\ref{sde5}) can be expressed as
\[
    dy = G(t) \, y \, dt, \qquad y(0) = y_0
\]
and thus one can formally apply the Magnus expansion to this
equation to get $y(t) = \exp(\Omega(t)) y_0$. The first term in
the series reads in this case
\[
   \int_0^t G(s) \, ds \equiv \int_0^t G_0 \, ds + \sum_{j=1}^d \int_0^t G_j \, dW_j(s) =
       G_0 \, t + \sum_{j=1}^d G_j J_j,
\]
where now
\[
    J_j = \int_0^t dW(s) = W_j(t) - W_j(0).
\]
By inserting these expressions into the recurrence associated with the Magnus series, Burrage
and Burrage \cite{burrage99hso}  show that
\begin{eqnarray}   \label{sde7}
   \Omega(t) & = & \sum_{j=0}^d G_j J_j + \frac{1}{2} \sum_{i=0}^d \sum_{j=i+1}^d [G_i, G_j]
       (J_{ji} - J_{ij}) \\
      & & + \sum_{i=0}^d \sum_{k=0}^d \sum_{j=k+1}^d [G_i,[G_j,G_k]]
       \left( \frac{1}{3} (J_{kji} - J_{jki}) + \frac{1}{12} J_i (J_{jk} - J_{kj}) \right) + \cdots, \nonumber
\end{eqnarray}
where the multiple Stratonovich integrals are defined by
\begin{equation}         \label{sde8}
   J_{j_1 j_2 \cdots j_l}(t) = \int_0^t \int_0^{s_l} \cdots \int_0^{s_2} dW_{j_1}(s_1) \cdots
       dW_{j_l}(s_l), \qquad j_i \in \{0,1,\ldots,d\}.
\end{equation}
Since not all the Stratonovich integrals are independent,  one has
to compute only $d(d+1)(d+5)/6$ stochastic integral evaluations to achieve strong order 1.5 with
the expression (\ref{sde7}) \cite{burrage99hso}. If, on the other hand, $\Omega(t)$ is truncated
after the first set of terms, then the resulting numerical approximation
\[
   y(t) = \exp \left( \sum_{j=0}^d G_j J_j \right) y_0
\]
has strong order $1/2$, but leads to smaller error coefficients than the Euler--Maruyama method
(\ref{sde4})
\cite{burrage99hso}. Furthermore, the error becomes smaller as the $G_i G_j$ terms get closer
to commuting and the scheme preserves the underlying structure of the problem.

One should notice at this point that  equation (\ref{sde1}) (or, in the linear case, equation (\ref{sde5})),
has formally the  same  structure as the nonlinear ODE (\ref{cf1}) appearing in control theory. Therefore,
the formalism developed there to get the Chen--Fliess series can be applied here with the
alphabet $I = \{ 0, 1, \ldots, d\}$ and the integrals
\[
   \left( \int_0 \mu \right)(t) = \int_0^t \mu(s) \, ds, \qquad
   \left( \int_i \mu \right)(t)  = \int_0^t \mu(s) \, dW_i(s), \quad i \ge 1,
\]
since the Stratonovich integrals satisfy the integration by parts rule.
In other words,  one can obtain the corresponding Magnus expansion for arbitrary (linear or nonlinear)
stochastic differential equations simply by following the same procedure as for deterministic ODEs.

With respect to nonlinear Stratonovich stochastic differential equations, it should be remarked that
the use of Lie algebraic techniques as well as the design of Lie group methods for obtaining
strong approximations when the solution evolves on a smooth manifold
 has received considerable attention in the recent literature
 \cite{lord06esi,malham07slg,misawa01ala}.



\section{Physical applications}

From previous sections it should be clear that ME has a strong
bearing on both Classical and Quantum Mechanics. As far as
Classical Mechanics is concerned this has been most explicitly
shown in section \ref{GNLM}. On its turn Quantum Mechanics has
been repeatedly  invoked as a source of applications, among
others, in sections \ref{PLT} and \ref{section4}. In this section
we present in a very schematic way and with no aim at completeness
some applications of ME in different areas of the physical
sciences. This will show that over the years ME has been one of
the preferred options to deal with equation (\ref{laecuacion})
which, under different appearances, pervades the entire field of
Physics. In the works mentioned here, almost exclusively
analytical methods are used and, in general, one must recognize
that in most, if not all, cases listed only the first two orders
of the expansion have been considered. In very specially simple
applications, due to particular algebraic properties of the
operators involved, this happens to be exact. Only with the more
recent advent of the numerical applications, as has been
emphasized in sections \ref{section5} and \ref{animag}, has the
expansion been carried in a more systematic way to higher orders.

\subsection{Nuclear, atomic and molecular physics}
\label{sec:nucatom}

As far as we know the first physical application of ME dates back to
1963. Robinson \cite{robinson63mce} published a brand new formalism
to investigate multiple Cou\-lomb excitations of deformed nuclei. As a
matter of fact, he states explicitly  that only after completion of
his work he discovered the ME. His derivation of ME formulas is
certainly worth reading.

The Coulomb excitation process yields information about the low
lying nuclear states. Prior to Robinson work, the theory was
essentially based on perturbation expansions which requires that the
bombarding energy is kept so low that no nuclear reaction takes
place. Even worse, if heavier ions are used as projectiles the
electric field exerted on the target nucleus is so strong that
perturbation methods fail.

The work by Robinson improved the so-called at that time
\emph{sudden approximation}, which is equivalent to the assumption
that all nuclear energy levels are degenerate. Results are reported
in that reference for rotational and vibrational nuclei.

As representatives of the applications of ME in the field of Atomic
Physics we mention several types of atomic collisions. The ME is
used in \cite{eichler77maf} to derive the transition amplitude and
the cross section for K-shell ionization of atoms by heavy-ion
impact. This is an important process in heavy-ion physics. The
theoretical investigations of these reactions always assumed that
the projectile is a relatively light ion such as a proton or an
$\alpha$ particle. The use of ME allowed to extend the studies to
the ionization of light target atoms by much heavier projectile
ions.

In \cite{wille81mef,wille86moi} the ME is applied to study the
time-evolution of rotationally induced inner-shell excitation in
atomic collisions. In this context the internuclear motion can be
treated classically and the remaining quantum-mechanical problem
for the electronic motion is then time-dependent. In particular,
in \cite{wille81mef} explicit results for Ne$^{+}$Ne collisions
are given as well as a study of the convergence properties of ME
with respect to the impact parameter.

The ME is applied in \cite{hyman85dma} to the theoretical study of
electron-atom collisions, involving many channels coupled by
strong, long-range forces. Then, as a test case, the theory is
applied to electron-impact excitation of the resonance transitions
of Li, Na and K. Computations up to second order are carried out
and the cross sections found are in good agreement with
experimental data for the intermediate-energy range.

The following examples illustrate the use of ME in Molecular
Physics. In \cite{cady74rsl} it is applied for the first time to the
theory of the pressure broadening of rotational spectra. Unlike the
previous approaches to the problem, the S-matrix obtained is
unitary. As a consequence of it the relative contributions on the
linewidth of the attractive and repulsive anisotropy terms in the
interaction potential may be calculated.

Floquet theory is applied in \cite{Milfeld83sea} to systems periodic
in time in the semiclassical approximation of the
radiation--quantum-molecule interaction in an intense field. The
paper contains an interesting discussion about the appropriateness
of the Schr\"odinger and Interaction pictures. One and two-photon
probability transitions are obtained up to second order in ME.
Noteworthy, formulas through fifth order in ME are given, in a less
symmetrical form.

In \cite{schek81aot} it is explored the applicability of ME to the
multiphoton excitation of a sparse level system for which the
rotating wave function approximation is not applicable. This
reference provides a method of treating the time-evolution of a
pumped molecular system in the low energy region, which is
characterized by a sparse distribution of bound vibrational states.

\subsection{Nuclear magnetic resonance: Average Hamiltonian
theory}
\label{sec:NMR}

By far this is the field where ME has been
most systematically used and so we consider it apart. From
elementary quantum mechanics it is known that a constant magnetic
field breaks the degeneracy of the energy levels of an atomic
nucleus with spin. If the nuclear spin is $s$ then $2s+1$ sublevels
appear. In a sample these states are occupied according to Boltzman
distribution with a exponentially distributed population. When a
time-dependent radio-frequency electromagnetic field of appropriate
frequency is applied then energy can be absorbed by certain nuclei
which are consequently promoted to higher levels. This is the
physical phenomenon of Nuclear Magnetic Resonance (NMR).

It was Evans \cite{evans68osa} and Haeberlen and Waugh
\cite{haeberlen68cae} who first applied the ME to NMR. Since that
time, the ME has been instrumental in the development of improved
techniques in NMR spectroscopy \cite{burum81meg}.

The major advantage of NMR is the possibility of modifying the
nuclear spin Hamiltonian almost at will and to adapt it to the
needs of the problem to be solved \cite{ernst86pon}. This
manipulation requires an external perturbation of the system that
can be either time-independent (changes of temperature, pressure,
solvents, etc.) or time-dependent (sample spinning, pulsed
radio-frequency fields). In the later context, the concept of
average Hamiltonian provides an elegant description of the effects
of a time-dependent perturbation applied to the system. It was
originally introduced in NMR by Waugh \cite{ernst86pon,waugh82tob}
to explain the effects of multiple-pulse sequences.

The basic idea of average Hamiltonian theory, for a system governed
by $H(t)$, consists in describing the effective evolution within a
fixed time interval by an average Hamiltonian $\overline{H}$. The
theory states that this is always possible provided $H(t)$ is
periodic. The average Hamiltonian depends, however, on the beginning
and the end of the time interval observed. It is right the average
Hamiltonian $\overline{H}$ which is obtained by means of the ME.

When the total Hamiltonian splits in a time-independent and a
time-dependent piece, $H(t)=H_0+H_1(t)$, with $H_1(t)$ periodic, an
interesting new picture is used, labeled \emph{toggling frame}. It
certainly reminds the Interaction Picture defined in equation (\ref{GInt}) but
is rather different. In (\ref{Ufac}) the operator $G(t)$ associated
to the toggling frame is given by the time-ordered expression
\begin{equation}\label{toggling}
G(t)=  \mathcal{T} \left( \exp{\int_0^t \tilde H_1(s)\, ds}  \right)
\end{equation}
and the key point here is whether the formal time-ordering is
solvable.

As already mentioned the interplay between NMR and ME has been
fruitful along the years and acted in both directions. To prove that
it is still alive we quote two recent papers directly dealing with
that mutual interaction. In \cite{vandersypen04ntf} the relevance of ME
through NMR for the new field of quantum information processing and
computing is envisaged. The authors of \cite{veshtort06sea} have recently
explored the fourth and sixth order of ME to design  a software
package for the simulation of NMR experiments. Although their
results are not yet conclusive their work shows the vitality of the
ME.

\subsection{Quantum Field Theory and High Energy Physics}

The starting point of any quantum field theory (QFT) calculation
is again equation (\ref{laecuacion}) which is conventionally
treated by time-dependent perturbation theory. So the first
question which arises is the connection between ME and Dyson-type
series. This has already been dealt with in subsection
\ref{sec2.4}. The main advantage of the first one is, as has
already repeatedly pointed out, that the unitary character of the
evolution operator is preserved at all orders of approximation. In
the historical development of QFT it was, however, Dyson approach
what was followed. The lost of unitarity was not thought to be of
great relevance in front of the problems presented by the
infinities appearing all over the place. Once renormalization idea
was introduced, this awful aspect of the theory was put also under
control. The results were, from the point of view of the
calculation of observable magnitudes, an unprecedented success:
the agreement between experimental results and their theoretical
counterparts was impressive.

So no wonder if alternatives to Dyson series, such as ME, did not
see popular acceptance. However, during the years there has been
interesting developments involving ME in the context of field
theory. In particular its use has shown to imply a re-ordering of
terms in the calculations in such a way that some infinities do
not appear and so make not necessary the introduction
of counterterms in the Hamiltonian. This is what happens for example in \cite%
{stefanovich01qft} where models are built in which ultraviolet
divergence appear neither in the Hamiltonian nor in the S-matrix.
In principle the results are valid for relativistic field theories
with any particle content and with minimal assumptions about the
form of the interaction.

ME as an alternative to conventional perturbation theory for
quantum fields has also been studied in \cite{dahmen86grf} where
normal products, Wick theorem and the like are used to deduce
graphical rules \textit{\`{a} la} Feynman for the terms $\Omega
_{i}$ for any value of $i$. This has been proved \cite
{dahmen88pao} helpful in the treatment of infrared divergences for
some QED processes such as the scattering of an electron on an
external potential or the bremsstrahlung of one hard photon, both
cases accompanied by the emission of an arbitrary number of soft
photons. An interesting feature of the ME based approach is that
the theory is free form infrared and mass divergences as a
consequence of the unitary character of the approximate
time-evolution operator \cite{dahmen88pao,dahmen82ido}. The method
is simpler than previous techniques based on re-summation of the
perturbation series to get rid of those divergences. Furthermore,
in contrast with the usual treatment, the resolution of the
detector is not an infrared regularization parameter.  An
application to Bhabha scattering (elastic electron-positron
scattering) is developed in \cite{dahmen91uaf}. The difficulties
of extending the results to Quantum Chromodynamics are commented
in \cite{dahmen86grf}.

Recently, an extension of the Magnus expansion 
has also been used in the context of Connes--Kreimer's Hopf algebra
approach to perturbative renormalization of quantum field theory 
\cite{connes00riqI,connes00riqII}. In particular,
in \cite{ebrahimi-fard08anb}, it is shown that this generalized ME allows one to
solve the Bogoliubov--Atkinson
recursion in this setting.

In the field of high energy physics ME has also found applications. Next
we just quote two instances: one referring to heavy ion collisions
and the other to elementary particle physics.

In collision problems the unitarity of the time evolution operator
imposes some bound on the experimentally observable cross
sections. When these magnitudes are theoretically calculated one
usually keeps only the lowest orders in conventional perturbation
theory. This may be harmless at relatively low energies but it may
lead to unitarity bounds violation as the energy increases. The
use of a manifestly unitary approximation scheme is then
necessary. ME provides such an scheme. In heavy ion collision at
sufficiently high energy and given kinematic configuration (small
impact parameter) that violation is produced for
\textit{e}$^{+}e^{-}$when analyzed
in the lowest-order time-dependent perturbation theory. In \cite%
{ionescu94uae} a remedy for this situation was advanced by the use
of first order ME. It is discussed how most theoretical approaches
are based on either lowest-order time-dependent perturbation
theory or the Fermi--Weizs\"aker--Williams method of virtual
photons. These approaches violate unitarity bounds for
sufficiently high collision energies and thus the probability for
single-pair creation exceeds unity. With some additional
assumptions a restricted class of diagrams associated with
electron-positron loops can be summed to infinite order in the
external charge. The electron-positron transitions amplitudes and
production probabilities obtained are manifestly unitary and gauge
invariant.

 In recent years there has been a great interest in neutrino oscillations and
its closely related solar neutrino problem. The known three
families of neutrinos with different flavors (electron, muon and
tau) were experimentally shown to be able of converting into each
other. The experiments were carried out with neutrinos of
different origins: solar, atmospheric, produced in nuclear
reactors and in particle accelerators. Here oscillation means that
neutrinos of a given flavour can, after propagation, change their
flavour. The accepted explanation for this phenomenon is that
neutrinos with a definite flavour have not a definite mass, and
the other way around. Let us denote by $\left| \nu _{\alpha
}\right\rangle $ the neutrinos of definite flavour with $\alpha $
the flavor index (i.e., electron, muon, tau) and by $\left| \nu
_{i}\right\rangle $ the neutrinos with well defined
distinct masses $m_{i},$ $i=1,2,3$ . Then the previous assertion means that $%
\left| \nu _{\alpha }\right\rangle $ will be a linear combination
of the different $\left| \nu _{i}\right\rangle. $

As neutrinos with different masses propagate with different
velocities this mixing allows for flavour conversion, i.e. for
neutrinos oscillations. ME enters the game in the solution of the
evolution operator in one basis. If one neutrino ``decouples''
from the other two then the problem reduces to one with only two
effective generations. Mathematically it is similar to the
two level system studied in Section 3. The reader is referred to \cite%
{dolivo92maf,dolivo90mea,dolivo96ntn,supanitsky08pee} for details of the calculations for two
and three generations.

\subsection{Electromagnetism}

The Maxwell equations govern the evolution of electromagnetic
waves. The equations are linear in its usual form and, although
they have been extensively studied, the complexity to obtain approximate
solutions is rather significant. When
reformulating the equations for a given problem, where the
geometry, the boundary conditions, etc. are considered and
appropriate discretisations are taken into account, it is frequent to end with
linear non-autonomous equations, so that 
the Magnus expansion can be of interest here.

To illustrate some possible applications, let us consider the
Maxwell equations
\begin{equation}\label{Maxwell}
  \left\{ \begin{array}{l} \displaystyle
    \frac{\partial \mathbf{H}}{\partial t} = - \frac{1}{\mu} \, \nabla
    \times \mathbf{E} \\
     \displaystyle
     \frac{\partial \mathbf{E}}{\partial t} = \frac{1}{\varepsilon} \, \nabla
    \times \mathbf{H} - \frac{1}{\varepsilon} \mathbf{J}(t)
  \end{array}  \right.
\end{equation}
where $\mathbf{H},\mathbf{E},\mathbf{J}$ are the magnetic and
electric field intensities, and the current density, respectively,
$\mu$ is the permeability and $\varepsilon$ is the permittivity.
After space discritisation, these equations turn into a large
linear system of non-homogeneous equations and Magnus integrators can in principle be applied. 
 Time
dependent contributions can also appear from boundary conditions
or external interactions. In some cases 
$\mathbf{J}=\sigma \, \mathbf{E}$ \cite{botchev08nid,jiang06sfd}, so that, 
if the conductivity $\sigma$ is not constant, Magnus
integrators can be useful.

Let us now consider the frequency domain Maxwell equations (with
$\mathbf{J}=\mathbf{0}$)
\begin{equation}\label{Maxwell2}
  \left\{ \begin{array}{l} \displaystyle
    \nabla \times \mathbf{E}  = i w \mu \mathbf{H}\\
     \displaystyle
    \nabla \times \mathbf{H}  = -iw\varepsilon \mathbf{E}
  \end{array}  \right.
\end{equation}
where $w$ is the angular frequency. These equations are of
interest for time-harmonic lightwaves propagating in a
wave-guiding structure composed of linear isotropic materials. If
one is interested in the $x$ and $y$ components of $\mathbf{H}$
and $\mathbf{E}$, and how they propagate in the $z$ direction, the
equations to solve, after appropriate discretisation, take the
form  \cite{lu06stf}
\begin{equation}\label{Maxwell2b}
    -iw\varepsilon \frac{d \mathbf{u}}{dz} = A(z) \mathbf{v}, \qquad
    -iw\mu \frac{d \mathbf{v}}{dz} = B(z) \mathbf{u}.
\end{equation}
Here $\mathbf{u}$, $\mathbf{v}$ are vectors and $A,B$ matrices
depending on $z$, which in this case play the role of evolution
parameter. A fourth-order Magnus integrator has been used in
\cite{lu06stf}. From section \ref{section5} we observe that higher order Magnus
integrators, combined with splitting methods could also lead to
efficient algorithms to obtain accurate numerical results with
preserved qualitative properties of the exact solution.


\subsection{Optics}

In the review paper \cite{dattoli88aav} one can find references to
some early applications of ME to Optics. For example to Hamiltonians
involving the generators of  $\mathrm{SU(2)}$, $\mathrm{SU(1,1)}$ and
Heisenberg--Weyl groups with applications to laser-plasma scattering
and pulse propagation in free-electron lasers. Here as
representatives of the more modern interest of ME in Optics we quote
two applications referring to Helmholtz equation and to the study of
Stokes parameters.

Helmholtz equation in one spatial dimension with a variable
refractive index $n(x)$ reads
\begin{equation}   \label{Helmholtz}
\psi^{\prime\prime}(x)+k^2n^{2}(x)\psi(x)=0,
\end{equation}
where $k$ is the wavenumber in vacuum.

Recently this time-honored wave equation has been treated in two
different ways, both using ME. From a more formal point of view, in
\cite{khan05wdm} Helmholtz equation is analyzed following the well
known procedure followed by Feshbach and Villars to convert the
second order relativistic quantum Klein--Gordon differential
equation for spin-0 particles in a first order differential equation
involving two components wave functions (the original wave function
and its time derivative). The evolution operator for Helmholtz
equation is then a $2\times2$ matrix which evolves according to the
fundamental equation (\ref{laecuacion}) with the only difference that
now the evolution parameter is $x$ instead of $t$. In
\cite{khan05wdm} the whole procedure is explained and the main
physical consequence, which amounts to the addition of correcting
terms to the Hamiltonian, is discussed in the case of an axially
symmetric graded-index medium, i.e. one in which the refractive index
is a polynomial.

Helmholtz equation has also been investigated with the help of ME
in \cite{lu06afo,lu06stf,lu07afo}. Here the propagation in a
slowly varying waveguide is considered and the boundary value
problem is converted into an initial value problem by the
introduction of appropriate operators which are shown to satisfy
equation (\ref{laecuacion}). Numerical methods to fourth order
borrowed from \cite{iserles99ots} are then used.

 Since mid 19th century the polarization
state of light, and in general electromagnetic radiation or any
other transverse waves, is described by the so called Stokes
parameters which constitute a four-dimensional vector
$\mathbf{S}(\omega)$ depending on the frequency $\omega$. When the
light traverses an optical element which acts on its polarization
state the in and out Stokes vectors are related by
\begin{equation} \label{mueller}
\mathbf{S}_{\mathrm{out}}(\omega)=M(\omega)\mathbf{S}_{\mathrm{in}}(\omega),
\end{equation}
where the $4\times4$ matrix $M(\omega)$ is called the Mueller
matrix. It can be proved \cite{reimer06cao,reimer06mmd} that it
satisfies the equation
\begin{equation}  \label{muellerdif}
M^{\prime}(\omega)=H(\omega) M(\omega),\qquad\qquad
M(\omega_0)=M_{0},
\end{equation}
where now the prime denotes derivative with respect to the real
independent variable $\omega$. For systems with zero
polarization-dependent loss (PDL) and no polarization mode
dispersion (PMD) $H(\omega)$ is constant whereas with PDL and PDM
the previous equation is just our equation (\ref{laecuacion}) and
the appropriateness of ME is apparent. The matrix $H(\omega)$ in
this application has an Hermitian and a non-Hermitian component.
ME has allowed a recursive calculation of successive orders of the
frequency variation of the Mueller matrix. This yields PMD and PDL
compensators that counteract the effects of PMD and PDL with
increased accuracy.

Also related to the use of Stokes vector one can mention the
so-called radiative transfer equation for polarized light. It is
relevant in Astrophysics to measure the magnetic fields in the Sun
and stars. That equation gives the variation of $\mathbf{S}(z)$
with the light path $z$
\[
\frac{d}{dz}\mathbf{S}(z)=-K(z)\mathbf{S}(z) +\mathbf{J},
\]
where $\mathbf{S}$ is the Stokes vector, $K$ is a $4\times 4$
matrix which describes absorption in the presence of Zeeman effect
and $\mathbf{J}$ stands for the emission term. In
\cite{lopez02ota,lopez99dia,semel99iot} ME is used to obtain an
exponential solution.

\subsection{General Relativity}

To illustrate once more the pervasive presence of the linear
differential equation (\ref{laecuacion}) let us mention reference
\cite{sanmiguel07ndo} in which the aim is to determine the time
elapsed between two events when the space-time is treated as in
General Relativity. Then it turns out to be necessary to solve a
two-point boundary value problem for null geodesics. In so doing
one needs to know a Jacobian whose expression involves
a~$8\times8$ matrix function obeying the basic equation
(\ref{laecuacion}). In \cite{sanmiguel07ndo} an eighth order
numerical method from \cite{blanes00iho} is used, which is proved
to be an efficient scheme.

\subsection{Search of Periodic Orbits}

The search of periodic orbits for some non-linear differential
autonomous equations, ${\bf x}'={\bf f(x)}$, $\mathbf{x}\in \mathbb{R}^d$
is of interest in Celestial Mechanics (periodic orbits of the
N-body problem) as well as in the general theory of dynamical systems.
Due to the complexity of this process, it is important to have
efficient numerical algorithms.

  The Lindstedt--Poincar\'e technique is frequently used to calculate periodic
orbits. An iterative process proposed in \cite{viswanath03sda} consists in 
starting with a guessed periodic orbit, and this guess is subsequently improved
by solving a correlation non-autonomous linear differential equation. The
numerical integration of this equation is carried out by
means of Magnus integrators.


\subsection{Geometric control of mechanical systems}
 \label{sec-control}

 Many mechanical systems studied in control theory
 can be modeled by an ordinary differential equation of the form  \cite{bullo05gco,kawski02tco}
 \begin{equation}   \label{control.1}
      \mathbf{x}^\prime(t) = \mathbf{f}_0(\mathbf{x}(t)) + \sum_{i=1}^m u_i(t) \;
\mathbf{f}_i(\mathbf{x}(t)),
\end{equation}
initialized at $\mathbf{x}(0) = \mathbf{p}$. Here $\mathbf{x} \in
\mathbb{R}^d$ represents all the possible states of the system,
$\mathbf{f}_i$ are (real) analytic vector fields and the function
$\mathbf{u} = (u_1, \ldots, u_m)$ (the \emph{controls}) are assumed
to be integrable with respect to time and taking values in a compact
subset $U \subset \mathbb{R}^m$. The vector field $\mathbf{f}_0$ is
called the \emph{drift} vector field, whereas $\mathbf{f}_i$, $i \ge
1$, are referred to as the \emph{control} vector fields. When
$\mathbf{f}_0 \equiv 0$, the system (\ref{control.1}) is called
`without drift', and its analysis is typically easier.

For a given set of controls $\{u_i\}$, equation (\ref{control.1})
with initial value $\mathbf{x}(0)$ is nothing but a dynamical
system, which can be analyzed and (approximately) solved by
standard techniques. In control theory, however, one is interested
typically in the \emph{inverse problem}: given a target
$\mathbf{x}(T)$, find controls $\{u_i\}$ that steer from
$\mathbf{x}(0)$ to $\mathbf{x}(T)$ \cite{kawski02tco}, perhaps by following a prescribed
path. Just to
illustrate these abstract considerations, a typical problem could be
the determination of a set of controls that drive the actions of a
robot during a task.

The first step is to guarantee that there exists a solution. This is
the problem of controllability. To characterize the controllability
of linear systems of the form
\begin{equation} \label{LQR}
 {\bf x}'(t) =  A(t) {\bf x}(t) + B(t) {\bf u}(t)
\end{equation}
is a relatively simple task thanks to an algebraic criterion known
as the Kalman rank condition \cite{bullo05gco,kalman63col}. This issue, however,
is much more involved for the nonlinear system (\ref{control.1})  \cite{kawski02tco}.

The interest of ME in control theory, as it has been already
discussed in section \ref{MECF}, stems from the approximate ansatz
it provides connecting the states $\mathbf{x}(0)$ and
$\mathbf{x}(T)$. Thus, the Magnus expansion can be used either to predict a state
$\mathbf{x}(T)$ for a given control $\mathbf{u}$ or to find
reasonable controls $\mathbf{u}$ which made reachable the target
$\mathbf{x}(T)$ from $\mathbf{x}(0)$. Of course, many sets of
controls may exist and it raises questions concerning the
\emph{cost} of every scheme and consequently the search for the
optimal choice. For instance, the ME has been used in non-holonomic
motion planning of systems without drift
\cite{duleba97lom,duleba98oac}. Among non-holonomic systems there are free-floating
robots, mobile robots and underwater vehicles \cite{duleba98oac,murray94ami}.

In the particular case of linear quadratic  optimal
control problems (appearing in in engineering problems as well as
in differential games) a given cost functional has to achieve a
minimum. When this happens, eq. (\ref{LQR}) can be written as
\cite{engwerda05lqd,reid72rde}
\begin{equation}\label{fi}
{\bf x}' \, = \, M(t,K(t)) {\bf x},
\end{equation}
where $ \, K(t) \in \mathbb{R}^{d \times d} \, $ has to solve a
Riccati differential equation similar to (\ref{Riccati}) with
final condition $K(T)=K_f$. In other words,  the Riccati equation has to be
integrated backward in time and then to use it as an input in
(\ref{fi}). As mentioned, the Riccati differential matrix equation
has received much attention
\cite{blanes00asw,dieci92nio,engwerda05lqd,jodar88arm,kenney85nio,reid72rde,schiff99ana},
but an efficient implementation to this problem requires further
investigation, and methods from ME can play an important role.


\section{Conclusions}

In this report we have thoroughly reviewed the abiding work on
Magnus expansion carried out during more than fifty years from very
different perspectives.

As a result of a real interdisciplinary activity some aspects of
the original formulation have been refined. This applies for
example to the convergence properties of the expansion which have
been much sharpened.

In other features much practical progress has been made. This is
the case of the calculation of the terms of the series both
explicitly or recurrently. New techniques, like the ones borrowed
from graph theory, have also profitably entered the play.

Although originally formulated for linear systems of ordinary
differential equations, the domain of usage of ME has enlarged to
include other types of problems with differential equations:
stochastic equations, nonlinear equations or Sturm-Liouville
problems.

In parallel with this developments in the mathematical structure
of the ME the realm of its applications has also widen with the
years. It is worth stressing in this respect the versatility of
the expansion to cope with new applications in old fields, like
NMR for instance, and at the same time its capability to generate
new contributions, like the generation of efficient numerical
algorithms for geometric integrators.

All these facts, historical and present, presented and discussed
in this report strongly support the idea that ME can be a very
useful tool for physicists.



\subsection*{Acknowledgments}

One of us (J.A.O) was introduced to the field of Magnus expansion
by Prof. Silvio Klarsfeld (Orsay) to whom we also acknowledge his
continuous interest in our work. Encouraging comments by Prof.
A.J. Dragt in early stages of our involvement with ME are deeply
acknowledged.  During the last decade  we have benefitted from
enlightening discussions with and useful comments by Prof. Arieh
Iserles. We would like to thank him for that as much as for the
kind hospitality extended to us during our visits to his group in
DAMTP in Cambridge. We also thank Prof. Iserles for providing us
with  his  Latex macros for the graphs in section \ref{graph} and Dr. Ander Murua
for reading various parts of the manuscript and providing useful feedback.
Friendly collaboration with Dr. Per Christian Moan is also
acknowledged.

The work of SB and FC has been partially supported by Ministerio
de Ciencia e Innovaci\'on (Spain) under project MTM2007-61572
(co-financed by the ERDF of the European Union), that of JAO and
JR by contracts MCyT/FEDER, Spain (Grant No. FIS2004-0912) and
Generalitat Valenciana, Spain (Grant No. ACOMP07/03). 
SB also aknowledges the support of the UPV trhough the project 20070307


\bibliographystyle{plain}
\bibliography{ourbib,geom_int,numerbib}

\begin{thebibliography}{100}

\bibitem{abikoff94tml}
W.~Abikoff, J.S. Birman, and K.~Kuiken, editors.
\newblock {\em The Mathematical Legacy of Wilhelm Magnus. Groups, Geometry and
  Special Functions}.
\newblock American {M}athematical {S}ociety, 1994.

\bibitem{abramowitz65hom}
M.~Abramowitz and I.A. Stegun.
\newblock {\em Handbook of {M}athematical {F}unctions}.
\newblock Dover, 1965.

\bibitem{agrachev81caa}
A.A. Agrachev and R.V. Gamkrelidze.
\newblock Chronological algebras and nonstationary vector fields.
\newblock {\em J. Soviet Math.}, 17:1650--1675, 1979.

\bibitem{agrachev79ter}
A.A. Agrachev and R.V. Gamkrelidze.
\newblock The exponential representation of flows and the chronological
  calculus.
\newblock {\em Math. USSR-Sb.}, 35:727--785, 1979.

\bibitem{aparicio05neo}
N.D. Aparicio, S.J.A. Malham, and M.~Oliver.
\newblock Numerical evaluation of the {E}vans function by {M}agnus integration.
\newblock {\em BIT}, 45:219--258, 2005.

\bibitem{arnold89mmo}
V.~I. Arnold.
\newblock {\em Mathematical Methods of Classical Mechanics}.
\newblock Springer-Verlag, {GTM} 60, {S}econd edition, 1989.

\bibitem{bailey78aso}
P.B. Bailey, M.K. Gordon, and L.F. Shampine.
\newblock Automatic solution of the {S}turm--{L}iouville problem.
\newblock {\em ACM Trans. Math. Software}, 4:193--208, 1978.

\bibitem{baker02oti}
H.F. Baker.
\newblock On the integration of linear differential equations.
\newblock {\em Proc. London Math. Soc.}, 35:334--378, 1902.

\bibitem{baker05aac}
H.F. Baker.
\newblock Alternant and continuous groups.
\newblock {\em Proc. London Math. Soc. (Second series)}, 3:24--47, 1905.

\bibitem{baye03fof}
D.~Baye, G.~Goldstein, and P.~Capel.
\newblock Fourth-order factorization of the evolution operator for
  time-dependent potentials.
\newblock {\em Phys. Lett. A}, 317:337--342, 2003.

\bibitem{baye73ats}
D.~Baye and P.H. Heenen.
\newblock A theoretical study of fast proton-atomic hydrogen scattering.
\newblock {\em J. Phys. B}, 6:105--113, 1973.

\bibitem{belinfante72aso}
J.G.F. Belinfante and B.~Kolman.
\newblock {\em A {S}urvey of {L}ie {G}roups and {L}ie {A}lgebras with
  {A}pplications and {C}omputational {M}ethods}.
\newblock SIAM, 1972.

\bibitem{bellman60itm}
R.~Bellman.
\newblock {\em Introduction to {M}atrix {A}nalysis}.
\newblock McGraw-Hill, 1960.

\bibitem{berry84qpf}
M.V. Berry.
\newblock Quantal phase factors accompanying adiabatic changes.
\newblock {\em Proc. Roy. Soc. Lond. Ser. A}, 392:45--57, 1984.

\bibitem{berry87qpc}
M.V. Berry.
\newblock Quantum phase corrections from adiabatic iteration.
\newblock {\em Proc. Roy. Soc. Lond. Ser. A}, 414:31--46, 1987.

\bibitem{bialynicki69eso}
I.~Bialynicki-Birula, B.~Mielnik, and J.~Pleba\'nski.
\newblock Explicit solution of the continuous {B}aker--{C}ampbell--{H}ausdorff
  problem and a new expression for the phase operator.
\newblock {\em Ann. Phys.}, 51:187--200, 1969.

\bibitem{bittanti91tre}
S.~Bittanti, A.J. Laub, and J.C. Willems, editors.
\newblock {\em The Riccati Equation}.
\newblock Springer, 1991.

\bibitem{blanes03ool}
S.~Blanes and F.~Casas.
\newblock Optimization of {L}ie group methods for differential equations.
\newblock {\em Future Generation Computer Systems}, 19:331--339, 2003.

\bibitem{blanes06smf}
S.~Blanes and F.~Casas.
\newblock Splitting methods for non-autonomous separable dynamical systems.
\newblock {\em J. Phys. A: Math. Gen.}, 39:5405--5423, 2006.

\bibitem{blanes07smf}
S.~Blanes, F.~Casas, and A.~Murua.
\newblock Splitting methods for non-autonomous linear systems.
\newblock {\em Internat. J. Comput. Math.}, 6:713--727, 2007.

\bibitem{blanes98maf}
S.~Blanes, F.~Casas, J.A. Oteo, and J.~Ros.
\newblock Magnus and {F}er expansions for matrix differential equations: the
  convergence problem.
\newblock {\em J. Phys. A: Math. Gen.}, 22:259--268, 1998.

\bibitem{blanes00iho}
S.~Blanes, F.~Casas, and J.~Ros.
\newblock Improved high order integrators based on the {M}agnus expansion.
\newblock {\em BIT}, 40:434--450, 2000.

\bibitem{blanes02hoo}
S.~Blanes, F.~Casas, and J.~Ros.
\newblock High order optimized geometric integrators for linear differential
  equations.
\newblock {\em BIT}, 42:262--284, 2002.

\bibitem{blanes00asw}
S.~Blanes, L.~J\'odar, and E.~Ponsoda.
\newblock Approximate solutions with a priori error bounds for continuous
  coefficient matrix {R}iccati equations.
\newblock {\em Math. Comp. Modelling}, 31:1--15, 2000.

\bibitem{blanes00smf}
S.~Blanes and P.C. Moan.
\newblock Splitting methods for the time-dependent {S}chr\"odinger equation.
\newblock {\em Phys. Lett. A}, 265:35--42, 2000.

\bibitem{blanes01smf}
S.~Blanes and P.C. Moan.
\newblock Splitting methods for non-autonomous {H}amiltonian equations.
\newblock {\em J. Comp. Phys.}, 170:205--230, 2001.

\bibitem{blanes02psp}
S.~Blanes and P.C. Moan.
\newblock Practical symplectic partitioned {R}unge--{K}utta and
  {R}unge--{K}utta--{N}ystr\"om methods.
\newblock {\em J. Comput. Appl. Math}, 142:313--330, 2002.

\bibitem{blanes06fas}
S.~Blanes and P.C. Moan.
\newblock Fourth- and sixth-order commutator-free {M}agnus integrators for
  linear and non-linear dynamical systems.
\newblock {\em Appl. Numer. Math.}, 56:1519--1537, 2006.

\bibitem{bloch28udq}
F.~Bloch.
\newblock \"{U}ber die {Q}uantenmechanik der {E}lectronen in {K}ristallgittern.
\newblock {\em Z. Physik}, 52:555--600, 1928.

\bibitem{botchev08nid}
M.A. Botchev and J.G. Verwer.
\newblock Numerical integration of damped {M}axwell equations.
\newblock Technical report, CWI Amsterdam, April 2008.

\bibitem{brockett91dst}
R.~Brockett.
\newblock Dynamical systems that sort lists, diagonalize matrices, and solve
  linear programming problems.
\newblock {\em Linear Algebra Applic.}, 146:79--91, 1991.

\bibitem{budd99gin}
C.J. Budd and A.~Iserles.
\newblock Geometric integration: numerical solution of differential equations
  on manifolds.
\newblock {\em Phil. Trans. Royal Soc. A}, 357:945--956, 1999.

\bibitem{bullo05gco}
F.~Bullo and A.D. Lewis.
\newblock {\em Geometric Control of Mechanical Systems}.
\newblock Springer, 2005.

\bibitem{burrage99hso}
K.~Burrage and P.M. Burrage.
\newblock High strong order methods for non-commutative-stochastic ordinary
  differential equation sistems and the {M}agnus formula.
\newblock {\em Physica D}, 133:34--48, 1999.

\bibitem{burrage04nmf}
K.~Burrage, P.M. Burrage, and T.~Tian.
\newblock Numerical methods for strong solutions of stochastic differential
  equations: an overview.
\newblock {\em Proc. Roy. Soc. Lond. Ser. A}, 460:373--402, 2004.

\bibitem{burum81meg}
D.P. Burum.
\newblock Magnus expansion generator.
\newblock {\em Phys. Rev. B}, 24:3684--3692, 1981.

\bibitem{butcher87tna}
J.C. Butcher.
\newblock {\em The Numerical Analysis of Ordinary Differential Equations}.
\newblock John Wiley, 1987.

\bibitem{cady74rsl}
W.A. Cady.
\newblock Rotational spectral line broadening of {OCS} by noble gases.
\newblock {\em J. Chem. Phys.}, 60:3318--3323, 1974.

\bibitem{calvo97nso}
M.~P. Calvo, A.~Iserles, and A.~Zanna.
\newblock Numerical solution of isospectral flows.
\newblock {\em Math. Comp.}, 66:1461--1486, 1997.

\bibitem{campbell98oal}
J.~E. Campbell.
\newblock On a law of combination of operators.
\newblock {\em Proc. London Math. Soc.}, 29:14--32, 1898.

\bibitem{casas96sol}
F.~Casas.
\newblock Solution of linear partial differential equations by {L}ie algebraic
  methods.
\newblock {\em J. Comput. Appl. Math.}, 76:159--170, 1996.

\bibitem{casas04nim}
F.~Casas.
\newblock Numerical integration methods for the double-bracket flow.
\newblock {\em J. Comput. Appl. Math.}, 166:477--495, 2004.

\bibitem{casas07scf}
F.~Casas.
\newblock Sufficient conditions for the convergence of the {M}agnus expansion.
\newblock {\em J. Phys. A: Math. Theor.}, 40:15001--15017, 2007.

\bibitem{casas06eme}
F.~Casas and A.~Iserles.
\newblock Explicit {M}agnus expansions for nonlinear equations.
\newblock {\em J. Phys. A: Math. Gen.}, 39:5445--5461, 2006.

\bibitem{casas01fte}
F.~Casas, J.~A. Oteo, and J.~Ros.
\newblock Floquet theory: exponential perturbative treatment.
\newblock {\em J. Phys. A: Math. Gen.}, 34:3379--3388, 2001.

\bibitem{celledoni00ate}
E.~Celledoni and A.~Iserles.
\newblock Approximating the exponential from a {L}ie algebra to a {L}ie group.
\newblock {\em Math. Comp.}, 69:1457--1480, 2000.

\bibitem{celledoni01mft}
E.~Celledoni and A.~Iserles.
\newblock Methods for the approximation of the matrix exponential in a
  {L}ie-algebraic setting.
\newblock {\em IMA J. Numer. Anal.}, 21(2):463--488, 2001.

\bibitem{Chacon91rff}
R.V. Chacon and A.T. Fomenko.
\newblock Recursion formulas for the {L}ie integral.
\newblock {\em Adv. Math.}, 88:200--257, 1991.

\bibitem{chan68imc}
S.-K. Chan, J.C. Light, and J.-L. Lin.
\newblock Inelastic molecular collisions: exponential solution of coupled
  equations for vibration-translation energy transfer.
\newblock {\em J. Chem. Phys.}, 49:86--97, 1968.

\bibitem{chang69eso}
D.~Chang and J.C. Light.
\newblock Exponential solution of the {S}chr\"odinger equation: potential
  scattering.
\newblock {\em J. Chem. Phys.}, 50:2517--2525, 1969.

\bibitem{chen57iop}
K.-T. Chen.
\newblock Integration of paths, geometric invariants and a generalized
  {B}aker--{H}ausdorff formula.
\newblock {\em Ann. of Math.}, 65:163--178, 1957.

\bibitem{chu90tpg}
M.T. Chu and K.R. Driessel.
\newblock The projected gradient method for least squares matrix approximations
  with spectral constraints.
\newblock {\em SIAM J. Numer. Anal.}, 27:1050--1060, 1990.

\bibitem{coddington55tod}
E.A. Coddington and N.~Levinson.
\newblock {\em Theory of {O}rdinary {D}ifferential {E}quations}.
\newblock McGraw Hill, 1955.

\bibitem{connes00riqI}
A.~Connes and D.~Kreimer.
\newblock Renormalization in quantum field theory and the {R}iemann--{H}ilbert
  problem {I}. {T}he {H}opf algebra structure of graphs and the main theorem.
\newblock {\em Commun. Math. Phys.}, 210:249--273, 2000.

\bibitem{connes00riqII}
A.~Connes and D.~Kreimer.
\newblock Renormalization in quantum field theory and the {R}iemann--{H}ilbert
  problem {II}. {T}he $\beta$-function, diffeomorphisms and the renormalization
  group.
\newblock {\em Commun. Math. Phys.}, 216:215--241, 2001.

\bibitem{dahmen86grf}
H.~D. Dahmen, W.~Krzyzanowski, and M.L. Larsen.
\newblock Graphical rules for functions of the time-evolution operator.
\newblock {\em Phys. Rev. D}, 33:1726--1744, 1986.

\bibitem{dahmen91uaf}
H.~D. Dahmen, P.~Manakos, T.~Mannel, and T.~Ohl.
\newblock Unitary approximation for radiative high energy scattering processes:
  {A}pplication to {B}habha scattering.
\newblock {\em Z. Phys. C}, 50:75--84, 1991.

\bibitem{dahmen88pao}
H.~D. Dahmen, T.~Mannel, and P.~Manakos.
\newblock Practical application of the unitary approximation to infrared
  singularities.
\newblock {\em Phys. Rev. D}, 38:1176--1182, 1988.

\bibitem{dahmen82ido}
H.~D. Dahmen, B.~Scholz, and F.~Steiner.
\newblock Infrared dynamics of {QED} and the asymptotic behavior of the
  electron form factor.
\newblock {\em Nucl. Phys. B}, 202:365--381, 1982.

\bibitem{dattoli88aav}
G.~Dattoli, J.C. Gallardo, and A.~Torre.
\newblock An algebraic view to the operatorial ordering and its applications to
  {O}ptics.
\newblock {\em Rivista del Nuovo Cimento}, 11:1--79, 1988.

\bibitem{degani06rrc}
I.~Degani and J.~Schiff.
\newblock R{CMS}: {R}ight {C}orrection {M}agnus {S}eries approach for
  oscillatory {ODE}s.
\newblock {\em J. Comput. Appl. Math}, 193:413--436, 2006.

\bibitem{devries85aho}
P.L. Devries.
\newblock A high order method for the solution of the single channel
  {S}chr\"odinger equation.
\newblock {\em Molecular Phys.}, 54:733--740, 1985.

\bibitem{dieci92nio}
L.~Dieci.
\newblock Numerical integration of the differential {R}iccati equation and some
  related issues.
\newblock {\em SIAM J. Numer. Anal.}, 29:781--815, 1992.

\bibitem{dieci88art}
L.~Dieci, M.R. Osborne, and R.D. Russell.
\newblock A {R}iccati transformation method for solving linear {BVP} {I}:
  {T}heoretical aspects, {II}: {C}omputational aspects.
\newblock {\em SIAM J. Numer. Anal.}, 25:1055--1092, 1988.

\bibitem{diele98tct}
F.~Diele, L.~Lopez, and R.~Peluso.
\newblock The {C}ayley transform in the numerical solution of unitary
  differential systems.
\newblock {\em Adv. Comput. Math.}, 8(4):317--334, 1998.

\bibitem{dixmier57led}
J.~Dixmier.
\newblock L'application exponentielle dans les groupes de {L}ie r\'esolubles.
\newblock {\em Bull. Soc. Math. France}, 85:113--121, 1957.

\bibitem{dolivo92maf}
J.~C. D'Olivo.
\newblock Magnus approximation for nonadiabatic transitions and neutrino
  oscillations in matter.
\newblock {\em Phys. Rev. D}, 45:924--929, 1992.

\bibitem{dolivo90mea}
J.~C. D'Olivo and J.A. Oteo.
\newblock Magnus expansion and the two-neutrino oscillations in matter.
\newblock {\em Phys. Rev. D}, 42:256--259, 1990.

\bibitem{dolivo96ntn}
J.~C. D'Olivo and J.A. Oteo.
\newblock Nonadiabatic three-neutrino oscillations in matter.
\newblock {\em Phys. Rev. D}, 54:1187--1193, 1996.

\bibitem{dragt95com}
A.J. Dragt.
\newblock Computation of maps for particle and light optics by scaling,
  splitting, and squaring.
\newblock {\em Phys. Rev. Lett.}, 75:1946--1948, 1995.

\bibitem{druskin95ksa}
V.L. Druskin and L.A. Knizhnerman.
\newblock Krylov subspace approximation of eigenpairs and matrix functions in
  exact and computer arithmetics.
\newblock {\em Num. Lin. Alg.}, 2:205--217, 1995.

\bibitem{duleba97lom}
I.~Duleba.
\newblock Locally optimal motion planning of nonholonomic systems.
\newblock {\em Int. J. Robotic Systems}, 14:767--788, 1997.

\bibitem{duleba98oac}
I.~Duleba.
\newblock On a computationally simple form of the generalized
  campbell--baker--hausdorff--dynkin formula.
\newblock {\em Systems Control Letters}, 34:191--202, 1998.

\bibitem{dunford58lop}
N.~Dunford and J.T. Schwartz.
\newblock {\em Linear Operators. Part I: General Theory}.
\newblock Wiley-Interscience, 1958.

\bibitem{dynkin47eot}
E.B. Dynkin.
\newblock Evaluation of the coefficients of the {C}ampbell--{H}ausdorff
  formula.
\newblock {\em Dokl. Akad. Nauk. SSSR}, 57:323--326, 1947.

\bibitem{dyson49trt}
F.J. Dyson.
\newblock The radiation theorem of {T}omonaga, {S}winger and {F}eynman.
\newblock {\em Phys. Rev.}, 75:486--502, 1949.

\bibitem{ebrahimi-fard08ama}
K.~Ebrahimi-Fard and D.~Manchon.
\newblock A {M}agnus- and {F}er-type formula in dendriform algebras.
\newblock {\em Found. Comp. Math.}, 2008.
\newblock in press.

\bibitem{ebrahimi-fard08anb}
K.~Ebrahimi-Fard, D.~Manchon, and F.~Patras.
\newblock A concommutative bohnenblust--spitzer identity for rota--baxter
  algebras solves bogoliubov's recursion.
\newblock Technical report, arXiv:0705.1265v2, 19 August 2008.

\bibitem{eichler77maf}
J.~Eichler.
\newblock Magnus approximation for {K}-shell ionization by heavy-ion impact.
\newblock {\em Phys. Rev. A}, 15:1856--1862, 1977.

\bibitem{engwerda05lqd}
J.C. Engwerda.
\newblock {\em {LQ} Dynamic Optimization and Differential Games}.
\newblock Wiley, 2005.

\bibitem{erdelyi53htf}
A.~Erdelyi, W.~Magnus, F.~Oberhettinger, and F.G. Tricomi.
\newblock {\em Higher {T}ranscendental {F}unctions, {V}ols. {I}, {II}, {III}}.
\newblock McGraw Hill, 1953.

\bibitem{ernst86pon}
R.~R. Ernst, G.~Bodenhausen, and A.~Wokaun.
\newblock {\em Principles of {NMR} in one and two {D}imensions}.
\newblock Clarendon, 1986.

\bibitem{evans68osa}
W.~Evans.
\newblock On some applications of {M}agnus expansion in {N}uclear {M}agnetic
  {R}esonance.
\newblock {\em Ann. Phys.}, 48:72--93, 1968.

\bibitem{feng95vpa}
K.~Feng and Z.-J. Shang.
\newblock Volume-preserving algorithms for source-free dynamical systems.
\newblock {\em Numer. Math.}, 71:451--463, 1995.

\bibitem{fer58rdl}
F.~Fer.
\newblock R\'{e}solution de l'equation matricielle $\dot {U}=p{U}$ par produit
  infini d'exponentielles matricielles.
\newblock {\em Bull. Classe Sci. Acad. Roy. Bel.}, 44:818--829, 1958.

\bibitem{feynman51aoc}
R.~P. Feynman.
\newblock An operator calculus having applications in {Q}uantum
  {E}lectrodynamics.
\newblock {\em Phys. Rev.}, 84:108--128, 1951.

\bibitem{fliess81fcn}
M.~Fliess.
\newblock Fonctionnelles causales non lin\'eaires et ind\'etermin\'ees non
  commutatives.
\newblock {\em Bull. Soc. Math. France}, 109:3--40, 1981.

\bibitem{friedrichs53mao}
K.O. Friedrichs.
\newblock Mathematical aspects of the quantum theory of fields, part {V}.
\newblock {\em Comm. Pure and Appl. Math.}, VI:1--72, 1953.

\bibitem{galindo90qme}
A.~Galindo and P.~Pascual.
\newblock {\em Quantum {M}echanics}.
\newblock Springer, 1990.

\bibitem{gamkrelidze79ero}
R.V. Gamkrelidze.
\newblock Exponential representation of solutions of ordinary differential
  equations.
\newblock In {\em Lecture Notes in Math., Vol. 703}, pages 118--129.
  Springer-Verlag, 1979.

\bibitem{gantmacher59tto}
F.R. Gantmacher.
\newblock {\em The {T}heory of {M}atrices}.
\newblock Chelsea Publishing, 1959.

\bibitem{garrido64gai}
L.M. Garrido.
\newblock Generalized adiabatic invariance.
\newblock {\em J. Math. Phys.}, 5:355--362, 1964.

\bibitem{golub96mc}
G.~H. Golub and C.~F.~Van Loan.
\newblock {\em Matrix Computations}.
\newblock Johns Hopkins University Press, 1996.

\bibitem{gorbatsevich97fol}
V.V. Gorbatsevich, A.L. Onishchik, and E.B. Vinberg.
\newblock {\em Foundations of Lie Theory and Lie Transformation Groups}.
\newblock Springer, 1997.

\bibitem{gray94chs}
S.~Gray and J.M. Verosky.
\newblock Classical {H}amiltonian structures in wave packet dynamics.
\newblock {\em J. Chem. Phys.}, 100:5011--5022, 1994.

\bibitem{greenberg98ota}
L.~Greenberg and M.~Marletta.
\newblock Oscillation theory and numerical solution of sixth order
  {S}turm--{L}iouville problems.
\newblock {\em SIAM J. Numer. Anal.}, 35:2070--2098, 1998.

\bibitem{gronwall19not}
T.H. Gronwall.
\newblock Note on the derivatives with respect to a parameter of the solutions
  of a system of differential equation.
\newblock {\em Ann. of Math.}, 20:292--296, 1919.

\bibitem{haeberlen68cae}
U.~Haeberlen and J.S. Waugh.
\newblock Coherent averaging effects in {M}agnetic resonance.
\newblock {\em Phys. Rev.}, 175:453--467, 1968.

\bibitem{hairer06gni}
E.~Hairer, Ch. Lubich, and G.~Wanner.
\newblock {\em Geometric Numerical Integration. Structure-Preserving Algorithms
  for Ordinary Differential Equations}.
\newblock Springer-Verlag, {S}econd edition, 2006.

\bibitem{hairer93sod}
E.~Hairer, S.~P. N{\o}rsett, and G.~Wanner.
\newblock {\em Solving Ordinary Differential Equations {I}, Nonstiff Problems}.
\newblock Springer-Verlag, {S}econd revised edition, 1993.

\bibitem{hale80ode}
J.K. Hale.
\newblock {\em {O}rdinary {D}ifferential {E}quations}.
\newblock Krieger Publishing Company, 1980.

\bibitem{hausdorff06dse}
F.~Hausdorff.
\newblock Die symbolische {E}xponential {F}ormel in der {G}ruppen theorie.
\newblock {\em Leipziger Ber.}, 58:19--48, 1906.

\bibitem{hochbruck99eif}
M.~Hochbruck and C.~Lubich.
\newblock Exponential integrators for quantum-classical molecular dynamics.
\newblock {\em BIT}, 39:620--645, 1999.

\bibitem{hochbruck03omi}
M.~Hochbruck and C.~Lubich.
\newblock On {M}agnus integrators for time-dependent {S}chr\"odinger equations.
\newblock {\em SIAM J. Numer. Anal.}, 41:945--963, 2003.

\bibitem{hochbruck98eif}
M.~Hochbruck, C.~Lubich, and H.~Selhofer.
\newblock Exponential integrators for large systems of differential equations.
\newblock {\em SIAM J. Sci. Comput.}, 19(5):1552--1574, 1998.

\bibitem{hochbruck97oks}
M.~Hochbruck and Ch. Lubich.
\newblock On {K}rylov subspace approximations to the matrix exponential
  operator.
\newblock {\em SIAM J. Numer. Anal.}, 34:1911--1925, 1997.

\bibitem{horn85man}
R.A. Horn and C.R. Johnson.
\newblock {\em Matrix {A}nalysis}.
\newblock Cambridge University Press, 1985.

\bibitem{hunter01aan}
J.K. Hunter and B.~Nachtergaele.
\newblock {\em Applied Analysis}.
\newblock World Scientific, 2001.

\bibitem{hyman85dma}
H.A. Hyman.
\newblock Dipole {M}agnus approximation for electron-atom collisions:
  {E}xcitation of the resonance transitions of {L}i, {N}a and {K}.
\newblock {\em Phys. Rev. A}, 31:2142--2148, 1985.

\bibitem{ince56ode}
E.L. Ince.
\newblock {\em Ordinary {D}ifferential {E}quations}.
\newblock Dover, 1956.

\bibitem{ionescu94uae}
D.C. Ionescu.
\newblock Unitary and electron-positron pairs created by strong external
  fields.
\newblock {\em Phys. Rev. A}, 49:3188--3195, 1994.

\bibitem{iserles01ame}
A.~Iserles.
\newblock A {M}agnus expansion for the equation ${Y}^\prime={AY}-{YB}$.
\newblock {\em J. Comput. Math.}, 19:15--26, 2001.

\bibitem{iserles01oct}
A.~Iserles.
\newblock On {C}ayley-transform methods for the discretization of {L}ie-group
  equations.
\newblock {\em Found. Comput. Math.}, 1(1):129--160, 2001.

\bibitem{iserles02otd}
A.~Iserles.
\newblock On the discretization of double-bracket flows.
\newblock {\em Found. Comput. Math.}, 2:305--329, 2002.

\bibitem{iserles02otg}
A.~Iserles.
\newblock On the global error of discretization methods for highly-oscillatory
  ordinary differential equations.
\newblock {\em BIT}, 42:561--599, 2002.

\bibitem{iserles02tga}
A.~Iserles.
\newblock Think globally, act locally: Solving highly-oscillatory ordinary
  differential equations.
\newblock {\em Appl. Numer. Math.}, 43:145--160, 2002.

\bibitem{iserles04otn}
A.~Iserles.
\newblock On the numerical quadrature of highly-oscillating integrals {I}:
  {F}ourier transform.
\newblock {\em IMA J. Numer. Anal.}, 24:365--391, 2004.

\bibitem{iserles99oti}
A.~Iserles, A.~Marthinsen, and S.~P. N{\o}rsett.
\newblock On the implementation of the method of {M}agnus series for linear
  differential equations.
\newblock {\em BIT}, 39:281--304, 1999.

\bibitem{iserles00lgm}
A.~Iserles, H.~Z. Munthe-Kaas, S.~P. N{\o}rsett, and A.~Zanna.
\newblock Lie-group methods.
\newblock {\em Acta Numerica}, 9:215--365, 2000.

\bibitem{iserles99ots}
A.~Iserles and S.~P. N{\o}rsett.
\newblock On the solution of linear differential equations in {L}ie groups.
\newblock {\em Phil. Trans. Royal Soc. A}, 357:983--1019, 1999.

\bibitem{iserles04oqm}
A.~Iserles and S.~P. N{\o}rsett.
\newblock On quadrature methods for highly oscillatory integrals and thier
  implementation.
\newblock {\em BIT}, 44:755--772, 2004.

\bibitem{iserles01rah}
A.~Iserles, S.~P. N{\o}rsett, and A.~F. Rasmussen.
\newblock Time symmetry and high-order {M}agnus methods.
\newblock {\em Appl. Numer. Math.}, 39:379--401, 2001.

\bibitem{iserles05eco}
A.~Iserles and A.~Zanna.
\newblock Efficient computation of the matrix exponential by generalized polar
  decomposition.
\newblock {\em SIAM J. Numer. Anal.}, 42:2218--2256, 2005.

\bibitem{jahnke04lts}
T.~Jahnke.
\newblock Long-time-step integrators for almost-adiabatic quantum dynamics.
\newblock {\em SIAM J. Sci. Comput.}, 25:2145--2164, 2004.

\bibitem{jahnke03nif}
T.~Jahnke and C.~Lubich.
\newblock Numerical integrators for quantum dynamics close to the adiabatic
  limit.
\newblock {\em Numer. Math.}, 94:289--314, 2003.

\bibitem{jiang06sfd}
L.-L. Jiang, J.-F. Mao, and X-L Wu.
\newblock Symplectic finite difference time-domain method for {M}axwell
  equations.
\newblock {\em IEEE Trans. Mag.}, 42:1991--1995, 2006.

\bibitem{jodar88arm}
L.~J\'odar and H.~Abou-Kandil.
\newblock A resolution method for {R}iccati differential systems coupled in
  their quadratic terms.
\newblock {\em SIAM J. Matrix Anal. Applic.}, 19:1425--1430, 1988.

\bibitem{jodar00soa}
L.~J\'odar and M.~Marletta.
\newblock Solving {ODE}s arising from non-selfadjoint {H}amiltonian
  eigenproblems.
\newblock {\em Adv. Comput. Math.}, 13:231--256, 2000.

\bibitem{kalman63col}
R.E. Kalman, Y.-C. Ho, and K.S. Narendra.
\newblock Controllability of linear dynamical systems.
\newblock {\em Contributions to Differential Equations}, 1:189--213, 1963.

\bibitem{karasev77ipa}
M.V. Karasev and M.V. Mosolova.
\newblock Infinite products and {T} products of exponentials.
\newblock {\em J. Theor. Math. Phys.}, 28:721--730, 1977.

\bibitem{kawski00ctl}
M.~Kawski.
\newblock Calculating the logarithm of the {C}hen--{F}liess series, 2000.
\newblock Proc. MTNS, Perpignan.

\bibitem{kawski02tco}
M.~Kawski.
\newblock The combinatorics of nonlinear controllability and noncommuting
  flows.
\newblock In {\em Mathematical {Co}ntrol {T}heory}, ICTP Lect. Notes, VIII,
  pages 223--311. Abdus Salam Int. Cent. Theoret. Phys., Trieste, 2002.

\bibitem{kenney85nio}
C.S. Kenney and R.B. Leipnik.
\newblock Numerical integration of the differential {R}iccati equation.
\newblock {\em IEEE Trans. Aut. Control}, AC-30:962--970, 1985.

\bibitem{khan05wdm}
S.A. Khan.
\newblock Wavelength-dependent modifications in {H}elmholtz {O}ptics.
\newblock {\em Int. J. Theor. Phys.}, 44:96--125, 2005.

\bibitem{klarsfeld89apo}
S.~Klarsfeld and J.A. Oteo.
\newblock Analytic properties for the {M}agnus operator for two solvable
  hamiltonians.
\newblock {\em Phys. Lett. A}, 142:393--397, 1989.

\bibitem{klarsfeld89eip}
S.~Klarsfeld and J.A. Oteo.
\newblock Exponential infinite-product representations of the time displacement
  operator.
\newblock {\em J. Phys. A: Math. Gen.}, 22:2687--2694, 1989.

\bibitem{klarsfeld89rgo}
S.~Klarsfeld and J.A. Oteo.
\newblock Recursive generation of higher-order terms in the {M}agnus expansion.
\newblock {\em Phys. Rev. A}, 39:3270--3273, 1989.

\bibitem{klarsfeld92cmi}
S.~Klarsfeld and J.A. Oteo.
\newblock Collisional models in a nonperturbative approach.
\newblock {\em Phys. Lett. A}, 164:38--42, 1992.

\bibitem{klarsfeld92mai}
S.~Klarsfeld and J.A. Oteo.
\newblock Magnus approximation in the adiabatic picture.
\newblock {\em Phys. Rev. A}, 45:3229--3332, 1992.

\bibitem{klarsfeld93dho}
S.~Klarsfeld and J.A. Oteo.
\newblock Driven harmonic oscillators in the adiabatic {M}agnus approximation.
\newblock {\em Phys. Rev. A}, 47:1620--1624, 1993.

\bibitem{kloeden92nso}
P.E. Kloeden and E.~Platen.
\newblock {\em Numerical solution of stochastic differential equations}.
\newblock Springer Verlag, 1992.

\bibitem{kumar65oet}
K.~Kumar.
\newblock On expanding the exponential.
\newblock {\em J. Math.\ Phys.}, 6:1928--1934, 1965.

\bibitem{lam98dot}
C.S. Lam.
\newblock Decomposition of time-ordered products and path-ordered exponentials.
\newblock {\em J. Math. Phys.}, 39:5543--5558, 1998.

\bibitem{lefebvre97thf}
R.~Lefebvre and A.~Palma.
\newblock The harmonic {F}loquet oscillator.
\newblock {\em J. Mol. Struct. (Theochem)}, 390:23--32, 1997.

\bibitem{leforestier91aco}
C.~Leforestier, R.H. Bisseling, C.~Cerjan, M.D. Feit, R.~Friesner, A.~Guldberg,
  A.~Hammerich, G.~Jolicard, W.~Karrlein, H.-D. Meyer, N.~Lipkin, O.~Roncero,
  and R.~Kosloff.
\newblock A comparison of different propagation schemes for the time dependent
  {S}chr\"odinger equation.
\newblock {\em J. Comp. Phys.}, 94:59--80, 1991.

\bibitem{leimkuhler04shd}
B.~Leimkuhler and S.~Reich.
\newblock {\em Simulating Hamiltonian Dynamics}.
\newblock Cambridge University Press, 2004.

\bibitem{levante95fqm}
T.O. Levante, M.~Baldus, B.H. Meier, and R.R. Ernst.
\newblock Formalized quantum mechanical {F}loquet theory and its applications
  to sample spinning in nuclear magnetic resonance.
\newblock {\em Mol. Phys.}, 86:1195--1212, 1995.

\bibitem{lopez02ota}
A.~L\'opez-Ariste.
\newblock On the asymmetry of {S}tokes profiles.
\newblock {\em Astrophys. J.}, 564:379--384, 2002.

\bibitem{lopez99dia}
A.~L\'opez-Ariste and M.~Semel.
\newblock {DIAGONAL}: {A} numerical solution of the {S}tokes transfer equation.
\newblock {\em Astron. Astrophys. Suppl. Ser.}, 139:417--424, 1999.

\bibitem{lord06esi}
G.~Lord, S.J.A. Malham, and A.~Wiese.
\newblock Efficient strong integrators for linear stochastic systems.
\newblock {\em SIAM J. Numer. Anal.}, ??:??--??, 2006.

\bibitem{lorenz05aif}
K.~Lorenz, T.~Jahnke, and C.~Lubich.
\newblock Adiabatic integrators for highly oscillatory second-order linear
  differential equations with time-varying eigendecomposition.
\newblock {\em BIT}, 45:91--115, 2005.

\bibitem{lu00foc}
Y.Y. Lu.
\newblock Fourth order conservative exponential methods for linear evolution
  equations.
\newblock {\em BIT}, 40:762--774, 2000.

\bibitem{lu06afo}
Y.Y. Lu.
\newblock A fourth-order {M}agnus scheme for {H}elmholtz equation.
\newblock {\em J. Comput. Appl. Math.}, 173:247--258, 2005.

\bibitem{lu06stf}
Y.Y. Lu.
\newblock Some techniques for computing propagation in optical waveguides.
\newblock {\em Comm. Comp. Phys.}, 1:1056--1075, 2006.

\bibitem{lu07afo}
Y.Y. Lu.
\newblock A fourth order derivative-free operator marching method for
  {H}elmholtz equation in waveguides.
\newblock {\em J. Comput. Math.}, 25:719--729, 2007.

\bibitem{lubich02ifq}
C.~Lubich.
\newblock Integrators for quantum dynamics: a numerical analyst's brief review.
\newblock In J.~Grotendorst, D.~Marx, and A.~Muramatsu, editors, {\em Quantum
  Simulations of Many-Body Systems: From Theory to Algorithms}, pages 459--466.
  NIC Series Vol. 10, J\"ulich, 2002.

\bibitem{magnus53asi}
W.~Magnus.
\newblock Algebraic aspects in the theory of systems of linear differential
  equations.
\newblock Technical Report BR-3, New York University, June 1953.

\bibitem{magnus54ote}
W.~Magnus.
\newblock On the exponential solution of differential equations for a linear
  operator.
\newblock {\em Comm. Pure and Appl. Math.}, {VII}:649--673, 1954.

\bibitem{magnus76cgt}
W.~Magnus, A.~Karrass, and D.~Solitar.
\newblock {\em Combinatorial {G}roup {T}heory}.
\newblock Dover, 2004.

\bibitem{malham06ete}
S.J.A. Malham and J.~Niesen.
\newblock Evaluating the {E}vans function: order reduction in numerical
  methods.
\newblock {\em Math. Comp.}, 77:159--179, 2008.

\bibitem{malham07slg}
S.J.A. Malham and A.~Wiese.
\newblock Stochastic {L}ie group integrators.
\newblock {\em SIAM J. Sci. Comput.}, 30:597--617, 2008.

\bibitem{marcus70hot}
R.A. Marcus.
\newblock High-order time-dependent perturbation theory for classical mechanics
  and for other systems of first-order ordinary differential equations.
\newblock {\em J. Chem. Phys.}, 52:4803--4807, 1970.

\bibitem{marthinsen01qmb}
A.~Marthinsen and B.~Owren.
\newblock Quadrature methods based on the {C}ayley transform.
\newblock {\em Appl. Numer. Math.}, 39:403--413, 2001.

\bibitem{maruyama55cmp}
G.~Maruyama.
\newblock Continuous {M}arkov processes and stochastic equations.
\newblock {\em Rend. Circ. Mat. Palermo}, 4:48--90, 1955.

\bibitem{mclachlan02sm}
R.I. McLachlan and R.~Quispel.
\newblock Splitting methods.
\newblock {\em Acta Numerica}, 11:341--434, 2002.

\bibitem{mclachlan06gif}
R.I. McLachlan and R.~Quispel.
\newblock Geometric integrators for {ODE}s.
\newblock {\em J. Phys. A: Math. Gen.}, 39:5251--5285, 2006.

\bibitem{mielnik70cat}
B.~Mielnik and J.~Pleba\'{n}ski.
\newblock Combinatorial approach to {B}aker--{C}ampbell--{H}ausdorff exponents.
\newblock {\em Ann. Inst. Henri Poincar\'{e}}, XII:215--254, 1970.

\bibitem{sanmiguel07ndo}
A.~San Miguel.
\newblock Numerical determination of time transfer in general relativity.
\newblock {\em Gen. Relativ. Gravit.}, 39:2025--2033, 2007.

\bibitem{Milfeld83sea}
K.F. Milfeld and R.E. Wyatt.
\newblock Study, extension, and application of {F}loquet theory for quantum
  molecular systems in an oscillating field.
\newblock {\em Phys. Rev. A}, 27:72--94, 1983.

\bibitem{misawa01ala}
T.~Misawa.
\newblock A {L}ie algebraic approach to numerical integration of stochastic
  differential equations.
\newblock {\em SIAM J. Sci. Comput.}, 23:866--890, 2001.

\bibitem{mityagin90unp}
B.S. Mityagin.
\newblock Unpublished notes, 1990.

\bibitem{moan98eao}
P.~C. Moan.
\newblock Efficient approximation of {S}turm--{L}iouville problems using
  {L}ie-group methods.
\newblock Technical Report 1998/NA11, Department of Applied Mathematics and
  Theoretical Physics, University of Cambridge, England, 1998.

\bibitem{moan99ote}
P.~C. Moan, J.~A. Oteo, and J.~Ros.
\newblock On the existence of the exponential solution of linear differential
  systems.
\newblock {\em J. Phys. A: Math. Gen.}, 32:5133--5139, 1999.

\bibitem{moan02obe}
P.C. Moan.
\newblock {\em On backward error analysis and {N}ekhoroshev stability in the
  numerical analysis of conservative systems of {ODE}s}.
\newblock PhD thesis, University of Cambridge, 2002.

\bibitem{moan06cot}
P.C. Moan and J.~Niesen.
\newblock Convergence of the {M}agnus series.
\newblock Technical report, La Trobe University, September 2006.

\bibitem{moan01cot}
P.C. Moan and J.A. Oteo.
\newblock Convergence of the exponential {L}ie series.
\newblock {\em J. Math. Phys.}, 42:501--508, 2001.

\bibitem{moler78ndw}
C.~{B}. Moler and {C}.~{F}. {V}an {L}oan.
\newblock Nineteen dubious ways to compute the exponential of a matrix.
\newblock {\em SIAM Review}, 20:801--836, 1978.

\bibitem{moler03ndw}
C.~{B}. Moler and {C}.~{F}. {V}an {L}oan.
\newblock Nineteen dubious ways to compute the exponential of a matrix,
  twenty-five years later.
\newblock {\em SIAM Review}, 45:3--49, 2003.

\bibitem{moore94nga}
J.~B. Moore, R.~E. Mahony, and U.~Helmke.
\newblock Numerical gradient algorithms for eigenvalue and singular value
  calculations.
\newblock {\em SIAM J. Matrix Anal. Applic.}, 15:881--902, 1994.

\bibitem{munthe-kaas98rkm}
H.~Munthe-Kaas.
\newblock Runge--{K}utta methods on {L}ie groups.
\newblock {\em BIT}, 38(1):92--111, 1998.

\bibitem{munthe-kaas99hor}
H.~Munthe-Kaas.
\newblock High order {R}unge--{K}utta methods on manifolds.
\newblock {\em Appl. Numer. Math.}, 29:115--127, 1999.

\bibitem{munthe-kaas99cia}
H.~Munthe-Kaas and B.~Owren.
\newblock Computations in a free {L}ie algebra.
\newblock {\em Phil. Trans. Royal Soc. A}, 357:957--981, 1999.

\bibitem{munthe-kaas01gpd}
H.~Munthe-Kaas, R.G.W. Quispel, and A.~Zanna.
\newblock Generalized polar decompositions on {L}ie groups with involutive
  automorphisms.
\newblock {\em Found. Comp. Math.}, 1:297--324, 2001.

\bibitem{munthe-kaas97nio}
H.~Munthe-Kaas and A.~Zanna.
\newblock Numerical integration of differential equations on homogeneous
  manifolds.
\newblock In F.~Cucker and M.~Shub, editors, {\em {F}oundations of
  {C}omputational {M}athematics}, pages 305--315. Springer Verlag, 1997.

\bibitem{murray94ami}
R.M. Murray, Z.~Li, and S.S. Satry.
\newblock {\em A Mathematical Introduction to Robotic Manipulation}.
\newblock {CRC} Press, 1994.

\bibitem{murua06tha}
A.~Murua.
\newblock The {H}opf algebra of rooted trees, free {L}ie algebras, and {L}ie
  series.
\newblock {\em Found. Comp. Math.}, 6:387--426, 2006.

\bibitem{naimark68ldo}
M.A. Naimark.
\newblock {\em Linear Differential Operators, Part I}.
\newblock Ungar, 1968.

\bibitem{najfeld95dot}
I.~Najfeld and T.F. Havel.
\newblock Derivatives of the matrix exponential and their computation.
\newblock {\em Adv. Appl. Math.}, 16:321--375, 1995.

\bibitem{nettesheim00mqc}
P.~Nettesheim.
\newblock {\em Mixed quantum-classical dynamics: a unified approach to
  mathematical modelling and numerical simulation}.
\newblock PhD thesis, Freie Universit\"at Berlin, 2000.

\bibitem{olver93aol}
P.~J. Olver.
\newblock {\em Applications of Lie Groups to Differential Equations}.
\newblock {GTM} 107. Springer-Verlag, {S}econd edition, 1993.

\bibitem{oteo91tme}
J.~A. Oteo and J.~Ros.
\newblock The {M}agnus expansion for classical {H}amiltonian systems.
\newblock {\em J. Phys. A: Math. Gen.}, 24:5751--5762, 1991.

\bibitem{oteo00ftp}
J.~A. Oteo and J.~Ros.
\newblock From time-ordered products to {M}agnus expansion.
\newblock {\em J. Math. Phys.}, 41:3268--3277, 2000.

\bibitem{oteo05iat}
J.~A. Oteo and J.~Ros.
\newblock Iterative approach to the exponential representation of the
  time-displacement operator.
\newblock {\em J. Phys. A: Math. Gen.}, 38:1--11, 2005.

\bibitem{park86uqt}
T.J. Park and J.C. Light.
\newblock Unitary quantum time evolution by iterative lanczos reduction.
\newblock {\em J. Chem. Phys.}, 85:5870--5876, 1986.

\bibitem{pechukas66ote}
P.~Pechukas and J.C. Light.
\newblock On the exponential form of time-displacement operators in quantum
  mechanics.
\newblock {\em J. Chem. Phys.}, 44:3897--3912, 1966.

\bibitem{platen99ait}
E.~Platen.
\newblock An introduction to numerical methods for stochastic differential
  equations.
\newblock {\em Acta Numerica}, 8:197--246, 1999.

\bibitem{postnikov94lga}
M.~Postnikov.
\newblock {\em Lie {G}roups and {L}ie {A}lgebras. {S}emester {V} of {L}ectures
  in {G}eometry}.
\newblock URSS Publishers, 1994.

\bibitem{prato97ano}
D.~Prato and P.~Lamberti.
\newblock A note on {M}agnus formula.
\newblock {\em J. Chem. Phys.}, 106:4640--4643, 1997.

\bibitem{pryce93nso}
J.D. Pryce.
\newblock {\em Numerical solution of {S}turm--{L}iouville problems}.
\newblock Oxford University Press, 1993.

\bibitem{reid72rde}
W.T. Reid.
\newblock {\em Riccati Differential Equations}.
\newblock Academic Press, 1972.

\bibitem{reimer06cao}
M.~Reimer and D.~Yevick.
\newblock Clifford analysis of polarization-mode dispersion and
  polarization-dependent loss.
\newblock {\em IEEE Photonics Technology Letters}, 18:734--736, 2006.

\bibitem{reimer06mmd}
M.~Reimer and D.~Yevick.
\newblock Mueller matrix description of polarization mode dispersion and
  polarization-dependent loss.
\newblock {\em J. Opt. Soc. Am. A}, 23:1503--1508, 2006.

\bibitem{robinson63mce}
D.W. Robinson.
\newblock Multiple {C}oulomb excitations of deformed nuclei.
\newblock {\em Helv. Phys. Acta}, 36:140--154, 1963.

\bibitem{rosen32dsg}
N.~Rosen and C.~Zener.
\newblock Double {S}tern--{G}erlach experiment and related collision phenomena.
\newblock {\em Phys. Rev.}, 40:502--507, 1932.

\bibitem{rossmann02lgr}
W.~Rossmann.
\newblock {\em Lie {G}roups. {A}n {I}ntroduction {T}hrough {L}inear {G}roups}.
\newblock Oxford {U}niversity {P}ress, 2002.

\bibitem{saenz02cat}
L.~S\'aenz and R.~Su\'{a}rez.
\newblock Combinatorial approach to the generalized
  {B}aker--{C}ampbell--{H}ausdorff--{D}ynkin formula.
\newblock {\em Syst. Control Lett.}, 45:357--370, 2002.

\bibitem{saito57scg}
M.~Saito.
\newblock Sur certain groupes de {L}ie r\'esolubles, {I}; {II}.
\newblock {\em Sci. Papers Coll. Gen. Educ. Univ. Tokyo}, 7:1--11, 1957.

\bibitem{salzman87ncf}
W.R. Salzman.
\newblock New criterion for convergence of exponential perturbation theory in
  the {S}chr\"odinger representation.
\newblock {\em Phys. Rev. A}, 36:5074--5076, 1987.

\bibitem{sanzserna96cni}
J.M. Sanz-Serna and A.~Portillo.
\newblock Classical numerical integrators for wave-packet dynamics.
\newblock {\em J. Chem. Phys.}, 104:2349--2355, 1996.

\bibitem{schek81aot}
I.~Schek, J.~Jortner, and M.L. Sage.
\newblock Application of the {M}agnus expansion for high-order multiphoton
  excitation.
\newblock {\em Chem. Phys.}, 59:11--27, 1981.

\bibitem{schiff99ana}
J.~Schiff and S.~Shnider.
\newblock A natural approach to the numerical integration of {R}iccati
  differential equations.
\newblock {\em SIAM J. Numer. Anal.}, 36:1392--1413, 1999.

\bibitem{schur90nbd}
F.~Schur.
\newblock Neue {B}e\-gr\"un\-dung der {T}heorie der endlichen
  {T}rans\-for\-ma\-tions\-grup\-pen.
\newblock {\em Math. Ann.}, 35:161--197, 1890.

\bibitem{semel99iot}
M.~Semel and A.~L\'opez Ariste.
\newblock Integration of the radiative transfer equation for polarized light:
  {T}he exponential solution.
\newblock {\em Astron. Astrophys.}, 342:201--211, 1999.

\bibitem{sidje07exs}
R.B. Sidje.
\newblock Exppokit software.
\newblock http://www.maths.uq.edu.au/expokit.

\bibitem{sidje98esp}
R.B. Sidje.
\newblock Expokit: {S}oftware package for computing matrix exponentials.
\newblock {\em ACM Trans. Math. Software}, 24:130--156, 1998.

\bibitem{spirig79aao}
F.~Spirig.
\newblock Algebraic aspects of perturbation theories.
\newblock {\em Celest. Mech.}, 20:343--354, 1979.

\bibitem{stefanovich01qft}
E.V. Stefanovich.
\newblock Quantum field theory without infinities.
\newblock {\em Ann. Phys.}, 292:139--156, 2001.

\bibitem{sternberg04lal}
S.~Sternberg.
\newblock Lie algebras, 2004.
\newblock Book online: www.math.harvard.edu/~shlomo/.

\bibitem{strichartz87tcb}
R.~S. Strichartz.
\newblock The {C}ampbell--{B}aker--{H}ausdorff--{D}ynkin formula and solutions
  of differential equations.
\newblock {\em J. Funct. Anal.}, 72:320--345, 1987.

\bibitem{suarez01laa}
R.~Su\'{a}rez and L.~S\'aenz.
\newblock Lie algebras associated with the exponential solutions of
  nonautonomous linear differential equations.
\newblock {\em J. Math. Phys.}, 42:4582--4605, 2001.

\bibitem{suli03ait}
E.~S\"uli and D.~Mayers.
\newblock {\em An Introduction to Numerical Analysis}.
\newblock Cambridge {U}niversity {P}ress, 2003.

\bibitem{supanitsky08pee}
A.D. Supanitsky, J.C. D'Olivo, and G.~Medina-Tanco.
\newblock Perturbative exponential expansion and matter neutrino oscillations.
\newblock {\em Phys. Rev. D}, 78:045024, 2008.

\bibitem{talezer84aaa}
H.~Tal-Ezer and R.~Kosloff.
\newblock An accurate and efficient scheme for propagating the time dependent
  {S}chr\"odinger equation.
\newblock {\em J. Chem. Phys.}, 81:3967--3971, 1984.

\bibitem{tani68ctg}
S.~Tani.
\newblock Canonical transformation governing temporal development in classical
  mechanics.
\newblock {\em Am. J. Phys.}, 36:29--32, 1968.

\bibitem{thalhammer06afo}
M.~Thalhammer.
\newblock A fourth-order commutator-free exponential integrator for
  onautonomous differential equations.
\newblock {\em SIAM J. Numer. Anal.}, 44:851--864, 2006.

\bibitem{thirring92cds}
W.~Thirring.
\newblock {\em Classical Dynamical Systems}.
\newblock Springer-Verlag, 1992.

\bibitem{tyrtyshnikov97abi}
E.E. Tyrtyshnikov.
\newblock {\em A Brief Introduction to Numerical Analysis}.
\newblock Birkh\"auser, 1997.

\bibitem{vandersypen04ntf}
L.M.K. Vandersypen and I.L. Chuang.
\newblock {NMR} techniques for quantum control and computation.
\newblock {\em Rev. Mod. Phys.}, 76:1037--1069, 2004.

\bibitem{veshtort06sea}
M.~Veshtort and R.G. Griffin.
\newblock Spin evolution: a powerful tool for the simulation of solid and
  liquid state {NMR} experiments.
\newblock {\em J. Magn. Reson.}, 178:248--282, 2006.

\bibitem{veshtort03nsi}
M.M. Veshtort.
\newblock {\em Numerical simulations in {N}uclear {M}agnetic {R}esonance:
  theory and applications}.
\newblock PhD thesis, Massachusetts Institute of Technology, 2003.

\bibitem{viswanath03sda}
D.~Viswanath.
\newblock Symbolic dynamics and periodic orbits of the {L}orenz attractor.
\newblock {\em Nonlinearity}, 16:1035--1056, 2003.

\bibitem{voslamber72oea}
D.~Voslamber.
\newblock On exponential approximations for the evolution operator.
\newblock {\em J. Math. Anal. Applic.}, 37:403--411, 1972.

\bibitem{watkins02fom}
D.S. Watkins.
\newblock {\em Fundamentals of Matrix Computations}.
\newblock John Wiley \& Sons, 2nd edition, 2002.

\bibitem{waugh82tob}
J.S. Waugh.
\newblock Theory of broadband spin decoupling.
\newblock {\em J. Magn. Reson.}, 50:30--49, 1982.

\bibitem{wei63nog}
J.~Wei.
\newblock Note on global validity of the {B}aker--{H}ausdorff and {M}agnus
  theorems.
\newblock {\em J. Math. Phys.}, 4:1337--1341, 1963.

\bibitem{wei63las}
J.~Wei and E.~Norman.
\newblock Lie algebraic solution of linear differential equations.
\newblock {\em J. Math. Phys.}, 4:575--581, 1963.

\bibitem{wei64ogr}
J.~Wei and E.~Norman.
\newblock On global representations of the solutions of linear differential
  equations as a product of exponentials.
\newblock {\em Proc. Amer. Math. Soc.}, 15:327--334, 1964.

\bibitem{weiss62tbf}
G.H. Weiss and A.A. Maradudin.
\newblock The {B}aker--{H}ausdorff formula and a problem in {C}rystal
  {P}hysics.
\newblock {\em J. Math. Phys.}, 3:771--777, 1962.

\bibitem{wilcox67eoa}
R.M. Wilcox.
\newblock Exponential operators and parameter differentiation in quantum
  physics.
\newblock {\em J. Math. Phys.}, 8:962--982, 1967.

\bibitem{wille81mef}
U.~Wille.
\newblock Magnus expansion for rotationally induced inner-shell excitation.
\newblock {\em Phys. Lett. A}, 82:389--392, 1981.

\bibitem{wille86moi}
U.~Wille and R.~Hippler.
\newblock Mechanisms of inner-shell vacancy production in slow ion-atom
  collisions.
\newblock {\em Phys. Rep.}, 132:129--260, 1986.

\bibitem{yakubovich75lde}
V.A. Yakubovich and V.M. Starzhinskii.
\newblock {\em Linear {D}ifferential {E}quations with {P}eriodic
  {C}oefficients}.
\newblock John Wiley \& Sons, 1975.

\bibitem{zanna99car}
A.~Zanna.
\newblock Collocation and relaxed collocation for the {F}er and the {M}agnus
  expansions.
\newblock {\em SIAM J. Numer. Anal.}, 36(4):1145--1182, 1999.

\bibitem{zanna01tfe}
A.~Zanna.
\newblock The {F}er expansion and time symmetry: a {S}trang-type approach.
\newblock {\em Appl. Numer. Math.}, 39:435--459, 2001.

\bibitem{zanna01gpd}
A.~Zanna and H.~Z. Munthe-Kaas.
\newblock Generalized polar decompositions for the approximation of the matrix
  exponential.
\newblock {\em SIAM J. Matrix Anal. Applic.}, 23:840--862, 2001.

\bibitem{zettl05slt}
A.~Zettl.
\newblock {\em Sturm--{L}iouville {T}heory}.
\newblock American Mathematical Society, 2005.

\end{thebibliography}




\end{document}